
\documentclass[11pt]{article}
\pagestyle{plain}

\usepackage{amsmath,amssymb,amsthm}
\usepackage{algorithm}
\usepackage{algorithmic}
\usepackage{subcaption}
\usepackage{caption}
\usepackage[table]{xcolor}
\usepackage{xcolor}
\usepackage{tcolorbox}
\usepackage{tikz}
\usepackage{bbm}
\usepackage{wasysym}
\usepackage{geometry}
\usepackage{verbatim}
\usepackage{dirtytalk}
\usepackage{graphicx}
\usepackage{tikz}
\usepackage{amsfonts}
\usepackage{hyperref}
\usepackage{mathtools}
\usepackage{csquotes}
\usepackage{listings}
\usepackage{thmtools}
\usepackage{cleveref}
\usepackage{tikz-cd}
\usepackage{minibox}
\usepackage{braket}
\usepackage{soul}




\makeatletter
\newcommand\ketbra[2][]{%
  \def\ketbra@firstarg{#1}%
  \def\ketbra@secondarg{#2}%
  \ifx\ketbra@firstarg\empty%
    \left\lvert\ketbra@secondarg\middle\rangle \! \middle\langle \ketbra@secondarg\right\rvert%
  \else%
    \left\lvert\ketbra@firstarg\middle\rangle \! \middle\langle\ketbra@secondarg\right\rvert%
  \fi%
}
\makeatother

\makeatletter
\def\metadef#1#2{%
  \def\metadef@iter##1{\ifx##1;\else \expandafter\newcommand\csname#1\endcsname{#2}\expandafter\metadef@iter\fi}%
  \expandafter\metadef@iter%
}
\makeatother

\metadef{vec#1}{\mathbf #1}abcdefghijklmnopqrstuvwxyz;

\metadef{ten#1}{\mathbf #1}ABCDEFGHIJKLMNOPQRSTUVWXYZ;

\metadef{mat#1}{\mathbf #1}ABCDEFGHIJKLMNOPQRSTUVWXYZ;

\metadef{frak#1}{\mathfrak #1}ABCDEFGHIJKLMNOPQRSTUVWXYZ;

\metadef{c#1}{\mathcal #1}ABCDEFGHIJKLMNOPQRSTUVWXYZ;

\metadef{s#1}{\mathsf #1}ABCDEFGHIJKLMNOPQRSTUVWXYZ;


\newtheorem{theorem}{Theorem}
\newtheorem{lemma}{Lemma}
\newtheorem{corollary}{Corollary}
\newtheorem{proposition}{Proposition}
\newtheorem{remark}{Remark}

\newtheorem{definition}{Definition}

















\newcommand{\enote}[1]{\noindent \textcolor{orange}{(\textbf{Eli:} #1\noindent)}}
\newcommand{\mnote}[1]{\noindent \textcolor{violet}{(\textbf{Matthew:} #1\noindent)}}
\renewcommand{\enote}[1]{}
\renewcommand{\mnote}[1]{}

\DeclareMathOperator{\Gen}{Gen}

\newcommand{\poly}{\mathsf{poly}}
\newcommand{\negl}{\mathsf{negl}}

\newcommand{\U}{\mathcal{U}}
\newcommand{\E}{\mathop{\mathbb{E}}}

\newcommand{\wt}[1]{\widetilde{#1}}

\newcommand{\abs}[1]{\left|#1\right|}
\renewcommand{\S}{\mathcal{S}}




\DeclareMathOperator{\argmax}{argmax}
\newcommand{\w}{\omega}

\DeclarePairedDelimiter\ceil{\lceil}{\rceil}

\newcommand{\N}{\mathbb{N}}





\makeatletter
\newcommand\ExpOwsg[4][]{%
  \def\ExpOwsg@firstarg{#1}%
  \ifx\ExpOwsg@firstarg\empty%
    \mathsf{Exp}_{#2,#3}(#4)
  \else%
    \mathsf{Exp}_{#2,#3,#1}(#4)
  \fi%
}
\makeatother

\newcommand{\QCMA}{\mathsf{QCMA}}

\newcommand{\PromBQP}{\mathsf{PromiseBQP}}
\newcommand{\PromQCMA}{\mathsf{PromiseQCMA}}
\newcommand{\PromQMA}{\mathsf{PromiseQMA}}

\newcommand{\xor}{\oplus}
\newcommand{\A}{\mathcal{A}}
\newcommand{\V}{\mathcal{V}}

\newcommand{\Samp}{\mathsf{Samp}}
\newcommand{\PuzzSamp}{\mathsf{PuzzSamp}}
\newcommand{\Ver}{\mathsf{Ver}}
\newcommand{\OWP}{\mathsf{OWPuzz}}
\newcommand{\EV}{\mathsf{EV}}
\newcommand{\PRS}{\mathsf{PRS}}
\newcommand{\OWSG}{\mathsf{OWSG}}
\newcommand{\EVOWP}{\mathsf{EV-OWPuzz}}
\newcommand{\QPRG}{\mathsf{QPRG}}
\newcommand{\Ext}{\mathsf{Ext}}


\usepackage{graphicx} 
\usepackage{authblk}
\definecolor{corlinks}{RGB}{200,0,0}
\definecolor{cormenu}{RGB}{200,0,0}
\definecolor{corurl}{RGB}{200,0,0}

\hypersetup{
 colorlinks=true,
 urlcolor=corlinks,
 linkcolor=corlinks,
 menucolor=cormenu,
 citecolor=corlinks,
 pdfborder= 0 0 0
}

\geometry{
  a4paper,         
  textwidth=15cm,  
  textheight=24cm, 
  heightrounded,   
  hratio=1:1,      
  vratio=2:3,      
}

\title{On Central Primitives for Quantum Cryptography with Classical Communication}
\author[1]{Kai-Min Chung}
\author[2]{Eli Goldin}
\author[3]{Matthew Gray}
\affil[1]{Academia Sinica (kmchung@iis.sinica.edu.tw)}
\affil[2]{New York University (eli.goldin@nyu.edu)}
\affil[3]{University of Oxford (matthew.gray@cs.ox.ac.uk)}

\begin{document}

\maketitle

\setcounter{tocdepth}{3}

\begin{abstract}
Recent work has introduced the ``Quantum-Computation 
Classical-Communication'' (QCCC) (Chung et. al.) setting for cryptography. 
There has been some evidence that One Way Puzzles ($\OWP$) are the natural central cryptographic primitive for this setting 
(Khurana and Tomer). For a primitive to be considered central it should have several characteristics. It should be well behaved 
(which for this paper we will think of as having amplification, combiners, and universal constructions); it should be implied by a 
wide variety of other primitives; and it should be equivalent to some class of useful primitives. 
We present combiners, correctness and security amplification, and a universal construction for $\OWP$. 
Our proof of security amplification uses a new and cleaner construction of EFI from $\OWP$ 
(in comparison to the result of Khurana and Tomer) that generalizes to weak $\OWP$ and is the most technically involved section of the paper. 
It was previously known that $\OWP$ are implied by other primitives of interest including commitments, symmetric key encryption, 
one way state generators ($\OWSG$), and therefore pseudorandom states ($\PRS$). However we are able to rule out $\OWP$'s equivalence to 
many of these primitives by showing a black box separation between general $\OWP$ and a restricted class of $\OWP$ 
(those with efficient verification, which we call $\EVOWP$). We then show that $\EVOWP$ are also implied by most of these primitives, 
which separates them from $\OWP$ as well. This separation also separates 
extending $\PRS$ from highly compressing $\PRS$ answering an open question of Ananth et. al.

\end{abstract}


{\let\clearpage\relax \tableofcontents}
{\let\clearpage\relax}
\thispagestyle{empty}


\section{Introduction}

In the realm of cryptography, there is perhaps no primitive more important than one-way functions. A one-way function is an efficiently computable deterministic function which is easy to compute, but hard to invert. Although at first glance the definition seems simple, one-way functions are special for several reasons. First and foremost, one-way functions are ``minimal." If modern cryptography exists in any form, then one-way functions must also exist~\cite{HILL99,IL89,Impag95}. Furthermore, pretty much all of these constructions are obvious. Second, one-way functions are ``useful." There is a large class of cryptographic primitives (known as Minicrypt) which can all be built from and are equivalent to one-way functions~\cite{Impag95}. Included in Minicrypt are symmetric key encryption, pseudorandom generators, and commitment schemes~\cite{HILL99,GGM86,Naor91}. Finally, one-way functions are ``well-behaved." They satisfy several natural properties~\cite{Levin87}, and are equivalent to most of their variants~\cite{Yao82,IL89}. Due to these three characteristics of one-way functions, one of the most useful things to do when trying to understand a new classical cryptographic primitive is to compare it to a one-way function.

This centrality of one-way functions no longer holds once quantum computation enters the picture. In particular, in the quantum setting, it seems that one-way functions are no longer minimal~\cite{Kretschmer21Quantum}. In particular, there exists a quantum oracle relative to which one-way functions do not exist, but quantum cryptography (in the form of pseudorandom state generators, quantum bit commitments, and many other primitives) is still possible. Recently, there has been strong evidence in support of a new simple primitive, the EFI pair, being minimal~\cite{khurana2024commitments,BCQ22}. An EFI pair is a pair of efficiently samplable quantum mixed states which are indistinguishable yet statistically far. Furthermore, EFI pairs are also useful. They can be used to build a large number of quantum cryptographic primitives, from quantum bit commitments to secure multiparty computation~\cite{BCQ22,AQY22}. Finally, EFI pairs are fairly well-behaved. The security of EFI pairs can be amplified~\cite{bostanci2023efficient}, there exists combiners and universal constructions for EFI pairs~\cite{HKNY23}, and EFI pairs are also equivalent to some of their variants~\cite{HMY23}. 

In the classical setting, it appears that one-way functions serve as an effective minimal primitive. In the quantum output setting, EFI pairs are a promising candidate for our minimal primitive. A number of recent works have also considered a hybrid setting, primitives where the cryptographic algorithms are quantum, but all communication and outputs are classical~\cite{ACCFLM22,ananth2023pseudorandom} ~\cite{CLM23,khurana2024commitments}. In the style of~\cite{ACCFLM22}, we will refer to this as the quantum computation classical communication (QCCC) setting. An immediate and natural question about this setting is ``what is a good central primitive?''

Just like in the fully quantum setting, it is unlikely that one-way functions can be a minimal primitive in the QCCC setting. In particular, there is a barrier to building one-way functions from a QCCC primitive known as the one-way puzzle~\cite{khurana2024commitments,Kretschmer21Quantum}. A one-way puzzle consists of an efficient quantum sampler which produces keys and puzzles along with a (possibly inefficient) verification procedure. The one-wayness corresponds to the idea that given a puzzle, it should be hard to find a matching key. Although one-way functions cannot serve as a central QCCC primitive, at first glance one-way puzzles make a fairly good candidate. In particular, one-way puzzles are minimal in the sense that almost all QCCC primitives can be used to build one-way puzzles~\cite{khurana2024commitments}. 

On the other hand, their well-behavedness and usefulness are less clear. It is known that one-way puzzles can be used to build EFI pairs (and thus everything which follows from EFI pairs)~\cite{khurana2024commitments}. However, as far as the authors are aware, there are no existing constructions of QCCC style primitives from one-way puzzles. The well-behavedness of one-way puzzles is similarly unstudied.

\paragraph*{Our results} In this work, we seek to investigate what primitives can be built from one-way puzzles, as well what useful properties one-way puzzles may or may not satisfy. Whether or not one-way puzzles are adopted as a central primitive in the same manner as one-way functions or EFI pairs is a community matter, but we hope that our results help shed light onto the question. To summarize our results, we show that
\begin{enumerate}
    \item There exists a robust combiner for one-way puzzles. That is, given two candidate one-way puzzles, there is a way to combine the candidates to get a construction which is secure as long as one of the candidates is secure.
    \item There exists a universal construction of a one-way puzzle. That is, a construction which is secure as long as one-way puzzles exist.
    \item There exist amplification theorems for one-way puzzles. That is, there is a method to take a one-way puzzle with weakened correctness or security guarantees and transform it into a full one-way puzzle.
    \item We show that one-way puzzles can be built from EFID pairs (the QCCC version of EFI pairs).
    \item We show that one-way puzzles are equivalent to one-way puzzles whose key is generated uniformly at random, answering an open question of~\cite{khurana2024commitments}.
    \item We show that one-way puzzles are equivalent to "distributional" one-way puzzles. For a distributional one-way puzzle, given a puzzle it is hard to sample from the distribution over keys.
\end{enumerate}

We also consider in detail an important restricted variant of one-way puzzles, which was first introduced under the name ``hard quantum planted problem for QCMA,"~\cite{KNY23}, but which we will refer to as efficiently verifiable one-way puzzles. We show the following results about this variant
\begin{enumerate}
    \item There exists combiners, a universal construction, and amplification theorems for efficiently verifiable one-way puzzles.
    \item Most QCCC primitives which can be used to build one-way puzzles can also be used to build efficiently verifiable one-way puzzles, with the notable exception of interactive commitment schemes. In particular, we show explicitly that pseudodeterministic PRGs and non-interactive commitments imply efficiently verifiable one-way puzzles.
    \item There exists a quantum oracle relative to which one-way puzzles and pseudorandom states exist but efficiently verifiable one-way puzzles do not.
\end{enumerate}

The last two points here together provide a barrier to building most QCCC primitives from one-way puzzles. Perhaps this means that efficiently verifiable one-way puzzles make a better candidate for centrality. However, if QCCC commitments can be built from one-way puzzles, then it may make sense to treat one-way puzzles as a central primitive on a lower level than efficiently verifiable one-way puzzles. We compare the relationship to the separation between one-way functions and one-way permutations.

In addition to this, since pesudorandom states also exist under the same oracle, our results provide a barrier to building most QCCC primitives from pseudorandom states. 

Note that our separation in fact separates efficiently verifiable one-way puzzles from pseudorandom state generators with linear output. But it is not hard to show that pseudorandom state generators with logarithmic output can be used to build efficiently verifiable one-way puzzles. Thus, our final separation also provides a barrier to length reduction for pseudorandom state generators, answering an open question of~\cite{ALY23}.

A summary of known relationships between QCCC primitives is included in~\Cref{fig:graph}.

\paragraph*{A better construction of EFI pairs from one-way puzzles}
The most technically demanding of our results is, surprisingly, the amplification theorem for one-way puzzles. It turns out that due to the inefficient nature of verification, most natural techniques fail. The techniques we use to achieve amplification for one-way puzzles can be also be used to construct EFI pairs from one-way puzzles, recreating a result from~\cite{khurana2024commitments}. In addition, our construction has several advantages over the existing construction in the literature.

First, the proof of security for our construction is significantly more straightforward than the existing argument. In particular, the argument does not rely on techniques dealing explicitly with a preimage space (such as leftover hash lemma or Goldreich-Levin), and so more naturally fits with the quantum nature of the primitive. Second, our construction produces an EFI pair even when instantiated with a one-way puzzle with weakened security guarantees. This is the essential reason that this technique is useful for proving amplification. 

\usetikzlibrary{positioning} 
\usetikzlibrary{calc} 
\usetikzlibrary {quotes}
\tikzset{>=latex} 

\tikzstyle{mysmallarrow}=[->,black,line width=1.6]
\tikzstyle{mydbarrow}=[<->,black,line width=1.6]
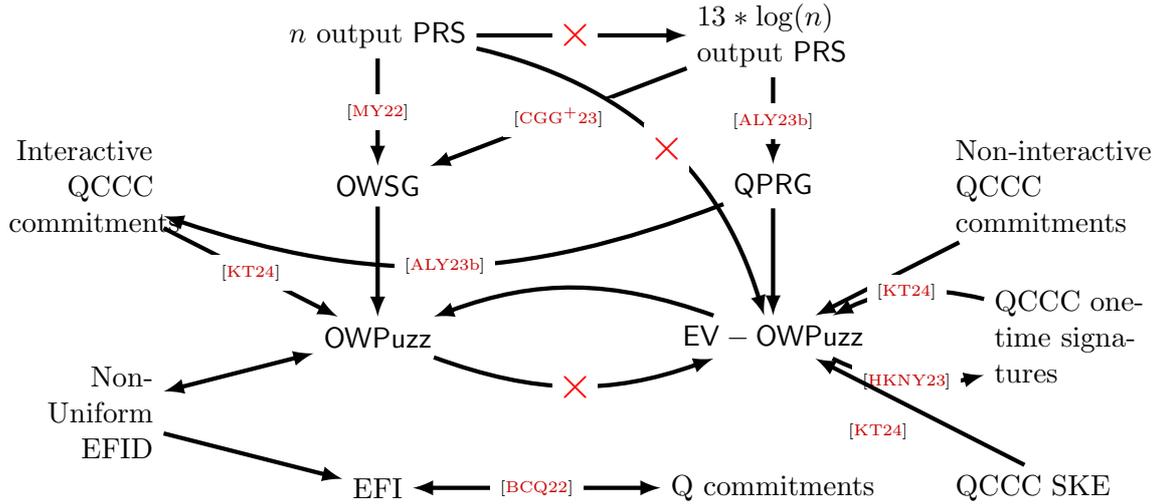
\begin{figure}
\begin{center}
    \begin{tikzpicture}[scale=1,every edge quotes/.style = {font=\footnotesize,fill=white}]
      \def\h{-2.0} 
      \def\w{2.6} 

        \node[] (n) at (-\w,0) {%
        $n$ output $\PRS$};
        
        \node[text width=2cm] (log) at (\w,0) {%
        $13*\log(n)$ output $\PRS$};

        \node[] (owsg) at (-\w,\h) {%
        $\OWSG$};
        
        \node[] (qprg) at (\w,\h) {%
        $\QPRG$};

        \node[text width=3cm] (qc-com)  at (2.5*\w,1*\h) {
        Non-interactive \\ QCCC \\ commitments};

        \node[text width=1.9cm, align=right] (int-com) at (-2.5*\w,\h) {%
        Interactive \\ QCCC \\ commitments};
        
        \node[] (owp) at (-\w,2*\h) {%
        $\OWP$};

        \node[] (evowp) at (\w,2*\h) {%
        $\EVOWP$};

        \node[text width=2.5cm] (qc-sig)  at (2.6*\w,2*\h) {
        QCCC one-time signatures};

        \node (efid)[text width=1.9cm, align=right]  at (-2.5*\w,2.5*\h) {
        Non-Uniform EFID};
        

        \node (efi)  at (-\w,3*\h) {
        EFI};

        \node (q-com)  at (\w,3*\h) {
        Q commitments};

        \node (qc-ske)[text width=3cm]  at (2.5*\w,3*\h) {
        QCCC SKE};
        
        \draw[mysmallarrow=black] (n) edge[left,->] (log);
        \node[fill=white] (X1) at (0,0) {$\textcolor{red}{\bigtimes}$};
        \draw[mysmallarrow=black](log) edge["\tiny{\cite{ALY23}}"] (qprg);
        \draw[mysmallarrow=black](log) edge["\tiny{\cite{cavalar2023computational}}"]  (owsg);
        \draw[mysmallarrow=black](n) edge["\tiny{\cite{morimae2022quantum}}"] (owsg);
    \draw[mysmallarrow=black][bend left=30] (n) edge[left,->] (evowp);
    \node[fill=white] (X1) at (1.2,-1.5) {$\textcolor{red}{\bigtimes}$};
        \draw[mysmallarrow=black][bend left=20] (qprg) edge["\tiny{\cite{ALY23}}"] (int-com);
        \draw[mysmallarrow=black] (owsg) edge (owp);
        \draw[mysmallarrow=black] (qprg) edge (evowp);
        \draw[mysmallarrow=black] (qc-com) edge (evowp);
        \draw[mysmallarrow=black] (int-com) edge["\tiny{\cite{khurana2024commitments}}",opacity=0.8] (owp);
        \draw[mysmallarrow=black][bend right=20] (owp) edge (evowp);
        \draw[mysmallarrow=black][bend right=20] (evowp) edge (owp);
        \node[fill=white] (X2) at (0,2.33*\h) {%
        $\textcolor{red}{\bigtimes}$};

        \draw[mysmallarrow=black][bend right=20] (evowp) edge["\tiny{\cite{HKNY23}}",fill opacity=0.8] (qc-sig);
        \draw[mysmallarrow=black][bend right=20] (qc-sig) edge["\tiny{\cite{khurana2024commitments}}",fill opacity=0.8] (evowp);
        
        \draw[mydbarrow] (efid) edge (owp);
        \draw[mysmallarrow=black] (efid) edge (efi);
        \draw[mysmallarrow=black] (qc-ske) edge[style={auto=left},"\tiny{\cite{khurana2024commitments}}"] (evowp);

    \draw[mydbarrow] (efi) edge["\tiny{\cite{BCQ22}}"] (q-com);
      
    \end{tikzpicture}
\end{center}
\caption{All implications known about one-way puzzles ($\OWP$) and efficiently verifiable one-way puzzles $(\EVOWP)$.}\label{fig:graph}
\end{figure} 
\section{Technical Overview}

\subsection{A cleaner construction of EFI pairs from any one-way puzzle}\label{ssub:efiowp}

In a recent work by Khurana and Tomer~\cite{khurana2024commitments}, it was shown that there is a black-box construction of an EFI pair from any one-way puzzle. Since one-way puzzles can be built from one-way state generators, this then shows that if one-way state generators exist, so do EFI pairs (and thus quantum bit commitments).

Since EFI pairs intuitively are a ``pseudorandom" primitive while one-way puzzles are a ``one-way" primitive, the argument presented in~\cite{khurana2024commitments} is heavily inspired by the classical construction of a pseudorandom generator from any one-way function, first shown in~\cite{HILL99}. 

The key idea behind~\cite{HILL99} is to first use the one-way function to construct something called a pseudoentropy generator. A pseudoentropy generator is simply a samplable distribution which is indistinguishable from another (not necessarily samplable) distribution with greater entropy. Then, the pseudoentropy generator is used to construct a non-uniform PRG. That is, a PRG where the construction takes in a short advice string of length $O(\log \lambda)$ depending on the security parameter. This gives out a different PRG candidate for each possible value of the advice string. Applying a PRG combiner to all of these candidates then gives a standard PRG.

~\cite{khurana2024commitments} follows the same overall structure to build an EFI pair from any one-way puzzle. In particular, they show how to build a pseudoentropy generator from any one-way puzzle, and then show how to use a one-way puzzle to build something they refer to as an imbalanced EFID pair. An EFID pair is classical version of an EFI pair. We recall that a non-uniform EFID pair is an EFID pair that takes in a short advice string. An imbalanced EFID pair is a stronger primitive than a non-uniform EFID pair, where there are additional requirements on hiding and/or binding when the primitive is instantiated with incorrect advice. Finally, they show how to use an imbalanced EFID pair to build a standard EFI pair, although this technique requires switching to quantum output.

In this work, we present an alternate construction of EFI pairs from one-way puzzles, with several advantages. The foremost advantage, which is useful for our other results, is that our construction works even when instantiated with weak one-way puzzles. In addition to this, the proof of our construction is significantly simpler, and relies almost entirely on standard classical techniques.

\begin{theorem}[Informal version of~\Cref{cor:weakowptoefi}]
    If there exists a weak one-way puzzle, then there exists an EFI pair.
\end{theorem}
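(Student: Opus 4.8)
The plan is to follow the classical HILL-style template of $\cite{HILL99}$ as adapted to one-way puzzles, but to streamline the construction so that each step tolerates a weak one-way puzzle (where a non-negligible fraction of puzzle-key pairs can be inverted). Recall a one-way puzzle gives an efficient sampler $\Samp(1^\lambda)$ producing a key-puzzle pair $(k,p)$ together with an inefficient verifier $\Ver$. First I would show how to extract pseudoentropy: consider the distribution $(p, h, h(k))$ where $h$ is drawn from a suitable family of hash functions and $h(k)$ has a carefully chosen length. By the one-wayness (even weak one-wayness), given $p$ the key $k$ retains some computational min-entropy, so for an appropriate output length this tuple is computationally indistinguishable from $(p, h, U)$ with $U$ uniform and independent, even though information-theoretically $h(k)$ is essentially determined by $p$. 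This is the analog of the pseudoentropy generator, and crucially the gap quantities degrade only polynomially when passing from a strong to a weak puzzle, since one-wayness enters only through a hardness-of-inversion bound.

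Next I would package the pseudoentropy into an EFID-type object. The two distributions of the pair are: $D_0 = (p, h, h(k))$ for honestly sampled $(k,p) \leftarrow \Samp$ and fresh $h$, versus $D_1 = (p, h, r)$ for $r$ uniform and independent. These are computationally indistinguishable by the previous step. For statistical farness we would argue that the real coordinate $h(k)$ is, conditioned on $(p,h)$, concentrated (its support/entropy is bounded by the entropy of $k$ given $p$, which is small), whereas $r$ is uniform over a larger set, so the two distributions have constant total variation distance on a noticeable fraction of $(p,h)$. The subtlety is that the ``right'' output length for $h(k)$ depends on the amount of conditional entropy, which we do not know exactly; this is exactly where the advice string of length $O(\log\lambda)$ enters, giving a non-uniform EFID pair, one candidate per guess of the entropy threshold. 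Here the advantage of the present approach over $\cite{khurana2024commitments}$ is that we never need to reason about the preimage structure of the puzzle sampler — no leftover hash lemma applied to $k$, no Goldreich–Levin — because farness is obtained directly from an entropy counting argument on the classical string $h(k)$, and hiding is obtained directly from computational unpredictability of $k$ given $p$.

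Then I would convert the non-uniform EFID pair into a genuine EFI pair. The standard route is: first apply an EFI/EFID combiner across the $\poly(\lambda)$ advice candidates (the EFI combiner of $\cite{HKNY23}$, or an EFID-specific one), which yields a single pair that is secure provided at least one advice value is correct; since some advice value matches the true entropy up to rounding, one of them is good. This combiner step may require passing to quantum output (tensoring the candidates appropriately and using that XOR/tensor of an indistinguishable pair with anything is indistinguishable, while farness is inherited from the good index), which is fine since the target primitive EFI is already a quantum-output object. Finally, invoke security amplification for EFI pairs $\cite{bostanci2023efficient}$ to boost the constant (or $1/\poly$) computational–statistical gap up to negligible-versus-constant, yielding a full EFI pair. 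The main obstacle I anticipate is calibrating the hash output length against the unknown conditional min-entropy of $k$ given $p$ in the \emph{weak} regime: one must verify that weak one-wayness still lowerbounds this conditional computational entropy by a noticeable amount (rather than merely an average-case inversion hardness that could be concentrated on a vanishing fraction of puzzles), and that the inefficiency of $\Ver$ does not obstruct any of the reductions — all of which work because the reductions only ever \emph{call} $\Samp$ and distinguishers, never $\Ver$.
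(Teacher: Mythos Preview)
Your proposal has a genuine gap at the pseudoentropy step. You claim that $(p, h, h(k))$ is computationally indistinguishable from $(p, h, U)$ for $|h(k)|$ exceeding the real conditional entropy $H(k\mid p)$, and that this follows ``directly from computational unpredictability of $k$ given $p$'' without Goldreich--Levin. But weak one-wayness only says that no adversary outputs a verifying key with probability exceeding $1 - \lambda^{-c}$; this lower-bounds the computational (unpredictability) entropy of $k$ given $p$ by merely $-\log(1-\lambda^{-c}) \approx \lambda^{-c}$ bits. There is no mechanism in your outline for converting $\lambda^{-c}$ bits of unpredictability into even a single pseudorandom bit of $h(k)$ beyond the real entropy --- that conversion is precisely what Goldreich--Levin provides, and as the paper discusses, in the weak regime Goldreich--Levin yields only a constant bound on the distinguishing advantage, so the hiding claim is unjustified exactly where weakness matters. (Your remark that you avoid the leftover hash lemma is also puzzling: the object $(p,h,h(k))$ \emph{is} the leftover-hash setup.)

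The paper takes a different route, following \cite{VZ12} rather than \cite{HILL99}. The key observation is that $KL(s,k \,\|\, s, \S(s)) \geq KL(\Ver(s,k) \,\|\, \Ver(s,\S(s)))$ by monotonicity of relative entropy under post-processing, and the right-hand side is $KL(\Bernoulli(1-\omega) \,\|\, \Bernoulli(\gamma)) \geq 1/\poly(\lambda)$ even for weak puzzles. This KL-hardness for sampling is converted (via the chain rule and a KL-to-pseudoentropy lemma) into next-bit pseudoentropy $H(k,s) + 1/\poly(\lambda)$, and from there the \cite{HRV10} machinery (equalizer, direct product to pass to min-entropy, pairwise-independent extraction) yields a distribution $D$ that is computationally uniform but has Shannon entropy deficient by $\poly(\lambda)$; an entropy-to-statistical-distance lemma and parallel repetition finish the non-uniform EFID construction. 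Your advice string and final EFI-combiner step match the paper's, but the crucial middle --- obtaining a noticeable pseudoentropy gap from a $1/\poly$ hardness gap without hashing or Goldreich--Levin --- is what your proposal is missing, and it is exactly the contribution the paper highlights.
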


\paragraph*{The overall approach.} While~\cite{khurana2024commitments} relies on the techniques of~\cite{HILL99} to realize their construction, there have been a number of follow-up works succeeding~\cite{HILL99} providing more efficient constructions of PRGs from OWFs~\cite{HRV10} ~\cite{VZ12,MP23}. In particular, we observe that the techniques of~\cite{VZ12} are particularly ``quantum-friendly," much more so than the techniques of~\cite{HILL99}. Furthermore, we make the (as far as we are aware) novel observation that the construction of~\cite{VZ12} gives a pseudorandom generator even when instantiated with a weak one-way function.

\paragraph*{The failure of Goldreich-Levin for weak one-way puzzles.}
One key idea underlying~\cite{HILL99}, as well as most other constructions of PRGs from one-way functions~\cite{VZ12,MP23}, is to extract $H_{min}(x | f(x)) + O(\log n)$ bits of pseudoentropy from $x$ given $f(x)$. The leftover hash lemma gives the ability to extract $H_{min}(x | f(x)) - O(\log n)$ bits of entropy from $x$, and Goldreich-Levin provides an extra $O(\log n)$ bits of pseudoentropy~\cite{GL89}, so these two techniques together can extract a pseudorandom string of length $H_{min}(x | f(x)) + O(\log n)$ from $x$ given $f(x)$.

In particular, the Goldreich-Levin theorem shows that if there is an algorithm distinguishing $\Ext(x)$ from uniform given $f(x)$ with advantage $\epsilon$, then there is an algorithm computing $x$ from $f(x)$ with probability $\poly(\epsilon)$~\cite{GL89}. Since $\epsilon^2$ is negligible for a strong one-way function, so is $\epsilon$, and so these distributions are indistinguishable. However, if $f$ is only a weak one-way function, then we only get a constant bound on the distinguishing advantage, and so the approaches of~\cite{HILL99,VZ12,MP23} all break down.

A similar approach, with some technically involved adjustments to handle quantum sampling, is done in~\cite{khurana2024commitments} by using a quantum version of the Goldre-ich-Levin theorem~\cite{AC01}. In particular,~\cite{khurana2024commitments} also relies on using Goldreich-Levin to extract $O(\log n)$ from the key $k$ given the puzzle $s$. But for the same reason as before, this approach does not hold when the sampler is only weakly one-way.

Furthermore, there is a lot of technical care needed when using the leftover hash lemma and Goldreich-Levin on puzzles sampled using quantum randomness~\cite{khurana2024commitments}. This is because the pre-image space of a puzzle is now a distribution over keys instead of a set, and so hashing techniques become significantly more complicated. Luckily,~\cite{VZ12} demonstrates a way to construct PRGs from one-way functions without relying on either of these techniques, providing an approach that is both quantum-friendly and applies even with weak security. We adapt their techniques to give a construction of EFI pairs from weak one-way puzzles illustrated in~\Cref{fig:hrvconst}.

\begin{figure}
    \centering
    \begin{tikzpicture}[scale=0.75]
        \node (a) at (5,1.6) {$\vdots$};

        \draw (-0.5,0) -- (-0.5,1);
        \node (s1) at (0.5,0.5) {$s_{a1}$};
        \draw (1.5,0) -- (1.5,1);
        \node (k1) at (2.5,0.5) {$k_{a1}$};
        \draw (3.5,0) -- (3.5,1);
        \node (s2) at (4.5,0.5) {$s_{a2}$};
        \draw (5.5,0) -- (5.5,1);
        \node (k2) at (6.5,0.5) {$k_{a2}$};
        \draw (7.5,0) -- (7.5,1);
        \node (s2) at (8.5,0.5) {$s_{a3}$};
        \draw (9.5,0) -- (9.5,1);
        \node (k2) at (10.5,0.5) {$k_{a3}$};
        \draw (11.5,0) -- (11.5,1);

        \draw (-1.2,2) -- (-1.2,3);
        \node (s1) at (-0.2,2.5) {$s_{21}$};
        \draw (0.8,2) -- (0.8,3);
        \node (k1) at (1.8,2.5) {$k_{21}$};
        \draw (2.8,2) -- (2.8,3);
        \node (s2) at (3.8,2.5) {$s_{22}$};
        \draw (4.8,2) -- (4.8,3);
        \node (k2) at (5.8,2.5) {$k_{22}$};
        \draw (6.8,2) -- (6.8,3);
        \node (s2) at (7.8,2.5) {$s_{23}$};
        \draw (8.8,2) -- (8.8,3);
        \node (k2) at (9.8,2.5) {$k_{23}$};
        \draw (10.8,2) -- (10.8,3);

        \draw (-0.1,3) -- (-0.1,4);
        \node (s1) at (0.9,3.5) {$s_{11}$};
        \draw (1.9,3) -- (1.9,4);
        \node (k1) at (2.9,3.5) {$k_{11}$};
        \draw (3.9,3) -- (3.9,4);
        \node (s2) at (4.9,3.5) {$s_{12}$};
        \draw (5.9,3) -- (5.9,4);
        \node (k2) at (6.9,3.5) {$k_{12}$};
        \draw (7.9,3) -- (7.9,4);
        \node (s2) at (8.9,3.5) {$s_{13}$};
        \draw (9.9,3) -- (9.9,4);
        \node (k2) at (10.9,3.5) {$k_{13}$};
        \draw (11.9,3) -- (11.9,4);
    
        \draw (-0.5,0) -- (11.5,0);
        \draw (-0.5,1) -- (11.5,1);
        \draw (-1.2,2) -- (10.8,2);
        \draw (-1.2,3) -- (11.9,3);
        \draw (-0.1,4) -- (11.9,4);
        
        \draw[dashed] (1,-0.5) -- (1,4.5);
        \draw[dashed] (9,-0.5) -- (9,4.5);

        \draw (1.3, 2) circle[x radius = 0.2cm, y radius=3cm];
        \draw (1.3, -1.1) -- (1.3, -2);
        \node (ext) at (1.3, -2.2) {$\Ext$};
        \draw[->] (1.3, -2.5) -- (1.3, -3.4);
        \node (r) at (1.3, -3.6) {$r_1$};
        
        \draw (1.9, 2) circle[x radius = 0.2cm, y radius=3cm];
        \draw (1.9, -1.1) -- (1.9, -2);
        \node (ext) at (1.9, -2.2) {$\Ext$};
        \draw[->] (1.9, -2.5) -- (1.9, -3.4);
        \node (r) at (1.9, -3.6) {$r_2$};
        
        \draw (8.7, 2) circle[x radius = 0.2cm, y radius=3cm];
        \draw (8.7, -1.1) -- (8.7, -2);
        \node (ext) at (8.7, -2.2) {$\Ext$};
        \draw[->] (8.7, -2.5) -- (8.7, -3.4);
        \node (r) at (8.7, -3.6) {$r_d$};

        \node (dots) at (5,-2.2) {$\dots$};
        \node (dots) at (5,-3.6) {$\dots$};
        
    \end{tikzpicture}
    \caption{The construction of~\cite{VZ12,HRV10} applied to a one-way puzzle $\Samp \to (k,s)$. The idea is that many samples are taken and arranged in a grid. Then, each row is given a random offset, with both sides truncated. Finally, some number of random bits are extracted from each column using a pairwise-independent hash $\Ext$. This produces a pseudorandom string with less than full entropy, and we can repeat to get a non-uniform EFID pair.}
    \label{fig:hrvconst}
\end{figure}
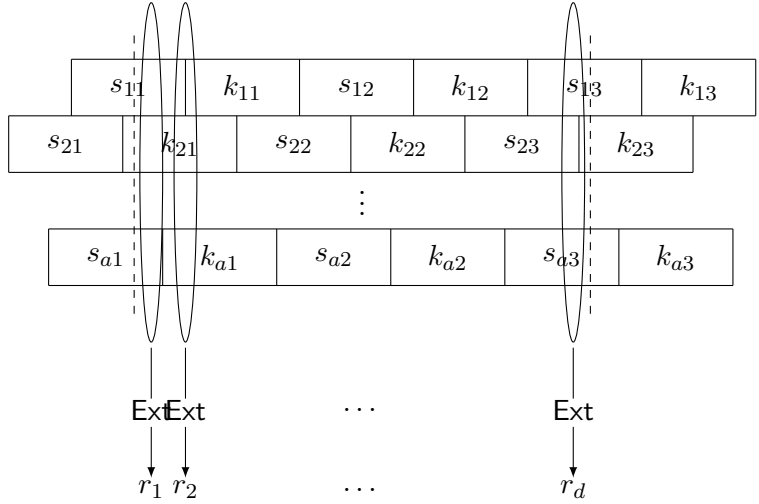

\paragraph*{The construction of~\cite{VZ12}.} To build a PRG from a one-way function $f$,~\cite{VZ12} makes the observation that the distribution $(f(x),x)$ satisfies a property which they call KL-hard to sample. In particular, this means that for any sampler $\S$ (which in this case can be thought of as a distributional inverter),
$$KL(f(x),x || f(x), \S(f(x))) \geq \delta$$
for some value of $\delta \geq \frac{1}{\poly(\lambda)}$. Here $KL$ refers to Kullback-Leibler divergence, or ``relative entropy." They then adapt the techniques of~\cite{HRV10} to build a PRG from a distribution which is KL-hard to sample. Note that this construction requires knowledge of the entropy of the KL-hard to sample distribution. However, for a one-way function, $H(f(x),x) = |x|$ the input length of the one-way function.

For an $\epsilon$ one-way function, the KL-hardness parameter is $\delta = -\log \epsilon$. Thus, for a standard one-way function, $\delta = \omega(\log \lambda)$. But the techniques of~\cite{HRV10} apply whenever $\delta = \frac{1}{\poly(\lambda)}$, and so the techniques of~\cite{VZ12} work just as well for weak one-way functions. Thus, the same construction gives a PRG from any weak one-way functions.

\paragraph{Building a KL-hard to sample distribution from a one-way puzzle}
The key observation underlying~\cite{VZ12} is that KL divergence can only decrease from computation. That is, for any function $F$,
$$KL(F(X) || F(Y)) \leq KL(X || Y)$$
But the boolean function $F(y,x) = 1$ if and only if $f(x) = y$ is well-defined. So if $\S$ is any sampler,
$$KL(f(x),x || f(x),\S(f(x))) \geq KL(F(f(x),x) || F(f(x),\S(f(x))))$$
$$= KL(1 || Bern(p)) = -\log p$$
where $p$ is the advantage of $\S$ in the one-way function security game. This immediately gives that the distribution $(f(x),x)$ is KL-hard to sample.

We observe that the same exact technique also works for one-way puzzles. In particular, let $(\Samp,\Ver)$ be a one-way puzzle and let $\Samp \to (k,s)$. The equivalent of checking if $f(x) = y$ is simply to run verification. And so
$$KL(s,k || s, \S(k)) \geq KL(\Ver(s,k) || \Ver(s, \S(k))) = KL(Bern(q) || Bern(p))$$
where $p$ is the success probability of $\S$ in the one-way puzzle game and $q$ is the correctness parameter of the one-way puzzle. Although we do not have $KL(Bern(q) || Bern(p)) = -\log p$, when $\Samp$ is a weak one-way puzzle, we can still lower bound $KL(s,k || s, \S(k))$ by $\frac{1}{\poly(\lambda)}$. And so $(s,k)$ is KL-hard to sample.

\paragraph{Building a non-uniform EFID pair from a KL-hard to sample distribution}
Note that the techniques~\cite{VZ12} uses to build a PRG from a KL-hard to sample distribution are entirely black box, and so also work in the quantum setting. Thus, applying the same construction to $(s,k)$ produces a pseudorandom distribution $D$ with length $d = \abs{D}$ depending on $H(k,s)$. When building a PRG, the approach~\cite{HRV10,VZ12} take is to argue that $D$ can be sampled by applying some function $G$ to a uniformly random string of length $d' < d$, and so $G$ is a PRG. Here, the randomness of the distribution is quantum, and so this idea will not apply directly. But similar reasoning can be used to show a upper bound on the entropy of $D$. In particular, we produce such an argument directly and show that $H(D) < d - \poly(\lambda)$. For a visualization of the construction of $D$, see~\Cref{fig:hrvconst}.

We then observe that any distribution with sufficiently less entropy than its length must have some statistical distance from the uniform distribution. Thus, the~\cite{VZ12} construction applied to a one-way puzzle produces a distribution $D$ which is indistinguishable from uniform but has noticeable statistical distance from uniform. We then use parallel repetition to boost the statistical distance to $1 - \negl(\lambda)$, and so the pair $(\U^{t},D^t)$ forms a EFID pair.

Unfortunately, this construction has a number of pseudorandom bits dependent on $H(k,s)$. Thus, the EFID pair construction has to have knowledge of the entropy of the one-way puzzle sampler output. This can be done by giving the construction $\Theta(\log\lambda)$ bits of advice, and so instead of a full EFID pair, we get a non-uniform EFID pair.

\paragraph{From non-uniform EFID pairs to EFI pairs}
To recap,~\cite{khurana2024commitments} built imbalanced EFID (a stronger version of non-uniform EFID) from one-way puzzles, while our technique only builds non-uniform EFID from one-way puzzles. Note that this is not a fundamental difference, upon observation it is clear that our construction also satisfies the requirements of imbalanced EFID. 

However, the reason~\cite{khurana2024commitments} required this stronger notion of non-uniform EFID was because, at the time that work was published, it was unknown how to build combiners for EFI pairs. Recent work (interestingly using similar techniques to~\cite{khurana2024commitments}) has shown how to combine EFI pairs~\cite{HKNY23}, and so using these techniques EFI pairs follow directly from non-uniform EFID.

\subsection{Combiners and universal constructions}
One major property satisfied by one-way functions is the existence of a universal construction~\cite{Levin87}. By this, we mean that there exists a specific construction of a one-way function which is secure if any one-way functions exist.

As shown originally by Levin~\cite{Levin87} and formalized in~\cite{HKNRR05}, this useful fact is essentially a corollary of the fact that there exists \emph{robust combiners} for one-way functions. That is, given any two one-way function candidates $f$ and $g$, there is a construction $h^{f,g}$ such that $h$ is one-way as long as one of $f$ or $g$ is one-way.

The universal one-way function is then defined as follows. Take the first $\log \lambda$ Turing machines and treat them as one-way function candidates. Running the combiner on all these candidates results in a universal one-way function $f_U$. As long as one-way functions exist, there is a Turing machine with some constant length which acts as a good one-way function. Thus, for all sufficiently large $\lambda$, $f_U$ will also be a one-way function.

Since both combiners and universal constructions are highly desirable properties, we would like to investigate whether robust combiners also exist for one-way puzzles. We thus prove the following theorem

\begin{theorem}[Informal version of~\Cref{cor:owpcomb}]
    There exists a robust combiner for one-way puzzles.
\end{theorem}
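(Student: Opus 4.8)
The plan is to adapt the classical ``concatenation combiner'' for one-way functions. Given two candidate one-way puzzles $(\Samp_0,\Ver_0)$ and $(\Samp_1,\Ver_1)$ --- where we assume both candidates meet the syntactic requirements (each $\Samp_i$ runs in a fixed polynomial time, which can be enforced by truncating at a time bound) and the correctness guarantee, but only one is promised to be secure --- I would define the combined puzzle $(\Samp^{*},\Ver^{*})$ by running both samplers on independent randomness: $\Samp^{*}(1^{\lambda})$ computes $(k_0,s_0)\leftarrow\Samp_0(1^{\lambda})$ and $(k_1,s_1)\leftarrow\Samp_1(1^{\lambda})$ and outputs key $k=(k_0,k_1)$ and puzzle $s=(s_0,s_1)$; and $\Ver^{*}\big((s_0,s_1),(k_0',k_1')\big)$ accepts iff both $\Ver_0(s_0,k_0')=1$ and $\Ver_1(s_1,k_1')=1$. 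Note $\Ver^{*}$ may be inefficient, but that is permitted for one-way puzzles, so the combined object has the right syntax.

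Correctness is immediate from a union bound: if $\Ver_i(s_i,k_i)=1$ except with probability $\negl_i(\lambda)$ for $i\in\{0,1\}$, then $\Ver^{*}(s,k)=1$ except with probability $\negl_0(\lambda)+\negl_1(\lambda)$, still negligible. (In the weak regime, if each candidate has correctness $q_i$ then the combination has correctness at least $q_0+q_1-1$; one then just checks this stays in the valid parameter range for the weak notion, but structurally nothing changes.)

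For security, suppose without loss of generality that $(\Samp_0,\Ver_0)$ is secure, and let $\A$ be a QPT adversary that on input $s=(s_0,s_1)\leftarrow\Samp^{*}$ outputs $(k_0',k_1')$ with $\Ver^{*}(s,(k_0',k_1'))=1$ with probability $\eps(\lambda)$. I would build an adversary $\A_0$ against $(\Samp_0,\Ver_0)$: on input $s_0$, sample $(k_1,s_1)\leftarrow\Samp_1(1^{\lambda})$ internally, run $\A(s_0,s_1)$ to obtain $(k_0',k_1')$, and output $k_0'$. Since $\Samp_1$ is an efficient sampler, $\A_0$ is QPT, and the pair $(s_0,s_1)$ it feeds to $\A$ is distributed exactly as in the combined game; hence $\Pr[\Ver_0(s_0,k_0')=1]\ge\Pr[\Ver^{*}(s,(k_0',k_1'))=1]=\eps(\lambda)$, so security of $(\Samp_0,\Ver_0)$ forces $\eps$ to be negligible.

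The point worth emphasizing --- and the reason this step is genuinely easy, unlike the amplification results --- is that the reduction never needs to \emph{run} the verifier of either candidate: $\A_0$ only invokes $\Samp_1$ and $\A$, and discards the (inefficient) procedure $\Ver_1$ entirely. So the inefficiency of verification, which obstructs most natural arguments for one-way puzzles, is harmless here; the only mild subtleties are making candidate syntax well-defined via time-out truncation, and, in the weak-security/weak-correctness regime, light bookkeeping on the combined parameters. I expect no serious obstacle. Finally, a robust combiner for any polynomial number $m$ of candidates follows by iterating the two-candidate combiner (equivalently, concatenating all $m$ samplers and taking the conjunction of all $m$ verifiers), which is exactly what the universal construction requires; and since $\Ver^{*}$ is efficient whenever the $\Ver_i$ are, the same construction simultaneously yields a combiner for $\EVOWP$.
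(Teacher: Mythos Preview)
Your parallel-composition-plus-AND construction and the security reduction are exactly what the paper uses as its \emph{security combiner}, and your argument for that part is fine. The gap is in the premise you slip in at the start: you assume both candidates already satisfy the correctness guarantee, and you treat handling a bad candidate as ``light bookkeeping on the combined parameters.'' It is not. A robust combiner---in the sense the paper uses, and in the sense required for the universal construction you invoke at the end---must take \emph{arbitrary} candidates (e.g.\ the $i$th Turing machine with a clock attached) and output a valid $\OWP$ provided at least one candidate is a valid $\OWP$. If the bad candidate has a verifier that always outputs $0$, then $\Ver^{*}=\Ver_0\wedge\Ver_1$ also always outputs $0$, and the combined puzzle has correctness error $1$. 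No parameter bookkeeping repairs this.

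The paper's fix, which is the one idea your proposal is missing, is a \emph{correctness-guaranteeing} transformation applied to each candidate \emph{before} running your AND combiner. For general $\OWP$ (inefficient $\Ver$) this transformation has $\Ver'$ first test, inefficiently, whether the puzzle $s$ is ``bad'' (i.e.\ honestly-sampled keys fail verification with probability above a threshold) and auto-accept if so; otherwise it defers to $\Ver$. This forces correctness to hold unconditionally (up to a constant, then amplified by the parallel-OR repetition of~\Cref{thm:owpcorramp}), while preserving security whenever the input was a genuine $\OWP$. For $\EVOWP$ the transformation is different and simpler---have $\Samp'$ run $\Ver$ on its own output and emit $(\bot,\bot)$ on failure---which is exactly the natural move you cannot make when $\Ver$ is inefficient. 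Only after this preprocessing does your concatenation combiner become robust; without it, your claimed application to the universal construction fails, since almost none of the enumerated machines will satisfy correctness.
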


with the following corollary

\begin{corollary}[Informal version of~\Cref{thm:uowp}]
    There exists a pair of algorithms $(\Samp_U,\Ver_U)$ such that as long as one-way puzzles exist, $(\Samp_U,\Ver_U)$ is a one-way puzzle.
\end{corollary}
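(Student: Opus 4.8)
The plan is to follow Levin's template for universal one-way functions, using the robust combiner of \Cref{cor:owpcomb} as the engine. First I would promote the two-candidate combiner to an $m$-candidate combiner by composing it in a balanced binary tree of depth $\lceil \log m \rceil$: since one application outputs a one-way puzzle whenever at least one of its two inputs is a one-way puzzle, the root of the tree is a one-way puzzle whenever at least one of the $m$ leaves is, and with $m = \lceil \log \lambda \rceil$ leaves the correctness error and the sampler running time degrade by only a $\polylog(\lambda)$ factor, hence stay $\negl(\lambda)$ and $\poly(\lambda)$ respectively. (If \Cref{cor:owpcomb} is already stated for polynomially many candidates, this step is skipped.)

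Next I would fix a canonical enumeration $M_1, M_2, \dots$ of pairs of Turing machines, reading $M_i = (\Samp_i, \Ver_i)$ as a candidate one-way puzzle. On security parameter $\lambda$, the universal sampler $\Samp_U$ runs $\Samp_i$ on security parameter $\lambda$ for each $i \le \lceil \log \lambda \rceil$, aborting that coordinate and substituting a dummy value if $\Samp_i$ fails to halt within $\lambda^2$ steps; the universal verifier $\Ver_U$ just runs $\Ver_i$ on the $i$-th coordinate, with no time bound — a one-way puzzle is permitted an inefficient verifier, which is exactly what makes this wrapper painless. Then $(\Samp_U,\Ver_U)$ is defined to be the $m$-candidate combiner applied to these $\lceil \log \lambda \rceil$ clocked candidates; note $\Samp_U$ runs in time $O(\lambda^2 \log \lambda)$ plus combiner overhead, which is $\poly(\lambda)$.

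For correctness of the construction, suppose a one-way puzzle $(\Samp^*,\Ver^*)$ exists with $\Samp^*$ running in time $\lambda^c$. A standard padding argument produces a one-way puzzle whose sampler runs in time at most $\lambda^2$: on security parameter $\lambda$, invoke $\Samp^*$ on security parameter $\lfloor \lambda^{1/c} \rfloor$ (and $\Ver^*$ unchanged); the two parameters are polynomially related, so security and correctness are preserved, while the running time drops to at most $\lambda$. This slowed-down puzzle is computed by some fixed machine pair $M_{i_0}$ with $i_0 = O(1)$, so for every $\lambda$ with $\lceil \log \lambda \rceil \ge i_0$ the candidate $M_{i_0}$ appears in the list and is never aborted. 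By the $m$-candidate combiner guarantee, $(\Samp_U,\Ver_U)$ is a one-way puzzle for all sufficiently large $\lambda$, which is all the asymptotic definition requires (finitely many small $\lambda$ are absorbed into $\negl$/non-uniform slack). This gives the corollary.

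The main obstacle lives entirely in the combiner, not in this wrapper: the enumeration is dominated by ``garbage'' candidates whose $\Samp_i$ may emit nonsense or whose $\Ver_i$ may reject (almost) everything, so a naive combiner that ANDs all the verifications together would destroy correctness. Hence \Cref{cor:owpcomb} must be the genuinely stronger statement that its output is a \emph{bona fide} one-way puzzle — correct \emph{and} secure — as soon as one input is, meaning the combiner has to internally tolerate or repair candidates with bad correctness. In the QCCC setting this is delicate precisely because verification need not be efficient, so one cannot simply estimate a candidate's correctness empirically and discard the bad ones; dealing with that is where the combiner (together with the companion correctness-amplification theorem) does real work, after which the universal construction is the routine corollary sketched above.
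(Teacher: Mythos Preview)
Your approach follows Levin's template and is close to the paper's, but differs on one point that is substantive under the paper's definition. You enumerate \emph{pairs} of Turing machines $(\Samp_i,\Ver_i)$ and leave the verifier unclocked; the paper instead enumerates \emph{only samplers} $\Samp_i$ (clocked to $\lambda^3$ steps) and for each $i$ \emph{defines} $\Ver_i$ to be the function minimizing the sum of the correctness and security errors of $(\Samp_i,\Ver_i)$ --- a function that need not be computable at all. This matters because the paper explicitly allows $\Ver$ to be an arbitrary (possibly uncomputable) function rather than merely an inefficient Turing machine, and in fact adopted this definition precisely so the universal construction would go through; they remark that insisting $\Ver$ be a halting decider makes combining ``seemingly as difficult as solving the halting problem.'' Under your enumeration, if every one-way puzzle that exists happens to have an uncomputable verifier, no such puzzle ever appears as a candidate $M_{i_0}$, and the step ``this slowed-down puzzle is computed by some fixed machine pair $M_{i_0}$'' fails.

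This is repairable --- one can argue separately that $\OWP$ with arbitrary verifier implies $\OWP$ with a computable verifier (for instance by tracing through the paper's $\OWP \to$ non-uniform EFID $\to \OWP$ loop, whose final verifier is an explicit probability-threshold test) --- but you do not supply that step and it is not immediate. The paper's device of pairing each enumerated sampler with its \emph{optimal} verifier sidesteps the issue entirely and is the cleaner route. The remaining differences ($\lceil\log\lambda\rceil$ versus $\lambda$ candidates, $\lambda^2$ versus $\lambda^3$ clocking, building the $m$-ary combiner as a tree of binary combiners versus invoking the paper's direct $m$-ary remark) are inessential, and your diagnosis of why the combiner must internally repair correctness of garbage candidates is exactly right and matches the paper's reasoning.
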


Note that it has been shown that combiners and universal constructions exist for quantum primitives which both imply one-way puzzles and are implied by one-way puzzles, namely one-way state generators and EFI pairs respectively~\cite{HKNY23}. Thus, this result should not be particularly surprising. However, none of the arguments for constructing combiners for one-way state generators, EFI pairs, or one-way functions translate directly into building combiners for one-way puzzles.

Note that if we know that both candidate one-way puzzles satisfy correctness, then it is easy to construct a combiner. In particular, running both candidate samplers in parallel and having the verification algorithm accept if and only if both candidate verification algorithms accept is enough to ensure that the combined construction satisfy both correctness and security.

However, if we omit the correctness requirement, then it is possible that the ``bad" verification algorithm always rejects. In this case, the combiner we defined previously will also not satisfy correctness.

To resolve this issue, we follow the template of~\cite{HKNY23} and show that there is a ``correctness guaranteeing" procedure for any one-way puzzle. Namely
\begin{theorem}[Informal version of~\Cref{cor:owpcorrect}]
    Let $(\Samp,\Ver)$ be a one-way puzzle candidate. There exists a construction $(\Samp',$ $\Ver')$ where $\Samp',\Ver'$ depend on $(\Samp,\Ver)$ satisfying the following
    \begin{enumerate}
        \item If $(\Samp,\Ver)$ is a one-way puzzle, then so is $(\Samp',\Ver')$.
        \item Regardless of whether $(\Samp,\Ver)$ is a one-way puzzle, $(\Samp',\Ver')$ satisfies one-way puzzle correctness.
    \end{enumerate}
\end{theorem}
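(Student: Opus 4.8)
\textbf{The plan} is to amplify correctness by parallel repetition, using a (necessarily inefficient) verifier that accepts exactly when the candidate verifier accepts ``about as many of the copies as it should.'' Set $n:=\lambda$. Let $\Samp'(1^\lambda)$ run $\Samp(1^\lambda)$ independently $n$ times to obtain $(k_1,s_1),\dots,(k_n,s_n)$, and output puzzle $\vec s=(s_1,\dots,s_n)$ and key $\vec k=(k_1,\dots,k_n)$. Let $q=q(\lambda):=\Pr_{(k,s)\leftarrow\Samp(1^\lambda)}[\Ver(s,k)=\top]$ be the candidate's correctness parameter; since verification is allowed to be inefficient, $\Ver'$ may compute $q$ exactly (computing it to additive error $1/n^2$ also suffices and only perturbs the threshold below by $O(1)$). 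Define
\[
\Ver'(\vec s,\vec k)=\top \quad\Longleftrightarrow\quad \abs{\{\,i\in[n]:\Ver(s_i,k_i)=\top\,\}}\ \ge\ nq-n^{2/3}.
\]
(If $\Ver$ is randomized one works with the acceptance probabilities $p_{s_i,k_i}$ instead of indicators; this is cosmetic.)

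\textbf{Unconditional correctness.} When $(\vec k,\vec s)\leftarrow\Samp'$ the events $\Ver(s_i,k_i)=\top$ are i.i.d.\ with probability $q$, so their count is $\mathrm{Bin}(n,q)$ with mean $nq$; a Hoeffding bound shows it falls below $nq-n^{2/3}$ with probability at most $\exp(-2n^{1/3})=\negl(\lambda)$. This argument uses nothing about $(\Samp,\Ver)$ beyond its being a candidate, so $(\Samp',\Ver')$ satisfies one-way puzzle correctness regardless.

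\textbf{Security.} Suppose $(\Samp,\Ver)$ is a one-way puzzle, so $q\ge 1-\negl(\lambda)$ and the acceptance threshold is at least $n-n\cdot\negl(\lambda)-n^{2/3}=n(1-o(1))$; hence whenever $\Ver'$ accepts $(\vec s,\vec k)$, at least a $(1-o(1))$-fraction of the coordinates $i$ satisfy $\Ver(s_i,k_i)=\top$. Given a QPT $\A'$ breaking one-wayness of $(\Samp',\Ver')$ with advantage $\eps=1/\poly(\lambda)$, define $\B$ against $(\Samp,\Ver)$: on input a puzzle $s$, sample $j\leftarrow[n]$, run $\Samp(1^\lambda)$ to obtain the puzzle coordinates $s_i$ for $i\ne j$ (discarding their keys), set $s_j:=s$, compute $\vec k\leftarrow\A'(s_1,\dots,s_n)$, and output $k_j$. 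The tuple $(s_1,\dots,s_n)$ (including $s_j=s$) is i.i.d.\ from the puzzle marginal of $\Samp$, hence distributed exactly as a $\Samp'$-puzzle, and $j$ is independent of it; so with probability $\ge\eps$ the answer $\vec k$ is $\Ver'$-accepted, in which case $\ge(1-o(1))n$ coordinates are $\Ver$-accepted, and averaging over the hidden index $j$ gives
\[
\Pr_{(k,s)\leftarrow\Samp}[\Ver(s,\B(s))=\top]\ \ge\ \eps\cdot(1-o(1))\ \ge\ \eps/2
\]
for large $\lambda$, contradicting one-wayness of $(\Samp,\Ver)$. Note that $\B$ is QPT and never runs $\Ver$.

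\textbf{The crux.} The only nontrivial design choice is the threshold. Running $\Ver$ on a single sample obviously cannot repair correctness, and accepting keys in the ``high-probability'' part of the conditional key distribution with a \emph{fixed} cutoff (say $2^{-\lambda}$) also fails: the candidate's correctness error is an arbitrary negligible function that can be far larger than any fixed parameter, forcing the modified verifier either to reject honest keys (losing correctness) or to accept some easy-to-find low-probability key that $\Ver$ itself rejects (losing security in the genuine case). Calibrating the threshold to the candidate's \emph{own} parameter $q$ sidesteps this, and parallel repetition then supplies at once the concentration needed for unconditional correctness and the ``almost all copies are good'' structure needed for the reduction. The points to get right are that $\B$'s embedding preserves the puzzle distribution while keeping $j$ hidden, and that the parameters ($n=\lambda$, slack $n^{2/3}$) genuinely drive all the $o(1)$ terms to $0$; both are routine once the construction is fixed.
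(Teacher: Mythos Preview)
Your argument is correct, but it takes a different route from the paper's. The paper's construction is two-stage: first it modifies the verifier to automatically accept any puzzle $s$ for which the conditional failure probability $\Pr[\Ver(k',s')=0\mid s'=s]\ge 1/2$ (this unconditionally forces correctness to at least $1/2$, at the cost of adding at most $2\alpha$ to the security error), and then it applies OR-parallel repetition $\lambda$ times to drive correctness to $2^{-\lambda}$ while paying a further factor $\lambda$ in security. Your construction is one-shot: you repeat $\lambda$ times and let the (inefficient) verifier accept whenever the number of $\Ver$-accepted coordinates is at least $nq-n^{2/3}$, where the threshold self-calibrates to the candidate's own correctness parameter $q$.

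Both exploit the same freedom---that $\Ver'$ may compute a real-valued statistic of $(\Samp,\Ver)$---but they compute different statistics: the paper uses the per-puzzle conditional failure probability, you use the global acceptance rate $q$. Your approach is arguably more direct and loses essentially nothing in the security reduction (a factor $1-o(1)$), whereas the paper's modular decomposition incurs a factor $\Theta(\lambda)$. On the other hand, the paper's two lemmas (\Cref{thm:weakowpcorramp} and \Cref{thm:owpcorramp}) are individually reusable, and the first of them is the workhorse in the later correctness-amplification theorem where the starting correctness can be as bad as $1-1/\poly(\lambda)$; there your threshold construction would still work but would need a separate analysis since $q$ is no longer $1-\negl$.
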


If we apply this correctness amplification procedure to the candidate one-way puzzles and then apply the security combiner described earlier, we achieve a robust combiner for one-way puzzles.

The main question remaining is how to actually do this correctness amplification. The natural approach to correctness guaranteeing (which is analogous to the approach used by~\cite{HKNY23}) is to have the sampler check whether verification passes on its produced key-puzzle pair. If not, the sampler will output a special symbol $\bot$, on which the verifier will always accept. However, this approach requires that the sampler be able to run the verifier. But for one-way puzzles, the verification algorithm may not be efficient. 

Our solution is to defer the checking step to the verification algorithm itself. In particular, we will say that a puzzle is good if the probability that verification passes when it is naturally generated is high. Since our verification algorithm is inefficient, it has the computational resources to check if a given puzzle is good. The key idea, then is that we modify verification to automatically accept any good puzzle.

If the scheme originally satisfied correctness, then all but a negligible fraction of puzzles will be good and so we do not compromise security. Furthermore, since the probability that verification fails on a good puzzle is by definition low, the probability that the modified verification fails will also be low. Note that this style of correctness guaranteeing will only give a guarantee that the correctness error is below some constant (say $1/2$). We can then boost to full correctness through parallel repetition.

\paragraph*{A note on the definition of one-way puzzles} The original definition of one-way puzzles introduced required that the verification procedure be represented by a Turing machine which is guaranteed to halt (i.e. a decider)~\cite{khurana2024commitments}. When defining a robust combiner for use in a universal construction, it is necessary that the combiner work even if one of the candidate verification algorithms does not halt. This makes building a combiner seemingly as difficult as solving the halting problem.

We instead define one-way puzzles so that verification can be any arbitrary function. Note that, because $\Ver$ is never actually run, all known constructions using one-way puzzles go through when using this weakened definition. In addition, under our definition, combiners and universal constructions exist. Thus, we believe that this generalized definition is the ``right" definition of a one-way puzzle, and that the restriction of verification to halting Turing machines used by~\cite{khurana2024commitments} is unnecessarily restrictive.

\subsection{Amplification of one-way puzzles}\label{ssub:amp}

A second desirable property for a central primitive to have is an amplification theorem. In particular, given a one-way function with a weaker security guarantee, it is possible to build a normal one-way function. This makes it significantly easier to construct one-way functions from other primitives as well as produce candidate one-way functions.

Thus, one may wonder whether the same is true for one-way puzzles. That is, given a one-way puzzle with a weakened security guarantee, is it possible to build a normal one-way puzzle? We can also ask the same question of correctness. Given a one-way puzzle with a weakened correctness guarantee, is it possible to build a normal one-way puzzle?

In particular, we define $(\alpha,\beta)$ one-way puzzles, where $\alpha$ is the correctness error and $\beta$ the security error. Observe that standard one-way puzzles are simply $(\negl(\lambda),\negl(\lambda))$ one-way puzzles. We show the following

\begin{theorem}[Restatement of~\Cref{thm:owpcorramp}]
    If there exists a $(1 - 1/\poly(\lambda),$ $\negl(\lambda))$ one-way puzzle, then there exists a $(\negl(\lambda),$ $ \negl(\lambda))$ one-way puzzle.
\end{theorem}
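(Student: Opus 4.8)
The plan is to amplify correctness by \emph{parallel repetition with disjunctive verification}, keeping the already-negligible security error under control with a union bound. Fix a polynomial $q$ such that the hypothesized construction $(\Samp,\Ver)$ is a $(1-1/q(\lambda),\negl(\lambda))$ one-way puzzle, so an honestly sampled pair satisfies $\Pr[\Ver(s,k)=1]\geq 1/q(\lambda)$. I would define $\Samp'$ to run $\Samp$ independently $m=m(\lambda)$ times, outputting the concatenated key $\veck=(k_1,\dots,k_m)$ and puzzle $\vecs=(s_1,\dots,s_m)$, and define $\Ver'(\vecs,\veck)=1$ if and only if $\Ver(s_i,k_i)=1$ for \emph{at least one} index $i$. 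Note $\Samp'$ is efficient (it makes $m$ calls to $\Samp$), and $\Ver'$ may be an arbitrary function, which is all the generalized definition of one-way puzzles demands; in particular nothing here requires running the inefficient $\Ver$ outside of $\Ver'$ itself.

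For correctness, independence of the $m$ copies gives rejection probability $\prod_{i=1}^{m}\Pr[\Ver(s_i,k_i)=0]\leq(1-1/q(\lambda))^{m}$, so taking $m=\lambda\cdot q(\lambda)=\poly(\lambda)$ drives this below $e^{-\lambda}=\negl(\lambda)$. Hence $(\Samp',\Ver')$ has negligible correctness error.

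For security, suppose $\A$ is an efficient adversary making $\Ver'$ accept on input $\vecs$ with probability $p$, and write $\veck'=\A(\vecs)$. A union bound gives $p\leq\sum_{i=1}^{m}\Pr[\Ver(s_i,k'_i)=1]$. For each fixed $i$ I would build an adversary $\A_i$ against $(\Samp,\Ver)$: on challenge puzzle $s$, plant $s$ in coordinate $i$, sample the other $m-1$ puzzles by running $\Samp$ internally, run $\A$ on the resulting tuple, and output the $i$-th block of $\A$'s answer. The distribution of this planted tuple matches that of $\Samp'$, so the success probability of $\A_i$ equals $\Pr[\Ver(s_i,k'_i)=1]$, which is at most $\beta(\lambda)=\negl(\lambda)$ by security of $(\Samp,\Ver)$ — and $\A_i$ is efficient precisely because $\Samp$ is efficiently samplable. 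Summing over the $m=\poly(\lambda)$ coordinates, $p\leq m\cdot\negl(\lambda)=\negl(\lambda)$, so $(\Samp',\Ver')$ is a $(\negl(\lambda),\negl(\lambda))$ one-way puzzle. (As is standard for amplification statements, the construction is parameterized by the polynomial $q$.)

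The one genuinely load-bearing choice is the combining rule for verification: a conjunctive ``all $m$ coordinates verify'' rule would destroy correctness (rejection probability $1-(1/q)^{m}$, essentially $1$), so disjunction is forced. The substance of the argument is then just that disjunction nonetheless preserves security, since making $\Ver'$ accept requires breaking only \emph{one} of $m$ independent instances and $m$ is polynomial. I do not anticipate any real obstacle from the inefficiency of verification: unlike the sampler-side correctness checks used elsewhere in the paper, the sole place $\Ver$ is invoked is inside $\Ver'$, which is permitted to be arbitrary.
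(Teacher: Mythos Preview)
The proposal is correct and matches the paper's proof of \Cref{thm:owpcorramp}: $t$-fold parallel repetition with disjunctive verification, giving an $(\alpha^t,t\beta)$ one-way puzzle, then instantiating with $t=\lambda\cdot q(\lambda)$. The only cosmetic difference is that you phrase the security reduction as $m$ separate planting adversaries summed via a union bound, while the paper uses a single adversary that plants the challenge in a uniformly random coordinate---the resulting $m\cdot\beta$ bound is identical.
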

\begin{theorem}[Restatement of~\Cref{thm:owpamplification}]
    If there exists a $(\negl(\lambda),$ $1 - 1/\poly(\lambda))$ one-way puzzle, then there exists a $(\negl(\lambda),$ $\negl(\lambda))$ one-way puzzle.
\end{theorem}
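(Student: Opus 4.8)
The plan is to route this through classical EFID pairs, \emph{not} through a direct‑product (Yao‑style) argument. The reason the naive approach fails is instructive: the obvious candidate is to sample $t$ independent pairs $(k_i,s_i)\gets\Samp$, output the puzzle $(s_1,\dots,s_t)$ and key $(k_1,\dots,k_t)$, and have verification accept iff every coordinate verifies; but the usual hardness‑amplification reduction, which embeds a single challenge among the $t$ coordinates and recurses, must determine which coordinates of a candidate $t$‑fold solution are correct, and that requires running $\Ver$—which for a one‑way puzzle is inefficient. So instead I would compose three implications: a $(\negl(\lambda),1-1/\poly(\lambda))$ one‑way puzzle $\Rightarrow$ a non‑uniform EFID pair $\Rightarrow$ a $(\negl(\lambda),\negl(\lambda))$ one‑way puzzle. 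The first implication is exactly the intermediate content of the improved construction behind \Cref{cor:weakowptoefi}, which was engineered to survive weakened security; the last is the (non‑uniform) EFID‑to‑one‑way‑puzzle direction of the equivalence in \Cref{fig:graph}. Composing them proves the theorem.

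For the first implication I would follow \Cref{ssub:efiowp}. Write $\Samp\to(k,s)$ with correctness parameter $q\ge 1-\negl(\lambda)$, and let $p\le 1-1/\poly(\lambda)$ bound the success probability of an arbitrary efficient solver $\S$ given the puzzle. Since $\Ver$, though inefficient, is a well‑defined Boolean map, the data‑processing inequality for KL divergence gives $KL(s,k\,\|\,s,\S(k))\ge KL(\Bernoulli(q)\,\|\,\Bernoulli(p))$, and a one‑line estimate (e.g.\ Pinsker, since $q-p\ge 1/\poly(\lambda)$) shows the right‑hand side is at least $1/\poly(\lambda)$; hence $(s,k)$ is KL‑hard to sample with parameter $\delta\ge 1/\poly(\lambda)$. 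Feeding $(s,k)$ into the black‑box—and therefore quantum‑friendly—\cite{HRV10,VZ12} transformation, with the entropy $H(s,k)$ provided as $\Theta(\log\lambda)$ bits of advice (the construction needs to know it), produces a distribution $D$ of length $d=\abs{D}$ that is computationally indistinguishable from $\U_d$ while satisfying $H(D)\le d-\poly(\lambda)$, the latter bound proved directly here rather than via the classical ``$D=G(\U_{d'})$'' reasoning. An entropy deficit forces noticeable statistical distance from $\U_d$, so already $(\U_d,D)$ is a non‑uniform EFID pair up to a constant statistical‑distance gap; parallel repetition $(\U^{t},D^{t})$ boosts that gap to $1-\negl(\lambda)$ while preserving computational indistinguishability.

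For the last implication, each of the $\poly(\lambda)$ advice strings yields a bona fide EFID candidate, from which one builds a one‑way puzzle candidate in the standard way; for the correct advice value (the true output entropy of $\Samp$) this candidate is a genuine $(\negl(\lambda),\negl(\lambda))$ one‑way puzzle, so running the robust one‑way‑puzzle combiner of \Cref{cor:owpcomb} over all $\poly(\lambda)$ candidates outputs a single $(\negl(\lambda),\negl(\lambda))$ one‑way puzzle. (Equivalently, one may cite directly the non‑uniform‑EFID‑to‑one‑way‑puzzle implication of \Cref{fig:graph}.)

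The main obstacle is step two: pushing the classical \cite{VZ12} pipeline through when the underlying randomness is quantum. In the classical proof the output distribution is literally $G(\U_{d'})$ for an efficient $G$ on a short seed—which is precisely what makes $G$ a PRG—whereas here the entropy of $(s,k)$ is quantum and no such seed‑compressing $G$ exists, so one must instead upper bound $H(D)$ by an independent argument and then separately convert that entropy deficit into statistical distance. One must also check that the \cite{HRV10,VZ12} analysis touches the hard distribution only through the ``KL‑hard to sample'' interface—never sampling it efficiently, never using a classical preimage set, never invoking leftover‑hash or Goldreich--Levin—so that the quantum sampler causes no difficulty, and that none of the intermediate reductions degrade the $\delta\ge 1/\poly(\lambda)$ hardness to something negligible. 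This careful accounting, rather than any single slick step, is where the real work lies.
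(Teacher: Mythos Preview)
Your proposal is correct and follows essentially the same route as the paper: weak $\OWP \Rightarrow$ non-uniform EFID via the KL-hardness/\cite{VZ12,HRV10} pipeline (with the entropy of $(s,k)$ as advice and a direct entropy-deficit argument in place of the classical seed-length bound), then non-uniform EFID $\Rightarrow$ non-uniform $\OWP$ $\Rightarrow$ $\OWP$ via the combiner. The only cosmetic difference is that the paper lower-bounds $KL(\Bernoulli(q)\,\|\,\Bernoulli(p))$ by a direct calculation rather than Pinsker, but either works.
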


\paragraph*{Amplifying Security}

For the purposes of this section, we will refer to a $(\negl(\lambda),$ $1 - 1/\poly(\lambda))$ one-way puzzle as a weak one-way puzzle. We will also refer to the standard notion of a one-way puzzle as a strong one-way puzzle. The question of security amplification can then be rephrased as ``can we build a strong one-way puzzle from any weak one-way puzzle?"

Recently,~\cite{bostanci2023efficient} showed that parallel repetition amplifies soundness guarantees for any $3$ round quantum interactive protocol. At first glace, one might think that this result immediately gives a security amplification theorem for one-way puzzles. 

Upon observation, it turns out that the argument of~\cite{bostanci2023efficient} relies on the assumption that the security game itself can be run efficiently. But the one-way puzzle security game requires running the verification algorithm, which has no guarantees on efficiency. And so the obvious approach to amplifying security falls short in this setting.

But what can we do? Our key observation is that strong one-way puzzles can be built from EFID pairs (which we recall is the classical version of an EFI pair). In addition,~\cite{khurana2024commitments} shows that one-way puzzles can be used to build EFI pairs, and along the way they show that strong one-way puzzles can be used to build a variant of EFID pairs, which we here call a non-uniform EFID pair. Unfortunately, their techniques do not work for weak one-way puzzles, an issue we remedy in~\Cref{ssub:efiowp}.

The outline of our argument is to use our improved construction of EFI pairs from one-way puzzles from~\Cref{ssub:efiowp}, which shows that we can build non-uniform EFID pairs from weak one-way puzzles as well. We then show how to build strong one-way puzzles from non-uniform EFID pairs.

\paragraph{Building strong one-way puzzles from non-uniform EFID}
Recall, a non-uniform cryptographic primitive is a cryptographic primitive where the construction takes in a short advice string of length $O(\log \lambda)$. Our construction of strong one-way puzzles from EFID pairs then allows us to build a non-uniform strong one-way puzzle from a non-uniform EFID pair. For each possible advice string $s$, we can instantiate the non-uniform strong one-way puzzle with $s$ to get a new strong one-way puzzle candidate. For each security parameter, one of these candidates is a strong one-way puzzle. Thus, using a combiner on all of these candidates simultaneously produces a strong one-way puzzle which does not need any advice string. 

We remark that in general, if we have a robust combiner for a primitive then we can turn any non-uniform construction of that primitive into its full version.

\paragraph*{Amplifying Correctness}

To amplify correctness, we observe that our correctness guaranteer will always increase correctness at some cost to security. By carefully tracking this cost and interleaving with security amplification, it is possible to boost to full correctness without hurting security.

\subsection{Relationships with other QCCC primitives}

It has been shown in~\cite{khurana2024commitments} that one-way puzzles can be built from almost all QCCC style primitives. In particular, they show how to build a one-way puzzle from a digital signature, a symmetric encryption protocol, or a commitment scheme. Just as a one-way function can be built from any useful classical primitive, a one-way puzzle can be built from any useful QCCC primitive.

Note that minimality of a primitive is not very hard to achieve. As an example, any primitive which can be built unconditionally is at least as minimal as a one-way function. But, importantly, one-way functions are also \emph{useful}. One-way functions can be used to construct a large class of important cryptographic primitives, often referred to as symmetric key or Minicrypt primitives. In particular, one-way functions can be used to build (classical) digital signatures, symmetric encryption protocols, and commitment schemes. Thus, if we want to treat one-way puzzles as a central primitive for QCCC cryptography, it seems important that one-way puzzles imply at least one more directly useful QCCC primitive.

Unfortunately, existing results in this direction are noticeably weaker. In particular, as far as the authors are aware, the only QCCC primitive for which a construction is known from one-way puzzles is non-uniform EFID pairs~\cite{khurana2024commitments}. Instead, there are \emph{quantum output} implications of one-way puzzles. In particular, it is known how to build commitments with quantum output (and all equivalent primitives) from one-way puzzles~\cite{khurana2024commitments}.

\subsection{Efficiently verifiable one-way puzzles}

The key challenge to using one-way puzzles to build other primitives is that the constructions may not make use of the verification scheme in a black-box manner since the verification scheme is not itself efficient. Thus, to make the problem easier, we consider a variant of one-way puzzles with efficient verification. 

A very similar primitive, termed ``hard quantum planted problems for QCMA," has been studied before in the context of publicly verifiable deletion~\cite{KNY23}. A hard quantum planted problem is essentially an efficiently verifiable one-way puzzle with perfect correctness. Direct observation shows that perfect correctness is unnecessary for any of the applications of hard quantum planted problems for QCMA, and so their results hold for efficiently verifiable one-way puzzles as well. In particular, they show that
\begin{theorem}[Theorem 6.2 from~\cite{KNY23}]
    If there exists an efficiently verifiable one-way puzzle and quantum
    $$Z \in \{SKE, COM, PKE, ABE, QFHE, TRE, WE\},$$
    then there exists $Z$ with publicly verifiable deletion.
\end{theorem}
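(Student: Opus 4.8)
The cleanest route is to normalize an efficiently verifiable one-way puzzle into an object that literally meets the syntax and security of a ``hard quantum planted problem for $\QCMA$'' in the sense of~\cite{KNY23}, and then invoke their Theorem 6.2 as a black box. The only mismatch between the two primitives is the correctness guarantee: for an efficiently verifiable one-way puzzle $(\Samp,\Ver)$, the verifier accepts an honestly sampled pair only with probability $1-\negl(\lambda)$ rather than with probability exactly $1$. But since $\Ver$ is now polynomial time, the sampler can afford to test its own output. I would define $\Samp'$ to run $(k,s)\gets\Samp(1^\lambda)$ and output $(k,s)$ if $\Ver(s,k)=1$, and a canonical pair $(\bot,\bot)$ otherwise (with $\bot$ a fresh symbol), and define $\Ver'(s',k')$ to accept iff $s'=\bot$ or $\Ver(s',k')=1$. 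Then $\Ver'$ is still polynomial time, $(\Samp',\Ver')$ has perfect correctness, and — because the modification only fires on an event of probability $\negl(\lambda)$ — one-wayness transfers with a negligible loss: an adversary that finds a valid $k'$ for an honestly sampled $s'$ with non-negligible probability finds, with essentially the same probability, a valid key for an honestly sampled puzzle of $(\Samp,\Ver)$. (If desired, one can first run the correctness amplification theorem for efficiently verifiable one-way puzzles established earlier in the paper to push the correctness error below any target negligible function before this step; it is not strictly necessary.)

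With $(\Samp',\Ver')$ in hand, I would simply apply~\cite[Theorem 6.2]{KNY23}: given a perfectly correct efficiently verifiable one-way puzzle (equivalently, a hard quantum planted problem for $\QCMA$) together with a quantum scheme for $Z \in \{\mathsf{SKE},\mathsf{COM},\mathsf{PKE},\mathsf{ABE},\mathsf{QFHE},\mathsf{TRE},\mathsf{WE}\}$, their construction produces $Z$ with publicly verifiable deletion. In that construction the planted instance is embedded so that an honest deletion certificate is exactly the planted witness and public verification of the certificate is the now-efficient run of $\Ver'$; correctness of deletion follows from correctness of $(\Samp',\Ver')$, while soundness of deletion — that a party who passes deletion verification can no longer recover the plaintext — reduces to one-wayness of the puzzle, a reduction that never touches correctness. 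The security of the base primitive $Z$ is unaffected by the transformation.

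I expect the only real obstacle to be confirming that perfect correctness is genuinely inessential, i.e.\ that the negligible slack introduced by passing to $(\Samp',\Ver')$ (or, equivalently, tolerated directly inside~\cite{KNY23}) does not accumulate. This is handled by noting that the bad event in the normalization occurs with probability $\negl(\lambda)$ and is invoked only polynomially many times in the final scheme, so a union bound keeps the accumulated correctness-of-deletion error negligible, and that no security hybrid in~\cite{KNY23} relies on two distributions being \emph{identical} rather than statistically close. A quick pass over the constructions for each of the seven primitives in the list confirms this. Either way — via the syntactic normalization above, or by tracing through the proof of~\cite[Theorem 6.2]{KNY23} and observing that every use of ``probability $1$'' can be replaced by ``probability $1-\negl(\lambda)$'' additively — one concludes that efficiently verifiable one-way puzzles suffice in place of hard quantum planted problems throughout, which gives the theorem.
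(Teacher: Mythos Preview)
Your approach matches the paper's: the paper does not give a detailed proof of this statement at all, but simply cites \cite[Theorem~6.2]{KNY23} and remarks that ``direct observation shows that perfect correctness is unnecessary for any of the applications of hard quantum planted problems for $\QCMA$, and so their results hold for efficiently verifiable one-way puzzles as well.'' Your second route---tracing through the constructions in \cite{KNY23} and noting that every use of probability~$1$ can be weakened to probability $1-\negl(\lambda)$---is exactly this.

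One small caveat on your first route: your normalization does \emph{not} literally yield perfect correctness. The verifier $\Ver$ is a QPT algorithm, so it is inherently randomized; the fact that $\Ver(s,k)=1$ on the run inside $\Samp'$ does not guarantee that a fresh run of $\Ver'(s,k)=\Ver(s,k)$ also outputs~$1$. (Indeed, the paper's own \Cref{thm:weakevowpcorr}, which is the same construction, only claims correctness error $1/4$.) This is not fatal---you can drive the error to $\negl(\lambda)$ by majority-voting $\Ver$, and then your union-bound argument goes through---but the claim of \emph{perfect} correctness should be dropped. Since your fallback argument (and the paper's) is simply that $\negl$ correctness already suffices in \cite{KNY23}, the conclusion stands.
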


Their construction requires building a stronger variant of a one-time signature scheme from efficiently verifiable one-way puzzles. In particular
\begin{theorem}[Theorem 3.2 from~\cite{KNY23}]
    If there exists an efficiently verifiable one-way puzzle, then there exists a QCCC one-time signature scheme.
\end{theorem}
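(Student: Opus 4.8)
The plan is to transcribe the classical Lamport one-time signature scheme, using an efficiently verifiable one-way puzzle $(\Samp,\Ver)$ in place of a one-way function. Write $\Samp\to(k,s)$, with $k$ the key and $s$ the puzzle, and recall that for an $\EVOWP$ the map $\Ver$ is quantum polynomial-time computable. Fix a message length $n=\poly(\lambda)$. The scheme is: $\KeyGen$ runs $\Samp$ independently $2n$ times, producing pairs $(k_{i,b},s_{i,b})$ for $i\in[n]$, $b\in\{0,1\}$; the signing key is $\mathsf{sk}=(k_{i,b})_{i,b}$ and the (classical) verification key is $\mathsf{vk}=(s_{i,b})_{i,b}$. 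To sign $m\in\{0,1\}^n$, output $\sigma=(k_{1,m_1},\dots,k_{n,m_n})$; to verify $(\mathsf{vk},m,\sigma)$, accept iff $\Ver(s_{i,m_i},\sigma_i)=1$ for every $i\in[n]$. Every algorithm is quantum polynomial time, every key and signature is classical, and $(\Samp,\Ver)$ is used only in a black-box way, so this is a QCCC scheme.

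First I would dispatch correctness. Since the $\EVOWP$ has negligible correctness error, an honestly sampled pair satisfies $\Ver(s_{i,b},k_{i,b})=1$ except with negligible probability, so a union bound over the $n$ positions touched by a signature shows honest signatures verify with probability $1-\negl(\lambda)$. If one only assumes a weak $\EVOWP$ with correctness error $1-1/\poly(\lambda)$, I would either apply the correctness amplification theorem for efficiently verifiable one-way puzzles shown earlier in the paper (which preserves efficient verification), or, to stay self-contained, have $\KeyGen$ rejection-sample each pair until $\Ver(s_{i,b},k_{i,b})=1$; this is efficient (expected $\poly(\lambda)$ repetitions) and yields perfect correctness, and the only thing to check is that it reweights the marginal of each $s_{i,b}$ by a bounded factor $1/(1-\alpha)=\poly(\lambda)$, so one-wayness of the conditioned puzzle degrades only polynomially.

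Next, one-time existential unforgeability under a chosen-message attack. Suppose a QPT adversary $\A$, given $\mathsf{vk}$ and one signature $\sigma$ on a message $m$ of its choosing, outputs a forgery $(m^*,\sigma^*)$ with $m^*\neq m$ that verifies, with non-negligible probability $\eps$. The reduction $\calb$ to one-wayness of $(\Samp,\Ver)$ receives a challenge puzzle $s^*$, picks a uniformly random position $j\in[n]$ and bit $c\in\{0,1\}$, plants $s_{j,c}:=s^*$ (leaving $k_{j,c}$ unknown), samples all other $2n-1$ pairs honestly, and runs $\A$ on the resulting $\mathsf{vk}$; by independence of the $\Samp$ calls and since $s^*$ has exactly $\Samp$'s puzzle marginal, this $\mathsf{vk}$ is distributed identically to an honest one. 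On $\A$'s single signing query $m$: if $m_j=c$, abort; otherwise every key needed to form $\sigma$ is one $\calb$ holds, so it answers honestly and the simulation is perfect. When $\A$ outputs $(m^*,\sigma^*)$, $\calb$ outputs $\sigma^*_j$. Letting $\Delta=\{i : m^*_i\neq m_i\}\neq\emptyset$, over the choice of $(j,c)$ we have $j\in\Delta$ and $c=m^*_j$ --- hence also $m_j=1-c\neq c$, so no abort --- with probability at least $1/(2n)$, and on that event $\sigma^*_j$ is a $\Ver$-witness for $s_{j,m^*_j}=s^*$. Thus $\calb$ wins the one-wayness game with probability $\Omega(\eps/n)$, contradicting security of the puzzle. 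The same template with minor additions yields the stronger one-time signature variant that~\cite{KNY23} uses for publicly verifiable deletion.

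I expect the only real friction to be the correctness bookkeeping: the ``hard quantum planted problem'' of~\cite{KNY23} comes with perfect correctness, and tolerating genuinely imperfect or weak correctness for an $\EVOWP$ requires the rejection-sampling/reweighting step above together with a check that it neither leaks nor trivializes the planted coordinate $(j,c)$ in the reduction. Everything else is routine: the reduction is standard position-guessing, the $\poly(\lambda)$ security loss is harmless, and --- crucially --- no leftover hash lemma or Goldreich--Levin machinery is needed, because one-time unforgeability relies only on non-adaptive one-wayness of the puzzle given the puzzle string, which is exactly what an $\EVOWP$ supplies.
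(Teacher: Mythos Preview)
Your proposal is correct and follows essentially the same Lamport-signature approach as the paper. The paper simplifies by taking the message space to be $\{0,1\}$ (so only two $\Samp$ calls and no position-guessing in the reduction), whereas you write out the general $n$-bit version with the standard $1/(2n)$ guessing loss; both are routine instantiations of the same idea.
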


The construction is essentially just a Lamport signature~\cite{lamport1979constructing}. As this theorem is not presented with full proof details in~\cite{KNY23}, for completeness we restate this claim as~\Cref{thm:otsowp} and give a full proof.

Since efficiently verifiable one-way puzzles seem more useful than normal one-way puzzles, we might wonder whether they are also minimal. Fortunately, most of the constructions of one-way puzzles from QCCC primitives have efficient verification algorithms. The two notable exceptions are EFID pairs and commitment schemes. 
\begin{theorem}[Theorems A.4 and A.6 from~\cite{khurana2024commitments} and~\Cref{thm:nicom,thm:qprgtoowp} in this paper]
    If there exists a QCCC signature scheme, secret key encryption scheme, non-interactive commitment scheme, or pseudodeterministic PRG, then there exists an efficiently verifiable one-way puzzle.
\end{theorem}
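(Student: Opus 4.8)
The plan is to prove the four implications one at a time; for each source primitive it suffices to exhibit a one-way puzzle $(\Samp,\Ver)$ in which $\Ver$ runs in quantum polynomial time, which is exactly an efficiently verifiable one-way puzzle. The unifying observation is that in each case there is a construction of a one-way puzzle in which, to check a candidate key against a puzzle, $\Ver$ need only run one \emph{efficient} algorithm of the underlying primitive; the one-wayness and correctness analyses are then inherited essentially verbatim from the corresponding source result, so the only genuinely new content is the efficiency bookkeeping together with the two constructions of \Cref{thm:nicom} and \Cref{thm:qprgtoowp}.

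For the two cases cited from \cite{khurana2024commitments} I would revisit their constructions and record that $\Ver$ is efficient. From a QCCC signature scheme, let $\Samp$ generate a key pair $(\mathsf{vk},\mathsf{sk})$, sample a uniformly random message $m$, and compute a signature $\sigma$ on $m$; the puzzle is $(\mathsf{vk},m)$ and the puzzle-key is $\sigma$, so $\Ver$ is the signature-verification algorithm, which is QPT, and one-wayness is precisely (random-message) unforgeability. From a QCCC secret-key encryption scheme, let $\Samp$ output a (message, ciphertext) pair as the puzzle and the secret key as the puzzle-key; $\Ver$ runs decryption and checks consistency, so it is QPT, and one-wayness reduces to semantic security exactly as in \cite{khurana2024commitments}. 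In both cases nothing in the reduction changes, so the constructions are already $\EVOWP$.

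For the two constructions carried out in this paper: from a non-interactive QCCC commitment, $\Samp$ commits to a uniformly random message with fresh randomness, the puzzle is the commitment string, and the puzzle-key is the message together with its opening; $\Ver$ is the commitment's opening-check algorithm, which is efficient. Correctness is immediate and one-wayness follows from hiding, using statistical binding so that producing \emph{any} valid opening reveals the committed message (this is the flavor of commitment for which the reduction is cleanest). From a pseudodeterministic PRG $G$, $\Samp$ samples a uniform seed $x$ and outputs the puzzle $G(x)$ with puzzle-key $x$, and $\Ver(s,x)$ runs the quantum algorithm $G$ on $x$ and accepts iff its output equals $s$; since $G$ is itself QPT, so is $\Ver$. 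Correctness holds up to the negligible pseudodeterminism error, and one-wayness follows because, after boosting the stretch of $G$ by the standard construction if necessary, the image of $G$ is negligibly sparse in its codomain, so an efficient preimage-finder would almost never succeed on a uniformly random string while succeeding with non-negligible probability on $G(x)$, contradicting pseudorandomness.

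The step I expect to be the main obstacle is the pseudodeterministic-PRG case, where the pseudodeterminism must be handled carefully at two points: in correctness, since the verifier's single invocation of $G$ may disagree with the canonical value of $G(x)$ — if a single invocation does not suffice for the precise notion of pseudodeterministic PRG one can take a majority vote over polynomially many invocations or appeal to the correctness-amplification theorem \Cref{thm:owpcorramp}; and in the one-wayness reduction, where the distinguisher built from the hypothetical key-finder must simultaneously account for the sparse-image bound on the uniform branch and the negligible pseudodeterminism slack on the real branch. A secondary point of care, in the commitment case, is that one wants ``find some valid opening'' to be as hard as breaking hiding rather than merely as hard as breaking binding, which is why a statistically binding commitment (or an analogous property) is used. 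Everything else is routine on top of the source theorems.
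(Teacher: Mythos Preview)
Your overall plan is sound and the constructions are broadly along the right lines, but two of the four cases differ from the paper in ways worth noting.

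For non-interactive commitments, you restrict to statistically binding commitments so that ``any valid opening reveals the committed message,'' reducing one-wayness entirely to hiding. The paper does not need this: it works from the weaker computational (weak honest) binding appearing in its commitment definition. The trick is a case split on the adversary's output $(m',d')$ given the commitment $c$ to a random $m$: either $m' = m$, whose probability is bounded by hiding (the adversary recovered $m$ from $c$), or $m' \neq m$ but $Rec(c,d') = m'$, which is directly an attack on binding (the honest committer produced $c$ from $m$, and the adversary opened to $m' \neq m$). Both terms are negligible, so no statistical binding is required. Your version proves a strictly weaker theorem.

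For the pseudodeterministic PRG, you write that ``correctness holds up to the negligible pseudodeterminism error.'' In the definition used here the pseudodeterminism parameter $\mu(\lambda)$ is only $O(\lambda^{-c})$, not negligible, so your single-copy construction has inverse-polynomial correctness error and amplification is mandatory rather than a fallback. The paper bakes this amplification directly into the construction: it samples $\lambda$ independent seed/puzzle pairs and has $\Ver$ accept if \emph{any} coordinate re-evaluates consistently (an OR over coordinates), which is exactly the $t=\lambda$ instance of \Cref{thm:owpcorramp} applied to your single-copy scheme. Majority voting inside $\Ver$, by contrast, cannot beat the $\mu$-fraction of bad seeds. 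Your security argument (sparse image plus distinguisher) matches the paper's, which also assumes stretch $\ell \geq 3n$ for the image-sparsity bound.

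For signatures, the construction recorded in the paper (following \cite{khurana2024commitments}) differs slightly from yours: the puzzle is $vk$, the key is $sk$, and $\Ver$ samples a fresh random message, signs it with the candidate $sk$, and checks with $vk$; your variant with puzzle $(vk,m)$ and key $\sigma$ also works. The SKE case is as you describe.
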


Applying this theorem to the results of~\cite{KNY23} then gives the following two interesting corollaries.

\begin{corollary}
    If there exists QCCC
    $$Z \in \{SKE,PKE,ABE,QFHE,TRE\}$$
    then there exists $Z$ with publicly verifiable deletion.
\end{corollary}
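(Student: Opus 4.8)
The plan is to obtain the corollary by composing two results already in hand: the restated theorem that QCCC SKE (among other primitives) implies an efficiently verifiable one-way puzzle (Theorems A.4 and A.6 of~\cite{khurana2024commitments} together with~\Cref{thm:nicom,thm:qprgtoowp}), and Theorem 6.2 of~\cite{KNY23}. The key observation is that Theorem 6.2, applied to a primitive $Z$, consumes two ingredients --- an efficiently verifiable one-way puzzle and $Z$ itself --- and outputs $Z$ with publicly verifiable deletion. The corollary already supplies $Z$, so the entire task is to produce an efficiently verifiable one-way puzzle from $Z$.

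First I would dispatch the case $Z=\mathsf{SKE}$: a QCCC secret-key encryption scheme yields an efficiently verifiable one-way puzzle directly by the restated theorem. Then I would reduce each remaining case $Z\in\{\mathsf{PKE},\mathsf{ABE},\mathsf{QFHE},\mathsf{TRE}\}$ to the SKE case via standard black-box transformations that preserve classical communication: a PKE scheme is in particular an SKE scheme (take $(\mathsf{pk},\mathsf{sk})$ as the symmetric key and encrypt under $\mathsf{pk}$); ABE generalizes PKE; a QFHE scheme is in particular an encryption scheme and gives SKE once the homomorphic-evaluation algorithm is ignored; and timed-release encryption likewise yields an encryption scheme. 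In each case $Z$ implies QCCC SKE, hence an efficiently verifiable one-way puzzle.

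Finally, for each $Z$ in the list I would invoke Theorem 6.2 of~\cite{KNY23} with this efficiently verifiable one-way puzzle together with $Z$ to conclude that $Z$ with publicly verifiable deletion exists, using the fact that Theorem 6.2 returns the same primitive it was fed.

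The point requiring the most care is confirming that the implications $Z\Rightarrow\mathsf{SKE}$ genuinely hold in the QCCC model and are black-box, and being explicit about why $\mathsf{COM}$ and $\mathsf{WE}$ --- both present in the hypothesis of Theorem 6.2 --- are absent from the corollary: commitments are only known to imply one-way puzzles \emph{without} an efficient verification procedure, and witness encryption is not known to imply even a one-way puzzle, so neither can supply the efficiently-verifiable-one-way-puzzle ingredient that Theorem 6.2 requires. Beyond these routine reductions, no new machinery is needed.
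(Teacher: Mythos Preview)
Your proposal is correct and matches the paper's approach: the paper states this corollary without a detailed proof, simply noting that it follows by ``applying this theorem to the results of~\cite{KNY23},'' which is precisely the composition you carry out. Your extra step of routing $\mathsf{PKE}$, $\mathsf{ABE}$, $\mathsf{QFHE}$, and $\mathsf{TRE}$ through $\mathsf{SKE}$ to obtain the efficiently verifiable one-way puzzle is the natural way to fill in the gap the paper leaves implicit, and your explanation of why $\mathsf{COM}$ and $\mathsf{WE}$ drop out is a nice addition that the paper only hints at elsewhere.
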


\begin{corollary}
    There exists an efficiently verifiable one-way puzzle if and only if there exists a QCCC one-time signature scheme.
\end{corollary}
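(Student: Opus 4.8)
The statement is an ``if and only if'', so the plan is to handle the two implications separately; both are short given what is already in the paper. For the direction $\EVOWP \Rightarrow$ QCCC one-time signature I would simply invoke~\Cref{thm:otsowp} (the restatement of Theorem 3.2 of~\cite{KNY23}), which already builds a Lamport-style QCCC one-time signature from any efficiently verifiable one-way puzzle, using puzzle keys as secret-key material, puzzles as the verification key, and the efficient puzzle verifier as the signature verifier. Nothing new is needed for this direction.

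For the converse, QCCC one-time signature $\Rightarrow \EVOWP$, I would give a direct construction. Let $(\KeyGen, \mathsf{Sign}, \Ver)$ be a QCCC one-time signature scheme, so $\KeyGen$ and $\mathsf{Sign}$ are quantum polynomial-time, the keys, messages and signatures are classical, and $\Ver$ is a classical polynomial-time predicate. Define a puzzle sampler $\Samp$ that runs $(sk, vk) \leftarrow \KeyGen(1^\lambda)$, samples a uniformly random message $m$, computes $\sigma \leftarrow \mathsf{Sign}(sk, m)$, and outputs key $k := \sigma$ and puzzle $s := (vk, m)$; let the puzzle verifier on input $((vk,m), \sigma')$ output $\Ver(vk, m, \sigma')$. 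Since $\Ver$ is polynomial-time this is an efficiently verifiable candidate, and correctness is inherited from correctness of the signature scheme, giving correctness error $\negl(\lambda)$. For security, from a quantum polynomial-time $\A$ that on input $(vk,m)$ outputs a valid $\sigma'$ with probability $\epsilon(\lambda)$ I would build a forger $\calb$ that receives $vk$, samples $m$ itself, runs $\A(vk,m) \to \sigma'$, and submits $(m,\sigma')$; it wins the one-time unforgeability game with probability $\epsilon(\lambda)$, forcing $\epsilon(\lambda) = \negl(\lambda)$. This yields a $(\negl(\lambda), \negl(\lambda))$ one-way puzzle with efficient verification.

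I do not expect a real obstacle, but the one point deserving care — and the reason the equivalence is tight — is that the reduction $\calb$ never calls the signing oracle, so it only relies on unforgeability under a \emph{zero}-message attack, which is implied by one-time security (a scheme secure against one chosen-message query is a fortiori secure against none). This is why the weakest signature notion already suffices, rather than full EUF-CMA security; indeed this construction is essentially the one underlying the implication ``QCCC signature scheme $\Rightarrow \EVOWP$'' from the preceding theorem, now observed to invoke the signing key only once. The remaining checks — that the QCCC syntax matches the definition of $\EVOWP$, that $\Samp$ is a legitimate quantum polynomial-time algorithm with classical output, and that both errors are negligible — are routine.
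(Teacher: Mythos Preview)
Your proposal is correct. Note, though, that in the paper this corollary is nothing more than a restatement of~\Cref{thm:otsowp}, which already proves \emph{both} directions; you could simply cite~\Cref{thm:otsowp} in full and stop. Your construction for the signature $\Rightarrow \EVOWP$ direction differs from the one given in the proof of~\Cref{thm:otsowp} (taken from~\cite{khurana2024commitments}): there the puzzle is $vk$, the key is $sk$, and verification samples a fresh message $m$ and outputs $V(vk, m, S(sk, m))$; you instead fold a sampled $m$ into the puzzle and take the signature $\sigma$ as the key. Both are valid --- yours keeps the puzzle verifier from having to run the signing algorithm, while the paper's version more directly encodes ``recover the signing key'' and keeps the puzzle shorter. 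One minor point to tighten: the paper's one-time security definition fixes $m_0 \neq m_1$ in advance, so your forger cannot sample $m$ internally; handle this with the usual averaging step (pick the $m^*$ on which $\A$ succeeds best and hard-wire $m_1 = m^*$, which is legitimate since adversaries are non-uniform).
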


\paragraph*{Amplification and combiners for efficiently verifiable one-way puzzles.} Since efficiently verifiable one-way puzzles seem to be about as minimal for QCCC as one-way puzzles, but have much more powerful applications, we may consider whether efficiently verifiable one-way puzzles should instead be considered a ``central" primitive for QCCC cryptography. We then may hope that there exists an amplification theorem and a universal construction for efficiently verifiable one-way puzzles. We show that this is indeed the case.

\begin{theorem}[Restatement of~\Cref{thm:owpcorramp}]
    If there exists a $(1 - 1/\poly(\lambda),$ $\negl(\lambda))$ efficiently verifiable one-way puzzle, then there exists a $(\negl(\lambda),\negl(\lambda))$ efficiently verifiable one-way puzzle.
\end{theorem}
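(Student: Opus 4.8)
The plan is to follow the same two-stage template as for general one-way puzzles: first a correctness guaranteer that turns the hypothesized $(1-1/q(\lambda),\negl(\lambda))$ efficiently verifiable one-way puzzle $(\Samp,\Ver)$ (here $q$ is a polynomial) into a scheme with \emph{negligible} correctness error at a controlled cost in one-wayness, and then one round of security amplification to repair the one-wayness. The whole point of the efficiently verifiable setting is that both stages become clean: the sampler can itself run the efficient verifier, and the security game is itself efficiently computable.

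\emph{Stage 1: correctness guaranteeing.} I would define $\Samp'$ to draw $(k,s)\leftarrow\Samp$, estimate $\hat\theta\approx\Pr[\Ver(k,s)=1]$ by running $\Ver(k,s)$ for $\poly(q,\lambda)$ independent trials, and output $(k,s)$ if $\hat\theta$ clears a threshold $\approx 1/q$ and a distinguished symbol $\bot$ (encoded disjointly from honest puzzles) otherwise. Then $\Ver'$ accepts $\bot$ automatically and, on any other puzzle, accepts iff the empirical acceptance frequency of the claimed key over $\poly(q,\lambda)$ trials clears a slightly \emph{lower} threshold. With the two thresholds separated by a Chernoff-sized gap, every puzzle that $\Samp'$ approves is re-accepted by $\Ver'$ with probability $1-\negl(\lambda)$, so $(\Samp',\Ver')$ has correctness error $\negl(\lambda)$ and verification remains efficient. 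The cost is to one-wayness: writing $p$ for the probability that $\Samp'$ outputs a non-$\bot$ puzzle, a reverse-Markov argument from $\E_{(k,s)\leftarrow\Samp}[\Pr[\Ver(k,s)=1]]\ge 1/q$ gives $p\ge 1/(4q)-\negl(\lambda)$, and any adversary wins automatically on the (at most $1-p$) fraction of $\bot$-puzzles, while on the remaining puzzles its winning probability is $\negl(\lambda)$ --- a win there, up to the Chernoff slack in $\Ver'$, would break the one-wayness of $(\Samp,\Ver)$, using that $\Samp'$'s non-$\bot$ puzzle distribution is a reweighting of $\Samp$'s marginal and that the reweighting factor is at most $1/p=\poly(q)$. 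Hence $(\Samp',\Ver')$ is a $(\negl(\lambda),\,1-1/(5q(\lambda)))$ efficiently verifiable one-way puzzle: a \emph{weak} one, but now with negligible correctness error.

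\emph{Stage 2: security amplification.} It remains to push the one-wayness error of $(\Samp',\Ver')$ down to negligible without damaging correctness. Here the efficiently verifiable setting pays off directly: since $\Ver'$ is efficient, the whole security game of $(\Samp',\Ver')$ can be run by an efficient challenger, so --- unlike for general one-way puzzles, where~\Cref{ssub:amp} had to detour through EFID and EFI pairs --- I would apply the parallel-repetition soundness-amplification theorem of~\cite{bostanci2023efficient} directly to this challenge--response game. Taking the $n$-fold parallel repetition with $n=\Theta(q\lambda)$ and the ``all $n$ copies accept'' verification predicate drives the optimal cheating probability from $1-1/(5q)$ to $\negl(\lambda)$, keeps the correctness error at most $n\cdot\negl(\lambda)=\negl(\lambda)$ by a union bound, and leaves verification efficient. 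The result is the desired $(\negl(\lambda),\negl(\lambda))$ efficiently verifiable one-way puzzle.

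\emph{Main obstacle.} The delicate part is Stage 1 in the presence of a \emph{randomized} (quantum) verifier: the two thresholds and trial counts must be chosen so that the noticeable-mass-of-good-pairs estimate, the re-acceptance guarantee, and the reduction of non-$\bot$ wins to the original one-wayness all go through simultaneously. The only other thing to verify is that the one-way-puzzle challenge--response protocol genuinely meets the constant-round, efficient-verifier hypotheses of~\cite{bostanci2023efficient}. In contrast to the general one-way puzzle version of this theorem, no \emph{interleaving} of correctness and security amplification is needed: because the sampler can run the efficient verifier, correctness guaranteeing already reaches negligible error in one step, leaving a single clean round of security amplification to finish.
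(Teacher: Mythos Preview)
Your two-stage plan is correct, but it is considerably more elaborate than what the paper actually does. The paper's proof of \Cref{thm:owpcorramp} is a single elementary step: $t$-fold parallel repetition with \emph{disjunctive} verification (sample $t$ independent puzzle--key pairs, accept if \emph{any} copy verifies). This takes an $(\alpha,\beta)$ puzzle to an $(\alpha^t,\,t\beta)$ puzzle; instantiating with $\alpha=1-1/q(\lambda)$, $\beta=\negl(\lambda)$, and $t=\Theta(q(\lambda)\cdot\lambda)$ gives correctness error $(1-1/q)^t\le e^{-\lambda}=\negl(\lambda)$ and security error $t\cdot\negl(\lambda)=\negl(\lambda)$, and the OR of $t$ efficient verifiers is still efficient. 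The key observation you did not exploit is that the starting security error is \emph{already} negligible, so a merely polynomial degradation in security is harmless; there is no need to first sacrifice security down to $1-1/\poly$ via a $\bot$-outputting correctness guaranteer and then rebuild it with the heavy machinery of \cite{bostanci2023efficient}. Your route works (and the Stage~1 threshold analysis, while fiddly, can be made rigorous), but it trades one trivial union-bound reduction for a Chernoff-threshold construction plus an appeal to quantum parallel-repetition soundness amplification.
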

\begin{theorem}[Restatement of~\Cref{thm:secampevowp}]
    If there exists a $(\negl(\lambda),$ $1 - 1/\poly(\lambda))$ efficiently verifiable  one-way puzzle, then there exists a $(\negl(\lambda),$ $\negl(\lambda))$ efficiently verifiable one-way puzzle.
\end{theorem}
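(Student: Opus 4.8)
The plan is to amplify security by $t$-fold parallel repetition, exactly as one amplifies a weak one-way function into a strong one, using crucially that efficient verification makes the one-way puzzle security game an \emph{efficiently implementable} interactive protocol. Let $(\Samp,\Ver)$ be a $(\negl(\lambda),1-1/p(\lambda))$ efficiently verifiable one-way puzzle for some polynomial $p$, and set $t=t(\lambda)=\lambda\cdot p(\lambda)$. Define $\Samp'$ to run $\Samp$ independently $t$ times, producing pairs $(k_1,s_1),\dots,(k_t,s_t)$, and to output key $(k_1,\dots,k_t)$ and puzzle $(s_1,\dots,s_t)$; define $\Ver'\bigl((s_1,\dots,s_t),(k_1,\dots,k_t)\bigr)$ to accept iff $\Ver(s_i,k_i)$ accepts for every $i\in[t]$. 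Since $\Ver$ is efficient and $t$ is polynomial, $\Ver'$ is efficient, so $(\Samp',\Ver')$ is again an efficiently verifiable one-way puzzle candidate. (If the gap polynomial $p$ is not known a priori, one instantiates the construction with $t$ tailored to each of $\lambda$ candidate polynomials and applies the combiner for efficiently verifiable one-way puzzles.)

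Correctness of $(\Samp',\Ver')$ is immediate from a union bound: the probability that $\Ver'$ rejects an honestly generated key--puzzle pair is at most $t$ times the correctness error of $(\Samp,\Ver)$, which is $t\cdot\negl(\lambda)=\negl(\lambda)$. The heart of the argument is the security claim. Observe that the one-way puzzle security game is a three-message quantum interactive protocol: the challenger samples $(k,s)\leftarrow\Samp$ and sends $s$; the adversary replies with $k'$; the challenger accepts iff $\Ver(s,k')$ accepts. For a \emph{general} one-way puzzle this game cannot be run efficiently, which (as noted in the discussion above) is the obstruction to applying the parallel-repetition theorem of~\cite{bostanci2023efficient}. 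For an \emph{efficiently verifiable} one-way puzzle, however, both $\Samp$ and $\Ver$ are QPT, so the entire game --- challenger included --- is efficiently implementable, and $(\Samp',\Ver')$ is exactly its $t$-fold parallel repetition. Invoking~\cite{bostanci2023efficient}, the optimal success probability of a QPT adversary against $(\Samp',\Ver')$ is at most $\bigl(1-1/p(\lambda)+\negl(\lambda)\bigr)^{\Omega(t)}$; with $t=\lambda\cdot p(\lambda)$ this is $e^{-\Omega(\lambda)}=\negl(\lambda)$. Hence $(\Samp',\Ver')$ is a $(\negl(\lambda),\negl(\lambda))$ efficiently verifiable one-way puzzle.

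I expect two points to need care. First, one must check that the one-way puzzle security game, phrased as above, genuinely meets the hypotheses of the parallel-repetition theorem of~\cite{bostanci2023efficient} (three rounds, efficient verifier, and the right notion of computational soundness for an argument rather than a proof), and that its quantitative bound is strong enough when the single-copy success probability $1-1/p(\lambda)$ is close to $1$ rather than close to $0$ --- it is, since exponential decay from any base below $1$ still reaches $\negl(\lambda)$ once $t$ outgrows $p(\lambda)\cdot\omega(\log\lambda)$. Second, if one prefers not to invoke~\cite{bostanci2023efficient} as a black box, the fallback is a direct Yao-style hardness-amplification reduction: given a single puzzle $s$ to break, plant it in a uniformly random coordinate $i$ of an otherwise freshly sampled $t$-tuple (for whose other coordinates the reduction knows valid keys), run the purported $t$-fold breaker, and --- using that $\Ver$ is efficient --- check its answers and output the one in coordinate $i$; a standard heavy-rows/Markov argument, repeated polynomially many times with fresh randomness (so no quantum rewinding is needed), shows this breaks a single puzzle with probability $>1-1/p(\lambda)$, a contradiction. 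In either presentation, the efficiency of $\Ver$ is precisely the ingredient available for $\EVOWP$ but not for general $\OWP$, which is why this direct approach works here whereas general one-way puzzles required the detour through EFID pairs. The main obstacle is thus ensuring the quantum parallel-repetition machinery (or the quantum reduction) is correctly instantiated for this particular game; the construction itself is the obvious one.
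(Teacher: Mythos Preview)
Your proposal is correct and matches the paper's approach: the paper proves \Cref{thm:secampevowp} by taking the $t$-fold parallel repetition $(\Samp^{\otimes t},\Ver^{\otimes t})$, obtaining correctness via a union bound and security by invoking Theorem~4.1 of~\cite{bostanci2023efficient}, noting (as you do) that efficient verification is exactly what makes the soundness game efficiently falsifiable so that the parallel-repetition theorem applies. Your write-up is in fact more detailed than the paper's one-line appeal to~\cite{bostanci2023efficient}, and the fallback Yao-style reduction you sketch is a nice addition the paper does not include.
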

\begin{theorem}[Informal version of~\Cref{cor:owpcomb}]
    There exists a robust combiner for efficiently verifiable one-way puzzles.
\end{theorem}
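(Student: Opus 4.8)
The plan is to follow the template used for the one-way puzzle combiner: first construct a \emph{correctness guaranteer} for efficiently verifiable one-way puzzles---a transformation $(\Samp,\Ver)\mapsto(\Samp',\Ver')$ that preserves efficient verification, forces perfect correctness unconditionally, and preserves security whenever the input was a genuine efficiently verifiable one-way puzzle---and then apply the naive parallel ``AND'' combiner to two correctness-guaranteed candidates. Concretely, given candidates $(\Samp_1,\Ver_1)$ and $(\Samp_2,\Ver_2)$, I would first replace each by its correctness-guaranteed version $(\Samp_i',\Ver_i')$, then let $\Samp$ run $\Samp_1'$ and $\Samp_2'$ in parallel, output key $(k_1,k_2)$ and puzzle $(s_1,s_2)$, and let $\Ver\bigl((k_1,k_2),(s_1,s_2)\bigr)$ accept iff both $\Ver_1'(k_1,s_1)$ and $\Ver_2'(k_2,s_2)$ accept. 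Efficient verification is immediate since both $\Ver_i'$ are efficient; correctness holds because both $\Ver_i'$ are perfectly correct; and security follows by a reduction that, to attack index $i$, receives $s_i$ from the challenger, samples $(k_{3-i},s_{3-i})$ itself from the efficient $\Samp_{3-i}'$, runs the adversary on $(s_1,s_2)$, and returns its $i$-th output---any adversary matching $(s_1,s_2)$ in particular matches $s_i$ for whichever index is the genuine puzzle.

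The heart of the argument is the correctness guaranteer, and the key observation is that the guaranteer used for general one-way puzzles is \emph{unusable} here: that construction (\Cref{cor:owpcorrect}) defers the ``goodness'' check to the verifier, which makes verification \emph{less} efficient---the opposite of what we need. Instead I would use the direct approach of~\cite{HKNY23}, which fails for general one-way puzzles precisely because it requires the sampler to run the verifier, an operation that is now legitimate. Define $\Samp'$ to run $\Samp\to(k,s)$, then run the efficient $\Ver(k,s)$; if it accepts, output $(k,s)$, and otherwise output $(k,\bot)$ for a reserved puzzle symbol $\bot$ (an encoding not used by the honest puzzle space). Define $\Ver'(k,s')$ to accept if $s'=\bot$ and otherwise run $\Ver(k,s')$. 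Then $\Ver'$ is efficient, $(\Samp',\Ver')$ has perfect correctness unconditionally (every honestly sampled puzzle passes $\Ver'$ by construction), and if $(\Samp,\Ver)$ was a genuine efficiently verifiable one-way puzzle then $\Pr[s'=\bot]$ is exactly its negligible correctness error, so the puzzle distribution of $\Samp'$ is negligibly close to that of $\Samp$ while on the non-$\bot$ branch $\Ver'$ coincides with $\Ver$; hence any adversary against $(\Samp',\Ver')$ yields one against $(\Samp,\Ver)$ with at most negligible loss.

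Composing the two pieces gives the combiner, modulo quantitative bookkeeping in the $(\alpha,\beta)$ language: the guaranteer sends an $(\alpha,\beta)$ efficiently verifiable one-way puzzle to a $(0,\alpha+\beta)$ one (the $\bot$ branch simultaneously repairs correctness and is the only source of security loss), and the parallel combiner yields correctness error at most $\alpha_1+\alpha_2$ and security error at most $\min(\beta_1,\beta_2)$ against the genuine index; since a genuine candidate has $\alpha,\beta=\negl(\lambda)$, the result is a $(\negl(\lambda),\negl(\lambda))$ efficiently verifiable one-way puzzle, and any residual slack can be absorbed by interleaving with the amplification theorems for efficiently verifiable one-way puzzles established above (\Cref{thm:owpcorramp}, \Cref{thm:secampevowp}). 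I do not expect a genuine obstacle here---if anything the efficiently verifiable case is \emph{easier} than the general one---so the main work is the bookkeeping, and the main subtlety is simply recognizing that the general-purpose guaranteer of \Cref{cor:owpcorrect} must be discarded in favour of the $\bot$-approach, together with checking that within the combiner we never need to pre-amplify the correctness of a \emph{bad} candidate (we do not: the $\bot$-approach makes every candidate perfectly correct regardless of validity, and security is invoked only for the genuine candidate, for which $\bot$ occurs with negligible probability). Finally, applying the same composition to an enumeration of polynomially clocked sampler/verifier Turing-machine pairs yields the universal construction for efficiently verifiable one-way puzzles as a corollary, exactly paralleling the one-way function and one-way puzzle cases.
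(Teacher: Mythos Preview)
Your overall plan---the $\bot$-based correctness guaranteer followed by the parallel AND combiner---is exactly the paper's approach, but there is one concrete error in your analysis of the guaranteer. You claim the $\bot$-approach yields \emph{perfect} correctness and sends an $(\alpha,\beta)$ candidate to a $(0,\alpha+\beta)$ one. This is false because $\Ver$ is a QPT algorithm and hence may be \emph{randomized}: the sampler's single accepting run of $\Ver(k,s)$ does not guarantee that a second, independent run of $\Ver(k,s)$ during verification will also accept. Concretely, if $p=\Pr[\Ver(k,s)\to 1]$ for a fixed $(k,s)$, then the probability that the sampler's call accepts while the verifier's call rejects is $p(1-p)\le 1/4$, so the guaranteer only achieves correctness error $1/4$, not $0$. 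The paper's~\Cref{thm:weakevowpcorr} records exactly this bound: the $\bot$-approach gives a $(1/4,\alpha+\beta)$ puzzle.

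This is why the paper's correctness guaranteer (\Cref{cor:evowpcorrect}) is not just the $\bot$-approach but the $\bot$-approach \emph{followed by} $\lambda$-fold OR-repetition (\Cref{thm:owpcorramp}), which drives the $1/4$ down to $4^{-\lambda}$ at the cost of multiplying the security error by $\lambda$. Only after this second step do you have negligible correctness on \emph{both} candidates, at which point the parallel AND combiner works as you describe. Your mention of amplification ``to absorb residual slack'' would in fact rescue the argument, but the slack is a constant $1/4$ per candidate (so $1/2$ after the AND), not the negligible quantity your bookkeeping suggests---and the reason you need it is the randomized verifier, not anything about the combiner itself.
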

\begin{corollary}[Informal version of~\Cref{thm:universalevowp}]
    There exists a pair of algorithms $(\Samp_U,$ $\Ver_U)$ such that as long as efficiently verifiable one-way puzzles exist, $(\Samp_U,\Ver_U)$ is an efficiently verifiable one-way puzzle.
\end{corollary}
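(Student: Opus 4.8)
The plan is to follow the classical Levin-style template for universal constructions, exactly as sketched for one-way functions in the introduction, but now built on top of the robust combiner for efficiently verifiable one-way puzzles (\Cref{cor:owpcomb}). The only features that distinguish the $\EVOWP$ case from the generic argument are that we must ensure efficiency of verification is preserved throughout — which is automatic once every candidate is clocked — and that the correctness guaranteer underlying \Cref{cor:owpcorrect} still works with efficient verification, which it does, in fact more easily: since $\Ver$ is efficient, the sampler itself can run it and output a distinguished symbol $\bot$ when verification fails, letting $\Ver$ accept $\bot$ unconditionally. Recall also the remark in the excerpt that, under our generalized definition, $\Ver$ need not be a halting Turing machine in general; here that is a non-issue because clocking forces every candidate verifier to halt in polynomial time.

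\textbf{Construction.} Enumerate pairs of Turing machines $(M_i^{\Samp}, M_i^{\Ver})$ for $i \le \log\lambda$, together with clock bounds $\lambda^{j}$ for $j \le \log\lambda$. For each pair $(i,j)$ define a candidate $C_{i,j} = (\Samp_{i,j}, \Ver_{i,j})$ that runs $M_i^{\Samp}$, respectively $M_i^{\Ver}$, for at most $\lambda^j$ steps, and, on timeout or syntactically malformed output, falls back to the trivial puzzle (sample $\bot$; accept exactly $\bot$). This yields $(\log\lambda)^2 = \poly(\lambda)$ syntactically well-formed, polynomial-time $\EVOWP$ candidates. Apply the robust combiner of \Cref{cor:owpcomb} to the family $\{C_{i,j}\}_{i,j\le\log\lambda}$ to obtain $(\Samp_U,\Ver_U)$. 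By construction $\Samp_U$ and $\Ver_U$ run in polynomial time — in particular $\Ver_U$ is efficient — so $(\Samp_U,\Ver_U)$ is a syntactically valid $\EVOWP$ regardless of any assumptions.

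\textbf{Analysis.} Suppose $\EVOWP$ exist, and fix one, $(\Samp^*, \Ver^*)$, described by a pair of Turing machines of some constant size and running in time at most $\lambda^{c^*}$ for some constant $c^*$. For all $\lambda$ large enough that the description index $i^*$ of this pair satisfies $i^* \le \log\lambda$ and also $c^* \le \log\lambda$, the candidate $C_{i^*, c^*}$ never triggers the clock or the fallback, so $C_{i^*,c^*}$ is literally $(\Samp^*,\Ver^*)$ and hence a genuine $\EVOWP$. By robustness of the combiner (\Cref{cor:owpcomb}), $(\Samp_U,\Ver_U)$ is then a genuine $\EVOWP$ for all sufficiently large $\lambda$; since the correctness and security guarantees of an $\EVOWP$ are asymptotic, nothing is required for small $\lambda$, so $(\Samp_U,\Ver_U)$ is an $\EVOWP$. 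This is precisely the claimed statement, and is the exact analogue of the universal one-way puzzle of \Cref{thm:uowp}.

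\textbf{Main obstacle.} The delicate points are organizational rather than technical. First, because we enumerate only the first $\log\lambda$ machines and clock bounds, the ``good'' candidate is guaranteed present only once $\lambda$ is large, which is why we can hope for only an asymptotic guarantee — but that is exactly what $\EVOWP$ security and correctness ask for. Second, the robust combiner must be invoked on polynomially many candidates at once rather than two; this is handled either by stating \Cref{cor:owpcomb} for an arbitrary polynomial-size family or by iterating the two-candidate combiner $O(\log\lambda)$ times, where the security and correctness loss per iteration is polynomial and so the composition still maps negligible to negligible. Third, one must check that the correctness guaranteer and the security combiner both preserve efficient verifiability: clocking makes every candidate verifier efficient, the $\EVOWP$ correctness guaranteer only modifies the sampler (which can run the now-efficient verifier) and leaves verification efficient, and the combined verifier is just a polynomial-time composition of the candidate verifiers — so efficiency is maintained end to end. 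With these points in place the proof is a routine assembly of \Cref{cor:owpcomb} with the Levin enumeration.
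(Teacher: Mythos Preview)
Your proposal is correct and follows the same Levin-style template as the paper: enumerate clocked candidate Turing-machine pairs and apply the robust $\EVOWP$ combiner (\Cref{cor:owpcombmany}). The only cosmetic difference is that the paper fixes the clock at $\lambda^3$ and invokes a padding argument to reduce any $\EVOWP$ to one running in cubic time, whereas you instead enumerate clock bounds $\lambda^j$ for $j\le\log\lambda$; both are standard variants of the same construction.
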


Note that most of the barriers to these results go away when the verification algorithm is required to be efficient. Thus, the ``naive" constructions described earlier are provably secure for efficiently verifiable one-way puzzles.

\paragraph*{Are one-way puzzles equivalent to efficiently verifiable one-way puzzles?} Although the advantage of treating efficiently verifiable one-way puzzles as a ``central" QCCC primitive is that it has actual applications in the QCCC setting, this does come at a cost to its ``minimality''. It is not clear how to build efficiently verifiable one-way puzzles from every primitive known to imply $\OWP$. In particular, constructions are lacking from EFID pairs and commitments.

Thus, we may consider whether or not it even matters whether verification is efficient. Ideally, we would be able to build an efficiently verifiable one-way puzzle from any one-way puzzle. In fact, if we restrict the sampling algorithm to being a classical randomized algorithm, such a claim holds true. Given a classical one-way puzzle, we can build an efficiently verifiable one-way puzzle by replacing the key with the random coins of the sampler. Then, the verifier can simply check whether running the sampler on the randomness given produces the given puzzle.

However, as this approach directly uses the randomness of sampling, it is inherently non-quantum. In fact, it turns out that in the quantum setting, there is a black-box separation

\begin{theorem}[Informal version of~\Cref{thm:separation}]\label{thm:sepintro}
    There exists a quantum oracle $\mathcal{O}$ relative to which one-way puzzles exist but efficiently verifiable one-way puzzles do not exist.
\end{theorem}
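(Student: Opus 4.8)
The plan is to construct a quantum oracle $\mathcal{O}$ that makes one-way puzzles exist but efficiently verifiable one-way puzzles fail, by exploiting the key asymmetry highlighted throughout the paper: a one-way puzzle's verification procedure is \emph{unbounded}, whereas an $\EVOWP$ verifier must run in quantum polynomial time (with oracle access). I would follow the standard oracle-separation template (in the style of Kretschmer~\cite{Kretschmer21Quantum} and subsequent works): the oracle has two parts. The first part is a random ``puzzle'' component --- for each input length, a Haar-random isometry or a random family of functions that lets an honest sampler produce classical key-puzzle pairs $(k,s)$ such that $s$ statistically determines a hard-to-find $k$, so that $\OWP$ holds relative to $\mathcal{O}$. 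The second part is a $\mathsf{PSPACE}$-complete (or suitably powerful) classical oracle that a potential $\EVOWP$ verifier would need to call in order to check validity of $(k,s)$ pairs; the point is that the honest $\OWP$ verifier is allowed to be inefficient and so ``uses'' this power for free, but an efficient verifier can only make polynomially many queries to it.

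The core of the argument is the lower bound: no quantum polynomial-time algorithm (the candidate $\EVOWP$ verifier), even with oracle access to $\mathcal{O}$, can simultaneously satisfy correctness (accept honestly generated pairs) and soundness against the adversary who is given the $\mathsf{PSPACE}$ oracle. I would argue this by a hybrid/query-complexity argument: because the puzzle component of $\mathcal{O}$ is random and the honest sampler's behavior is spread out, any poly-query verifier's accept probability changes by only $\negl$ when we re-randomize the parts of the oracle it hasn't queried; this lets an unbounded adversary (who can invert the random component) find a fake $k'$ with $s$ on which the efficient verifier still accepts, breaking $\EVOWP$ security --- while the $\OWP$ security proof survives because the honest $\OWP$ verifier is unrestricted and its definition only requires \emph{some} function, not an efficient one. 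To make the separation fully black-box and handle the uniform vs. non-uniform subtlety, I would present it as a separation in the sense of a single oracle relative to which the existential statements hold (``there exists $\mathcal{O}$ such that $\OWP^{\mathcal{O}}$ exists but $\EVOWP^{\mathcal{O}}$ does not''), using a measure-theoretic / union-bound argument over the randomness of $\mathcal{O}$ and over all poly-time verifier machines, together with a diagonalization over candidate $\EVOWP$ constructions.

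I expect the main obstacle to be the lower bound against \emph{arbitrary} efficiently-verifiable candidates: the adversary must defeat every $(\Samp', \Ver')$ where both algorithms are polynomial-time oracle machines, and $\Samp'$ may post-process $\mathcal{O}$'s output in complicated (possibly non-black-box-looking) ways, so one must show the combined scheme still inherits enough ``spread'' from the random component that an unbounded resampling attack works. The delicate point is calibrating the oracle so that it is simultaneously strong enough to give the honest $\OWP$ verifier its (inefficient) checking power and make the distribution genuinely one-way, yet structured enough that efficiency of $\Ver'$ is a real handicap --- this is exactly the ``efficient verification is the obstruction'' intuition the paper repeatedly invokes, and turning it into a quantitative query bound (likely via a compressed-oracle or one-way-to-hiding style argument for the quantum queries, combined with a classical lower bound for the $\mathsf{PSPACE}$-oracle queries) is where the technical work concentrates.
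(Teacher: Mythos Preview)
Your proposal misses the paper's actual argument, which is much shorter and cleaner. The paper does not build a new oracle or run any compressed-oracle/query-complexity lower bound. It simply invokes Kretschmer's oracle $\U$ (relative to which $\PromBQP^{\U}=\PromQMA^{\U}$ and a PRS family exists) as a black box, together with two observations: (i) any $\EVOWP$ can be broken by a QPT algorithm given access to a $\PromQCMA$ oracle, via the standard search-to-decision reduction on the efficient verifier; and (ii) since $\PromBQP^{\U}=\PromQMA^{\U}\supseteq\PromQCMA^{\U}$, that $\PromQCMA$ oracle is itself implementable in $\BQP^{\U}$, so a genuine QPT$^{\U}$ adversary breaks every $\EVOWP$ relative to $\U$. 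Meanwhile the PRS relative to $\U$ gives an $\OWSG$ and hence an $\OWP$. That is the whole proof.

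Your plan has a concrete gap in the lower-bound direction. To rule out $\EVOWP$ relative to $\mathcal{O}$ you must exhibit, for every candidate $(\Samp',\Ver')$ with \emph{efficient} $\Ver'$, a \emph{QPT} (oracle) adversary that wins the security game. You instead argue that an ``unbounded adversary (who can invert the random component)'' can find a fake key the efficient verifier still accepts; but an unbounded adversary breaking security is irrelevant to the definition. The role you assign to the $\mathsf{PSPACE}$ part of the oracle is also inverted: in the paper (and in Kretschmer) that component is there precisely to \emph{empower the efficient adversary} by collapsing $\PromQCMA$ into $\PromBQP$, not to service the honest verifier. Once you see that, the search-to-decision observation on the efficient $\Ver'$ is the entire content of ``efficient verification is the obstruction,'' and no hybrid/one-way-to-hiding argument over candidate $(\Samp',\Ver')$ pairs is needed.
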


This theorem follows from a simple observation. A search-to-decision argument shows that any efficiently verifiable one-way puzzle can be broken using a QCMA oracle. But there exists an oracle relative to which pseudorandom states exist and BQP=QCMA~\cite{Kretschmer21Quantum}. As pseudorandom states can be used to build one-way puzzles~\cite{khurana2024commitments,morimae2022quantum},~\Cref{thm:sepintro} follows.

\paragraph*{A barrier against length shrinking for pseudorandom states} An open question in the literature is whether pseudorandom states with output length $n(\lambda)$ can be built from pseudorandom states with output length $n'(\lambda)$ for any values of $n,n'$ such that $n\neq n' \geq \log n$. However, pseudorandom states with output length $O(\log \lambda)$ can be used to build QCCC pseudodeterministic PRGs, and thus efficiently verifiable one-way puzzles. But our argument gives a black-box separation between efficiently verifiable one-way puzzles and pseudorandom states with output length $\lambda$. Thus, we get the following corollary
\begin{corollary}[Informal version of~\Cref{cor:prsseparation}]
    There exists a quantum oracle $\mathcal{O}$ relative to which PRSs with output length $\lambda$ exist but PRSs with output length $c\log \lambda$ (for $c > 12$) do not.
\end{corollary}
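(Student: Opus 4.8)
The plan is to derive this as a direct consequence of \Cref{thm:sepintro} together with the chain of implications $O(\log\lambda)$-output $\PRS \Rightarrow$ pseudodeterministic $\QPRG \Rightarrow \EVOWP$, both of which are cited or proved earlier in the paper. First I would take the oracle $\mathcal O$ guaranteed by \Cref{thm:sepintro}: relative to $\mathcal O$, one-way puzzles exist but efficiently verifiable one-way puzzles do not. Since the construction of $\OWP$ from $\lambda$-output $\PRS$ (via~\cite{morimae2022quantum,khurana2024commitments}) is black-box and relativizes, and since in fact the oracle of~\cite{Kretschmer21Quantum} underlying \Cref{thm:sepintro} is precisely one relative to which $\PRS$ with output length $\lambda$ exist, I would observe that $\mathcal O$ is an oracle relative to which $\lambda$-output $\PRS$ exist. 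It remains to argue that $c\log\lambda$-output $\PRS$ do \emph{not} exist relative to $\mathcal O$ for $c > 12$.

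The key step is the contrapositive: suppose for contradiction that $\PRS$ with output length $c\log\lambda$ exist relative to $\mathcal O$. Then I would invoke the (relativizing, black-box) construction of a QCCC pseudodeterministic PRG from short-output $\PRS$, and then the construction of $\EVOWP$ from pseudodeterministic PRGs (this is \Cref{thm:qprgtoowp} in the paper, applied inside the relativized world). Chaining these gives an efficiently verifiable one-way puzzle relative to $\mathcal O$, contradicting the choice of $\mathcal O$. The constant $c > 12$ comes from tracking how much output length is needed: the $\PRS\to\QPRG$ step needs the $\PRS$ output to be at least some $\Theta(\log\lambda)$ (the $13\log n$ figure appearing in \Cref{fig:graph} is the relevant threshold for the forward direction, so the barrier kicks in just below it, i.e.\ $c\log\lambda$ with $c$ slightly larger than the reciprocal-free constant $12$). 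I would state the precise bookkeeping: a $\PRS$ with $m(\lambda)$ output qubits yields $\poly$-many copies whose classical description, after the pseudodeterministic extraction, gives a PRG stretching by a noticeable amount precisely when $m(\lambda) \geq 12\log\lambda + \omega(1)$ or similar, hence taking $c > 12$ suffices.

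The main obstacle I anticipate is \emph{not} conceptual but rather making sure every implication in the chain genuinely relativizes and that the output-length accounting is tight enough to land on the stated constant. Concretely: (i) one must check that the $\PRS \Rightarrow \QPRG$ construction and the $\QPRG \Rightarrow \EVOWP$ construction (\Cref{thm:qprgtoowp}) are fully black-box in the underlying $\PRS$/PRG and do not secretly use the structure of $\mathcal O$; (ii) one must verify that the search-to-decision argument behind \Cref{thm:sepintro} — any $\EVOWP$ is broken by a $\QCMA$ oracle, combined with $\BQP^{\mathcal O} = \QCMA^{\mathcal O}$ — is applied to the same $\mathcal O$ relative to which the short-output $\PRS$ would live; and (iii) one must confirm the quantitative threshold, i.e.\ that $12\log\lambda$ output qubits is below what the $\QPRG$ construction can consume. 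None of these is deep, so the proof will be short: essentially a one-paragraph reduction assembling \Cref{thm:sepintro}, \Cref{thm:qprgtoowp}, and the known $\PRS\to\QPRG$ length-reduction, with a remark explaining the constant.
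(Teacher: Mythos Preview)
Your proposal is correct and follows essentially the same route as the paper: take Kretschmer's oracle (which directly gives $\lambda$-output $\PRS$ and relative to which no $\EVOWP$ exists), then argue by contradiction that a $c\log\lambda$-output $\PRS$ would yield a triple-stretch $\QPRG$ via~\cite{ALY23} and hence an $\EVOWP$ via (the relativized form of) \Cref{thm:qprgtoowp}. The paper's proof is just a two-sentence assembly of exactly these citations, and the constant $c>12$ is inherited verbatim from the~\cite{ALY23} threshold for producing a $\QPRG$ with the required stretch $\ell(\lambda)\geq 3n(\lambda)$, so your bookkeeping worry (iii) dissolves once you cite that result directly.
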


Note that this observation at its core comes from the simple observation that pseudodeterministic PRGs can be broken with a QCMA oracle, and so this observation is little more than a corollary of the results of~\cite{ALY23,Kretschmer21Quantum}, and is known in folklore. However, we provide a full formal proof of this statement as a contribution towards the systemization of knowledge in quantum cryptography.

\subsection{Equivalence to variant definitions}

\paragraph*{Random input one way puzzles} Another natural variant of one-way puzzles we might consider is a one-way puzzle where the key must be sampled uniformly at random, and then the puzzle is sampled from the key. We will call this a random input one-way puzzle. This more closely aligns with the classical notion of one-way functions, and in fact the construction of one-way puzzles from one-way state generators produces a random input one-way puzzle (assuming the key generation for the one-way state generator is uniform)~\cite{khurana2024commitments}.

~\cite{khurana2024commitments} left as an open question whether random input one-way puzzles can be built from arbitrary one-way puzzles. Note that this statement does hold classically, since both are equivalent to one-way functions.

We show that these two notions are indeed equivalent
\begin{theorem}[Restatement of~\Cref{thm:randominput}]
    If there exists a one-way puzzle, then there exists a random input one-way puzzle. If there exists an efficiently verifiable one-way puzzle, then there exists a random input efficiently verifiable one-way puzzle. 
\end{theorem}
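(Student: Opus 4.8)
The plan is to prove the two implications uniformly by reducing to a single transformation: given any (efficiently verifiable) one-way puzzle $(\Samp, \Ver)$, produce a random-input one-way puzzle whose key is a uniform bitstring. The core difficulty is that the key distribution $\Samp$ produces may be highly structured --- supported on a sparse subset of $\zo^*$, with wildly non-uniform weights, and potentially sampled using genuine quantum randomness so that we cannot simply ``read off'' the coins. So the natural first move is to route through a primitive whose key \emph{is} already uniform. Observe that the EFI/EFID machinery developed earlier in the paper gives us exactly this: by the construction of \Cref{ssub:efiowp}, a one-way puzzle yields a non-uniform EFID pair, and by the combiner-plus-non-uniform argument of \Cref{ssub:amp} we get a strong one-way puzzle back --- but the puzzles constructed from EFID-style primitives naturally have keys that are uniform random strings (they are built from random ``which distribution'' choices and random seeds), so tracing through that pipeline already delivers a random-input one-way puzzle. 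The first step, then, is to check that the one-way puzzle coming out of the ``non-uniform EFID $\to$ strong OWP'' step has a uniformly-sampled key; if the published construction doesn't literally have this form, modify it slightly so that it does (e.g. absorb all sampler randomness used before the key is fixed into the key itself).

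For the efficiently verifiable case, I would instead give a more direct and self-contained construction, since going through EFID loses efficient verification. Here the idea is: let $(\Samp,\Ver)$ be an efficiently verifiable one-way puzzle with $\Ver$ running in time $\poly(\lambda)$. Define a new sampler $\Samp'$ that picks a uniform string $r \in \zo^{m(\lambda)}$ for $m$ a suitable polynomial, and then \emph{deterministically interprets $r$ as a description of a circuit / as randomness} to drive $\Samp$ --- more precisely, I would have $\Samp'$ run the quantum sampling circuit of $\Samp$ on input ``randomness'' derived from $r$ in a way that is itself efficiently checkable. The cleanest version: since $\Ver$ is efficient, let the new puzzle be the old puzzle $s$, let the new key be a uniform string $r$, and let the new verifier $\Ver'(s, r)$ reconstruct the old key $k = k(r)$ from $r$ via some efficient, surjective-onto-the-key-space map and then output $\Ver(s, k)$. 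The map $r \mapsto k(r)$ should (i) be efficiently computable and (ii) push the uniform distribution on $r$ forward to something indistinguishable-enough from the real key distribution conditioned on $s$ --- but in fact we don't need the forward distribution to match at all: we only need that finding an accepting $r$ given $s$ is hard, which follows from hardness of finding an accepting $k$ given $s$ as long as every accepting $k$ has \emph{some} preimage $r$ and the map is efficient. So surjectivity of $r\mapsto k(r)$ onto the key space plus efficiency of both the map and $\Ver$ suffices; one way to get surjectivity cheaply is to let $r$ directly encode a candidate key (padded to a fixed polynomial length), i.e. $k(r)$ just strips padding. Correctness of $(\Samp', \Ver')$ needs $\Samp'$ to actually output $(s, r)$ with $r$ a valid encoding of a $k$ that verifies against $s$ --- so $\Samp'$ should run $\Samp$ honestly to get $(k,s)$ and then output $(s, \text{pad}(k))$; since $\Samp$ is a \emph{quantum} algorithm we do not need to ``explain'' its randomness, we just need the key it outputs to be re-packaged as a uniform-looking string, which forces $m(\lambda)$ to upper-bound key length and forces us to argue the padding is uniform --- and here we do need one more trick, since $\text{pad}(k)$ is \emph{not} uniform. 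The fix is standard: have $\Samp'$ additionally sample uniform junk bits $u$ and set the key to $(k, u)$ interpreted over the right domain, with $\Ver'((s),(k,u)) = \Ver(s,k)$; but this still isn't uniform because $k$ itself isn't uniform.

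The honest resolution --- and I expect this to be the \textbf{main obstacle} --- is that ``random input'' in the quantum-puzzle setting almost certainly does \emph{not} mean the joint distribution of $(\text{key}, \text{puzzle})$ equals (uniform key, then puzzle from key); rather it means the key \emph{marginal} is uniform, with the puzzle then sampled by some efficient quantum process from the (now uniform) key. So the real task is to massage $\Samp$ so that the \emph{first} thing it does is write down a uniform string $K$, and everything after is a function of $K$. For the efficiently verifiable case I would do this by a rejection-sampling / hashing argument: sample a uniform $K$ of polynomial length, use $K$ (via a pairwise-independent hash, exactly the $\Ext$ tool already invoked in \Cref{fig:hrvconst}) to derive a sample from a distribution statistically close to the real key distribution --- this is possible whenever the real key distribution has enough min-entropy, and if it doesn't we first pad the key with enough uniform bits to guarantee min-entropy $\geq$ (length) $- O(\log\lambda)$, which preserves both efficient verification and one-wayness --- and argue that the statistical-closeness error only degrades correctness by a negligible amount while one-wayness is preserved because any inverter for the new puzzle composed with the hash gives an inverter for the old one with polynomially-related advantage. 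For general (inefficiently verifiable) one-way puzzles the same argument works verbatim, \emph{or} one simply cites the EFID route from the first paragraph; I would present the hashing argument as the unified proof and remark that it specializes to both cases, flagging that the only place efficient verification matters is that the derived verifier $\Ver'(s, K) := \Ver(s, \text{hash}_K(\cdot)\text{-derived key})$ inherits efficiency from $\Ver$ and from efficiency of the hash. The one genuinely delicate point to get right in the write-up is the min-entropy accounting: I must argue that $H_\infty$ of the (padded) key is within $O(\log\lambda)$ of its length, or else handle the shortfall with $O(\log\lambda)$ bits of advice and then kill the advice with the combiner from \Cref{cor:owpcomb} --- which is why this theorem is placed after the combiner and amplification results rather than before them.
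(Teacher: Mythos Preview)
Your proposal misses the actual construction, which is a two-line one-time-pad trick and requires none of the machinery you invoke. The paper simply sets $\PuzzSamp'(1^\lambda,k')$ to run $\Samp\to(k,s)$ and output the puzzle $s'=(k\xor k',\,s)$, with $\Ver'(k',(a,b))=\Ver(a\xor k',b)$. The new key $k'$ is uniform by fiat; correctness is immediate since $a\xor k'=k$; and security reduces directly because any accepting $k''$ for $s'=(r,s)$ yields the accepting original key $r\xor k''$ for $s$. This works identically for $\OWP$ and $\EVOWP$, since $\Ver'$ is just $\Ver$ composed with an XOR.

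Your EFID route for the general $\OWP$ case is not wrong (the paper even remarks that the amplification pipeline gives an indirect proof), but it is massively heavier than needed. Your proposed direct construction for the efficiently verifiable case, however, has a genuine gap: you suggest using a pairwise-independent hash to ``derive a sample from a distribution statistically close to the real key distribution'' starting from a uniform $K$, but extractors run in the opposite direction---they compress a high-min-entropy source toward uniform, they do not let you sample a prescribed non-uniform target from uniform bits. Since $\Samp$ is a genuinely quantum sampler, you cannot in general simulate its key distribution from classical randomness $K$, and rejection-sampling against a hash condition would take exponential time. The subsequent min-entropy padding and advice-plus-combiner plan does not rescue this, because the problem is not an entropy shortfall but the direction of the map. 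The one-time-pad idea sidesteps all of this: rather than forcing the \emph{key} to carry the structure of $k$, it moves $k$ (masked) into the \emph{puzzle}, where there are no distributional constraints.
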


The idea is fairly natural. We simply treat the random input as a one-time pad, and apply it to the original key. We then include the one-time padded key with the original puzzle in the final output.

Note that our amplification lemma for one-way puzzles also produces a random input one-way puzzle, and so also gives an indirect proof of this theorem, although this approach does not hold for efficiently verifiable one-way puzzles.

\paragraph*{Distributional one way puzzles} We can also consider an analogue of distributional one way functions, which are known to be equivalent to one way functions. That is, we say that a puzzle is distributionally one way if, given a puzzle, it is hard to sample the conditional distribution over keys. Note that every one way puzzle is a distributional one way puzzle. Interestingly, distributional one way puzzles do not need a verification algorithm, since their security comes from the sampled distribution.

We show that distributional one way puzzles are equivalent to one way puzzles. This follows directly from our techniques for amplification. In particular, a distributional one way puzzle is KL-hard to sample, and so can thus be used to build strong one way puzzles.

\begin{theorem}
    If there exists a distributional one-way puzzle, then there exists a one way puzzle.
\end{theorem}

We remark that as there is no verifier, there is not a natural definition of an "efficiently verifiable" distributional one-way puzzle. 

\section{Open Questions}

Although we are aware of a few implications, the landscape of QCCC reductions, even those relating to one-way puzzles/efficiently verifiable one-way puzzles, is still fairly unexplored. We list a few interesting questions in this space related to our work
\begin{enumerate}
    \item Is it possible to build efficiently verifiable one-way puzzles from a QCCC commitment scheme? QCCC commitments and EFID pairs are the two QCCC primitives for which the obvious construction of one-way puzzles does not have an efficient verifier. If the answer to this question is no, then it may be possible to build QCCC commitments from standard one-way puzzles.
    \item Are there any useful cryptographic primitives we can construct from one-way puzzles without efficient verification? Due to the black-box separation between one-way puzzles and efficiently verifiable one-way puzzles, it seems like the answer may be no. However, a few primitives (such as EFID pairs and QCCC commitments) fall outside of this separation, and so there is still hope for a construction.
    \item Is there a combiner for QCCC EFID pairs? If so, by using the construction of non-uniform EFID from one-way puzzles, we would be able to construct standard EFID from one-way puzzles. Interestingly, there does exist a combiner for both the quantum version and the classical version of this primitive~\cite{HKNY23,Levin87,Goldreich90}.
    \item Can we build any QCCC primitives besides one-time signatures from efficiently verifiable one-way puzzles, for example secret key encryption or pseudodeterministic PRGs? What about many-time signatures? Although we observe that the known construction of many time signatures from one-way functions uses a pseudorandom function in order to be stateless, so this may necessitate building a QCCC style pseudorandom function~\cite{GMR87}.
\end{enumerate} 
\section{Preliminaries}

\subsection{Definitions of QCCC primitives}
As discussed previously this definition of $\OWP$ slightly generalizes the notion given in \cite{khurana2024commitments}.

\begin{definition}
An $(\alpha,\beta)$ one way puzzle ($\OWP$) is a pair of a sampling algorithm and a verification function $(\Samp,\Ver)$ with the following syntax:
    \begin{enumerate}
        \item $\Samp(1^{\lambda}) \to (k,s)$ is a uniform QPT algorithm which outputs a pair of classical strings $(k,s)$. We refer to $s$ as the puzzle and $k$ as the key. Without loss of generality, we can assume $k \in \{0,1\}^\lambda$.
        \item $\Ver(k,s) \to b$ is some (possibly uncomputable) function which takes in a key and puzzle and outputs a bit $b \in \{0,1\}$.
    \end{enumerate}
    satisfying the following properties:\\
    \begin{enumerate}
        \item Correctness: For all sufficiently large $\lambda$, outputs of the sampler pass verification with overwhelming probability
        $$\Pr_{\Samp(1^{\lambda})\to (k,s)}[\Ver(k,s)\to 1] \geq 1 - \alpha$$
        \item Security: Given a puzzle $s$, it is computationally infeasible to find a key $s$ which verifies. That is, for all non-uniform QPT algorithms $\A$, for all sufficiently large $\lambda$,
        $$\Pr_{\Samp(1^{\lambda}) \to (k,s)}[\Ver(\A(s),s) \to 1] \leq \beta$$
    \end{enumerate}
    If for all $c$, $(\Samp,\Ver)$ is a $(\lambda^{-c},\lambda^{-c})$ one way puzzle, then we say that $(\Samp,\Ver)$ is a strong $\OWP$ and omit the constants. When unambigious, we will simply say that $(\Samp,\Ver)$ is a $\OWP$.
    
\end{definition}

    

\begin{definition}
    A one-time signature scheme is a set of QPT algorithms $(KeyGen,$ $S, V)$ with the following syntax
    \begin{enumerate}
        \item $KeyGen(1^\lambda)\to (vk,sk)$ takes the security parameter as input and ouptuts a signing key $sk$ and a verification key $vk$
        \item $S(sk,m) \to \sigma$ takes in the signing key and a message as input, and outputs a signature $\sigma$
        \item $V(vk,m,\sigma) \to 0/1$ takes in a verification key $vk$, a message $m$, and a signature $\sigma$, and outputs a single bit
    \end{enumerate}
    satisfying the following security properties
    \begin{enumerate}
        \item Correctness: For all $m$ in the message space,
        $$\Pr_{KeyGen(1^\lambda) \to (vk,sk)}[V(vk,m,S(sk,m)) \to 1] \geq 1 - \negl(\lambda)$$
        \item One-time Security: An adversary with the ability to make one signature query can not forge a signature for a different message. More formally, for all $m_0 \neq m_1$ in the message space and for all PPT $\A$,
        $$\Pr_{KeyGen(1^\lambda)\to (vk,sk)}[V(vk,m_1,\A(vk,S(sk,m_0))) \to 1] \leq \negl(\lambda)$$
    \end{enumerate}
\end{definition}

\begin{definition}[Pseudodeterministic Quantum Pseudorandom Generator \cite{ananth2023pseudorandom}]
    A pseudodeterministic quantum pseudorandom generator $(\QPRG)$ is a uniform QPT algorithm $G$ that on input a classical seed $k \in \{0,1\}^{n(\lambda)}$ outputs a string of length $\ell(\lambda)$ with the following guarantees:
    \begin{enumerate}
        \item Pseudodeterminism: there exists a constant $c > 0$ and a function $\mu(\lambda) = O(\lambda^{-c})$ such that for every $\lambda \in \N$, there exists a set of good seeds $\mathcal{K}_\lambda \subseteq \{0,1\}^\lambda$ satisfying
        \begin{equation*}
            \Pr_{\{0,1\}^\lambda \to k}[k \in \mathcal{K}_\lambda] \geq 1 - \mu(\lambda)
        \end{equation*}
        \begin{equation*}
            \forall\ k\in \mathcal{K}_\lambda,\ \max_{y \in \{0,1\}^{\ell(\lambda)}}\Pr[y = G_\lambda(k)] \geq 1 - \mu(\lambda)
        \end{equation*}
        \item Stretch: $\ell(\lambda) > n(\lambda)$
        \item Security: For every non-uniform QPT algorithm $\A$,
        $$\abs{\Pr_{\{0,1\}^\lambda \to k}[\A(G(k)) \to 1] - \Pr_{\{0,1\}^{\ell(\lambda)} \to y}[\A(y) \to 1]} \leq \negl(\lambda)$$
    \end{enumerate}
\end{definition}

\begin{definition}[Commitment scheme from~\cite{khurana2024commitments}]
    A commitment scheme is an efficient two-party protocol between a committer $Com$ and a receiver $Rec$ consisting of a commit stage and an opening stage operating on a private input $m$ described as follows
    \begin{enumerate}
        \item Commit stage: both parties receive a unary security parameter $1^\lambda$. The committer $Com$ receives a private input $m$. It interacts with the receiver $Rec$ using only classical messages, and together they produce a transcript $z$. At the end of the stage, both parties hold a private quantum state $\rho_{Com}$ and $\rho_{Rec}$ respectively.
        \item (Non-interactive) opening stage: both parties receive the transcript $z$ as well as their private quantum states $\rho_{Com}$ and $\rho_{Rec}$ respectively. The committer $Com$ sends a single message $d$ to the receiver $Rec$. At the end of the stage, the receiver either outputs a message or the reject symbol $\bot$.
    \end{enumerate}
    satisfying the following two properties
    \begin{enumerate}
        \item Correctness: For all messages $m$, when $Com$ and $Rec$ interact honestly, the probability that $Rec$ outputs $m$ at the end of the opening stage is at least $1 - \negl(\lambda)$.
        \item (Computational) hiding: For all $m\neq m'$ and for all QPT adversarial receivers $Rec'$, the transcript of the interaction between the adversarial receiver and the committer with input $m$ is indistinguishable from the transcript of the interaction between the adversarial receiver and the committer with input $m'$. That is,
        $$Com(m)\rightleftarrows Rec' \approx Com(m') \rightleftarrows Rec'$$
        \item (Computational weak honest) binding: For all $m$ and for all QPT adversarial senders $Com'$, the probability that $Com'$ wins the following game is $\leq \negl(\lambda)$
        \begin{enumerate}
            \item In the first stage, an honest receiver $Rec$ interacts with the honest committer $Com$ to produce a transcript $z$ and receiver state $\rho_{Rec}$
            \item In the second stage, the honest receiver $Rec$ is given $\rho_{Rec}$ and $z$, while $Com'$ is given $z$ (but not $\rho_{Com}$). They then proceed to run the opening stage with the committer replaced by $Com'$, and $Rec$ produces a final output $m'$. $Com'$ wins if $m' \neq m$ and $m' \neq \bot$. 
        \end{enumerate}
    \end{enumerate}

    If the receiver never sends any messages in either stage, then we say the commitment scheme is non-interactive. In this case, we write $Com(m)\to (c,d)$ where $c$ is the message sent in the first round (the commitment) and $d$ is the message sent in the second round (the decommitment). We then describe the final output of $Rec$ by $Rec(c,d) \to m'$.
\end{definition}

\begin{definition}
    An EFID pair is a randomized algorithm $\Gen(1^\lambda,b)$ taking a unary security parameter $\lambda$ and a classical bit $b\in\{0,1\}$ which outputs a classical string satisfying the following two properties:
    \begin{enumerate}
        \item Statistically far: $$\Delta(\Gen(1^\lambda, 0), \Gen(1^\lambda, 1)) \geq 1 - \epsilon$$
        \item Computationally close: For all QPT $\A$ and for all sufficiently large $\lambda$, the distributions $\Gen(1^\lambda,0)$ and $\Gen(1^\lambda, 1)$ are indistinguishable.
    \end{enumerate}

    If $\Gen$ is a quantum algorithm (with classical output), then we call $\Gen$ a quantum EFID pair (or QEFID).
\end{definition}

\subsection{Complexity}

\begin{definition}
    We say a promise problem $\Pi:\{0,1\}^*\to \{0,1,\bot\}$ is in $\mathsf{Promise}$ $\mathsf{QCMA}$ if there exists a QPT algorithm $\V(x,y)$ and a polynomial $p$ such that:
    \begin{enumerate}
        \item Completeness: If $\Pi(x) = 1$, then there exists a $p(|x|)$-bit string $y$ such that 
        $$\Pr[\V(x,y) \to 1] \geq \frac{2}{3}$$
        \item Soundness: If $\Pi(x) = 0$, then for all $p(|x|)$-bit strings $y$,
        $$\Pr[\V(x,y)\to 1] \leq \frac{1}{3}$$
    \end{enumerate}
\end{definition}

\begin{definition}
    We say a promise problem $\Pi:\{0,1\}^*\to \{0,1,\bot\}$ is in $\mathsf{Promise}$ $\mathsf{QMA}$ if there exists a QPT algorithm $\V(x,\ket{\phi})$ and a polynomial $p$ such that:
    \begin{enumerate}
        \item Completeness: If $\Pi(x) = 1$, then there exists a $p(|x|)$-qubit state $\ket{\phi}$ such that 
        $$\Pr[\V(x,\ket{\phi}) \to 1] \geq \frac{2}{3}$$
        \item Soundness: If $\Pi(x) = 0$, then for all $p(|x|)$-qubit states $\ket{\phi}$,
        $$\Pr[\V(x,\ket{\phi})\to 1] \leq \frac{1}{3}$$
    \end{enumerate}
\end{definition}

\subsection{Oracles}

We define, in the spirit of Kretschmer \cite{Kretschmer21Quantum}, a query to a single unitary $\U$ to be a single quantum call of either $\U$ or controlled-$\U$. We do not allow queries to $\U^{\dagger}$. $\A^\U(x)$ refers to a quantum algorithm on a classical input $x$ which can make quantum queries to the unitary (or collection of unitaries) $\U$. In terms of computational cost, a single query to $\U_n$ will be charged $n$ units of computation. This allows us to define quantum polynomial-time (QPT) algorithms relative to an oracle $\U$. In particular, a QPT algorithm relative to $\U$ on an input of length $\ell$ can query $\U_n$ for any $n < \poly(\ell)$.

Also in the style of Kretschmer, we consider versions of $\PromBQP$, $\mathsf{Promise}$ $\mathsf{QCMA}$, and $\PromQMA$ augmented with a collection of quantum oracles $\U = \{\U_n\}_{n \in \N}$. We denote these by $\PromBQP^\U$, $\PromQCMA^\U,$ and $\PromQMA^\U$ respectively. For $\PromBQP^\U$, the deciding algorithm is allowed to be a QPT algorithm relative to $\U$, and for $\PromQCMA^\U$ and $\PromQMA^\U$, the verifying algorithm is allowed to be a QPT algorithm relative to $\U$. 

It is easy to see that in this model, the traditional inequalities still hold. In particular, for any oracle $\U$, $\PromBQP^\U \subseteq \PromQCMA^\U \subseteq \PromQMA^\U$.

We also consider cryptographic primitives in the oracle setting. In this case, we allow the cryptographic algorithm to be a uniform QPT algorithm relative to $\U$, and we consider security against non-uniform QPT algorithms relative to $\U$. 


\section{Constructions of $\EVOWP$ from QCCC primitives}

In this section we give the results that $\EVOWP$ are equivalent to QCCC one time signatures, and can be constructed from QCCC non-interactive commitments and $\QPRG$s. 

\begin{theorem}\label{thm:otsowp}
    There exists a one-time signature scheme if and only if there exists a $\EVOWP$.
\end{theorem}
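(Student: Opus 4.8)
The plan is to prove the two directions separately. For the forward direction ($\OWP$ with efficient verification from a one-time signature scheme), I would use a Lamport-style construction: the key of the puzzle is a uniformly random message $m$ from the message space of the signature scheme, and the puzzle is the pair $(vk, \sigma)$ where $(vk, sk) \leftarrow KeyGen(1^\lambda)$ and $\sigma \leftarrow S(sk, m)$. Verification on a candidate key $m'$ and puzzle $(vk, \sigma')$ simply runs $V(vk, m', \sigma')$, which is efficient since $V$ is a QPT algorithm — so efficient verifiability is immediate. Correctness of the puzzle follows directly from correctness of the signature scheme. For security, suppose an adversary $\A$, given $(vk, \sigma)$ (where $\sigma$ signs a hidden random $m$), produces $(m', \sigma')$ with $V(vk, m', \sigma') \to 1$; I would argue that with probability at least $1 - 1/|\mathcal{M}|$ we have $m' \neq m$ (since $m$ is information-theoretically hidden given only a signature on it — or more carefully, since the distribution of $m$ conditioned on the adversary's view still has enough entropy that $\Pr[m' = m]$ is small when the message space is superpolynomial), which then breaks one-time unforgeability. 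To make the message space large enough, one can take $m$ to be $\lambda$ bits (using a fixed $m_0 \neq m$ for the signing query, or signing a random $m_0$ and having the adversary's forgery succeed on a different message); I would handle the "$m' = m$" edge case by either assuming WLOG the message space has size $\geq 2^\lambda$ or by a standard trick of signing $0^\lambda$ and asking the adversary to forge on a nonzero message — I'd pick whichever makes the reduction to the stated one-time security game (which quantifies over $m_0 \neq m_1$) cleanest.

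For the reverse direction (one-time signatures from an $\EVOWP$), again I would use a Lamport signature built on the $\EVOWP$ playing the role of a one-way function. Concretely, to sign $\ell$-bit messages: $KeyGen$ runs $\Samp(1^\lambda)$ independently $2\ell$ times to get key-puzzle pairs $(k_{i,b}, s_{i,b})$ for $i \in [\ell], b \in \{0,1\}$; the verification key is $\{s_{i,b}\}$ and the signing key is $\{k_{i,b}\}$. To sign $m = m_1 \cdots m_\ell$, output $\sigma = (k_{1,m_1}, \ldots, k_{\ell, m_\ell})$. Verification checks $\Ver(k_{i,m_i}, s_{i,m_i}) = 1$ for all $i$, which is efficient precisely because the $\OWP$ is efficiently verifiable. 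Correctness follows from $\OWP$ correctness by a union bound over $2\ell = \poly(\lambda)$ coordinates. For one-time security: given a signature on $m_0$ and a forgery on $m_1 \neq m_0$, there is some index $i$ where $m_0$ and $m_1$ differ, so the forgery must contain a valid key $k$ with $\Ver(k, s_{i, (m_1)_i}) = 1$ for a puzzle $s_{i,(m_1)_i}$ whose key was never revealed — this breaks $\OWP$ security. The reduction guesses the index $i$ and the bit, embeds the challenge puzzle there, samples everything else honestly so it can answer the single signing query, and outputs the relevant component of the forgery; this loses a factor of $2\ell = \poly(\lambda)$ in the advantage, which is fine since $\negl/\poly = \negl$.

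The main obstacle I anticipate is bookkeeping around the message space / key space conventions rather than any deep difficulty — in particular, in the forward direction, making sure the argument that the adversary cannot simply re-output the signed message $m$ is airtight, which requires the message space to be superpolynomial and a clean invocation of the one-time security game as stated (which fixes $m_0 \neq m_1$ in advance, so the reduction must commit to the forgery target before seeing the scheme). A secondary subtlety is that the $\EVOWP$ definition allows $k \in \{0,1\}^\lambda$ WLOG and has $(\negl, \negl)$ parameters, so I should confirm that the polynomial blowup in the reduction (a $\poly(\lambda)$ factor from the union bound and from guessing an index) keeps both correctness error and security error negligible — this is routine. I would also remark that since this construction and its proof are only sketched in~\cite{KNY23}, giving the full details here (as promised in the overview) is the actual contribution of the theorem, so I would err on the side of writing out the hybrid/reduction explicitly rather than citing folklore.
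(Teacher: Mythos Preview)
Your reverse direction ($\EVOWP \Rightarrow$ one-time signatures) is exactly the paper's approach: a Lamport signature with the $\EVOWP$ in place of a one-way function. The paper does it for one-bit messages and you do it for $\ell$-bit messages, but the argument is identical.

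Your forward direction, however, has a genuine gap. You set the puzzle to be $(vk,\sigma)$ with $\sigma = S(sk,m)$ for a hidden random $m$, and the key to be $m$; verification checks $V(vk,m',\sigma)$. The security argument then rests on the claim that ``$m$ is information-theoretically hidden given only a signature on it.'' This is false: the definition of a one-time signature scheme imposes no secrecy on the signed message, and a perfectly valid scheme could set $\sigma = (m,\tau)$ with the message in the clear. In that case an adversary given $(vk,\sigma)$ reads off $m$ and wins the puzzle with probability $1$. Your proposed fixes (large message space, signing $0^\lambda$ instead, etc.) do not help, because the leak is through $\sigma$, which is part of the puzzle in your construction; and your security discussion drifts into the adversary outputting a pair $(m',\sigma')$, which does not match your stated key $= m$.

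The paper avoids this entirely by taking the puzzle to be $vk$ alone and the key to be $sk$; verification samples a random message $m$ and outputs $V(vk,m,S(k,m))$. Now the adversary sees nothing that depends on any message, and producing any $k'$ that passes verification yields a forger that signs arbitrary messages, contradicting (zero-query) unforgeability. This is the construction of~\cite{khurana2024commitments}, and it is the clean fix: move the signature out of the puzzle and into the verification step.
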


\begin{proof}[Proof of \cref{thm:otsowp}]
    \cite{khurana2024commitments} show that you can construct $\OWP$ from one-time signature schemes (in fact zero-time signature schemes). They do not define $\EVOWP$, but it is clear that their construction has efficient verification. We repeat their construction here for completeness.

    Let $(KeyGen,S,V)$ be a one-time signature scheme. Then $(\Samp,\Ver)$ defined as follows is a $\EVOWP$
    \begin{enumerate}
        \item $\Samp(1^\lambda)$: Sample $KeyGen(1^\lambda) \to (vk,sk)$. Output $(k=vk,s=sk)$
        \item $\Ver(k,s)$: Sample $m$ uniformly at random from the message space. Output $V(s,m,S(k,m))$.
    \end{enumerate}

    To show the other direction (that $\EVOWP \rightarrow$ signitures), it is not hard to see that the Lamport signature scheme \cite{lamport1979constructing} building one-time signatures from one-way functions can be generalized to work with $\EVOWP$. In particular, let $(\Samp,\Ver)$ be a $\EVOWP$, we define a signature scheme using it as follows. For simplicity, the message space will be $\{0,1\}$.
    \begin{enumerate}
        \item $KeyGen(1^\lambda)$: Run $\Samp$ twice to generate two key-puzzle pairs $(k_0,s_0)$ and $(k_1,s_1)$. Output $(vk = (s_0,s_1), sk = (k_0,k_1))$.
        \item $S((k_0,k_1), b)$: Output $k_b$.
        \item $V((s_0,s_1), b, \sigma)$: Output $\Ver(\sigma, s_b)$.
    \end{enumerate}

    Correctness is immediate from correctness of the one-way-puzzle scheme. To show security, we will assume towards contradiction that their exists some pair of messages $m_0\neq m_1$ and an adversary $\A$ breaking security of the signature scheme. Without loss of generality we will assume $m_0 = 0$ and $m_1 = 1$. Thus,
    $$\Pr[V(vk,1,\A(vk,S(sk,0))) \to 1] > \lambda^{-c}$$
    for some $c$. Rewriting this in the notation of the underlying one way puzzle we have
    $$\Pr_{\Samp(1^\lambda) \to (k_0,s_0),(k_1,s_1)}[\Ver(\A((s_0,s_1),k_0), s_1)] > \lambda^{-c}$$
    We will define a new adversary $\mathcal{B}$ breaking the one-way puzzle as follows. On input $s$, $\mathcal{B}$ runs $\Samp(1^{\lambda}) \to (k',s')$ and outputs $\A((s,s'),k')$. It is clear that
    $$\Pr_{\Samp(1^\lambda)\to (k,s)}[\Ver(\mathcal{B}(s),s) \to 1]$$ $$ =  \Pr_{\Samp(1^\lambda) \to (k_0,s_0),(k_1,s_1)}[\Ver(\A((s_0,s_1),k_0), s_1)] > \lambda^{-c}$$
    But as $(\Samp,\Ver)$ is a $\EVOWP$, this is a contradiction, and so the one-time signature scheme is secure.
\end{proof}

\cite{khurana2024commitments} shows that $\EVOWP$ can be built from one-time signatures even if the signing key or the signature are quantum. Thus, an interesting corollary of Theorem~\ref{thm:otsowp} is that QCCC one-time signatures with classical signature, signing and verification keys can be built from one-time signatures where either the signing key or the signature is quantum.

\begin{theorem}\label{thm:nicom}
    If there exists a non-interactive commitment scheme $(Com,Rec)$, then there exists a $\EVOWP$ $(\Samp,\Ver)$.
\end{theorem}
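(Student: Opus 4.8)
The plan is to mimic the Lamport-style construction from \cref{thm:otsowp}, but using the non-interactive commitment scheme directly as the source of one-wayness. Recall that a non-interactive commitment scheme gives us $Com(m) \to (c,d)$ with $Rec(c,d) \to m'$, satisfying correctness, computational hiding, and (weak honest) binding. The natural idea is: the sampler commits to a uniformly random bit $b$ and outputs the decommitment (together with $b$) as the key and the commitment string $c$ as the puzzle; verification checks that the claimed key really decommits $c$ to the claimed bit. More precisely, I would define $\Samp(1^\lambda)$ to sample $b \randfrom \zo$, run $Com(b) \to (c,d)$, and output $k = (b,d)$, $s = c$. Then $\Ver((b,d), c)$ outputs $1$ iff $Rec(c,d) \to b$. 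Since $Rec$ is efficient, verification is efficient, so this is genuinely an $\EVOWP$ candidate. Correctness is immediate from commitment correctness.

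The security argument is where hiding and binding must be combined. Suppose an adversary $\A$, given $s = c$, produces a key $(b', d')$ with $\Ver((b',d'), c) \to 1$, i.e. $Rec(c,d') \to b'$, with non-negligible probability $\lambda^{-c}$. There are two cases: either $b' = b$ (the bit actually committed) or $b' \neq b$. If $b' \neq b$ happens with non-negligible probability, then $\A$ has produced a valid opening of $c$ to a bit different from the committed one — this breaks (weak honest) binding, since the first stage is exactly an honest commit to $b$ and $\A$ plays the role of $Com'$ opening to $m' = b' \neq b = m$. If instead $b' = b$ with non-negligible probability, then $\A$ has effectively recovered the committed bit $b$ from the commitment alone, which breaks hiding: formally, build a distinguisher that on input a transcript $c$ runs $\A(c) \to (b', d')$ and outputs $b'$; this distinguishes $Com(0)$ from $Com(1)$ with non-negligible advantage. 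Since at least one of the two cases occurs with probability at least $\lambda^{-c}/2$, we get a contradiction with either binding or hiding.

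One subtlety I would be careful about: the binding definition in the excerpt is "weak honest binding," where $Com'$ is given the transcript $z$ but not $\rho_{Com}$, and wins only if $m' \neq m$ and $m' \neq \bot$. In the non-interactive case the transcript after the commit stage is just $c$, and the second-stage opening by $Com'$ is just sending some $d'$, with $Rec(c,d')$ producing the output. So giving $\A$ exactly $c$ and having it produce $d'$ (from which we read off $b' = Rec(c,d')$) fits the binding game precisely, provided $b' \neq b$. The case $b' = b$ is not covered by binding at all — that is exactly why we also need hiding. The main obstacle, then, is just organizing this two-case split cleanly and making sure the reduction to hiding is stated against the right distinguisher (outputting the recovered bit) and the reduction to binding uses the honest-commit structure of the first stage; there are no deep technical difficulties, since verification is efficient and the commitment is non-interactive. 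I would also note in passing that one could boost to perfect correctness by parallel repetition if desired, but commitment correctness already gives $1 - \negl(\lambda)$, so the basic construction suffices.

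\begin{proof}[Proof sketch of \cref{thm:nicom}]
Given a non-interactive commitment $(Com,Rec)$, define $\Samp(1^\lambda)$ to sample $b \randfrom \zo$, run $Com(b) \to (c,d)$, and output $k=(b,d)$, $s=c$; define $\Ver((b,d),c)$ to output $1$ iff $Rec(c,d)\to b$. Verification is efficient since $Rec$ is, and correctness of $(\Samp,\Ver)$ follows from correctness of the commitment. For security, suppose a non-uniform QPT $\A$ satisfies
$$\Pr_{\Samp(1^\lambda)\to(k,s)}[\Ver(\A(s),s)\to 1] \geq \lambda^{-c}$$
for some $c$, and write $\A(c) \to (b',d')$. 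Split into the events $b'=b$ and $b'\neq b$; one of them has probability at least $\tfrac12 \lambda^{-c}$.

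If $\Pr[b'\neq b \wedge Rec(c,d')\to b'] \geq \tfrac12\lambda^{-c}$, then the adversarial sender $Com'$ that receives $c$ and outputs $d'$ wins the (weak honest) binding game with non-negligible probability, contradicting binding.

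If $\Pr[b'=b \wedge Rec(c,d')\to b'] \geq \tfrac12\lambda^{-c}$, consider the distinguisher $D$ that on input a transcript $c$ runs $\A(c)\to(b',d')$ and outputs $b'$. Since $Rec(c,d')\to b'$ forces (by correctness of $Rec$ and the event $b'=b$) agreement with the committed bit, $D$ distinguishes $Com(0)$ from $Com(1)$ with advantage at least $\tfrac12\lambda^{-c} - \negl(\lambda)$, contradicting hiding. Hence $(\Samp,\Ver)$ is an $\EVOWP$.
\end{proof}
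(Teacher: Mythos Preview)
Your construction and overall two-case strategy match the paper's proof almost exactly: sample a random message, set the key to be (message, decommitment) and the puzzle to be the commitment, then bound the adversary's success by splitting on whether the message it outputs agrees with the committed one. The paper commits to a uniformly random $m$ from the full message space (implicitly super-polynomial), so that hiding directly gives $\Pr[m'=m]\leq\negl(\lambda)$; you commit to a single bit, which is fine in principle but makes the hiding step more delicate.

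The gap is in your hiding reduction. From $\Pr[b'=b \wedge Rec(c,d')=b']\geq \tfrac12\lambda^{-c}$ you conclude that the distinguisher $D$ which simply outputs $b'$ has advantage $\geq \tfrac12\lambda^{-c}-\negl$. That does not follow: $\Pr[b'=b]$ is not controlled by $\Pr[b'=b\wedge E]$, and in fact an adversary that always outputs $b'=0$ (and sometimes a valid $d'$) could make the joint event non-negligible while $D$ has zero advantage. The event you are conditioning on is invisible to $D$.

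The fix is small: have the distinguisher run $\A(c)\to(b',d')$, check whether $Rec(c,d')=b'$, output $b'$ if so and a uniformly random bit otherwise. A direct calculation gives
\[
\Pr[D=b]=\tfrac12+\tfrac12\bigl(\Pr[E\wedge b'=b]-\Pr[E\wedge b'\neq b]\bigr),
\]
and since binding already forces $\Pr[E\wedge b'\neq b]\leq\negl$ unconditionally, you get advantage $\geq\tfrac12\lambda^{-c}-\negl$. Note this uses binding \emph{inside} the hiding case, so the argument is really ``binding bounds one term, then hiding bounds the rest'' rather than a clean either/or split. Alternatively, just commit to a uniformly random string of length $\lambda$ (as the paper does with $m$), and then hiding directly gives $\Pr[m'=m]\leq\negl(\lambda)$ with no further work.
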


\begin{proof}[Proof of \cref{thm:nicom}]
    Our construction is as follows
    \begin{enumerate}
        \item $\Samp$: Pick $m$ uniformly at random. Run $Com(m)\to (c,d)$. Output $(k=(m,d),s=c)$.
        \item $\Ver(k=(m,d),s=c)$: Run $Rec(c,d) \to m'$. Output $1$ if and only if $m' = m$. 
    \end{enumerate}
    Correctness immediately implies that
    $$\Pr_{\Samp\to (k,s)}[\Ver(k,s)] = \Pr_{\$\to m,Com(m)\to (c,d)}[Rec(c,d) = m] \geq 1 -\negl(\lambda)$$

    We now proceed to show security. Let $\A$ be any QPT adversary. We will show
    $$\Pr_{\Samp \to (k,s)}[\Ver(\A(s),s)\to 1] \leq \negl(\lambda)$$
    Observe that
    \begin{equation*}
        \begin{split}
            \Pr_{\Samp \to (k,s)}[\Ver(\A(s),s)\to 1] \\
            = \Pr_{Com(m)\to (c,d)}[Rec(c,d') = m' ; \A(c) \to (d',m')]\\
            = \Pr_{Com(m)\to (c,d)}[Rec(c,d') = m' \wedge m = m' ; \A(c) \to (d',m')] \\
            + \Pr_{Com(m)\to (c,d)}[Rec(c,d') = m' \wedge m \neq m' ; \A(c) \to (d',m')]
        \end{split}
    \end{equation*}
    But hiding implies that the probability that $\A$ computes $m$ from $c$ is negligible, so 
    $$\Pr_{Com(m)\to (c,d)}[Rec(c,d') = m' \wedge m = m' ; \A(c) \to (d',m')] \leq \negl(\lambda)$$
    And binding says that after an honest commitment, there is no way to open to a different message, and so 
    $$\Pr_{Com(m)\to (c,d)}[Rec(c,d') = m' \wedge m \neq m' ; \A(c) \to (d',m')] \leq \negl(\lambda)$$
    Together, we have
    $$\Pr_{\Samp \to (k,s)}[\Ver(\A(s),s)\to 1] \leq \negl(\lambda)$$
\end{proof}

\begin{theorem}\label{thm:qprgtoowp}
    If there exists a $\QPRG$ $G$ with stretch $\ell(\lambda) \geq 3n(\lambda)$, then there exists a $\EVOWP$ $(\Samp,\Ver)$.
\end{theorem}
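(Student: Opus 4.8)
The plan is to use the seed of the $\QPRG$ as the key and the output as the puzzle, much in the spirit of the construction from one-time signatures, but with care taken because the $\QPRG$ is only pseudodeterministic. Concretely, I would define $\Samp(1^\lambda)$ to sample $k \from \{0,1\}^{n(\lambda)}$, run $G(k) \to y$, and output $(k, s = y)$; and I would define $\Ver(k, s)$ to check whether $s$ is a ``plausible'' output of $G$ on seed $k$ --- the cleanest choice is to set $\Ver(k,s) = 1$ iff $\Pr[G(k) = s] \geq 1/2$, which is a well-defined (though inefficient-looking) predicate. However, since we want \emph{efficient} verification, I would instead have $\Ver(k,s)$ run $G(k)$ itself some $\poly(\lambda)$ number of times and accept iff the majority output equals $s$; by the pseudodeterminism guarantee (for good seeds $k \in \calk_\lambda$, some fixed $y$ occurs with probability $\geq 1 - \mu(\lambda)$), a logarithmic number of repetitions suffices to make this agree with the ideal predicate except with negligible probability. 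This makes $\Ver$ a QPT algorithm, as required for an $\EVOWP$.

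Correctness follows because a uniformly random seed lies in $\calk_\lambda$ except with probability $\mu(\lambda)$, and conditioned on a good seed, $\Samp$ outputs the canonical value $G_\lambda(k)$ with probability $\geq 1 - \mu(\lambda)$, on which $\Ver$ then accepts with overwhelming probability; a union bound gives correctness error $O(\mu(\lambda)) + \negl(\lambda)$, which is $(1 - 1/\poly(\lambda))$-type — so strictly speaking this first yields a weak-correctness $\EVOWP$, and I would then invoke the correctness amplification theorem (Theorem, restatement of~\Cref{thm:owpcorramp}) stated earlier in the paper to upgrade to negligible correctness error. (Alternatively, if one uses the ideal predicate $\Pr[G(k)=s]\geq 1/2$ directly, correctness error is just $O(\mu)$ still, so amplification is needed either way; this is why the hypothesis only needs a $\QPRG$, whose pseudodeterminism is inherently only $1/\poly$.)

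For security, suppose a non-uniform QPT $\A$ finds, given $s = G(k)$, some $k'$ with $\Ver(k',s)=1$ with non-negligible probability. The key point is that $\Ver(k',s)=1$ essentially asserts that $s$ is the (near-)deterministic output of $G$ on seed $k'$, so $s$ lies in the image of the deterministic map $k \mapsto G_\lambda(k)$ restricted to good seeds, which has size at most $2^{n(\lambda)}$. I would argue: the puzzle distribution $\{G(k) : k \from \{0,1\}^n\}$ is, by security of the $\QPRG$, computationally indistinguishable from the uniform distribution on $\{0,1\}^{\ell(\lambda)}$ with $\ell \geq 3n$; but for a truly uniform string $y \in \{0,1\}^\ell$, the probability that $y$ is in the size-$\leq 2^n$ set of ``verifiable'' puzzles is at most $2^{n - \ell} \leq 2^{-2n}$, which is negligible --- hence no QPT $\A$ can even be handed a string that \emph{has} a valid key, except with negligible probability, except that $\A$'s input is the real puzzle rather than a uniform string, and the two are computationally indistinguishable, so $\A$'s success probability on the real puzzle is negligibly close to its success probability on a uniform string, which is negligible. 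Formalizing the reduction requires building a distinguisher $\calD$ that, on input a string $w$, runs $\A(w) \to k'$, checks $\Ver(k', w)$, and outputs $1$ iff it accepts; then $\Pr[\calD(G(k)) = 1]$ is $\A$'s success probability while $\Pr[\calD(U_\ell) = 1] \leq 2^{-2n} = \negl$, contradicting $\QPRG$ security.

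I expect the main obstacle to be the subtle interaction between the inefficient ideal predicate ``$\Pr[G(k)=s]\geq 1/2$'' and the efficient sampled-majority version of $\Ver$: one must make sure that the set of ``$\Ver$-accepting'' puzzles is genuinely small (at most $2^n$, up to the good-seed caveat — bad seeds $k \notin \calk_\lambda$ could in principle have many outputs each passing the efficient test with nonzero probability, so I would need to argue that even accounting for these, the accepting set still has measure $\negl$ under uniform, perhaps by noting the sampled test only accepts $w$ if $w$ is output by $G(k')$ with probability $\geq$ roughly $1/2$, limiting to $\leq 2$ accepting $w$'s per seed $k'$), and that the efficient $\Ver$ agrees with the ideal one on the relevant distributions. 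A secondary technical point is tracking that the correctness error coming from pseudodeterminism is only $1/\poly$, so the theorem as stated really produces a weak-correctness $\EVOWP$ and genuinely relies on the earlier correctness-amplification theorem to conclude; I would state this explicitly rather than claiming full correctness directly.
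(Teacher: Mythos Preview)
Your approach is correct and shares the paper's core idea: seed as key, PRG output as puzzle, security via the distinguisher that runs $\A$ and then $\Ver$, using the stretch $\ell\geq 3n$ to bound the accepting probability on a uniform input. The paper differs only in packaging. Instead of building a single copy and appealing to the correctness-amplification theorem, it folds $\lambda$ independent copies directly into $\Samp$ and has $\Ver$ accept if \emph{any one} fresh evaluation $G(k_i)$ matches $s_i$; since two runs on a random seed agree with probability at least $1/2$, this gives correctness error $\leq 2^{-\lambda}$ outright. As \Cref{thm:owpcorramp} is precisely this OR-repetition, the two routes coincide after composition; the paper's is simply self-contained and avoids the forward reference to Section~8. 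A second minor difference: the paper's per-copy $\Ver$ takes a \emph{single} fresh sample of $G(k)$ rather than a majority vote. Your majority trick is unnecessary --- one sample already yields $\sum_{s}\Pr[\Ver(k,s)=1]\leq 1$ for every $k$, which is exactly the inequality the security bound needs (the paper phrases it as $\sum_y \max_x \Pr[G(x)=y]\leq 2^n$) --- but it does no harm.
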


\begin{proof}[Proof of \cref{thm:qprgtoowp}]
    We define our $\EVOWP$ as follows
    \begin{enumerate}
        \item $\Samp(1^{\lambda})$: Sample $k_1,\dots, k_{\lambda}$ uniformly from $\{0,1\}^n$. Sample $G_{\lambda}(k_i) \to s_i$. Output key $k = k_1,\dots,k_{\lambda}$, and puzzle $s = s_1,\dots, s_{\lambda}$.
        \item $\Ver(k,s)$: For each $i$ sample $G(k_i) \to \wt{s_i}$. 
        If for any $i$, $s_i = \wt{s_i}$, output $1$. 
        Otherwise, output $0$.
    \end{enumerate}

    We first will show correctness. Pseudodeterminism of $G$ gives us that for sufficiently large $\lambda$, the probability over a random $k$ that two runs of $G(k)$ give the same result is $\geq \frac{1}{2}$. Thus,
    \begin{equation*}
        \begin{split}
            \Pr_{\Samp(1^{\lambda})\to(k,s)}[\Ver(k,s)\to 0] &= \prod_{i=1}^\lambda \Pr_{\{0,1\}^n \to k_i, G(k_i) \to s_i,G(k_1) \to \wt{s_i}}[s_i = \wt{s_i}]\\
            &\leq \prod_{i=1}^\lambda \frac{1}{2} \leq \frac{1}{2^\lambda} \leq \negl(\lambda)
        \end{split}
    \end{equation*}
    which gives us correctness.

    We will now argue security via a reduction. Let $\A$ be any non-uniform QPT algorithm such that
    $$\Pr_{\Samp(1^\lambda)\to (k,s)}[\Ver(\A(s),s) \to 1] \geq \epsilon$$
    By an averaging argument, there must exist some index $i\in[\lambda]$ such that
    $$\Pr_{\Samp(1^\lambda) \to (k,s)}[G((\A(s))_i) = s_i] \geq \frac{\epsilon}{\lambda}$$

    We define an adversary $\A'$ against $G$ as follows
    \begin{enumerate}
        \item On input $y$
        \item Sample $k_1,\dots,k_{i-1},k_{i+1},\dots,k_{\lambda}$ uniformly from $\{0,1\}^n$
        \item Send $G(k_1),\dots,G(k_{i-1}),y,G(k_{i+1}),\dots,G(k_{\lambda})$ to $\A$, getting response $\wt{k}$
        \item If $G(\wt{k}_i) = y$, then output $1$, otherwise output $0$,
    \end{enumerate}

    First, we will lower bound $\Pr_{\{0,1\}^n \to x}[\A'(G(x)) \to 1]$. Note that on input $y = G(x)$, the view of $\A$ is exactly as it should be in its own game, and so
    $$\Pr_{\{0,1\}^n \to x}[\A'(G(x)) \to 1] = \Pr_{\Samp(1^\lambda) \to (k,s)}[G((\A(s))_i) = s_i] \geq \frac{\epsilon}{\lambda}$$

    Now, we will upper bound $\Pr_{\{0,1\}^\ell \to y}[\A'(G(y)) \to 1]$. Note that the optimal adversary for this problem, on input $y$, returns $\argmax_x \Pr[G(x) = y]$. Thus, we have the following trivial bound
    $$\Pr_{\{0,1\}^\ell \to y}[\A'(G(y)) \to 1] \leq \E_{\{0,1\}^\ell \to y}[\max_x \Pr[G(x) = y]]$$
    But there can be at most $2^{2n}$ values of $y$ such that $\max_x \Pr[G(x) = y] \geq \frac{1}{2^n}$. To see this, observe
    \begin{equation*}
        \begin{split}
            \sum_y \max_x \Pr[G(x) = y] \leq \sum_{x,y} \Pr[G(x) = y] = \sum_x \sum_y \Pr[G(x) = y] = \sum_x 1 = 2^n
        \end{split}
    \end{equation*}
    But since $\ell(\lambda) \geq 3n(\lambda)$, we have that
    $$\Pr_{\{0,1\}^\ell \to y}[\max_x \Pr[G(x) = y] \geq 2^{-n}] \leq \frac{2^{2n}}{2^{\ell}} \leq \frac{2^{2n}}{2^{3n}} = 2^{-n}$$
    Thus, we conclude with
    \begin{equation*}
    \begin{split}
        \E_{\{0,1\}^\ell \to y}[\max_x \Pr[G(x) = y]] &\leq \Pr_{\{0,1\}^\ell \to y}[\max_x \Pr[G(x) = y] \geq 2^{-n}]\cdot 1 + 2^{-n}\\
        &\leq 2^{-n} + 2^{-n}
    \end{split}
    \end{equation*}

    Putting this together, we get that
    $$\Pr_{\{0,1\}^n \to x}[\A'(G(x)) \to 1] - \Pr_{\{0,1\}^\ell \to y}[\A'(y) \to 1] \geq \frac{\epsilon}{\lambda} + 2^{-n+1}$$
    and so since
    $$\Pr_{\{0,1\}^n \to x}[\A'(G(x)) \to 1] - \Pr_{\{0,1\}^\ell \to y}[\A'(y) \to 1] \leq \negl(\lambda)$$
    by $\QPRG$ security, we get that $\epsilon \leq \negl(\lambda)$ and so $(\Samp,\Ver)$ satisfies security.
\end{proof} 

\section{Efficiently verifiable one way puzzles can be broken with a $\QCMA$ oracle}

\begin{proposition}[From~\cite{synthesis,Aharonov_2022}]\label{prop:searchtodecision}
    There exists a search-to-decision reduction for $\PromQCMA$. Formally, there exists a promise problem $\Pi^* \in \PromQCMA$ such that for every $\Pi \in \PromQCMA$ with verifier $\mathcal{V}$, there exists a QPT algorithm $\mathcal{A}^{\Pi^*}$ such that for all $x$ such that $\Pi(x) = 1$,
    $$\Pr[\Pr[\mathcal{V}(\A^{\Pi}(x),x) \to 1]\geq \frac{2}{3}] \geq \frac{1}{2}$$
\end{proposition}

\begin{theorem}\label{thm:qcmaoracle}
    For every efficiently verifiable one way puzzle $(\Samp,\Ver)$, there exists a promise problem $\Pi^*\in \PromQCMA$ and a QPT algorithm $\A^{\Pi^*}$ with oracle access to $\Pi^*$ which breaks security. That is
    $$\Pr_{\Samp(1^\lambda)\to (k,s)}[\Ver(\A^{\Pi^*}(s),s)\to 1] \geq \frac{1}{\poly(\lambda)}$$
\end{theorem}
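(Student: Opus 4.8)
The plan is a search-to-decision reduction: since $\Ver$ is a polynomial-time quantum algorithm for an $\EVOWP$, the question ``is there a key with a given prefix that makes $\Ver$ accept with probability above a threshold?'' is a $\PromQCMA$ question, and an adversary holding such an oracle can recover a verifying key bit by bit. Write $v(k,s):=\Pr[\Ver(k,s)\to 1]$. First I would note that correctness gives $\E_{(k,s)\leftarrow\Samp(1^\lambda)}[v(k,s)]\ge 1-\negl(\lambda)$, so by averaging a $1-\negl(\lambda)$ fraction of puzzles $s$ are \emph{good}: the sampled key $k^\ast$ has $v(k^\ast,s)\ge 2/3$, hence $\max_k v(k,s)\ge 2/3$. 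I would then define the promise problem $\Pi$: on input $(1^\lambda,s,p)$ with $|p|=j\le\lambda$, set $\gamma:=\tfrac1{3\lambda}$, $N_j:=\tfrac23-j\gamma$, $H_j:=N_j+\gamma$, and call the input a YES-instance if some $k\in\zo^{\lambda-j}$ has $v(p\|k,s)\ge H_j$, a NO-instance if every such $k$ has $v(p\|k,s)<N_j$.

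To see that $\Pi\in\PromQCMA$: the verifier takes the witness $k$, estimates $v(p\|k,s)$ to additive error $\gamma/4$ by running $\Ver(p\|k,s)$ polynomially many times (legal, because $\Ver$ is efficient), and accepts iff the estimate exceeds the midpoint $(N_j+H_j)/2$; a Chernoff bound turns the $\gamma$ gap between $H_j$ and $N_j$ into completeness $\ge 2/3$ and soundness $\le 1/3$. After the usual error reduction we may treat an oracle for $\Pi$ as correct on all promise-satisfying inputs, with arbitrary behaviour elsewhere.

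The adversary $\A^\Pi(s)$ performs a prefix search: starting at the empty prefix $p_0$, at step $j=0,\dots,\lambda-1$ it queries $\Pi(1^\lambda,s,p_j\|0)$, sets $p_{j+1}:=p_j\|0$ if the answer is $1$ and $p_{j+1}:=p_j\|1$ otherwise, and finally outputs $p_\lambda$. For good $s$ I would maintain the invariant $\max_{k\succeq p_j}v(k,s)\ge N_j$ (maximum over length-$\lambda$ keys extending $p_j$, with $N_0:=2/3$), which holds at $j=0$ by goodness. Using $N_j=H_{j+1}$: if the query returns $1$ the instance is not a genuine NO-instance, so $\max_{k\succeq p_j\|0}v\ge N_{j+1}$; if it returns $0$ the instance is not a genuine YES-instance, so $\max_{k\succeq p_j\|0}v< H_{j+1}=N_j\le\max_{k\succeq p_j}v$, which forces $\max_{k\succeq p_j\|1}v\ge N_j\ge N_{j+1}$. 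Either way the invariant survives, and after $\lambda$ steps $v(p_\lambda,s)=\max_{k\succeq p_\lambda}v\ge N_\lambda=1/3$. Since good $s$ occur with probability $\ge 1/2$, $\Pr_{\Samp(1^\lambda)\to(k,s)}[\Ver(\A^\Pi(s),s)\to 1]\ge\tfrac12\cdot\tfrac13\ge\tfrac1{\poly(\lambda)}$, and $\A$ makes only $\lambda$ oracle queries plus polynomial bookkeeping, hence is QPT.

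I expect the main obstacle, and the reason to use drifting thresholds $N_j,H_j$ rather than fixed constants $1/3,2/3$, to be robustness to the oracle's unconstrained behaviour on promise-violating inputs: with a fixed gap the oracle could silently push the search into a subtree whose best key verifies only with vanishing probability. Decreasing the thresholds by exactly $\gamma$ per level keeps $N_j$ large enough ($N_j=H_{j+1}$) that the chosen subtree always retains a key above the next level's NO-threshold, so the guarantee drops by only $\gamma$ per step, a constant total over $\lambda$ steps; the price is that $\Pi$ has only a $1/\poly(\lambda)$ promise gap in acceptance probability, which is what forces the polynomially many repetitions of $\Ver$ inside the $\PromQCMA$ verifier. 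A secondary subtlety is that $\Ver$ is guaranteed correct only on honestly sampled pairs, so the argument must route through the ``good puzzle'' averaging step rather than assuming $\max_k v(k,s)$ is large for every $s$.
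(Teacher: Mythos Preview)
Your proposal is correct and follows the same search-to-decision template as the paper: define a $\PromQCMA$ problem asking whether some extension of a given key prefix makes $\Ver$ accept with high probability, then recover a key bit by bit. The paper uses a single fixed threshold---yes if some completion has acceptance probability $\ge 2/3$, no if every completion has acceptance probability $<2/3$---and argues directly that a yes-prefix always has a yes-child. You instead let the thresholds $H_j,N_j$ slide down by $\gamma=1/(3\lambda)$ at each level and maintain the invariant $\max_{k\succeq p_j}v(k,s)\ge N_j$ irrespective of how the oracle answers off-promise. This extra care is warranted: with a single fixed threshold, either the promise problem is total (yes at $\ge 2/3$, no at $<2/3$), in which case placing it ``trivially'' in $\PromQCMA$ actually requires distinguishing $\ge 2/3$ from $<2/3$ with no gap; or one reads in a genuine gap (yes at $\ge 2/3$, no at $\le 1/3$), in which case the oracle's arbitrary behaviour on don't-care prefixes can steer the search into a subtree whose best completion lies strictly below $2/3$, and the paper's inductive claim that the current prefix stays a yes-instance no longer goes through. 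Your ladder of thresholds sidesteps this cleanly---an adversarial oracle can cost you at most $\gamma$ per step, hence at most $1/3$ overall---at the price of a $1/\poly$ promise gap that forces polynomially many repetitions of $\Ver$ inside the $\PromQCMA$ verifier. So the high-level route matches the paper, but your treatment of the promise-violation issue is more rigorous.
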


\begin{proof}
    Let $(\Samp,\Ver)$ be a $\EVOWP$. Define $\Pi$ to be the following promise problem:
    \begin{enumerate}
        \item (yes): $s$ is a yes instance if there exists $k$ such that $\Pr[\Ver(k, s) \to 1]\geq \frac{2}{3}$
        \item (no): $s$ is a no instance if for all $k$, $\Pr[\Ver(k, s) \to 1] < \frac{1}{3}$
    \end{enumerate}
    It is trivial to see that $\Pi \in \PromQCMA$.

    Our promise problem $\Pi^*$ will be the problem from~\Cref{prop:searchtodecision} and our algorithm $\A^{\Pi^*}$ will be the 
    algorithm from~\Cref{prop:searchtodecision} corresponding to $\Pi$. In particular, we have that for all $s$ such that there exists a $k$ 
    with $\Pr[\Ver(k,s) \to 1] \geq \frac{2}{3}$, then
    $$\Pr[\Pr[\Ver(\A^{\Pi^*}(s),s) \to 1] \geq \frac{2}{3}] \geq \frac{1}{2}$$
    In particular, this means that
    $$\Pr[\Ver(\A^{\Pi^*}(s),s) \to 1] \geq \frac{1}{3}$$

    Correctness states that $$\Pr_{\Samp(1^\lambda) \to (k,s)}[\Ver(k,s) \to 1] \geq 1 - \negl(\lambda).$$
    And so an averaging argument gives us that
    $$\Pr_{\Samp(1^\lambda) \to (k,s)}[\Pr[\Ver(k,s) \to 1] \geq \frac{2}{3}] \geq 1 - \negl(\lambda) \geq \frac{1}{2}$$
    In particular, $s$ is a yes instance of $\Pi$ with all but negligible probability.

    Putting things together, we get that
    $$\Pr_{\Samp(1^\lambda) \to (k,s)}[\Ver(\A^{\Pi^*}(s),s) \to 1] \geq \frac{1}{6}$$
    and we are done.
\end{proof} 

\section{A black-box separation between $\OWP$ and $\EVOWP$}

We begin by recalling the very powerful quantum black-box separation theorem by Kretschmer.

\begin{theorem}[\cite{Kretschmer21Quantum}]\label{thm:kreschmer}
    There exists a set of quantum oracles $\U$ such that with probability $1$ over $\U$,
    \begin{enumerate}
        \item $\PromBQP^{\U} = \PromQMA^\U$.
        \item Relative to $\U$, there exists a PRS family mapping $\lambda$ bits to $\lambda$ qubit states.
    \end{enumerate}
\end{theorem}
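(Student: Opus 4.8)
Since \Cref{thm:kreschmer} is precisely the main theorem of~\cite{Kretschmer21Quantum}, the plan is to recall the structure of Kretschmer's argument. The oracle $\U$ is built from two pieces. The first is a Haar-random unitary ensemble $\calh=\{H_n\}_{n\in\N}$, where $H_n$ is drawn from (a sufficiently fine finite-precision net approximating) the Haar measure on the $n$-qubit unitary group, so that $H_n$ has a classical description yet is statistically indistinguishable from truly Haar-random to any polynomial-query algorithm. The second is a classical oracle $\cala$ chosen to be complete for the class ``$\mathsf{PSPACE}$ with oracle access to the classical description of $\calh$.'' Only the existence of the PRS family is probabilistic; the complexity collapse holds for every fixed choice of $\calh$.

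For $\PromBQP^{\U}=\PromQMA^{\U}$, the inclusion $\subseteq$ is the relativizing inclusion $\BQP\subseteq\QMA$. For $\supseteq$ I would relativize the standard proof that $\QMA\subseteq\mathsf{PSPACE}$: a $\PromQMA^{\U}$ verifier is a polynomial-size quantum circuit built from $H_n$-gates and $\cala$-query gates, and its maximum acceptance probability over all witness states equals the largest eigenvalue of an exponential-dimensional Hermitian operator whose entries a polynomial-space machine can compute --- reading the description of $\calh$ for the $H_n$-gate entries, and running its own $\mathsf{PSPACE}$ computation to resolve the $\cala$-gate entries. Hence $\PromQMA^{\U}$ lies in ``$\mathsf{PSPACE}$ with access to the description of $\calh$,'' which by the choice of $\cala$ (and closure of $\mathsf{PSPACE}$ under itself, so that no genuine self-reference occurs) equals $\mathsf{P}^{\cala}\subseteq\PromBQP^{\U}$.

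For the PRS family, take the generator that on key $k\in\{0,1\}^{\lambda}$ outputs the $\lambda$-qubit pure state $H_\lambda\ket{k}$; the $\lambda\to\lambda$ length and correctness are immediate, so the content is $t$-copy security against non-uniform QPT distinguishers $\mathcal{D}^{\U}$ with $t=\poly(\lambda)$. Two facts drive this: (i) a uniformly random column of a Haar-random unitary is Haar-distributed, so over random $k$ and $H_\lambda$ the state $(H_\lambda\ket{k})^{\otimes t}$ is $2^{-\Omega(\lambda)}$-close in trace distance to $\ket{\psi}^{\otimes t}$ for Haar-random $\ket{\psi}$; and (ii) the oracle $\U$ does not let $\mathcal{D}$ undo this. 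If $\mathcal{D}$ had no oracle access, (i) would finish the proof, but since $\mathcal{D}$'s oracle $H_\lambda$ is correlated with its input one needs a query/hybrid argument over $\mathcal{D}$'s $q=\poly(\lambda)$ queries: its $H_n$-gate queries reveal only negligible information about the hidden column $H_\lambda\ket{k}$ (a Grover-type / random-unitary query lower bound); its classical queries to the description of $\calh$ reveal at most $q$ individual entries; and its queries to the $\mathsf{PSPACE}$-oracle $\cala$ yield a transcript that is a function of $\calh$ but --- crucially --- independent of the secret $k$, since $k$ never appears in any classical string $\mathcal{D}$ can feed to $\cala$. Conditioned on all of this $k$-independent information, $H_\lambda\ket{k}$ for random $k$ still looks Haar-random to $\mathcal{D}$, so the distinguishing advantage is $\poly(q,t)\cdot 2^{-\Omega(\lambda)}$ in expectation over $\U$. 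A Borel--Cantelli argument over the countably many non-uniform adversaries and over $\lambda$ then yields: with probability $1$ over $\U$, this is a secure PRS.

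The step I expect to be the main obstacle is (ii): arguing rigorously that the extremely powerful classical oracle $\cala$, which effectively ``knows'' all of $\calh$, is nonetheless useless for breaking PRS security --- because the secret key (equivalently, the hidden column index) is information-theoretically absent from every classical query $\mathcal{D}$ can make, and only polynomially many amplitudes of $H_\lambda$ can be extracted through gate queries or description queries. Formalizing this simultaneous ``hiding'' of $k$ and ``ignorance'' of $H_\lambda$, and combining it with the Haar concentration bounds needed for the almost-sure conclusion, is the delicate part; the complexity collapse itself is just a relativization of $\QMA\subseteq\mathsf{PSPACE}$ together with the standard fact that a random column of a Haar unitary is a random state.
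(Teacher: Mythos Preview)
The paper does not prove \Cref{thm:kreschmer}; it is quoted from~\cite{Kretschmer21Quantum} and used as a black box. Your sketch is therefore being compared to Kretschmer's argument, not to anything in this paper. At the structural level you have the right picture --- a Haar-random unitary family together with a classical oracle that forces the complexity collapse --- though Kretschmer works with genuine Haar-random unitaries rather than a finite-precision net.

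There is, however, a real gap in your PRS-security argument. You assert that the $\cala$-transcript is ``independent of the secret $k$, since $k$ never appears in any classical string $\mathcal{D}$ can feed to $\cala$,'' but this is false: $\mathcal{D}$ holds $t$ copies of $H_\lambda\ket{k}$, can measure them, and can feed the ($k$-dependent) classical outcomes to $\cala$. Kretschmer's actual argument is different in kind. He proves, via concentration of measure on the unitary group together with a net/union bound over all bounded-query algorithms, that with probability $1$ over the Haar choice the PRS is secure against \emph{every} polynomial-query adversary simultaneously --- even computationally unbounded ones. Because this is a query-complexity statement, appending any classical oracle $\cala$ (even one determined by the random unitaries) cannot help: $\cala$ contributes computational power but no additional quantum-oracle queries. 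This ``for all adversaries at once'' formulation is also what your Borel--Cantelli step actually requires, since the algorithm $\mathcal{D}^{\cala(\calh)}$ varies with $\calh$ and is not one of countably many fixed machines; a per-adversary expectation bound does not suffice when the adversary itself is a function of the randomness you are averaging over.
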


First, we observe that all of our theorems are black-box and thus relativize. In particular, we get the following corollaries

\begin{corollary}\label{cor:qcma}
    Let $\U$ be any collection of classical or quantum oracles. For every efficiently verifiable one way puzzle $(\Samp^\U,\Ver^\U)$, there exists a promise problem $\Pi\in \PromQCMA^\U$ and a QPT algorithm $\A^\Pi$ with oracle access to $\Pi$ which breaks security. That is
    $$\Pr_{\Samp^\U(1^\lambda)\to (k,s)}[\Ver^\U(\A^\Pi(s),s)\to 1] \geq \frac{1}{\poly(\lambda)}$$
\end{corollary}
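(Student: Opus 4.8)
The plan is to observe that the theorem from the previous section (that every efficiently verifiable one-way puzzle can be broken with a $\PromQCMA$ oracle) has a proof that is entirely black-box in the puzzle: it never inspects the internals of $\Samp$ or $\Ver$, it only runs $\Ver$ as a subroutine, queries a decision oracle, and invokes correctness together with an averaging argument. Consequently the identical argument relativizes — it goes through verbatim once every algorithm in sight is granted oracle access to $\U$.

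Concretely, first I would define the promise problem $\Pi^\U$ exactly as in that proof but with $\Ver$ replaced by $\Ver^\U$: a pair $(x,s)$ is a yes-instance if there is a $y$ for which $\Ver^\U(x || y,s)$ accepts with high probability, and a no-instance otherwise. Because $(\Samp^\U,\Ver^\U)$ is an efficiently verifiable one-way puzzle, $\Ver^\U$ is a QPT algorithm relative to $\U$; hence the $\QCMA$ verifier witnessing membership of $\Pi^\U$ — which takes $y$ as its classical witness, runs $\Ver^\U(x || y,s)$, and applies whatever fixed efficient post-processing the original proof uses to obtain the required completeness/soundness gap (post-processing that makes no oracle queries of its own) — is itself a QPT algorithm relative to $\U$. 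Thus $\Pi^\U \in \PromQCMA^\U$.

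Next I would re-run the search-to-decision reduction unchanged: the adversary $\A$ builds the key $k$ bit by bit, at each step making two queries to the $\Pi^\U$ oracle to test whether extending the current prefix by $0$ or by $1$ preserves the yes-instance property, aborting with $\bot$ if neither does, and finally outputting $s$. This algorithm issues only oracle calls to $\Pi^\U$ plus polynomially many elementary steps, so it is a QPT algorithm with a $\PromQCMA^\U$ oracle, as required. The analysis needs no change: the induction showing that a yes-instance prefix always admits a yes-instance one-bit extension does not reference $\U$, and the averaging argument converting correctness of $(\Samp^\U,\Ver^\U)$ into ``$s$ is a yes-instance of $\Pi^\U$ with overwhelming probability'' is identical, yielding $\Pr_{\Samp^\U(1^\lambda)\to(k,s)}[\Ver^\U(\A^\Pi(s),s)\to 1]\ge \frac{2}{3}(1-\negl(\lambda))\ge 1/\poly(\lambda)$.

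I do not expect a genuine obstacle here; the only thing requiring care is bookkeeping in the relativized model — verifying that ``QPT relative to $\U$'' is closed under the operations used (running $\Ver^\U$ as a subroutine, and a QPT machine querying a $\PromQCMA^\U$ oracle that is itself decided relative to $\U$) and that query-counting conventions keep the final algorithm polynomial-time relative to $\U$. This is the same robustness of the oracle model that already gives $\PromBQP^\U\subseteq\PromQCMA^\U\subseteq\PromQMA^\U$ for every $\U$, so no new ideas are needed; the corollary then follows, and it is exactly what one feeds into Kretschmer's theorem (\Cref{thm:kreschmer}) to obtain the black-box separation between $\OWP$ and $\EVOWP$.
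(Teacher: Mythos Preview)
Your proposal is correct and takes essentially the same approach as the paper, which simply observes that the proof of the preceding theorem is black-box and therefore relativizes. You have spelled out the relativization in more detail than the paper does, but the content is identical.
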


\begin{corollary}\label{cor:qprg}
    Let $\U$ be any collection of classical or quantum oracles. If there exists a $\QPRG$ $G^\U$ with stretch $\ell(\lambda) \geq 3n(\lambda)$ secure relative to $\U$, then there exists a $\EVOWP$ $(\Samp^\U,\Ver^\U)$ secure relative to $\U$.
\end{corollary}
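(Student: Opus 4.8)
The statement is essentially the assertion that \Cref{thm:qprgtoowp} relativizes, so the plan is to check that every ingredient of its proof is black-box in the relevant algorithms and therefore goes through unchanged when all parties are additionally given oracle access to $\U$. Concretely, I would revisit the construction $(\Samp,\Ver)$ from the proof of \Cref{thm:qprgtoowp} and define $(\Samp^\U,\Ver^\U)$ by the same recipe, with every invocation of $G$ replaced by an invocation of $G^\U$: $\Samp^\U$ samples $k_1,\dots,k_\lambda \from \{0,1\}^n$ and sets $s_i = G^\U_\lambda(k_i)$, and $\Ver^\U$ recomputes $\wt{s_i}=G^\U(k_i)$ and accepts iff some $s_i=\wt{s_i}$. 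Since $G^\U$ is a QPT algorithm relative to $\U$, both $\Samp^\U$ and $\Ver^\U$ are QPT relative to $\U$, so $(\Samp^\U,\Ver^\U)$ is syntactically an efficiently verifiable one-way puzzle candidate relative to $\U$.

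For correctness, the argument in \Cref{thm:qprgtoowp} uses only the pseudodeterminism guarantee of $G$ — that for a random seed two independent runs of $G(k)$ agree with probability $\geq 1/2$ for large $\lambda$. This property is part of the hypothesis that $G^\U$ is a $\QPRG$ secure relative to $\U$, so the identical product-of-$1/2$ computation gives $\Pr[\Ver^\U(k,s)\to 0]\leq 2^{-\lambda}$, i.e. $(\negl,\cdot)$ correctness.

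For security, I would repeat the reduction verbatim with oracles attached. Given a non-uniform QPT $\A^\U$ with $\Pr[\Ver^\U(\A^\U(s),s)\to1]\geq \eps$, an averaging argument picks an index $i$ with $\Pr[G^\U((\A^\U(s))_i)=s_i]\geq \eps/\lambda$, and then $\A'^\U$ is built exactly as before: it samples the other $\lambda-1$ seeds, computes their images under $G^\U$, feeds $\A^\U$ the resulting vector with $y$ plugged in at position $i$, and tests whether $G^\U(\wt{k}_i)=y$. This $\A'^\U$ is non-uniform QPT relative to $\U$ (it inherits $\A$'s advice and only queries $\U_n$ for $n$ polynomial in its input length, since $\A^\U$ and $G^\U$ do). The two distinguishing-probability bounds carry over without change: the lower bound $\Pr_{x}[\A'^\U(G^\U(x))\to1]\geq\eps/\lambda$ holds because on input $G^\U(x)$ the view of $\A^\U$ is exactly its real view, and the upper bound $\Pr_{y}[\A'^\U(y)\to1]\leq \E_y[\max_x\Pr[G^\U(x)=y]]\leq 2^{-n+1}$ follows from the counting inequality $\sum_y\max_x\Pr[G^\U(x)=y]\leq 2^n$ together with $\ell\geq 3n$, none of which references the oracle at all. $\QPRG$ security of $G^\U$ relative to $\U$ then forces $\eps\leq\negl(\lambda)$.

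I do not expect a real obstacle here: the only thing to be careful about is the bookkeeping that each constructed algorithm is QPT \emph{relative to} $\U$ in the sense fixed in the oracle-model preliminaries (in particular that $\A'$ only queries $\U_n$ for $n<\poly$ of its own input length, which is immediate), and that $\A'$ is allowed to be non-uniform since we only need security against non-uniform adversaries. Once that is noted, the corollary is just the observation that the proof of \Cref{thm:qprgtoowp} is fully black-box and hence relativizes; \Cref{cor:qcma} is obtained in exactly the same way from the search-to-decision proof of the preceding section.
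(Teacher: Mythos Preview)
Your proposal is correct and matches the paper's approach exactly: the paper simply asserts in one line that ``all of our theorems are black-box and thus relativize'' and states this corollary without further proof, so your detailed walkthrough of why each step of \Cref{thm:qprgtoowp} is oracle-agnostic is precisely the verification the paper leaves implicit.
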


Let us then consider the oracle $\U$ from Theorem~\ref{thm:kreschmer}. We know that relative to $\PromBQP^\U \subseteq \PromQCMA^\U \subseteq \PromQMA^\U$, and we also have $\PromBQP^\U = \PromQMA^\U$. Thus, this gives us that $\PromBQP^\U = \mathsf{Promise}$ $\mathsf{QCMA}^\U$. So Corollary~\ref{cor:qcma} immediately shows that relative to $\U$, there does not exist any $\EVOWP$.

But it is also known from \cite{khurana2024commitments} and \cite{morimae2022quantum} that $\OWP$ can be built in a black-box manner from $\PRS$s. Thus, we get the following corollary:

\begin{theorem}\label{thm:separation}
    There exists a set of quantum oracles $\U$ such that with probability $1$ over $\U$,
    \begin{enumerate}
        \item There does not exist any $\EVOWP$ relative to $\U$.
        \item There exists a $\OWP$ relative to $\U$.
    \end{enumerate}
\end{theorem}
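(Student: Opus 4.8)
The plan is to combine Kretschmer's oracle (Theorem~\ref{thm:kreschmer}) with the two relativized reductions already in hand: Corollary~\ref{cor:qcma}, which breaks any $\EVOWP$ with a $\PromQCMA$ oracle, and the known black-box construction of $\OWP$ from $\PRS$ of \cite{khurana2024commitments,morimae2022quantum}. First I would fix the set of quantum oracles $\U$ promised by Theorem~\ref{thm:kreschmer}; with probability $1$ over $\U$ we simultaneously have $\PromBQP^\U = \PromQMA^\U$ and a $\PRS$ family mapping $\lambda$ bits to $\lambda$-qubit states relative to $\U$. I will argue that, conditioned on this probability-$1$ event, both conclusions of the theorem hold.

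For part (1), I would record the inclusions $\PromBQP^\U \subseteq \PromQCMA^\U \subseteq \PromQMA^\U$ (noted in the Preliminaries to hold for every $\U$), which together with $\PromBQP^\U = \PromQMA^\U$ collapse the chain to $\PromBQP^\U = \PromQCMA^\U$. Then, suppose toward a contradiction that $(\Samp^\U,\Ver^\U)$ is an $\EVOWP$ secure relative to $\U$. By Corollary~\ref{cor:qcma} there is $\Pi \in \PromQCMA^\U$ and a QPT algorithm $\A^\Pi$ relative to $\U$ with $\Pr_{\Samp^\U(1^\lambda)\to(k,s)}[\Ver^\U(\A^\Pi(s),s)\to 1] \geq 1/\poly(\lambda)$. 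Since $\Pi \in \PromQCMA^\U = \PromBQP^\U$, there is a QPT algorithm $M$ relative to $\U$ deciding $\Pi$ with bounded error on all promise-satisfying instances; amplifying $M$ to error $2^{-\lambda}$ and using that $\A^\Pi$ makes only $\poly(\lambda)$ queries, each on an instance of size $\poly(\lambda)$ and within the promise, I would replace every oracle call by a run of amplified $M$. A union bound over the queries shows the resulting plain QPT algorithm relative to $\U$ still breaks security with advantage $\geq 1/\poly(\lambda) - \negl(\lambda)$, contradicting security of $(\Samp^\U,\Ver^\U)$. Hence no $\EVOWP$ exists relative to $\U$.

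For part (2), I would invoke the black-box constructions of $\OWP$ from $\PRS$ of \cite{khurana2024commitments,morimae2022quantum}: being black-box, they relativize, so the $\PRS$ family guaranteed by Theorem~\ref{thm:kreschmer} yields a $\OWP$ secure relative to $\U$. Both parts (1) and (2) hold on the event that the two conclusions of Theorem~\ref{thm:kreschmer} hold, so they hold simultaneously with probability $1$, which is exactly the statement.

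The step I expect to require the most care is the oracle-replacement argument in part (1): a $\PromQCMA^\U$ oracle gives guarantees only on promise-satisfying inputs, so I must check that $\A^\Pi$ only ever queries valid instances — which is precisely how the search-to-decision reduction behind Corollary~\ref{cor:qcma} is constructed — and that the amplified $\PromBQP^\U$ decider agrees with the oracle on all of those queries except with negligible probability. Everything else is routine bookkeeping: polynomially many polynomial-size queries, a union bound, and the observation that a $1/\poly$ advantage minus a negligible term is still $1/\poly$.
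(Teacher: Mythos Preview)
Your proposal is correct and follows the same route as the paper: invoke Kretschmer's oracle, collapse $\PromQCMA^\U$ to $\PromBQP^\U$ via the sandwich $\PromBQP^\U \subseteq \PromQCMA^\U \subseteq \PromQMA^\U$, use Corollary~\ref{cor:qcma} to rule out $\EVOWP$, and use the relativizing $\PRS \Rightarrow \OWP$ chain for the positive side. The paper's write-up is terser --- it simply asserts that Corollary~\ref{cor:qcma} together with $\PromBQP^\U = \PromQCMA^\U$ immediately kills any $\EVOWP$, without spelling out the oracle-replacement and union-bound bookkeeping you outline; your extra care there is fine (and in fact the specific $\Pi$ built in the search-to-decision proof has complementary yes/no sets, so every query is trivially on-promise and your worry does not actually bite).
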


~\cite{ALY23} shows that PRSs with output length $c \log \lambda$ for $c > 12$ can be used to build $\QPRG$s with triple stretch. Note that this reduction is itself black-box, and so holds relative to any quantum oracle. Applying Corollary~\ref{cor:qprg} then gives the following result

\begin{corollary}\label{cor:prsseparation}
    There exists a set of quantum oracles $\U$ such that with probability $1$ over $\U$, relative to $\U$
    \begin{enumerate}
        \item There does not exist any PRS family mapping $\lambda$-bits to $c \log \lambda$-qubits for any constant $c > 12$.
        \item There does exist a PRS family mapping $\lambda$-bits to $\lambda$-qubits.
    \end{enumerate}
\end{corollary}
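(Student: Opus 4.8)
The plan is to instantiate the oracle $\U$ from \Cref{thm:kreschmer} and read off both items from machinery already assembled earlier in the paper. Item (2) requires nothing new: \Cref{thm:kreschmer} directly states that with probability $1$ over $\U$ there is a PRS family mapping $\lambda$ bits to $\lambda$ qubits relative to $\U$. So the entire content is item (1), namely ruling out PRS families with output length $c\log\lambda$ (for a constant $c>12$) relative to this same $\U$.

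For item (1) I would argue by contradiction. Fix $\U$ as in \Cref{thm:kreschmer}, so that with probability $1$ we have $\PromBQP^\U = \PromQMA^\U$, hence $\PromBQP^\U \subseteq \PromQCMA^\U \subseteq \PromQMA^\U = \PromBQP^\U$, i.e.\ $\PromQCMA^\U = \PromBQP^\U$. Suppose toward a contradiction that relative to $\U$ there is a PRS family with output length $c\log\lambda$ for some constant $c>12$. Since the reduction of~\cite{ALY23} from such short-output PRS families to triple-stretch $\QPRG$s is black-box, it relativizes, producing a $\QPRG$ $G^\U$ with stretch $\ell(\lambda)\geq 3n(\lambda)$ secure relative to $\U$; the threshold $c>12$ is inherited verbatim from~\cite{ALY23} and I would not attempt to optimize it. Feeding $G^\U$ into \Cref{cor:qprg} gives an $\EVOWP$ $(\Samp^\U,\Ver^\U)$ secure relative to $\U$, and then \Cref{cor:qcma} supplies a promise problem $\Pi\in\PromQCMA^\U$ together with a QPT algorithm $\A^\Pi$ making $\poly(\lambda)$ queries to $\Pi$ with $\Pr_{\Samp^\U(1^\lambda)\to(k,s)}[\Ver^\U(\A^\Pi(s),s)\to 1]\geq \frac{1}{\poly(\lambda)}$. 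But $\Pi\in\PromQCMA^\U=\PromBQP^\U$, so each query to $\Pi$ can be answered by a QPT computation relative to $\U$; substituting that subroutine for the oracle turns $\A^\Pi$ into an ordinary non-uniform QPT algorithm relative to $\U$ that still breaks $(\Samp^\U,\Ver^\U)$, contradicting its security. Hence no such short-output PRS family exists, which is item (1). This is structurally the proof of \Cref{thm:separation} with the extra~\cite{ALY23} step inserted at the front.

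The step I expect to be the main obstacle is the replacement of the $\Pi$-oracle inside $\A^\Pi$ by a $\PromBQP^\U$ subroutine: a $\PromBQP$ decider is only guaranteed correct on promise-satisfying inputs and only up to bounded error. To handle this I would first amplify the $\PromBQP^\U$ algorithm for $\Pi$ so that on promise-satisfying inputs its error is at most $2^{-\lambda}$, and then, since $\A^\Pi$ makes only $\poly(\lambda)$ queries, a union bound shows that with probability $1-\negl(\lambda)$ every query is answered exactly as the true $\Pi$ would answer it on promise-satisfying instances. On promise-violating instances both the oracle and the subroutine may behave arbitrarily, but the search-to-decision analysis underlying \Cref{cor:qcma} already tolerates this, since it only needs correct answers along the path of prefixes of a valid key for a puzzle $s$ that is itself a yes-instance — which occurs with overwhelming probability by $\EVOWP$ correctness. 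Threading these observations together shows the simulated adversary loses at most $\negl(\lambda)$ in success probability, leaving an inverse-polynomial advantage and hence the desired contradiction.
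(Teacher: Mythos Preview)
Your proposal is correct and follows essentially the same route as the paper: take Kretschmer's oracle $\U$, read item~(2) off directly, and for item~(1) chain the relativizing \cite{ALY23} reduction to a triple-stretch $\QPRG$, then \Cref{cor:qprg} to an $\EVOWP$, then \Cref{cor:qcma} plus $\PromQCMA^\U=\PromBQP^\U$ to a contradiction. Your final paragraph on amplifying the $\PromBQP^\U$ decider and handling promise-violating queries is more careful than the paper itself, which simply asserts that \Cref{cor:qcma} ``immediately'' rules out $\EVOWP$s once $\PromQCMA^\U=\PromBQP^\U$.
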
 

\section{ $\OWP$ Amplification, Combiners, and Universal Constructions}

In this section we present security and correctness amplifiers, combiners, and universal constructions for both $\OWP$ and $\EVOWP$. 
We defer the proof of $\OWP$ security amplification to the 
next section. With these results we establish that both primitives are well behaved and have many of the desirable properties of one-way 
functions.

\subsection{Amplification}

\begin{theorem}[Correctness amplification for $\OWP$ and $\EVOWP$]\label{thm:owpcorramp}
    Let $(\Samp,\Ver)$ be a $(\alpha, \beta)$ $\OWP$. Define $(\Samp',\Ver')$ by
    \begin{enumerate}
        \item $\Samp' = \Samp^{\otimes t}$
        \item $\Ver'((k_1,\dots,k_t),(s_1,\dots,s_t))$: output $1$ if $\Ver(k_i,s_i)$ for some $i\in[t]$.
    \end{enumerate}
    Then $(\Samp',\Ver')$ is a $(\alpha^t, t\beta) \OWP$.

\end{theorem}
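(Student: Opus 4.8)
The plan is to analyze correctness and security of $(\Samp',\Ver')$ separately, since the two claims $(\alpha^t, t\beta)$ decouple cleanly.

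\textbf{Correctness.} I would observe that $\Ver'$ outputs $1$ unless \emph{every} coordinate fails verification. The $t$ runs of $\Samp$ are independent, and for each one $\Pr[\Ver(k_i,s_i)\to 0]\le \alpha$ by the correctness of the underlying $\OWP$. Hence
\[
\Pr_{\Samp'(1^\lambda)\to((k_i),(s_i))}[\Ver'((k_i),(s_i))\to 0] \;=\; \prod_{i=1}^t \Pr_{\Samp(1^\lambda)\to(k_i,s_i)}[\Ver(k_i,s_i)\to 0]\;\le\;\alpha^t,
\]
which is exactly the claimed correctness error. (For the $\EVOWP$ version, note $\Ver'$ is a polynomial-time ``OR'' of $t$ efficient verifications, so efficient verifiability is preserved as long as $t=\poly(\lambda)$.)

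\textbf{Security.} Here I would argue by contradiction via a reduction. Suppose $\A'$ is a non-uniform QPT adversary with
\[
\Pr_{\Samp'(1^\lambda)\to((k_i),(s_i))}[\Ver'(\A'((s_i)),(s_i))\to 1] > t\beta .
\]
By definition of $\Ver'$ and a union bound, there must exist some coordinate $j\in[t]$ for which $\A'$, given the whole puzzle tuple $(s_1,\dots,s_t)$, outputs a key tuple whose $j$-th component verifies against $s_j$ with probability $>\beta$. I would then build an adversary $\A$ against the single-copy $\OWP$: on input $s$, $\A$ picks $j$ (nonuniformly hardwired, or guessed — guessing costs only a factor $1/t$, but hardwiring the best $j$ as advice is cleaner given we already allow non-uniform adversaries), samples $s_i$ for $i\ne j$ by running $\Samp$ itself (discarding the keys), places $s$ in position $j$, runs $\A'$ on this tuple, and outputs the $j$-th component of $\A'$'s answer. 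Since the simulated tuple is distributed exactly as $\Samp'$'s output with the $j$-th puzzle being a fresh honest puzzle, $\A$ wins the single-copy game with probability $>\beta$, contradicting that $(\Samp,\Ver)$ is a $(\alpha,\beta)$ $\OWP$.

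\textbf{Main obstacle.} The only subtlety — and it is mild — is the union-bound / averaging step: $\Ver'$ accepts if \emph{some} coordinate verifies, so the event ``$\Ver'$ accepts'' is contained in the union of the $t$ events ``coordinate $i$ verifies'', each of which, for a fixed adversary, defines a legitimate single-copy attack once the other coordinates are simulated honestly. One must be slightly careful that $\A'$ sees the joint puzzle (so its $j$-th output may depend on all of $(s_i)$), but this is exactly why the reduction feeds $\A'$ a full tuple with honestly-sampled decoys in the other slots; the marginal distribution seen by $\A'$ is identical to the real game, so the per-coordinate success probabilities are preserved. Summing the $t$ coordinate-wise bounds of $\beta$ each gives the $t\beta$ bound, completing the argument; the same reduction works verbatim in the $\EVOWP$ setting.
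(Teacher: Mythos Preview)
Your proposal is correct and follows essentially the same approach as the paper: independence of the $t$ samples for correctness, and for security a reduction that embeds the challenge puzzle in a random (or, as you suggest, hardwired-best) coordinate and feeds $\A'$ a full honestly-simulated tuple. The paper opts for the uniform ``guess $i\in[t]$'' variant rather than hardwiring, but both yield the same $t\beta$ bound and the arguments are otherwise identical.
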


\begin{proof}[Proof of \cref{thm:owpcorramp}]
    To see correctness, we observe that
    \begin{equation*}
        \begin{split}
            \Pr_{\Samp'(1^\lambda)\to((k_1,\dots,k_t),(s_1,\dots,s_t))}[\Ver'((k_1,\dots,k_t),(s_1,\dots,s_t))=0]\\ =\Pr_{\Samp'(1^\lambda)\to((k_1,\dots,k_t),(s_1,\dots,s_t))}[\text{ all of }\Ver(k_i,s_i) = 0]\\
            \leq \Pr_{\Samp(1^\lambda)\to(k,s)}[\Ver(k,s) = 0] \leq \alpha^t
        \end{split}
    \end{equation*}

    To show security, we will do a simple reduction. Let $\A$ be such that
    $$\Pr_{\Samp'(1^\lambda)\to ((k_1,\dots,k_t),(s_1,\dots,s_t))}[\Ver(\A(s_1,\dots,s_t),(s_1,\dots,s_t))\to 1] > t\beta$$
    We will construct an adversary $\mathcal{B}$ breaking $(\Samp,\Ver)$.
    \begin{enumerate}
        \item On input $s$, pick $i$ uniformly at random from $[t]$. Set $s_i = s$.
        \item Run $\Samp(1^\lambda)$ $t-1$ times to generate $s_j$ for $j\neq i$.
        \item Output $\A(s_1,\dots,s_t)$.
    \end{enumerate}
    It is clear that
    \begin{equation*}
    \begin{split}
        \Pr_{\Samp(1^\lambda)\to (k,s)}[\Ver(\mathcal{B}(s),s)\to 1]\\
        =\Pr_{\Samp'(1^\lambda)\to ((k_1,\dots,k_t),(s_1,\dots,s_t)),[t] \to i}[\Ver(\mathcal{A}(s_1,\dots,s_t)_i,s_i)]\\
        \geq \frac{1}{t}\Pr_{\Samp'(1^\lambda)\to ((k_1,\dots,k_t),(s_1,\dots,s_t))}[\exists i\text{ s.t. }\Ver(\A(s_1,\dots,s_t)_i,s_i)\to 1]\\
        \geq \frac{1}{t}\Pr_{\Samp'(1^\lambda)\to ((k_1,\dots,k_t),(s_1,\dots,s_t))}[\Ver'(\A(s_1,\dots,s_t),(s_1,\dots,s_t))\to 1] > \frac{1}{t}t\beta > \beta
        \end{split}
    \end{equation*}
    Thus, $(\Samp',\Ver')$ satisfies $\beta$ security and we are done.
\end{proof}

\begin{remark}
    We note that if $\Ver$ is efficient, so is $\Ver'$. 
\end{remark}



\begin{theorem}[Weak correctness amplification for $\OWP$]\label{thm:weakowpcorramp}
    Let $(\Samp,$ $\Ver)$ be a $(\alpha,\beta)$ $\OWP$. Define $(\Samp',$ $\Ver')$ by
    \begin{enumerate}
        \item $\Samp'=\Samp$
        \item $\Ver'(k,s)$: If $\Pr_{\Samp(1^\lambda)\to (k',s')}[\Ver(k',s') \to 0 | s' = s] \geq t$, output $1$. Otherwise, output $\Ver(k,s)$. (The idea here is that if $s$ was sampled in a way that it would not verify honestly then we accept it anyway, otherwise we do normal verification)
    \end{enumerate}
    Then $(\Samp',\Ver')$ is a $(t,\alpha/t + \beta)$ $\OWP$.
\end{theorem}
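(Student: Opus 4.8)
The plan is to partition puzzles into \emph{bad} ones --- those $s$ with $\Pr_{\Samp(1^\lambda)\to(k',s')}[\Ver(k',s')\to 0 \mid s'=s] \geq t$ --- and \emph{good} ones (all the rest). By construction $\Ver'(k,s)$ outputs $1$ on every bad puzzle no matter what $k$ is, and agrees with $\Ver(k,s)$ on every good puzzle. Both correctness and security then reduce to controlling the probability mass on bad puzzles and the behavior of $\Ver$ restricted to good puzzles.

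For correctness, I would observe that $\Ver'(k,s)\to 0$ can only occur when $s$ is good and $\Ver(k,s)\to 0$. Conditioning on the value of $s$ via the law of total probability, this probability equals $\sum_{s'\text{ good}}\Pr[s=s']\cdot\Pr[\Ver(k,s)\to 0\mid s=s']$, and each conditional probability appearing in the sum is strictly less than $t$ by the definition of ``good.'' Hence the whole sum is at most $t$, so $(\Samp',\Ver')$ has correctness error at most $t$. (Note this step does not use $\alpha$ at all.)

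For security, fix a non-uniform QPT $\A$. The event $\Ver'(\A(s),s)\to 1$ is contained in $\{s\text{ bad}\}\cup\{s\text{ good}\wedge \Ver(\A(s),s)\to 1\}$, so a union bound gives $\Pr[\Ver'(\A(s),s)\to 1]\leq \Pr_{\Samp}[s\text{ bad}] + \Pr_{\Samp}[\Ver(\A(s),s)\to 1]$, and the second term is at most $\beta$ by security of $(\Samp,\Ver)$ (dropping the ``$s$ good'' constraint only increases the probability). For the first term I would set $p(s')=\Pr[\Ver(k',s')\to 0\mid s'=s']$; then $\Exp_{s}[p(s)] = \Pr_{\Samp}[\Ver(k,s)\to 0]\leq \alpha$ by correctness, so Markov's inequality yields $\Pr_{\Samp}[s\text{ bad}] = \Pr_{\Samp}[p(s)\geq t]\leq \alpha/t$. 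Adding the two bounds gives security error $\alpha/t+\beta$.

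There is no genuine obstacle here --- the argument is a conditioning-plus-Markov computation --- so the main thing to get right is just bookkeeping of the inequality directions. The one point worth flagging in the writeup is \emph{why} this trick is admissible: $\Ver'$ is defined purely as a (possibly uncomputable) function of $(k,s)$ and $\lambda$, and the definition of $\OWP$ allows exactly that, which is why this works for $\OWP$ but would fail for $\EVOWP$. I would also briefly note that conditioning on $s'=s$ is well-defined because the output $(k,s)$ of $\Samp$ is classical, so we are conditioning an honest classical joint distribution even though $\Samp$ itself runs on quantum randomness.
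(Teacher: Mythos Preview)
Your proposal is correct and follows essentially the same approach as the paper: both arguments split on whether $s$ is ``bad'' (the paper phrases this via the random variable $X = \Pr[\Ver(k',s')\to 0 \mid s'=s]$ and the event $X\geq t$), bound correctness by observing the conditional failure probability on good puzzles is below $t$, and bound security via Markov on $\E[X]\leq\alpha$ plus a union bound. Your write-up is slightly more explicit about the law-of-total-probability step in the correctness argument, but the structure is identical.
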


\begin{proof}[Proof of \cref{thm:weakowpcorramp}]
    To show correctness, we observe that 
    \begin{equation*}
        \begin{split}
            \Pr_{\Samp'(1^\lambda)\to (k,s)}[\Ver'(k,s) \to 0]\\
            =\Pr_{\Samp(1^\lambda)\to (k,s)}\left[\Ver(k,s) \to 0 \wedge \Pr_{\Samp(1^\lambda)\to (k',s')}[\Ver(k',s') \to 0 | s' = s] < t\right]\\
            \leq \Pr_{\Samp(1^\lambda)\to (k,s)}\left[\Ver(k,s) \to 0 | \Pr_{\Samp(1^\lambda)\to (k',s')}[\Ver(k',s') \to 0 | s' = s] < t\right]\\
            < t
        \end{split}
    \end{equation*}

    Before showing security, let us define $X$ to be the random variable defined by sampling $\Samp(1^\lambda) \to (k,s)$ and outputting $\Pr_{\Samp(1^\lambda)\to (k',s')}[\Ver(k',s') \to 0 | s' = s]$. We will begin by investigating $\Pr[X < t]$.

    Observe that $\E[X] = \Pr_{\Samp(1^\lambda)\to (k,s)}[\Ver(k,s) \to 0] \leq \alpha$. Markov's inequality then gives that
    $$\Pr[X \geq t] \leq \frac{\alpha}{t}.$$

    Let $\A$ be any PPT algorithm
    \begin{equation*}
        \begin{split}
            \Pr_{\Samp'(1^\lambda)\to (k,s)}[\Ver'(\A(s),s) \to 1]\\
            =\Pr_{\Samp'(1^\lambda)\to(k,s)}[\Ver(\A(s),s) \to 1 \vee X \geq t]\\
            \leq \Pr_{\Samp'(1^\lambda) \to (k,s)}[\Ver(\A(s),s)\to 1] + \Pr[X \geq t]\\
            \leq \beta + \frac{\alpha}{t}
        \end{split}
    \end{equation*}
\end{proof}

\begin{remark}
    Note that this construction only holds for standard inefficiently verifiable one-way puzzles, as $\Ver'$ is not efficient. 
\end{remark}

\begin{theorem}[Security amplification for OWP]\label{thm:owpamplification}
    If, for some $c>0$, there exists a $(\negl(\lambda), 1-\lambda^{-c})$ one-way puzzle $(\Samp,\Ver)$, then there exists a strong one-way puzzle.
\end{theorem}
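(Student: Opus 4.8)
The plan is to amplify \emph{indirectly}, by routing through EFID pairs: the one-way puzzle security game cannot be attacked by parallel repetition in the usual way, because running the game requires $\Ver$, which need not be efficient, so the soundness-amplification machinery of~\cite{bostanci2023efficient} does not apply. Instead I would go weak $\OWP \Rightarrow$ non-uniform EFID $\Rightarrow$ non-uniform strong $\OWP \Rightarrow$ strong $\OWP$.

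\emph{Step 1: weak $\OWP$ to non-uniform EFID.} Apply the construction of~\Cref{ssub:efiowp} to the $(\negl(\lambda),1-\lambda^{-c})$ puzzle $(\Samp,\Ver)$. The reason to use this construction rather than the one of~\cite{khurana2024commitments} is precisely that it survives weak security: with $\Samp\to(k,s)$ in the role of $x\mapsto(f(x),x)$ and $\Ver$ in the role of the predicate ``$f(x)=y$'', the data-processing inequality for $KL$ gives, for every efficient sampler $\S$, $KL(s,k \,\|\, s,\S(k)) \ge KL(\mathrm{Ber}(q) \,\|\, \mathrm{Ber}(p))$, where $q=1-\negl(\lambda)$ is the correctness parameter and $p\le 1-\lambda^{-c}$ is the adversary's success probability; this is still $\ge 1/\poly(\lambda)$. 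So $(s,k)$ is KL-hard to sample, and pushing it through the grid-plus-pairwise-independent-extractor construction of~\Cref{fig:hrvconst} followed by parallel repetition yields a non-uniform EFID pair: a generator $\Gen(1^\lambda,\cdot\,;a)$ with $O(\log\lambda)$ bits of advice $a$ that is a genuine EFID pair whenever $a$ correctly encodes (an approximation of) $H(k,s)$.

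\emph{Steps 2--3: non-uniform EFID to strong $\OWP$.} Instantiate the EFID-to-strong-$\OWP$ construction with $\Gen(\cdot\,;a)$ for each of the $2^{O(\log\lambda)}=\poly(\lambda)$ advice values, giving a polynomial-size family of candidates $\{(\Samp_a,\Ver_a)\}_a$; for every large enough $\lambda$, the correct $a$ makes $\Gen(\cdot\,;a)$ a true EFID pair and hence $(\Samp_a,\Ver_a)$ a strong $\OWP$. Feed the whole family into the many-candidate robust combiner of~\Cref{cor:owpcomb} --- which already incorporates a correctness-guaranteeing step, so candidates with bad advice (for which correctness may fail) do no harm --- to obtain $(\Samp_U,\Ver_U)$, which is a strong $\OWP$ as long as one candidate is. Since one always is, $(\Samp_U,\Ver_U)$ is a strong $\OWP$; any residual constant correctness error is driven down to $\negl(\lambda)$ by the parallel-repetition amplifier of~\Cref{thm:owpcorramp}.

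\emph{Main obstacle.} This is Step~1, which is why the paper devotes an entire section to it. The $KL$-hardness observation is short (as above); the real work is porting the~\cite{VZ12,HRV10} construction to a \emph{quantumly} sampled distribution. Classically one argues that the extracted distribution $D$ equals $G$ applied to a uniform seed shorter than $\abs{D}$ for an explicit deterministic $G$, so $G$ is a PRG; here no such seed exists, so one must instead bound $H(D) < \abs{D} - \poly(\lambda)$ directly and then use the elementary fact that a distribution whose entropy is noticeably below its length is noticeably far from uniform, while $D$ remains computationally indistinguishable from uniform --- an (imbalanced, non-uniform) EFID pair after boosting the statistical distance by parallel repetition. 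A secondary point is that $H(k,s)$ must be packed into only $O(\log\lambda)$ bits of advice; otherwise the candidate family in the combiner step is not polynomial-size and the argument collapses.
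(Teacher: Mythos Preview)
The proposal is correct and follows essentially the same route as the paper: weak $\OWP \Rightarrow$ non-uniform EFID via the KL-hardness/\cite{VZ12,HRV10} machinery (\Cref{cor:owptoqefidfinal}), then non-uniform EFID $\Rightarrow$ non-uniform strong $\OWP$ (\Cref{lem:qefid-to-owp} relativized to advice), then collapse the advice with the many-candidate combiner (\Cref{cor:owpcombmany}). Two small notes: (i) the sampler should be written $\S(s)$ rather than $\S(k)$ (it takes the puzzle and tries to produce a key); (ii) the combiner already builds in the correctness-guaranteeing step of \Cref{cor:owpcorrect}, so its output has negligible correctness error and no further application of \Cref{thm:owpcorramp} is needed.
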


We defer the proof of the above theorem to the next section.

\begin{theorem}[Weak correctness amplification for $\EVOWP$]\label{thm:weakevowpcorr}
    Let $(\Samp,\Ver)$ be a $(\alpha,\beta)$ $\EVOWP$. Define $(\Samp',\Ver')$ by
    \begin{enumerate}
        \item $\Samp'(1^\lambda)$: Run $\Samp(1^\lambda) \to (k,s)$. If $\Ver(k,s) \to 1$, output $(k,s)$. Otherwise, output $(\bot,\bot)$.
        \item $\Ver'(k,s)$: If $s = \bot$, output $1$. Otherwise, output $\Ver(k,s)$.
    \end{enumerate}
    Then $(\Samp',\Ver')$ is a $(1/4,\alpha+\beta)$ $\OWP$.
\end{theorem}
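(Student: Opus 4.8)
The plan is to establish the two properties of $(\Samp',\Ver')$ by conditioning on the bit that $\Samp'$ computes when it runs $\Ver$ internally. Throughout I would write $p(k,s) := \Pr[\Ver(k,s)\to 1]$ for the acceptance probability over the internal randomness of $\Ver$, so that correctness of $(\Samp,\Ver)$ reads $\E_{\Samp(1^\lambda)\to(k,s)}[p(k,s)] \geq 1 - \alpha$. As usual I treat $\bot$ as a reserved symbol outside the range of $\Samp$, so that the test ``$s = \bot$'' inside $\Ver'$ is well defined.

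For correctness, the key observation is that $\Ver'$ can only reject when $\Samp'$ actually output a genuine pair $(k,s)$ with $s \neq \bot$, and that this happens precisely when the copy of $\Ver(k,s)$ run inside $\Samp'$ accepted. In that event $\Ver'(k,s)$ re-invokes $\Ver(k,s)$ with fresh randomness and rejects with probability $1 - p(k,s)$. Hence the correctness error of $(\Samp',\Ver')$ equals $\E_{\Samp(1^\lambda)\to(k,s)}[\,p(k,s)(1-p(k,s))\,]$, which is at most $1/4$ since $x(1-x) \leq 1/4$ for all $x \in [0,1]$.

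For security, I would fix a non-uniform QPT adversary $\A$ and split the event that $\A$ wins against $(\Samp',\Ver')$ according to whether $\Samp'$ output $(\bot,\bot)$ or a genuine $(k,s)$. The $\bot$ case occurs with probability $\E_{\Samp(1^\lambda)\to(k,s)}[1 - p(k,s)] \leq \alpha$, and there $\A$ wins automatically because $\Ver'(\cdot,\bot)\to 1$. For the genuine case, the reduction $\mathcal{B}$ to $(\Samp,\Ver)$ simply forwards its challenge $s$ to $\A$ and returns $\A(s)$; since the internal verification bit of $\Samp'$ is sampled independently of $\A$ and equals $1$ with probability $p(k,s) \leq 1$, the probability of [genuine case and $\A$ wins] is $\E_{\Samp(1^\lambda)\to(k,s)}[\,p(k,s)\cdot\Pr[\Ver(\A(s),s)\to 1]\,] \leq \Pr_{\Samp(1^\lambda)\to(k,s)}[\Ver(\mathcal{B}(s),s)\to 1] \leq \beta$. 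Summing the two contributions yields the claimed bound $\alpha + \beta$, and I would remark that $\Ver'$ is efficient whenever $\Ver$ is, so the output is in fact an $\EVOWP$.

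I do not expect a genuine obstacle here: the argument is essentially a union bound plus averaging. The only step requiring a moment's care is that $\Ver$ may be randomized, so the second invocation of $\Ver$ inside $\Ver'$ need not agree with the first — a deterministic $\Ver$ would give perfect correctness of $(\Samp',\Ver')$, and the inequality $x(1-x)\leq 1/4$ is exactly what absorbs the general probabilistic case.
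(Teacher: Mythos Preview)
Your proposal is correct and follows essentially the same route as the paper: both arguments split on whether $\Samp'$ outputs $\bot$ or a genuine pair, use the inequality $p(1-p)\le 1/4$ for correctness, and bound security by the union of the $\bot$-event (probability $\le\alpha$) and the event $\Ver(\A(s),s)\to 1$ under $\Samp$ (probability $\le\beta$). Your write-up is in fact slightly more explicit than the paper's about why the internal verification bit is independent of $\A$, but the underlying argument is identical.
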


\begin{theorem}[Security amplification for $\EVOWP$]\label{thm:secampevowp}
    Let $(\Samp,\Ver)$ be a $(\alpha,\beta)$ $\EVOWP$. Then $(\Samp^{\otimes t},\Ver^{\otimes t})$ is a $(t\alpha,\beta^t)$ $\EVOWP$.
\end{theorem}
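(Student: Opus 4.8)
The plan is to dispose of correctness and efficiency quickly and then spend the work on security, which is a direct-product argument in the spirit of the classical weak-to-strong amplification of one-way functions, made to go through because $\Ver$ is efficient. For correctness, $\Samp^{\otimes t}(1^\lambda)$ produces $t$ independent pairs $(k_i,s_i)$ and $\Ver^{\otimes t}$ rejects only if some component rejects, so a union bound over the $t$ events $\{\Ver(k_i,s_i)\to 0\}$, each of probability at most $\alpha$ by correctness of $(\Samp,\Ver)$, bounds the correctness error by $t\alpha$. Since $\Ver^{\otimes t}$ just runs the efficient $\Ver$ a polynomial number of times, it is efficient, so $(\Samp^{\otimes t},\Ver^{\otimes t})$ is still efficiently verifiable.

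For security, write $P_i$ for the best success probability of any non-uniform QPT adversary against the $i$-fold puzzle; the hypothesis is $P_1\le\beta$, and I claim $P_i\le\beta^i$ by induction on $i$. Fix an adversary $\A$ against the $t$-fold puzzle and let
$$a(s_1):=\Pr_{s_2,\dots,s_t\gets\Samp,\ r_\A}\bigl[\A(s_1,\dots,s_t)\text{ solves coordinates }2,\dots,t\bigr].$$
For every fixed $s_1$, the algorithm that runs $\A$ with $s_1$ hard-wired as non-uniform advice and returns only coordinates $2,\dots,t$ is a legal non-uniform QPT adversary against the $(t-1)$-fold puzzle with success exactly $a(s_1)$, so $a(s_1)\le P_{t-1}$ for every $s_1$. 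Letting $b(s_1)$ be the conditional probability that $\A$ also solves coordinate $1$ given that it solves coordinates $2,\dots,t$ (and given $s_1$), the $t$-fold success of $\A$ equals $\E_{s_1}[a(s_1)b(s_1)]\le P_{t-1}\cdot\E_{s_1}[b(s_1)]$, using $a(s_1)\le P_{t-1}$ pointwise.

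It remains to show $\E_{s_1}[b(s_1)]\le\beta$, and this is the one place efficient verification is needed. Consider the single-instance adversary $\B$ that, on input $s_1$, repeatedly samples $s_2,\dots,s_t\gets\Samp$, runs $\A(s_1,\dots,s_t)\to(k_1',\dots,k_t')$, and uses $\Ver$ to test whether $\Ver(k_i',s_i)\to 1$ for all $i\ge 2$ (a polynomial number of efficient checks); at the first passing attempt $\B$ outputs $k_1'$. Since attempts are i.i.d., the attempt on which $\B$ halts is uniform among the passing ones, so the idealized (unbounded-retry) version of $\B$ succeeds with probability exactly $\E_{s_1}[b(s_1)]$; truncating to $\poly(\lambda)$ retries only loses mass on instances $s_1$ with super-polynomially small $a(s_1)$, which contribute negligibly to $\E_{s_1}[a(s_1)b(s_1)]$ anyway. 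Carrying this bookkeeping through, $\beta$-security of $(\Samp,\Ver)$ applied to $\B$ gives $\E_{s_1}[b(s_1)]\le\beta$ (up to the usual negligible truncation slack, immaterial whenever the target bound $\beta^t$ is itself negligible), hence $P_t\le\beta\cdot P_{t-1}\le\beta^t$, completing the induction.

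The main obstacle is exactly that last step: designing the rejection-sampling reduction $\B$ so that it runs in polynomial time while its success probability still tracks $\E_{s_1}[b(s_1)]$, which rests on the observation that rejection sampling is slow only on instances that barely help $\A$. Conceptually, the reason this argument works for $\EVOWP$ but not for general $\OWP$ — which is why the $\OWP$ security amplification is deferred and instead routed through EFID pairs — is that an efficient $\Ver$ lets the reduction detect when $\A$ has succeeded on the other $t-1$ coordinates and condition on that event.
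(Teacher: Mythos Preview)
Your approach differs substantially from the paper's. The paper does not give a direct argument at all: for correctness it uses the same union bound you do, and for security it simply observes that the $\EVOWP$ security game is an efficiently falsifiable three-round quantum game and invokes the parallel repetition theorem of Bostanci et al.\ (their Theorem~4.1) as a black box to obtain the $\beta^t$ bound. Your self-contained Yao-style direct-product argument is a reasonable alternative and correctly isolates the same point the paper's remark makes --- the rejection-sampling reduction $\mathcal{B}$ needs $\Ver$ to be efficient in order to test whether $\A$ succeeded on coordinates $2,\dots,t$, which is exactly why this route is unavailable for general $\OWP$.

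That said, your bookkeeping has a small but real gap. You pass through $\E_{s_1}[a(s_1)b(s_1)]\le P_{t-1}\cdot\E_{s_1}[b(s_1)]$ and then try to show $\E_{s_1}[b(s_1)]\le\beta$ via the truncated $\mathcal{B}$. But the truncated $\mathcal{B}$ only controls $\E_{s_1}\bigl[b(s_1)\cdot\one\{a(s_1)\ge\epsilon\}\bigr]$; on instances with tiny $a(s_1)$, $b(s_1)$ may still be close to $1$ and contribute substantially to $\E[b]$, so the chain as written does not close. The fix is to bound $\E[ab]$ directly by a case split on whether $a(s_1)\ge\epsilon$: the small-$a$ case contributes at most $\epsilon$ to $\E[ab]$ (since $b\le 1$), and on the large-$a$ case your truncated $\mathcal{B}$ gives $\E\bigl[b\cdot\one\{a\ge\epsilon\}\bigr]\le\beta+\negl(\lambda)$, so that case contributes at most $P_{t-1}(\beta+\negl(\lambda))$. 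This yields the recursion $P_t\le\beta\cdot P_{t-1}+\negl(\lambda)$, hence $P_t\le\beta^t+\negl(\lambda)$ --- which, as you already concede, suffices for weak-to-strong amplification but does not literally deliver the exact $\beta^t$ in the theorem statement. The paper's black-box invocation of the quantum parallel repetition theorem sidesteps this slack.
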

\begin{proof}[Proof of \cref{thm:secampevowp}]
    Correctness follows from the union bound applied to correctness of $(\Samp,\Ver)$:
    \begin{equation*}
    \begin{split}
        \Pr_{\Samp^{\otimes t}(1^\lambda)\to ((k_1,\dots,k_t),(s_1,\dots,s_t))}[\Ver^{\otimes t}((k_1,\dots,k_t),(s_1,\dots,s_t)) \to 0]\\
        = \Pr_{\Samp^{\otimes t}(1^\lambda)\to ((k_1,\dots,k_t),(s_1,\dots,s_t))}[\text{ there exists an }i\text{ such that }\Ver(k_i,s_i) \to 0]\\
        \leq \sum_i\Pr_{\Samp(1^\lambda)\to(k_i,s_i)}[\Ver(k_i,s_i)\to 0]\\
        \leq t\alpha
    \end{split}
    \end{equation*}

    Security follows from Theorem 4.1 from~\cite{bostanci2023efficient}. 
\end{proof}

\begin{remark}
    The proof of Theorem 4.1 from~\cite{bostanci2023efficient} requires that the soundness game be efficiently falsifiable. Thus, this same amplification theorem does not hold for $\OWP$ with inefficient verification. Amplification of such puzzles is an open question.
\end{remark}



\subsection{Combiners}

\begin{corollary}\label{cor:owpcorrect}
    Let $(\Samp,\Ver)$ be a $\OWP$ candidate and define $(\Samp',\Ver')$ to be the constructions from~\Cref{thm:weakowpcorramp} and~\Cref{thm:owpcorramp} applied in sequence with $t=1/2$ and $t=\lambda$ respectively. Then if $(\Samp,\Ver)$ is a $\OWP$, so is $(\Samp',\Ver')$. Furthermore, regardless of whether $(\Samp,\Ver)$ is a $\OWP$, $(\Samp',\Ver')$ satisfies $n^{-c}$ correctness for all $c$.
\end{corollary}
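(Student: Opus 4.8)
The plan is to run the two preceding amplification theorems in sequence and track the parameters, leaning on the fact that the correctness guarantee of~\Cref{thm:weakowpcorramp} holds \emph{unconditionally} — its proof uses no hypothesis on the candidate. Write $\alpha = \alpha(\lambda)$ and $\beta = \beta(\lambda)$ for the exact correctness and security errors of the candidate $(\Samp,\Ver)$, whatever they happen to be, so that $(\Samp,\Ver)$ is an $(\alpha,\beta)$ $\OWP$ for these particular functions.

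First I would apply~\Cref{thm:weakowpcorramp} with $t = 1/2$, obtaining an intermediate pair $(\Samp_1,\Ver_1)$, which that theorem certifies to be a $(1/2,\, 2\alpha+\beta)$ $\OWP$. The key point to emphasize is that the correctness bound $1/2$ is independent of $\alpha$: inspecting the proof, $\Pr[\Ver_1(k,s)\to 0] \le \E_{s\,:\,X(s)<1/2}[X(s)] < 1/2$, where $X(s) = \Pr[\Ver(k',s')\to 0 \mid s'=s]$, regardless of how poor the candidate is. Then I would apply~\Cref{thm:owpcorramp} with $t = \lambda$ to $(\Samp_1,\Ver_1)$, producing $(\Samp',\Ver')$, which is a $\bigl((1/2)^{\lambda},\, \lambda(2\alpha+\beta)\bigr)$ $\OWP$, i.e.\ a $\bigl(2^{-\lambda},\, \lambda(2\alpha+\beta)\bigr)$ $\OWP$. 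This ordering is forced: $\otimes\lambda$-amplification only drives the correctness error to its $\lambda$-th power, which is negligible only once the error is bounded away from $1$, and that bound is exactly what~\Cref{thm:weakowpcorramp} supplies for free.

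It then remains to read off the two conclusions. For the unconditional statement, the correctness error of $(\Samp',\Ver')$ is at most $2^{-\lambda}$, which is below $\lambda^{-c}$ for every constant $c$ and all sufficiently large $\lambda$, whether or not $(\Samp,\Ver)$ was ever a $\OWP$. For the conditional statement, if $(\Samp,\Ver)$ is a (strong) $\OWP$ then $\alpha,\beta$ are negligible, hence so are $\lambda(2\alpha+\beta)$ and $2^{-\lambda}$, and therefore $(\Samp',\Ver')$ is a $(\negl(\lambda),\negl(\lambda))$ $\OWP$, i.e.\ a strong $\OWP$. I expect no genuine obstacle here; the one point that deserves care — and the reason the construction is ordered this way — is verifying that the correctness bound from~\Cref{thm:weakowpcorramp} is unconditional, while checking that the security loss it incurs ($\alpha \mapsto \alpha/t + \beta$, then $\times t$) stays negligible when the original primitive really is a $\OWP$.
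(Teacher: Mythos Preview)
Your proposal is correct and follows essentially the same approach as the paper: track the parameters through the two amplification steps to obtain a $(2^{-\lambda},\lambda(2\alpha+\beta))$ $\OWP$, then read off the unconditional correctness and the conditional security. Your security parameter $2\alpha+\beta$ is in fact the right one from~\Cref{thm:weakowpcorramp} with $t=1/2$ (the paper writes $\alpha/2+\beta$, which appears to be a typo), and your explicit remark that the $t$-correctness bound in~\Cref{thm:weakowpcorramp} is unconditional is exactly the point the argument hinges on.
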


\begin{proof}
    Let us say that $(\Samp,\Ver)$ satisfies $(\alpha,\beta)$ correctness and security. Then $(\Samp',\Ver')$ satisfies $(2^{-\lambda}, \lambda(\alpha/2 + \beta))$ correctness and security.

    Observe that if $\alpha,\beta = \negl(\lambda)$, then $\lambda(\alpha/2 + \beta) = \negl(\lambda)$. Furthermore, no matter what, $2^{-\lambda} = \negl(\lambda)$.
\end{proof}

\begin{corollary}\label{cor:evowpcorrect}
    Let $(\Samp,\Ver)$ be a $\EVOWP$ candidate and define $(\Samp',$ $\Ver')$ to be the constructions from~\Cref{thm:weakevowpcorr} and~\Cref{thm:owpcorramp} applied in sequence with $t=\lambda$. Then if $(\Samp,\Ver)$ is a $\EVOWP$, so is $(\Samp',\Ver')$. Furthermore, regardless of whether $(\Samp,\Ver)$ is a $\EVOWP$, $(\Samp',\Ver')$ satisfies $n^{-c}$ correctness for all $c$.
\end{corollary}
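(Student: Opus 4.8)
The plan is to chase the two claimed properties through the two-step composition, mirroring the proof of \Cref{cor:owpcorrect} but with \Cref{thm:weakowpcorramp} replaced by \Cref{thm:weakevowpcorr} so that efficient verifiability is preserved. Let $(\Samp,\Ver)$ be an $\EVOWP$ candidate satisfying $(\alpha,\beta)$ correctness and security, where a priori $\alpha$ and $\beta$ may be arbitrary functions of $\lambda$. Applying \Cref{thm:weakevowpcorr} produces an intermediate pair that is a $(1/4,\alpha+\beta)$ one-way puzzle. Two facts about this intermediate pair are what the whole argument rests on. First, its $1/4$ correctness bound is \emph{unconditional}: it comes from $\sup_{p\in[0,1]} p(1-p) = 1/4$ and holds no matter how badly $\Ver$ behaves. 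Second, the intermediate construction is efficiently verifiable, since its sampler runs the efficient $\Ver$ once and its verifier either tests whether the puzzle is $\bot$ or runs the efficient $\Ver$.

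Next I would apply \Cref{thm:owpcorramp} with $t=\lambda$ to this intermediate $(1/4,\alpha+\beta)$ puzzle, obtaining $(\Samp',\Ver')$, which by that theorem is a $\bigl((1/4)^{\lambda},\lambda(\alpha+\beta)\bigr)$ one-way puzzle; by the remark following \Cref{thm:owpcorramp}, $\Ver'$ is again efficient. This settles the second claim immediately: whether or not $(\Samp,\Ver)$ was an $\EVOWP$, the correctness error of $(\Samp',\Ver')$ is at most $(1/4)^{\lambda}=2^{-2\lambda}$, which is below $\lambda^{-c}$ for every constant $c$ once $\lambda$ is large. For the first claim, if $(\Samp,\Ver)$ is an $\EVOWP$ then $\alpha,\beta=\negl(\lambda)$, so the security error $\lambda(\alpha+\beta)$ is still $\negl(\lambda)$ while the correctness error $2^{-2\lambda}$ is $\negl(\lambda)$; together with efficiency of $\Ver'$ this says $(\Samp',\Ver')$ is an $\EVOWP$.

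I do not expect a genuine obstacle here --- the argument is bookkeeping on the parameters, granted \Cref{thm:weakevowpcorr} and \Cref{thm:owpcorramp}. The single point needing care, and the only place the efficiently-verifiable setting matters, is checking that efficient verification survives both transformations: one uses \Cref{thm:weakevowpcorr} rather than \Cref{thm:weakowpcorramp} precisely because the latter builds an inefficient $\Ver'$. It is also worth flagging explicitly that the unconditional $1/4$ correctness guarantee from \Cref{thm:weakevowpcorr} is exactly what makes the ``regardless of whether $(\Samp,\Ver)$ is an $\EVOWP$'' half of the statement go through, since security, unlike correctness, is permitted to collapse when the candidate is bad.
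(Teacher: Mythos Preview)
Your proposal is correct and follows essentially the same approach as the paper, which simply writes ``Same argument as before'' (referring to the proof of \Cref{cor:owpcorrect}) for this corollary. Your parameter tracking $(\alpha,\beta)\to(1/4,\alpha+\beta)\to((1/4)^\lambda,\lambda(\alpha+\beta))$ and your explicit check that efficient verifiability survives both steps are exactly what the paper intends, spelled out in more detail than the paper itself provides.
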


\begin{proof}[Proof of \cref{thm:weakevowpcorr}]
    By correctness, $\Pr[\Samp'(1^\lambda)\to (\bot,\bot)] \leq \alpha$. Thus, 
    \begin{equation*}
        \begin{split}
            \Pr_{\Samp'(1^\lambda)\to (k,s)}[\Ver'(k,s)\to 0]\\
            = \Pr_{\Samp'(1^\lambda)\to (k,s)}[\Ver'(k,s)\to 0 \wedge s = \bot]\\
            + \Pr_{\Samp'(1^\lambda)\to (k,s)}[\Ver'(k,s)\to 0 \wedge s\neq \bot]\\
            = \Pr_{\Samp(1^\lambda) \to (k,s)}[\Ver(k,s) \to 0 \wedge \Ver(k,s)\to 1]\\ \text{ where these are two separate calls to }\Ver
        \end{split}
    \end{equation*}
    Take any $k,s$. Say $p \coloneqq \Pr[\Ver(k,s) \to 1]$. Then $\Pr[\Ver(k,s)\to 0\wedge \Ver(k,s)\to 1] = p(1-p) \leq \frac{1}{4}$. Thus, 
    $$\Pr_{\Samp'(1^\lambda)\to (k,s)}[\Ver'(k,s)\to 0] \leq 1/4$$

    Let $\A$ be any PPT algorithm. We will prove security by showing that
    $$\Pr_{\Samp'(1^\lambda)\to (k,s)}[\Ver'(\A(s),s) \to 1] \leq \beta + \alpha$$
    \begin{equation*}
    \begin{split}
        \Pr_{\Samp'(1^\lambda)\to (k,s)}[\Ver'(\A(s),s) \to 1]\\
        = \Pr_{\Samp'(1^\lambda)\to (k,s)}[\Ver(\A(s),s) \to 1 \vee s = \bot]\\
        \leq \Pr_{\Samp'(1^\lambda)\to (k,s)}[\Ver(\A(s),s) \to 1] + \Pr_{\Samp'(1^\lambda)\to (k,s)}[s = \bot]\\
        = \alpha + \beta
        \end{split}
    \end{equation*}
    by the union bound.
\end{proof}

\begin{proof}
    Same argument as before
\end{proof}

\begin{corollary}\label{cor:owpcomb}
    Let $(\Samp_0,\Ver_0)$ and $(\Samp_1,\Ver_1)$ be two $\OWP$ candidates. Let $(\Samp_0',\Ver_0')$ and $(\Samp_1',\Ver_1')$ be the construction from~\Cref{cor:owpcorrect} applied to $(\Samp_0,\Ver_0)$ and $(\Samp_1,\Ver_1)$ respectively. Define $(\wt{\Samp},\wt{\Ver})$ by
    \begin{enumerate}
        \item $\wt{\Samp}(1^\lambda)$: Run $\Samp_0' \to (k_0,s_0)$ and $\Samp_1'\to (k_1,s_1)$. Output $(k=(k_0,k_1),s=(s_0,s_1))$.
        \item $\wt{\Ver}((k_0,k_1),(s_0,s_1))$: Output $1$ if and only if $\Ver_0'(k_0,s_0)=1$ and $\Ver_1'(k_1,s_1)=1$.
    \end{enumerate}
    Then $(\wt{\Samp},\wt{\Ver})$ is a $\OWP$ as long as one of $(\Samp_0,\Ver_0)$ or $(\Samp_1,\Ver_1)$ is a $\OWP$.

    The same corollary holds for $\EVOWP$ when replacing~\Cref{cor:owpcorrect} with~\Cref{cor:evowpcorrect}.
\end{corollary}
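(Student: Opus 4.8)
The plan is to handle correctness unconditionally and security by a hybrid-free black-box reduction to whichever of the two candidates happens to be a genuine $\OWP$. For correctness, note that \Cref{cor:owpcorrect} guarantees that \emph{both} $(\Samp_0',\Ver_0')$ and $(\Samp_1',\Ver_1')$ satisfy $\lambda^{-c}$ correctness for every $c$, regardless of whether the starting candidates were secure. Since $\wt{\Ver}$ rejects only if at least one of $\Ver_0',\Ver_1'$ rejects, a union bound over the two independent coordinates of $\wt{\Samp}$ gives
$$\Pr_{\wt{\Samp}(1^\lambda)\to((k_0,k_1),(s_0,s_1))}[\wt{\Ver}((k_0,k_1),(s_0,s_1))\to 0] \leq \Pr_{\Samp_0'(1^\lambda)\to(k_0,s_0)}[\Ver_0'(k_0,s_0)\to 0] + \Pr_{\Samp_1'(1^\lambda)\to(k_1,s_1)}[\Ver_1'(k_1,s_1)\to 0] = \negl(\lambda),$$
so correctness holds with no assumption on the candidates.

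For security, suppose without loss of generality that $(\Samp_0,\Ver_0)$ is a $\OWP$; then by \Cref{cor:owpcorrect} so is $(\Samp_0',\Ver_0')$, and moreover it is a \emph{strong} $\OWP$ since the security loss in that construction is only the polynomial factor $\lambda(\alpha/2+\beta)$, which stays negligible. Now assume toward a contradiction that some non-uniform QPT $\A$ breaks $(\wt{\Samp},\wt{\Ver})$ with advantage $\lambda^{-c}$ for some $c$. Because $\wt{\Ver}$ accepts only when $\Ver_0'$ accepts on the first coordinate, writing $\A(s_0,s_1)=(k_0',k_1')$ we get $\Pr[\Ver_0'(k_0',s_0)\to 1]\geq \lambda^{-c}$ over $\Samp_0'\to(k_0,s_0)$, $\Samp_1'\to(k_1,s_1)$. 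Define $\mathcal{B}$ against $(\Samp_0',\Ver_0')$: on challenge puzzle $s_0$, run $\Samp_1'(1^\lambda)\to(k_1,s_1)$ internally, run $\A(s_0,s_1)\to(k_0',k_1')$, and output $k_0'$. The joint distribution of $(s_0,s_1)$ fed to $\A$ inside $\mathcal{B}$ is exactly the distribution produced by $\wt{\Samp}$ (the two coordinates are sampled independently from $\Samp_0'$ and $\Samp_1'$ in both cases), so $\Pr_{\Samp_0'(1^\lambda)\to(k_0,s_0)}[\Ver_0'(\mathcal{B}(s_0),s_0)\to 1]\geq\lambda^{-c}$, contradicting security of $(\Samp_0',\Ver_0')$. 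Hence $(\wt{\Samp},\wt{\Ver})$ is a $\OWP$. The case where $(\Samp_1,\Ver_1)$ is the good candidate is symmetric.

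For the $\EVOWP$ statement, replace \Cref{cor:owpcorrect} by \Cref{cor:evowpcorrect}; the reduction above is unchanged, and we additionally observe that $\Ver_0',\Ver_1'$ are efficient (the correctness-guaranteeing steps for $\EVOWP$ preserve efficiency, cf.\ \Cref{thm:weakevowpcorr} and the remark after \Cref{thm:owpcorramp}), so $\wt{\Ver}$, which just runs both of them, is efficient. The only real care needed is bookkeeping: confirming that the parallel-repetition security loss incurred inside \Cref{cor:owpcorrect}/\Cref{cor:evowpcorrect} keeps the combined candidate at \emph{strong} $\OWP$ level, and that $\A$'s output is syntactically a pair so that ``first coordinate'' is well defined; neither is a genuine obstacle, just a consistency check.
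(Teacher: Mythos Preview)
Your proposal is correct and is exactly the standard argument the paper has in mind; indeed the paper states \Cref{cor:owpcomb} as a corollary without proof, treating both the union-bound correctness step and the ``simulate the other coordinate'' security reduction as immediate from \Cref{cor:owpcorrect}/\Cref{cor:evowpcorrect}. Your write-up simply fills in those details faithfully.
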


\begin{remark}\label{cor:owpcombmany}
    Given any polynomial $p$ different $\OWP$ candidates $\{(\Samp_i,\Ver_i)\}_{i\in[p]}$, we can use the same idea to build a $\OWP$ $(\wt{\Samp},\wt{\Ver})$ which is secure as long as one of the candidates is secure. This follows the same logic as above with $\wt\Samp(1^\lambda) \rightarrow ((k_0,...,k_p),(s_0,...,s_p))$ and $\Ver$ only outputting $1$ if all $p$ $\Ver'_i$'s output $1$.
\end{remark}

\subsection{Universal Construction}

The universal construction of $\EV$ follows standard techniques.

\begin{definition}[Universal $\EVOWP$]
    A set of uniform algorithms $(\Samp,$ $\Ver)$ is a universal construction of a $\EVOWP$ if it is a $\EVOWP$ so long as some $\EVOWP$ exists. 
\end{definition}

\begin{theorem}[Universal $\EVOWP$]\label{thm:universalevowp}
    Define $\Samp_i$ as the $i^{th}$ Quantum Turing machine and $\Ver_j$ as the $j^{th}$ Quantum Turing machine each of which with an attached alarm clock which will halt the machine after $\lambda^3$ steps. 

    Define $(\Samp_U,\Ver_U)$ using the construction from \Cref{cor:owpcombmany} for the $\lambda^2$ candidates $(\Samp_i,\Ver_j)$ for all $(i,j) \in [\lambda]^2$. If any $\EVOWP$ $(\Samp,\Ver)$ exists, then $(\Samp_U,\Ver_U)$ is a $\OWP$.
\end{theorem}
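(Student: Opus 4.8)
The plan is the usual Levin-style universal-construction argument, with the combiner of~\Cref{cor:owpcombmany} (in its $\EVOWP$ form, obtained by substituting~\Cref{cor:evowpcorrect} for~\Cref{cor:owpcorrect}) doing essentially all of the work. It suffices to show that if \emph{some} $\EVOWP$ exists, then for all sufficiently large $\lambda$ at least one of the $\lambda^2$ enumerated candidates $(\Samp_i,\Ver_j)$, $(i,j)\in[\lambda]^2$, agrees (as a pair of algorithms) with a genuine strong $\EVOWP$ and is never cut off by the $\lambda^3$ alarm clock. Given that, \Cref{cor:owpcombmany} immediately yields that $(\Samp_U,\Ver_U)$ is a strong $\EVOWP$. (Note the verifier $\Ver_U$ is the conjunction of the efficient alarm-clocked $\Ver_j$'s and hence efficient, so the conclusion is really ``$\EVOWP$'', slightly strengthening the ``$\OWP$'' in the statement.)

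The only genuine issue is that the alarm clock is a \emph{fixed} polynomial $\lambda^3$, whereas the hypothesized $\EVOWP$ $(\Samp^*,\Ver^*)$ only runs in time $\le\lambda^{c}$ for some constant $c$ possibly exceeding $3$; the candidate whose code is literally $(\Samp^*,\Ver^*)$ would then be truncated. I would resolve this with the standard padding trick: let $(\Samp^{**},\Ver^{**})$ be the scheme that on security parameter $\lambda$ sets $n=\lfloor\lambda^{1/c}\rfloor$, runs $(\Samp^*,\Ver^*)$ at parameter $n$, and includes $1^n$ in the puzzle (padding the key out to length $\lambda$). Then $\Samp^{**},\Ver^{**}$ run in time at most $\lambda^3$ for all large $\lambda$ (since $n^c\le\lambda$ and the remaining bookkeeping is $\poly(\lambda)$), so the clock never fires; and since this wrapper is a finite modification of the code of $(\Samp^*,\Ver^*)$, there are \emph{constant} indices $i_0,j_0$ with $\Samp_{i_0}=\Samp^{**}$ and $\Ver_{j_0}=\Ver^{**}$ once $\lambda\ge\max(i_0,j_0)$. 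It remains to check $(\Samp^{**},\Ver^{**})$ is still a strong $\EVOWP$: the verifier is efficient by construction; the correctness error equals that of $(\Samp^*,\Ver^*)$ at parameter $n=\lfloor\lambda^{1/c}\rfloor$, which is $\negl(\lambda)$ since $n\to\infty$ and a negligible function of $\lambda^{1/c}$ is negligible in $\lambda$; and an adversary breaking $\Samp^{**}(1^\lambda)=\Samp^*(1^n)$ in time $\poly(\lambda)=\poly(n^c)=\poly(n)$ is an adversary against $\Samp^*$ at parameter $n$.

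The point requiring a bit of care -- and the closest thing to an obstacle here -- is that this last security reduction has to cross between security parameters: as $\lambda$ ranges over $\NN$ the effective parameter $n=\lfloor\lambda^{1/c}\rfloor$ skips values, so a break of $(\Samp^{**},\Ver^{**})$ succeeding with advantage $\ge 1/p(\lambda)$ for infinitely many $\lambda$ must be turned into a break of $(\Samp^*,\Ver^*)$. This is the routine fiber argument: each value of $n$ is attained only by the finitely many $\lambda\in[n^c,(n+1)^c)$, so infinitely many ``good'' $\lambda$ force infinitely many ``good'' $n$; and since the reduction is non-uniform it may take, for each $n$, a single good $\lambda$ in the fiber above $n$ (together with $\A$'s advice for that $\lambda$) as advice and then simply run $\A$. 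On that fiber $\lambda<(n+1)^c=\poly(n)$, so the advantage $1/p(\lambda)$ is $1/\poly(n)$, contradicting strong security of $(\Samp^*,\Ver^*)$.

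Finally the argument closes: for $\lambda\ge\max(i_0,j_0)$ the candidate $(\Samp_{i_0},\Ver_{j_0})$ equals the strong $\EVOWP$ $(\Samp^{**},\Ver^{**})$, hence is among the $\lambda^2$ candidates; the combiner of~\Cref{cor:owpcombmany} first applies~\Cref{cor:evowpcorrect} to every candidate (forcing $\negl(\lambda)$ correctness unconditionally while preserving the strong-$\EVOWP$ property of the good one) and then takes the conjunction of the amplified verifiers, so $(\Samp_U,\Ver_U)$ has negligible correctness error, an efficient verifier, and security inherited from the amplified $(\Samp_{i_0},\Ver_{j_0})$ component. Thus $(\Samp_U,\Ver_U)$ is a strong $\EVOWP$ for all sufficiently large $\lambda$, which is all that is required; and running time is not an issue, since the combiner makes only $\poly(\lambda)$ calls to alarm-clocked machines, each costing at most $\lambda^3$ steps.
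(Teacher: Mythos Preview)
Your proposal is correct and follows essentially the same approach as the paper: use a padding trick to obtain from the hypothesized $\EVOWP$ one whose sampler and verifier run in time at most $\lambda^3$, observe that this padded scheme has a constant-size description and hence appears among the enumerated candidates for all large $\lambda$, and conclude via the combiner of~\Cref{cor:owpcombmany}. The paper's proof is terser---it simply invokes ``standard padding arguments'' and states the resulting parameters as $(\alpha^{1/c},\beta^{1/c})$---whereas you spell out the padding construction and the fiber argument for transferring security across the parameter change, but the underlying idea is the same.
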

\begin{proof}
    Using standard padding arguments (see \cite{Levin87,HKNY23}) we can show that if there exists a $(\alpha,\beta)$ $  \EVOWP$ with both $\Ver$ and $\Samp$ running in time less than $\lambda^c$, then there exists a $(\alpha^{(1/c)},\beta^{(1/c)})$ $\EVOWP$ with $\Ver'$ and $\Samp'$ running in time less than $\lambda^3$. If $\EVOWP$'s exist then there exists a $(\negl,\negl)\EVOWP$ and therefore a $(\negl,\negl)\EVOWP$ running in time less than $\lambda^3$.
    
    If any one-way puzzle exists then for some $i,j$, $(\Samp_i,\Ver_j)$ will be a one way puzzle with both algorithms running in time less than $n^3$. Once $\lambda \geq \max(i,j)$ a one way puzzle will be one of the candidates. By \Cref{cor:owpcombmany} we know that $(\Samp_U,\Ver_U)$ will be a $\OWP$.
\end{proof} 

\begin{theorem}[Universal $\OWP$]\label{thm:uowp}
    Define $\Samp_i$ as the $i^{th}$ Quantum Turing machine with an attached alarm clock which will halt the machine after $\lambda^3$ steps. Define $\Ver_i$ to be the function such that the sum of the correctness and security error of $(\Samp_i,\Ver_i)$ is minimized.

    Define $(\Samp_U)(1^\lambda)$ using the construction from~\Cref{cor:owpcombmany} for the $\lambda^2$ candidates $(\Samp_i,\Ver_i)$ for all $(i,j) \in [\lambda]^2$. If any $\OWP$ $(\Samp,\Ver)$ exists, then $(\Samp_U,\Ver_U)$ is a $\OWP$.
\end{theorem}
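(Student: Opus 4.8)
The plan is to follow the same template as the universal $\EVOWP$ construction in~\Cref{thm:universalevowp}, namely ``robust combiner $+$ enumeration of candidates'' (Levin's trick), but the argument is in fact cleaner here precisely because the verification function of a $\OWP$ need not be efficient. Consequently we do not enumerate verifiers at all: we enumerate only the samplers $\Samp_1,\Samp_2,\dots$ (each an alarm-clocked quantum Turing machine, hence genuinely QPT), and we pair each $\Samp_i$ with the verifier $\Ver_i$ that is, for every $\lambda$, \emph{optimal} for it --- the function minimizing the sum of the correctness error and the security error of $(\Samp_i,\Ver_i)$ at parameter $\lambda$. (To make each $\Samp_i$ a syntactically valid candidate, if machine $i$ fails to halt within $\lambda^3$ steps or does not output a well-formed pair $(k,s)$ with $k\in\{0,1\}^\lambda$, we set its output to the fixed string $(0^\lambda,0^\lambda)$.) The construction $(\Samp_U,\Ver_U)$ is then obtained by running the $\OWP$ combiner of~\Cref{cor:owpcombmany} on these candidates; this first applies the \emph{unconditional} correctness guarantor of~\Cref{cor:owpcorrect} to each one, so $(\Samp_U,\Ver_U)$ is always a syntactically valid $\OWP$ with negligible correctness error, regardless of whether any candidate is secure.

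For the security analysis, suppose some $\OWP$ exists. Then there is a strong $\OWP$ $(\Samp,\Ver)$ whose sampler runs in time at most $\lambda^{c}$ for a constant $c$. A standard padding argument --- the one invoked in the proof of~\Cref{thm:universalevowp} via~\cite{Levin87,HKNY23}, but here only on the sampler side --- yields a strong $\OWP$ $(\Samp^{*},\Ver^{*})$ whose sampler runs in time at most $\lambda^3$ for all sufficiently large $\lambda$, obtained essentially by running $\Samp$ on security parameter $\lfloor\lambda^{1/c}\rfloor$ and zero-padding the key to length $\lambda$ (negligible errors in $\lambda^{1/c}$ remain negligible in $\lambda$). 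This $\Samp^{*}$ is computed by some quantum Turing machine, say the $i^{*}$-th; since $\Samp^{*}$ already halts within $\lambda^3$ steps the attached alarm clock is harmless, so $\Samp^{*}=\Samp_{i^{*}}$. Now fix any $\lambda \ge i^{*}$, so that $(\Samp_{i^{*}},\Ver_{i^{*}})$ is among the candidates fed to the combiner. Because $(\Samp_{i^{*}},\Ver^{*})=(\Samp^{*},\Ver^{*})$ has correctness error plus security error at most $\negl(\lambda)$, and $\Ver_{i^{*}}$ is by definition a minimizer of this quantity, the pair $(\Samp_{i^{*}},\Ver_{i^{*}})$ also has correctness error plus security error at most $\negl(\lambda)$; in particular each is individually negligible, so $(\Samp_{i^{*}},\Ver_{i^{*}})$ is itself a strong $\OWP$. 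Thus for all sufficiently large $\lambda$ one of the combined candidates is secure, and~\Cref{cor:owpcombmany} (exactly as in~\Cref{thm:universalevowp}) gives that $(\Samp_U,\Ver_U)$ is a $\OWP$.

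The one genuinely delicate point --- and the step I expect to need the most care --- is checking that $\Ver_i$ is well-defined as a single function. For a fixed $\lambda$, the quantity to be minimized depends only on the restriction of $\Ver$ to inputs $(k,s)$ with $|k|=\lambda$ and $|s|\le\lambda^3$ (the bound on $|s|$ holds because $\Samp_i$ runs for at most $\lambda^3$ steps, and w.l.o.g.\ any adversarial key also has length $\lambda$); there are only finitely many such restrictions, so the minimum is attained. Since keys have length exactly $\lambda$, the verifier can recover $\lambda=|k|$ from its input, so the per-$\lambda$ optimal choices glue into a single (uncomputable, but that is now permitted) function $\Ver_i$ with no ambiguity across security parameters. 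With this resolved, the remaining pieces are routine, and combining them completes the proof; the $\EVOWP$ version of the statement follows verbatim, using~\Cref{cor:evowpcorrect} in place of~\Cref{cor:owpcorrect} and enumerating verifier machines as in~\Cref{thm:universalevowp}.
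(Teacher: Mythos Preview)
Your proposal is correct and follows essentially the same approach as the paper: enumerate alarm-clocked samplers, pair each with an error-minimizing verifier, and invoke the combiner of~\Cref{cor:owpcombmany}, with the key observation that if some $\Ver^{*}$ makes $(\Samp_{i^{*}},\Ver^{*})$ a $\OWP$ then the minimizer $\Ver_{i^{*}}$ does too. The paper's own proof is only two sentences and defers to~\Cref{thm:universalevowp}; you spell out the padding step and the well-definedness of $\Ver_i$ explicitly, which is helpful, though your closing remark that ``the $\EVOWP$ version follows verbatim'' is unnecessary since that is precisely~\Cref{thm:universalevowp}, proved beforehand.
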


\begin{proof}
    Observe that if there exists a $\OWP$ $(\Samp,\Ver)$ then when $i$ is such that $\Samp_i = \Samp$, $(\Samp_i,\Ver_i)$ is also a $\OWP$. The rest of the proof follows by the same argument as~\Cref{thm:universalevowp}.
\end{proof}

\section{$\OWP$ security amplification}

In this section we prove the following theorem:

\begin{theorem}[Restatement of~\Cref{thm:owpamplification}]
    If, for some $c>0$, there exists a $(\negl(\lambda), 1-\lambda^{-c})$ one-way puzzle $(\Samp,\Ver)$, then there exists a strong one-way puzzle.
\end{theorem}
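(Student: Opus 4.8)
The plan is to prove this by routing through EFID pairs, exactly as outlined in \Cref{ssub:amp}: from a weak one-way puzzle build a \emph{non-uniform} EFID pair using the machinery of \Cref{ssub:efiowp}; from a non-uniform EFID pair build a non-uniform strong one-way puzzle; and finally apply the robust combiner of \Cref{cor:owpcomb} over all $\poly(\lambda)$ advice strings to collapse down to a uniform strong one-way puzzle. The general principle in the last step is that a robust combiner upgrades any non-uniform construction of a primitive into a full one.

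First I would invoke the construction of \Cref{ssub:efiowp} (formalized via \Cref{cor:weakowptoefi} and its supporting lemmas). The input is $(\Samp,\Ver)$ with correctness parameter $q \ge 1-\negl(\lambda)$ and security parameter $p \le 1-\lambda^{-c}$ against all QPT adversaries. The data-processing inequality for Kullback--Leibler divergence, applied to the boolean map $\Ver$, gives for every sampler $\S$ that $KL((s,k) \| (s,\S(k))) \ge KL(\mathsf{Ber}(q) \| \mathsf{Ber}(p)) \ge 1/\poly(\lambda)$, so $(s,k)$ is KL-hard-to-sample with parameter $\delta = 1/\poly(\lambda)$ — weak security suffices here precisely because the \cite{VZ12,HRV10} pipeline only needs $\delta \ge 1/\poly(\lambda)$ rather than $\delta = \omega(\log\lambda)$. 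Running that black-box (hence quantum-friendly) pipeline with $\Theta(\log\lambda)$ bits of advice encoding an estimate of $H(k,s)$ produces, for the correct advice value, a distribution $D$ on $d$ bits that is computationally indistinguishable from $\U^d$ but has $H(D) \le d - \poly(\lambda)$; an elementary entropy-deficit argument then gives $\Delta(D,\U^d) \ge 1/\poly(\lambda)$, and parallel repetition $t = \poly(\lambda)$ times boosts the statistical distance to $1-\negl(\lambda)$ while preserving indistinguishability. Hence $(\U^{td},D^t)$ is a non-uniform EFID pair.

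Second, from a non-uniform EFID pair $\Gen$ I would build a non-uniform strong one-way puzzle: let $\Samp'(1^\lambda)$ sample $b \randfrom \{0,1\}$ and output puzzle $s = \Gen(1^\lambda,b)$ with key $k = b$ (padded to $\lambda$ bits), and let $\Ver'(k,s)$ be the (possibly inefficient) test accepting iff $\Pr[\Gen(1^\lambda,k)=s] \ge \Pr[\Gen(1^\lambda,1-k)=s]$. Statistical farness forces the two branch distributions to disagree under this test with probability $1-\negl(\lambda)$, giving one-way-puzzle correctness, while computational indistinguishability of $\Gen(1^\lambda,0)$ and $\Gen(1^\lambda,1)$ means no QPT adversary recovers $b$ from $s$ with advantage better than $\negl(\lambda)$, which is exactly security. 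Crucially this reduction is advice-preserving, so it turns the non-uniform EFID pair of the previous step into a non-uniform strong one-way puzzle; then for each of the $\poly(\lambda)$ advice strings $a$ we obtain a candidate $(\Samp'_a,\Ver'_a)$, at least one of which (the correct entropy estimate) is a genuine strong one-way puzzle for every $\lambda$, and applying \Cref{cor:owpcomb} in the many-candidate form of \Cref{cor:owpcombmany} to all of them yields a single uniform strong one-way puzzle.

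The main obstacle is the first step. The \cite{VZ12,HRV10} construction is classical and phrased in terms of preimage \emph{sets}, whereas here the ``preimage'' of a puzzle $s$ is a \emph{distribution} over keys, since $\Samp$ uses quantum randomness. The resolution — which is what \Cref{ssub:efiowp} carries out in detail — is that once one has (i) the KL-hardness bound above and (ii) a handle on $H(k,s)$, the remainder of the VZ--HRV argument is entirely black-box; the only genuinely new work is replacing the classical final step ``$G$ is a PRG because $D = G(\U^{d'})$'' (unavailable with quantum randomness) by a direct upper bound $H(D) < d - \poly(\lambda)$ together with the observation that an entropy deficit of $\poly(\lambda)$ on $d$ bits forces statistical distance $\ge 1/\poly(\lambda)$ from uniform. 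The EFID-to-one-way-puzzle reduction and the combiner step are then routine.
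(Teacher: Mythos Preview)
Your three-step plan matches the paper's approach exactly; the gap is in Step 2, your EFID-to-OWP reduction. With a single-bit key $b$, the trivial adversary that always outputs $0$ passes your verifier with probability roughly $1/2$: your $\Ver'$ is a deterministic function of $s$ alone, so for every $s$ there is a unique ``correct'' bit, and a fixed guess hits it half the time. Computational indistinguishability of $\Gen(1^\lambda,0)$ and $\Gen(1^\lambda,1)$ only says no adversary recovers $b$ with \emph{advantage} better than $\negl(\lambda)$, i.e.\ success probability $\tfrac12+\negl(\lambda)$, which is not negligible. So your construction is only a $(\negl,\tfrac12+\negl)$-OWP --- a weak one-way puzzle --- and since a strong OWP is precisely what we are trying to build from a weak one, invoking it here is circular.

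The paper's fix (\Cref{lem:qefid-to-owp}) is to take $\lambda$ independent copies: key $k\in\{0,1\}^\lambda$, puzzle $(s_1,\dots,s_\lambda)$ with $s_i\sim\Gen(1^\lambda,k_i)$, and verifier checking $T^*(s_i)=k_i$ for all $i$, where $T^*$ is the optimal statistical distinguisher. Security then requires $\Pr[\A(s_1,\dots,s_\lambda)=k]\le(\tfrac12+\negl)^\lambda=\negl$, which does \emph{not} follow from the one-shot bound by a hybrid argument; it needs the quantum parallel-repetition theorem for three-round protocols~\cite{bostanci2023efficient}. The crucial subtlety is that this theorem applies to the EFID \emph{distinguishing} game --- which is efficiently falsifiable, since the challenger knows $b$ --- and \emph{not} to the OWP security game, whose verifier is inefficient. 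That distinction is exactly why security amplification for general one-way puzzles is nontrivial in the first place (cf.\ the remark after \Cref{thm:secampevowp}), and it is the idea your Step 2 is missing.
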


We do this be showing that weak $\OWP$ imply non-uniform EFID pairs, and then show that non-uniform EFID pairs imply strong $\OWP$. The first of these steps serves as a simpler and more general version of the argument presented in \cite{khurana2024commitments}.

\subsection{$\OWP$s imply non-uniform EFID }

\begin{definition}[From~\cite{khurana2024commitments}]
    A $\nu^*$-non-uniform EFID pair is a QPT algorithm $G_\nu(1^\lambda,b)$ with classical parameter-dependent advice $\nu$. On input a unary security parameter $\lambda$ and $b\in\{0,1\}$ outputs a classical string such that
    \begin{enumerate}
        \item For all QPT $\A$ and for all sufficiently large $\lambda$, the distributions $G_{\nu^*}(1^\lambda,0)$ and $G_{\nu^*}(1^\lambda, 1)$ are indistinguishable.
        \item $$\Delta(G_{\nu^*}(1^\lambda, 0), G_{\nu^*}(1^\lambda, 1)) \geq 1 - \epsilon$$
    \end{enumerate}
\end{definition}

\begin{theorem}[Restatement of~\Cref{cor:nbpetoqefid}]
    If there exists a $\OWP$, then there exists a non-uniform EFID pair.
\end{theorem}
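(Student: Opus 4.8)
The plan is to follow the roadmap sketched in the technical overview, adapting the construction of Vadhan–Zheng~\cite{VZ12} (itself building on Haslegrave–Rudich–Vadhan~\cite{HRV10}) to the setting of one-way puzzles. The first step is to observe that, given a weak $\OWP$ $(\Samp,\Ver)$ with $\Samp(1^\lambda) \to (k,s)$, the joint distribution $(s,k)$ is \emph{KL-hard to sample}: for every QPT sampler $\S$, we have $KL\big((s,k) \,\|\, (s,\S(s))\big) \geq 1/\poly(\lambda)$. This follows from the data-processing inequality for KL divergence applied to the (possibly inefficient) Boolean post-processing $F(s,k) = \Ver(k,s)$, which yields $KL\big((s,k)\,\|\,(s,\S(s))\big) \geq KL\big(\Bernoulli(q)\,\|\,\Bernoulli(p)\big)$ where $q \geq 1 - \negl(\lambda)$ is the correctness parameter and $p \leq 1 - \lambda^{-c}$ bounds the success probability of any QPT adversary in the security game. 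A short calculation shows $KL(\Bernoulli(q)\,\|\,\Bernoulli(p))$ is at least $1/\poly(\lambda)$ when $q$ is close to $1$ and $p$ is bounded away from $1$.

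The second step is to feed $(s,k)$ into the \cite{VZ12} construction, which is fully black-box in the sampler and so works verbatim with quantum sampling. Concretely: take many independent samples $(s_{ij}, k_{ij})$, arrange them in a grid, apply a random cyclic shift to each row (truncating both ends), and apply a pairwise-independent hash to each column to extract a controlled number of bits — see~\Cref{fig:hrvconst}. Because the construction needs to know the Shannon entropy $H(s,k)$ of one sample to calibrate how many bits to extract, and this entropy is not efficiently computable, we only get a construction parametrized by $\Theta(\log\lambda)$ bits of advice $\nu$ encoding (an estimate of) $H(s,k)$. For the correct advice $\nu^*$, the \cite{VZ12} analysis gives a distribution $D = G_{\nu^*}(1^\lambda)$ of some length $d$ that is computationally indistinguishable from the uniform distribution $\U_d$. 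The one genuinely new ingredient — since we cannot use the classical ``$D$ is the image of a short seed'' argument in the quantum-randomness setting — is a direct \emph{upper bound} $H(D) \leq d - \poly(\lambda)$ on the Shannon entropy of $D$; this should follow by carefully tracking the entropy through the grid/shift/extract steps exactly as in the entropy accounting of~\cite{HRV10,VZ12}, but stated as an entropy inequality rather than as a seed-length statement.

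The third step converts the entropy gap into statistical distance: a distribution on $\{0,1\}^d$ with $H(D) \leq d - \Omega(\poly(\lambda))$ must have $\Delta(D, \U_d) \geq 1/\poly(\lambda)$, since $H(\U_d) = d$ and Shannon entropy is $1$-Lipschitz (up to a $\log$ factor) with respect to statistical distance on a fixed-size domain — I would invoke a standard continuity/Fano-type bound here. So $(\U_d, D)$ is computationally indistinguishable yet noticeably statistically far: a weak EFID pair with the correct advice. Finally, applying parallel repetition (taking $t = \poly(\lambda)$ independent copies on each side) amplifies the statistical distance to $1 - \negl(\lambda)$ while preserving computational indistinguishability by a hybrid argument, and packaging $b \mapsto (\U_d^{t}$ if $b=0$, $D^{t}$ if $b=1)$ together with the advice $\nu^*$ gives exactly a $\nu^*$-non-uniform EFID pair. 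The main obstacle I anticipate is the entropy upper bound $H(D) \leq d - \poly(\lambda)$: in the classical proof this is essentially immediate from the seed-length bookkeeping, but here it must be argued as a quantitative entropy inequality that survives the random shifts and the pairwise-independent extraction, and getting the $\poly(\lambda)$ loss to come out right (rather than, say, just $\omega(\log\lambda)$) requires being careful that the KL-hardness parameter $\delta = 1/\poly(\lambda)$ feeds through the \cite{VZ12} entropy-gap calculation with only polynomial overhead.
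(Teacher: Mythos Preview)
Your proposal is correct and follows essentially the same route as the paper: KL-hardness of $(s,k)$ via data processing through $\Ver$, then the black-box \cite{VZ12}/\cite{HRV10} pipeline (with $H(s,k)$ supplied as $O(\log\lambda)$ bits of advice) to get a pseudorandom distribution $D$ with an explicit entropy deficit, then an entropy-gap-to-statistical-distance lemma, then parallel repetition. The only cosmetic differences are that the paper proves the entropy-to-SD step by a direct Markov argument rather than a Fano/continuity bound, and that \cite{HRV10} is Haitner--Reingold--Vadhan, not Haslegrave--Rudich--Vadhan.
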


\begin{corollary}[Restatement of~\Cref{cor:weakowptoefi}]
    If there exists a $\OWP$, then there exists an EFI pair.
\end{corollary}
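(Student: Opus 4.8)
The plan is to deduce this corollary from the theorem immediately preceding it (that a $\OWP$ implies a non-uniform EFID pair) together with the robust combiner for EFI pairs of~\cite{HKNY23}. First I would invoke that theorem to obtain a non-uniform EFID pair $G_\nu(1^\lambda,b)$ whose advice $\nu$ has length $O(\log\lambda)$, so that for each $\lambda$ there are only $p(\lambda) = \poly(\lambda)$ possible advice values $\nu_1,\dots,\nu_{p(\lambda)}$, and for the correct advice $\nu^*$ the algorithm $G_{\nu^*}$ is a genuine EFID pair: its two output distributions are computationally indistinguishable yet satisfy $\Delta(G_{\nu^*}(1^\lambda,0),G_{\nu^*}(1^\lambda,1)) \geq 1-\epsilon$. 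I would then observe that each $G_{\nu_i}$ is in particular a candidate EFI pair, since an EFID pair is a special case of an EFI pair (a classical distribution is a diagonal density matrix and statistical distance coincides with trace distance in that case), and each $G_{\nu_i}$ is an efficiently samplable family of classical strings. Hence $\{G_{\nu_i}\}_{i\in[p(\lambda)]}$ is a polynomial-size family of EFI-pair candidates, exactly one of which (the one with $\nu_i=\nu^*$) is a genuine EFI pair.

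The second step is to feed this family into the robust EFI combiner of~\cite{HKNY23}. That combiner, like the $\OWP$ combiners elsewhere in this paper, can be applied iteratively $O(\log\lambda)$ times (or in one shot, depending on its statement) to handle any polynomial number of candidates, yielding a single $\Gen^*$ that is a genuine EFI pair provided at least one of the candidates is. Since we have just argued that for every sufficiently large $\lambda$ one of the $p(\lambda)$ candidates is a genuine EFI pair, $\Gen^*$ is the desired EFI pair and the proof is complete modulo the preceding theorem.

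The real work is entirely in the preceding theorem — building a non-uniform EFID pair from a (weak) $\OWP$ via the KL-hard-to-sample route of~\cite{VZ12,HRV10} and the entropy-gap argument sketched in the technical overview — and is assumed here. The only points I would still need to check for the corollary are routine: that wrong-advice candidates do no harm (the combiner is robust by definition, so they are simply ignored); that the statistical-distance parameter $1-\epsilon$ inherited from the non-uniform EFID pair already lies in whatever regime the combiner requires, and if not, that one independent parallel repetition amplifies it before combining; and that the candidate count stays polynomial, which it does precisely because the EFID advice length is $O(\log\lambda)$.
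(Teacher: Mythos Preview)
Your proposal is correct and matches the paper's approach: obtain a non-uniform EFID pair from the preceding theorem (with advice ranging over at most $\poly(\lambda)$ values), view each instantiation as an EFI candidate, and apply the robust EFI/commitment combiner of~\cite{HKNY23} to all candidates to obtain a single EFI pair. The paper adds only the observation that the combiner of~\cite{HKNY23} operates security-parameter-by-security-parameter (so the fact that the correct advice $\nu^*$ varies with $\lambda$ causes no trouble) and that the combiner routes through quantum bit commitments, which is why the output is an EFI pair rather than an EFID pair.
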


This argument will follow from adapting the techniques of \cite{VZ12} and \cite{HRV10}. We recall a number of technical lemmas from these papers, and observe that since these lemmas rely only on quantum friendly black-box techniques, they also hold in the QCCC setting. While we will focus on the non-uniform setting, the results should also hold against uniform adversaries by adapting the arguments of \cite{VZ12} and \cite{HRV10} in the uniform setting.

We first recall the following definitions used in~\cite{VZ12,HRV10} (all of which operate in the non-uniform setting)

\begin{definition}
    Let $X,B$ be two jointly sampled random variables. We say that $B$ has $(t,\epsilon)$ (quantum) pseudoentropy at least $k$ given $X$ if there exists a random variable $C$ jointly sampled with $X$ such that
    \begin{enumerate}
        \item $H(C|X)\geq k$
        \item $(X,B)$ and $(X,C)$ are $(t,\epsilon)$-indistinguishable (by quantum circuits)
    \end{enumerate}

    When $t = 1/\negl$ and $\epsilon = \negl$ we can omit the $t,\epsilon$ and simply say that $B$ has (quantum) pseudoentropy $k$.
\end{definition}

\begin{definition}
    Let $B^{(i)}$ be a random variable over $[q]$ for each $i\in [m]$. We say that $B = (B^{(1)},\dots, B^{(m)})$ has next-block (quantum) pseudoentropy at least $k$ if the random variable $B^{(I)}$ has (quantum) pseudoentropy at least $k/m$ given $B^{(1)},\dots,B^{(I-1)}$, for $I\sim [m]$. If $q = 2$, then we will use the term next-bit pseudoentropy.
\end{definition}

\begin{definition}
    Let $B^{(i)}$ be a random variable for each $i\in [m]$. We say that every block of $B = (B^{(1)},\dots, B^{(m)})$ has next-block (quantum) pseudoentropy at least $k$ if for all $i$, the random variable $B^{(i)}$ has (quantum) pseudoentropy at least $k$ given $B^{(1)},\dots,B^{(i-1)}$.
\end{definition}

\begin{definition}
    For random variables $A,B$ the KL divergence from $A$ to $B$ is defined as
    $$KL(A || B) = \E_{a \sim A}\left[\log \frac{\Pr[A\to a]}{\Pr[B \to a]}\right]$$
\end{definition}

\begin{definition}
    Let $X$, $B$ be jointly sampled random variables. We say that $B$ is $(t,\delta)$ (quantum) KL-hard for sampling given $X$ if for all size-$t$ randomized (quantum) circuits $S$,
    $$KL(X, B || X, S(X)) > \delta$$
\end{definition}

\begin{remark}
    We will also define pseudo-min-entropy to have the same definition as pseudoentropy, but with Shannon entropy replaced with min-entropy. We analogously define next-block pseudo-min-entropy.
\end{remark}

\subsubsection{$\OWP$ imply next-bit pseudoentropy}

We state quantum versions of the corresponding lemmas from \cite{VZ12}

\begin{lemma}[Quantum version of Theorem 3.15 and Lemma 3.6 from \cite{VZ12}]\label{lem:kltoentropy}
    Let $(X,B)$ be jointly sampled random variables over $\{0,1\}^n \times [q]$. If $B$ is quantum $(t,\delta)$ KL-hard for sampling given $X$, then for every $\epsilon>0$, $B$ has quantum $(t', \epsilon)$ psuedoentropy at least $H(B|X) + \delta - \epsilon$ given $X$, for $t' = t^{\Omega(1)}/\poly(n,q,1/\epsilon)$.
\end{lemma}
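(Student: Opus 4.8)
The plan is to follow the structure of the original argument of~\cite{VZ12}, verifying at each step that only black-box, relativizing reductions are used so that the argument goes through verbatim against non-uniform quantum adversaries. The starting point is the hypothesis that $B$ is quantum $(t,\delta)$ KL-hard for sampling given $X$, i.e.\ for every size-$t$ quantum circuit $S$ we have $KL(X,B \,\|\, X,S(X)) > \delta$. First I would invoke the ``KL-hardness to conditional-entropy gap'' step: the Kullback-Leibler divergence decomposes as $KL(X,B\,\|\,X,S(X)) = \E_{x}[KL(B|_{X=x} \,\|\, S(x))]$, and one uses the identity relating this to $H(B|X)$ plus a cross-entropy term. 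The key classical observation of~\cite{VZ12} is that if $S$ is taken to be (a circuit approximating) the conditional distribution of $B$ given $X$ up to the resolution achievable by a size-$t$ circuit, then the KL-hardness forces the ``pseudo'' distribution $C$ — defined so that $(X,C)$ is the output of the best such sampler $S$ — to have $H(C|X) \ge H(B|X) + \delta$. More precisely, one sets up a minimax / LP-duality argument (as in Lemma 3.6 of~\cite{VZ12}, ultimately a use of the min-max theorem or of boosting) to convert the statement ``no size-$t$ sampler gets KL-close'' into ``there exists a high-entropy $C$, indistinguishable from $B$ given $X$ by size-$t'$ circuits.''

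Concretely, the steps in order: (1) restate quantum KL-hardness and record the chain rule / additivity properties of $KL$ that survive in the quantum-adversary setting (these are purely information-theoretic and about the classical random variables $X,B$, so nothing changes). (2) Apply the min-max theorem to the two-player zero-sum game whose pure strategies are, on one side, size-$t'$ distinguishers and, on the other, distributions $C$ jointly sampled with $X$ with $H(C|X) \ge H(B|X)+\delta-\epsilon$; the payoff is the distinguishing advantage between $(X,B)$ and $(X,C)$. (3) Argue that if the value of this game were large (say $>\epsilon$), then by the min-max theorem there is a single distinguisher $D$ beating \emph{every} such high-entropy $C$, and then — this is the crux — convert $D$ into a size-$t$ sampler $S$ contradicting KL-hardness, using the ``distinguisher-to-sampler'' transformation of~\cite{VZ12} (which is where the polynomial loss $t' = t^{\Omega(1)}/\poly(n,q,1/\epsilon)$ comes from, since it typically runs $D$ on many hybrids / does rejection sampling or a boosting-style reweighting). (4) Conclude that the game value is small, i.e.\ there exists $C$ with $H(C|X)\ge H(B|X)+\delta-\epsilon$ that is $(t',\epsilon)$-indistinguishable from $B$ given $X$, which is exactly quantum $(t',\epsilon)$-pseudoentropy $\ge H(B|X)+\delta-\epsilon$.

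The main obstacle I anticipate is step (3): the distinguisher-to-sampler reduction. In the classical proof this is the technical heart of~\cite{VZ12} (it is essentially a one-sided version of the ``simulating auxiliary input / dense model'' machinery, or an application of the uniform-hardcore / boosting lemma). I would need to check carefully that (a) the reduction is black-box in the distinguisher — it only runs $D$ as an oracle, possibly on several adaptively chosen inputs, and does not inspect its internals — so that a quantum $D$ is handled with no change, and (b) the resulting sampler $S$ is itself a quantum circuit of size $t^{\Omega(1)}/\poly(n,q,1/\epsilon)$, so that the KL-hardness with parameter $t$ can be applied to it. Since the paper's overall methodology is precisely to observe that these VZ/HRV lemmas ``rely only on black-box techniques, they also hold in the post-quantum setting,'' the expected resolution is that no genuinely new quantum ingredient is needed: the min-max theorem, the chain rule for $KL$, and the black-box boosting reduction all relativize and are agnostic to whether the adversary is classical or quantum. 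The bookkeeping of the size loss, and the fact that $H(B|X)$ rather than $H_{\min}(B|X)$ appears (so that one gets Shannon pseudoentropy, matching the later use with the known value $H(s,k)$ for a one-way puzzle), are the details I would be most careful about.
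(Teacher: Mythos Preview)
Your overall strategy matches the paper's: follow the VZ12 proof and argue that the reductions are black-box. However, the paper's entire proof is one sentence, and it singles out exactly one step as requiring a genuinely quantum ingredient: ``the proof is exactly the same as the proofs of Lemma 3.6 and Theorem 3.15 from \cite{VZ12}, but with the use of the Min-Max theorem replaced by its quantum equivalent (Theorem 4.1 from \cite{chen2017computational}).'' This is precisely the step where your proposal asserts the opposite, namely that ``no genuinely new quantum ingredient is needed'' and that ``the min-max theorem \dots\ [is] agnostic to whether the adversary is classical or quantum.''

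The issue is your step (2)--(3). In the classical VZ12 argument, the min-max theorem is applied to a game where one player ranges over bounded-size classical circuits --- a finite set --- so von Neumann's theorem (or the usual nonuniform hardcore/boosting formulation) applies directly. When the distinguishers are quantum circuits of bounded size, the strategy space is continuous, and the finite-game min-max invoked in VZ12 does not apply as stated. One needs either a discretization/covering-net argument or a direct quantum analogue of the simulating-auxiliary-input lemma; the paper takes the latter route via \cite{chen2017computational}. So your plan would stall at the point where you switch the quantifiers over quantum distinguishers. Everything else in your outline --- the chain rule for $KL$, the black-box distinguisher-to-sampler transformation, and the bookkeeping giving $t' = t^{\Omega(1)}/\poly(n,q,1/\epsilon)$ --- is consistent with what the paper (implicitly, by deferring to VZ12) does.
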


\begin{proof}
    The proof is exactly the same as the proofs of Lemma 3.6 and Theorem 3.15 from \cite{VZ12}, but with the use of the Min-Max theorem replaced by its quantum equivalent (Theorem 4.1 from \cite{chen2017computational}).
\end{proof}

\begin{lemma}[Quantum chain rule for KL-hardness (variant of Lemma 4.3 from \cite{VZ12}]\label{lem:chainrule}
    Let $Y$ be a distribution over $\{0,1\}^n$, jointly distributed with $Z$. If $Y$ is quantum $(t,\delta)$ KL-hard for sampling given $Z$, then $Y_I$ is quantum $(t', \delta/n)$ KL-hard for sampling given $(Z,Y_1,\dots,Y_{I-1})$, for $I \sim [n]$, $t'=t/O(n)$.
\end{lemma}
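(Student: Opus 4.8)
The plan for \Cref{lem:chainrule} is to carry over the classical sequential-composition argument behind the KL chain rule (Lemma 4.3 of~\cite{VZ12}), observing that it uses nothing genuinely quantum: the reduction composes an adversarial sampler with itself, and every value passed between consecutive invocations is a measured classical string, so only black-box composition of quantum circuits is needed. I would argue by contradiction. Suppose $Y_I$ is \emph{not} quantum $(t', \delta/n)$ KL-hard for sampling given $(Z, Y_1, \dots, Y_{I-1})$ with $I \sim [n]$; then there is a size-$t'$ quantum circuit $S'$ with $KL((X,B)\,\|\,(X, S'(X))) \le \delta/n$, where $X = (I, Z, Y_1, \dots, Y_{I-1})$ and $B = Y_I$. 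Note that $S'$ reads an input of the form $(z, y_1, \dots, y_{i-1})$, with the index $i$ determined by the input length, and outputs a single block in $[q]$.

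Next I would use $S'$ to build a size-$t$ sampler $S$ for $Y$ given $Z$: on input $z$, run $S'(z)$ to obtain $\hat Y_1$, then $S'(z, \hat Y_1)$ to obtain $\hat Y_2$, and so on for $n$ rounds, finally outputting $(\hat Y_1, \dots, \hat Y_n)$. Since the $n$ copies of $S'$ interact only through classical blocks, $S$ is a legitimate quantum circuit of size at most $n t' + \poly(n)$, which is $\le t$ once we set $t' = t/O(n)$ (the additive $\poly(n)$ being absorbed for the values of $t$ of interest).

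The core of the argument is then the chain rule for KL divergence. Let $P = (Z, Y_1, \dots, Y_n)$ be the true joint distribution and let $Q = (Z, S(Z))$ be the distribution the reduction produces. By construction $Q$ has the same $Z$-marginal as $P$, and conditioned on $Z = z$ and on the first $i-1$ output blocks being $y_{<i}$, the $i$-th block of $Q$ is distributed exactly as the output of $S'(z, y_{<i})$. Hence
$$KL((Z,Y)\,\|\,(Z,S(Z))) = KL(P_Z \,\|\, Q_Z) + \sum_{i=1}^{n} \E_{(z,y_{<i}) \sim P}\left[ KL\!\left(Y_i \mid (Z,Y_{<i})=(z,y_{<i}) \,\|\, S'(z,y_{<i})\right) \right],$$
where the first term is $0$ because the $Z$-marginals agree, and the sum is exactly $n \cdot KL((X,B)\,\|\,(X,S'(X))) \le \delta$. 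Since $S$ has size $\le t$, this contradicts the hypothesis that $Y$ is quantum $(t, \delta)$ KL-hard for sampling given $Z$, which is what we want.

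I do not anticipate a real obstacle. The two points to be careful about are (i) the bookkeeping in the chain rule --- verifying that the $Z$-marginal genuinely cancels and that each conditional distribution of $Q$ coincides with the output of $S'$ on the matching prefix --- and (ii) explicitly justifying that the sequential composition of $n$ copies of $S'$ glued together by classical data is a valid quantum circuit incurring only an $O(n)$ blowup in size, which is precisely what forces $t' = t/O(n)$. The ``quantum'' content of the statement is inert here: the chain rule is a property of the classical output distributions, independent of how those distributions were generated, exactly as in~\cite{VZ12}.
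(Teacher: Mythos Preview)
Your proposal is correct and follows exactly the approach the paper intends: the paper's own proof is the one-line statement ``Same as Lemma 4.3 from \cite{VZ12},'' and what you have written is precisely that argument spelled out, with the only new observation being that sequential composition through classical strings causes no issue for quantum circuits. There is nothing to add.
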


\begin{proof}
    Same as Lemma 4.3 from \cite{VZ12}.
\end{proof}

\begin{theorem}\label{thm:owptonbpe}
Let $(\Samp,\Ver)$ be a $(\omega,\gamma)$-$\OWP$ with puzzle length $m(\lambda)$ secure against all time $t$ adversaries. Then for all $\epsilon>0$, $\Samp \to (k, s)$ has quantum $(t',\epsilon/m)$ next-bit pseudoentropy at least $$H(k,s) + \delta - \epsilon$$ for $t' = t^{\Omega(1)}/\poly(\lambda,1/\epsilon)$ and
$$\delta = (1-\omega)\log\frac{1-\omega}{\gamma} + \omega \log \frac{\omega}{1-\gamma}$$
\end{theorem}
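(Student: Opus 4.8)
The plan is to mirror the two-step structure of \cite{VZ12}: first show that the distribution $(s,k)$ output by $\Samp$ is quantum KL-hard for sampling given $s$, with an explicit hardness parameter $\delta$; then invoke the chain rule (\Cref{lem:chainrule}) and the KL-to-pseudoentropy lemma (\Cref{lem:kltoentropy}) to convert this into a next-bit pseudoentropy bound. Concretely, I would first establish the following claim: for every size-$t$ quantum circuit $\S$,
\[
KL(s,k \,\|\, s,\S(s)) \;\geq\; \delta \;=\; (1-\omega)\log\frac{1-\omega}{\gamma} + \omega\log\frac{\omega}{1-\gamma}.
\]
The argument is the data-processing inequality for KL divergence applied to the (possibly inefficient, but that is irrelevant since we only need it to exist) post-processing map $F(s,k) = \Ver(k,s) \in \{0,1\}$. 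Under the honest distribution, $F(s,k) = \Ver(k,s)$ is a Bernoulli random variable with parameter $q := \Pr[\Ver(k,s)\to 1] \geq 1-\omega$ by correctness; under the $\S$-distribution, $F(s,\S(s)) = \Ver(\S(s),s)$ is Bernoulli with parameter $p := \Pr[\Ver(\S(s),s)\to 1] \leq \gamma$ by security. Hence $KL(s,k\,\|\,s,\S(s)) \geq KL(\Bernoulli(q)\,\|\,\Bernoulli(p))$. Then I would observe that $KL(\Bernoulli(q)\,\|\,\Bernoulli(p))$, as a function of $q \in [1-\omega,1]$ and $p \in [0,\gamma]$, is minimized at $q = 1-\omega$, $p = \gamma$ (KL divergence between Bernoullis is increasing in $|q-p|$ once $q > p$, and monotone in each argument on the relevant side), giving exactly the stated $\delta$. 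A small technical point to handle carefully: one must check $1-\omega > \gamma$ so that $\delta > 0$ — this holds because $\Samp$ is a genuine (weak) one-way puzzle, so correctness beats security, i.e. $1-\omega \geq 1/\poly$ while $\gamma \leq 1 - 1/\poly$ in the regime of interest; in the degenerate case the bound is vacuous.

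Having shown $(s,k)$ is quantum $(t,\delta)$ KL-hard for sampling given $s$, I would apply \Cref{lem:chainrule} to the string $(s,k)$: writing the concatenation $s\|k$ as a sequence of bits, each bit $B_I$ (for $I$ uniform over the $m + \lambda$ positions, or more precisely over the positions relative to the appropriate prefix) is quantum $(t'', \delta/(m+\lambda))$ KL-hard for sampling given the preceding bits, for $t'' = t/O(m+\lambda)$. Actually the cleanest route, matching \cite{VZ12} exactly, is to treat the whole output as the variable and directly feed the KL-hardness into \Cref{lem:kltoentropy} at the block level — but since the theorem statement asks for \emph{next-bit} pseudoentropy, I would apply the chain rule first to pass to single-bit blocks and then apply \Cref{lem:kltoentropy} with $q = 2$, $n = m + \lambda$ to each bit conditioned on its prefix. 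This yields that each bit $B_I$ has quantum $(t', \epsilon/(m+\lambda))$ pseudoentropy at least $H(B_I \mid \text{prefix}) + \delta/(m+\lambda) - \epsilon/(m+\lambda)$ given the prefix; summing the conditional entropies over $I$ via the chain rule for Shannon entropy gives total next-bit pseudoentropy $\sum_I H(B_I\mid\text{prefix}) + \delta - \epsilon = H(s,k) + \delta - \epsilon$, which is the claimed bound. The parameter $t' = t^{\Omega(1)}/\poly(\lambda,1/\epsilon)$ comes from composing the polynomial losses in \Cref{lem:chainrule,lem:kltoentropy}, noting $m, q$ are $\poly(\lambda)$.

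The main obstacle I anticipate is not the KL-hardness argument itself — that is essentially the observation already previewed in the technical overview — but rather bookkeeping the pseudoentropy across the chain-rule step so that the "next-bit pseudoentropy" quantity, which is defined as a statement about the single random bit $B_I$ for $I$ \emph{uniformly chosen}, aggregates correctly to $H(s,k) + \delta - \epsilon$ rather than to $(H(s,k)+\delta-\epsilon)/(m+\lambda)$ per the normalization in \Cref{lem:kltoentropy}. One must be careful that \Cref{lem:kltoentropy} gives pseudoentropy $H(B_I\mid X, B_{<I}) + \delta' - \epsilon'$ for the \emph{per-block} KL parameter $\delta' = \delta/(m+\lambda)$, and that the definition of next-block pseudoentropy $k$ requires $B^{(I)}$ to have pseudoentropy $k/m$ given the prefix — so the $\delta/(m+\lambda)$ per-block gain is exactly what certifies a total of $\delta$, and one must confirm the $\epsilon$ accounting lines up (replacing $\epsilon$ by $\epsilon/(m+\lambda)$ in the per-block application). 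A secondary subtlety is that $\S$ in the KL-hardness definition is allowed to use quantum randomness/be a quantum circuit while $\Ver$ may be uncomputable; since data processing for KL divergence holds for arbitrary (even non-computable, even randomized) post-processing channels, this causes no difficulty, but it is worth stating explicitly. Everything else — the convexity/monotonicity fact pinning down $\delta$, and the post-quantum validity of the cited lemmas — is either elementary or already granted by the lemma statements in the excerpt.
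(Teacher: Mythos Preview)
Your proposal is correct and follows essentially the same approach as the paper: establish that $k$ is $(t,\delta)$ KL-hard to sample given $s$ via the data-processing inequality applied to $\Ver$, then invoke the chain rule (\Cref{lem:chainrule}) followed by the KL-to-pseudoentropy conversion (\Cref{lem:kltoentropy}), and aggregate using the Shannon chain rule to obtain $H(k,s)+\delta-\epsilon$. The only cosmetic difference is that the paper applies the chain rule to $k$ with $Z = s$ (so $I$ ranges over the $|k|$ key bits) and then tacks on $H(s)$ at the end, rather than working with the concatenation $s\|k$; your anticipated bookkeeping concern about normalization is exactly the point where the paper's own write-up is brief, and your handling of it is at least as careful.
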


\begin{proof}
We first observe that $k$ is $(t,\delta)$ KL-hard to sample given $s$. Let $S$ be any time $t$ quantum circuit. We will show that $KL(s,k || s, S(s)) \geq \delta$.
    
    By monotonicity of quantum relative entropy,
    $$KL(\Ver(s,k) || \Ver(s, S(s))) \leq KL(s,k || s, S(s))$$
    But observe that $\Ver(s,k)$ is a Bernoulli random variable $Bern(p)$ for some $p > 1 - \omega$ by correctness. But $\Ver(s, S(s))$ is a Bernoulli random variable $Bern(p')$ for $p'$ the advantage of $S$ in the $\OWP$ game. By security of the $\OWP$, $p' \leq \gamma$.
    $$KL(\Ver(s,k) || \Ver(s,S(s))) \geq KL(Bern(1-\omega), Bern(\gamma)) $$
    $$= (1-\omega)\log\frac{1-\omega}{\gamma} + \omega \log \frac{\omega}{1-\gamma} = \delta.$$

    Let $m = \poly(\lambda)$ be the length of $k$. By \Cref{lem:chainrule}, for $I\sim [m]$, $k_I$ is quantum $\left(\frac{t}{O(m)}, \frac{\delta}{m}\right)$ KL-hard for sampling given $(s,k_1,\dots,k_{I-1})$.

    And so by \Cref{lem:kltoentropy}, for every $\epsilon>0$, $k_I$ has quantum $(t',\epsilon/m)$ pseudoentropy at least $H(k_I|s,k_1,\dots,k_{I-1}) + \frac{\delta}{m} - \epsilon/m$ given $s$. But by definition, this means that $k$ has $(t',\epsilon/m)$ next-bit pseudoentropy $m\cdot H(k_I|s,k_1,\dots,k_{I-1}) + \delta - \epsilon$ given $s$. But $m\cdot H(k_I | s,k_1,\dots,k_{I-1}) \geq H(k|s)$, so $(k,s)$ has $(t',\epsilon/m)$ next-bit pseudoentropy $H(s) + H(k|s) + \delta - \epsilon = H(k,s) + \delta - \epsilon$ for $t' = t^{\Omega(1)}/\poly(\lambda,1/\epsilon)$.
\end{proof}

\subsubsection{Next-bit pseudoentropy implies non-uniform EFID}

This argument follows the framework of \cite{HRV10}. The argument is nearly identical to the classical case. The primary difference is that instead of lower bounding the length of the resultant string by the length of the input, we instead lower bound the length of the resultant string by the entropy of the input.

\begin{lemma}[Lemma 5.2 from~\cite{HRV10}]\label{lem:hrv52}
    For $i \in [m]$, $x^{(1)},\dots,x^{(\ell)} \in \mathcal{M}^m$, we define 
    $$Equalizer(i, x^{(1)},\dots, x^{(\ell)})\coloneqq x^{(1)}_{i},\dots, x^{(1)}_{m},x^{(2)},\dots, x^{(\ell - 1)},x^{(\ell)}_1,\dots, x^{(\ell)}_{i-1}$$
    That is, $Equalizer(i, x^{(1)},\dots, x^{(\ell)})$ truncates the first $i-1$ blocks from the first input and the last $m - (i - 1)$ blocks from the last input.
    
    Let $X$ be a random variable over $\mathcal{M}^m$ with $(t,\epsilon)$ next block quantum pseudoentropy at least $k$. Let $X^{(1)},\dots,X^{(\ell)}$ be $\ell$ independent and identically distributed copies of $X$, and let $I$ be uniformly distributed over $[m]$. Define $\wt{X} = Equalizer(I,X^{(1)},\dots,X^{(\ell)})$. Then every block of $\wt{X}$ has $(t - O(\ell\cdot m\cdot \log \abs{\mathcal{M}}), \ell \cdot \epsilon)$ next-block quantum pseudoentropy at least $k/m$.
\end{lemma}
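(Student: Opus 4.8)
The plan is to adapt the proof of Lemma 5.2 of~\cite{HRV10} essentially line for line: the only substantive observation is that every step is a black-box reduction that merely copies classical strings and runs the given distinguisher once, so nothing changes when the distinguisher is quantum. First I would nail down the purely combinatorial ``equalization'' property. Recompute the length of $\wt X$: for offset $i$, copy $1$ contributes blocks $i,\dots,m$ (that is $m-i+1$ blocks), copies $2,\dots,\ell-1$ contribute all $m$ of their blocks, and copy $\ell$ contributes blocks $1,\dots,i-1$, for a total of $(\ell-1)m$ blocks independent of $i$. Then, for a fixed target position $p\in[(\ell-1)m]$, identify which copy and which block-within-that-copy position $p$ corresponds to as a function of $i$, say copy $c(i,p)$ and block $j(i,p)$. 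The crucial point is that $i\mapsto j(i,p)$ is a cyclic shift of $[m]$ and hence a bijection, so if $I$ is uniform on $[m]$ then $J:=j(I,p)$ is uniform on $[m]$ as well; and conditioned on $I=i$, the block $\wt X_p$ is exactly $X^{(c)}_{j}$, while the preceding blocks $\wt X_{<p}$ consist of an initial segment of $X^{(c)}$ (namely $X^{(c)}_{<j}$, or a suffix of it when $c=1$) together with a collection of blocks coming from the other copies $X^{(r)}$, $r\neq c$, which are independent of $X^{(c)}$.

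With that in hand, the witness is built in the obvious way. Let $C$ be the random variable (jointly distributed with $(J,X_{<J})$, $J\sim[m]$) witnessing that $X$ has $(t,\epsilon)$ next-block pseudoentropy at least $k$, so $H(C\mid J,X_{<J})\geq k/m$ and $(J,X_{\leq J})$ is $(t,\epsilon)$-indistinguishable from $(J,X_{<J},C)$. Define $\wt C_p$ by: on offset $i$, replace the block $X^{(c)}_j$ sitting at position $p$ with a fresh sample of $C$ drawn conditioned on the $X^{(c)}$-prefix appearing in $\wt X_{<p}$, and leave everything else alone. For the entropy side, ``conditioning only reduces entropy'' gives $H(\wt C_p\mid \wt X_{<p})\geq H(\wt C_p\mid \wt X_{<p},I)=\E_{i\sim[m]}\!\left[H\!\left(C\mid J=j(i,p),\,X_{<j(i,p)}\right)\right]$, which by the bijection $i\leftrightarrow J$ and the independence of the other copies equals $\E_{J\sim[m]}[H(C\mid J,X_{<J})]\geq k/m$ (and when $c(i,p)=1$ we condition on a suffix rather than all of $X^{(1)}_{<j}$, which only increases the entropy). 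For the indistinguishability side I would argue by contradiction: given a size-$(t-O(\ell m\log|\mathcal M|))$ quantum circuit $D$ separating $(\wt X_{<p},\wt X_p)$ from $(\wt X_{<p},\wt C_p)$ with advantage exceeding $\ell\epsilon$, build $D'$ for the $X$-game which on input $(j,z_{<j},w)$ computes $i$ and $c$ from $j,p$, samples $O(\ell)$ fresh independent copies of $X$ to populate every block of $\wt X_{<p}$ outside copy $c$, uses the relevant segment of $z_{<j}$ for copy $c$'s prefix, plants $w$ at position $p$, and runs $D$ on the reconstructed pair. Because $j$ uniform on $[m]$ is the same as $i$ uniform on $[m]$, $D'$ perfectly reproduces the two $\wt X$-distributions fed to $D$, so $D'$ wins the $X$-game with advantage exceeding $\epsilon$ using $t$ gates, a contradiction. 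The loss of a factor $\ell$ in the error and of $O(\ell m\log|\mathcal M|)$ gates in the size is exactly the accounting in~\cite{HRV10} for juggling the $\ell$ copies (and in particular for handling the boundary copies $1$ and $\ell$).

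I do not expect any genuine quantum obstacle here: unlike the earlier lemmas there is no need for a quantum min-max theorem or any rewinding, since the reduction invokes the distinguisher once and otherwise only samples and concatenates classical strings. The one thing that requires care, and which I would write out in full, is the combinatorial dictionary between a position $p$ of $\wt X$, the offset $i$, and the pair $(c(i,p),j(i,p))$, together with the verification that the $X$-prefix conditioned on at position $p$ really is (a suffix of) $X^{(c)}_{<j}$ in every case $c\in\{1,\dots,\ell\}$ — this is the step where an off-by-one or a mishandled boundary copy would break the reduction.
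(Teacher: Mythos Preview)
Your approach is correct and matches the paper's: the paper's entire proof of this lemma (together with Lemmas~\ref{lem:hrv53} and~\ref{lem:hrv54}) is the single sentence ``The reductions for all three of these lemmas from~\cite{HRV10} are fully black-box, and so also hold in the quantum setting,'' which is precisely the observation you make and then flesh out in detail. Your sketch is thus a considerably more explicit version of the same argument, correctly identifying that the reduction only samples and concatenates classical strings and invokes the distinguisher once, so no quantum-specific tool is needed.
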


\begin{lemma}[Lemma 5.3 from~\cite{HRV10}]\label{lem:hrv53}
    Let $X$ be a random variable over $\mathcal{M}^m$ where every block of $X$ has $(t,\epsilon)$ next-block quantum pseudo-min-entropy at least $k$. Let $X^a$ refer to $a$ i.i.d. copies of $X$. For every $\kappa > 0$, $X^a$ has $(t',\epsilon')$ next-block quantum pseudo-min-entropy $k'$ where
    \begin{enumerate}
        \item $t' = t - O(ma\log \abs{\mathcal{M}})$
        \item $\epsilon' = a^2 (\epsilon + 2^{-\kappa} + 2^{-ca})$ for a universal constant $c>0$
        \item $k' = ak - O(\log(a\abs{\mathcal{M}})\sqrt{a\kappa})$
    \end{enumerate}
\end{lemma}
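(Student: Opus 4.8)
The plan is to port the classical proof of Lemma 5.3 in~\cite{HRV10} essentially line for line, organising it so that the single computational step is visibly black-box (hence unaffected by passing from classical to quantum distinguishers) while everything else is a purely information-theoretic concentration argument. The two ingredients are a hybrid argument and an asymptotic-equipartition-type bound.

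\emph{First, trading pseudo-min-entropy for genuine min-entropy.} For each block index $i\in[m]$, the hypothesis supplies a random variable $C^{(i)}$, jointly distributed with the prefix $(X^{(1)},\dots,X^{(i-1)})$, whose (average-case) conditional min-entropy given that prefix is at least $k$ and for which $(X^{(<i)},X^{(i)})$ and $(X^{(<i)},C^{(i)})$ are $(t,\epsilon)$-indistinguishable by quantum circuits. Since the $a$ copies making up $X^a$ are i.i.d., I would splice these into one ``ideal'' distribution $\wt C$ over $\mathcal M^{ma}$ by replacing the $ma$ blocks of $X^a$ one at a time, each by the corresponding $C^{(i)}$ conditioned on the already-produced prefix; a hybrid argument over the $ma$ positions --- identical to~\cite{HRV10}, and exploiting that the independent earlier copies give a distinguisher nothing beyond what the definition of $C^{(i)}$ already grants --- shows that $X^a$ and $\wt C$ are $(t',O(ma\,\epsilon))$-indistinguishable with $t'=t-O(ma\log\abs{\mathcal M})$. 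This is the only place quantumness appears, and only through the black-box indistinguishability, so nothing changes from the classical argument.

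\emph{Second, a concentration argument on $\wt C$.} The point of taking $a$ independent copies is precisely that min-entropy has no clean chain rule for average-case conditioning: ``every block has conditional min-entropy $\ge k$'' does not by itself give a good min-entropy lower bound on a long prefix, since rare low-entropy events can compound across blocks. I would condition on a uniformly random block position of $X^a$ and on its prefix, then view the surprisal of that prefix as a sum, over the (at most $a$) already-revealed copies, of bounded contributions --- each at most $m\log\abs{\mathcal M}$, each with conditional expectation controlled by the per-block min-entropy guarantee --- forming a martingale-difference sequence. An Azuma/Hoeffding tail bound then gives that, outside an event of probability $a^2(2^{-\kappa}+2^{-ca})$, the prefix surprisal is at least $ak-O(\log(a\abs{\mathcal M})\sqrt{a\kappa})$; discarding that bad event is the $2^{-\kappa}$-smoothing and leaves a distribution with next-block min-entropy at least $k'=ak-O(\log(a\abs{\mathcal M})\sqrt{a\kappa})$. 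This step is classical and information-theoretic, so it transfers verbatim.

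Composing the two steps gives that $X^a$ is $(t',\epsilon')$-indistinguishable from a distribution with next-block min-entropy at least $k'$, with $\epsilon'=a^2(\epsilon+2^{-\kappa}+2^{-ca})$, as claimed. I expect the concentration step to be the only real obstacle: one must set up the martingale so that it respects the sequential conditioning (both across copies and across blocks within a copy), and track carefully which flavour of conditional min-entropy is in play and how the $a$-fold repetition upgrades a per-block, average-case guarantee into a smooth, whole-string guarantee at the stated slack. The hybrid step and the final composition are routine bookkeeping.
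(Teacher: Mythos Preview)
Your proposal is correct and matches the paper's approach: the paper's entire proof is the single sentence ``The reductions for all three of these lemmas from~\cite{HRV10} are fully black-box, and so also hold in the quantum setting,'' so you have simply unpacked what that black-box reduction looks like (hybrid plus concentration) and observed, as the paper does, that the only computational step is indistinguishability and hence carries over to quantum distinguishers.
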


\begin{lemma}[Lemma 5.4 from~\cite{HRV10}]\label{lem:hrv54}
    There exists an efficient procedure $\Ext \in NC^1$ on input $x\in(\{0,1\}^a)^m$ and $s \in \{0,1\}^a$ which outputs a string of length $y \in \{0,1\}^{a + m(k-k')}$ such that the following holds:
    Let $X$ be a random variable over $(\{0,1\}^a)^m$ such that every block of $X$ has $(t,\epsilon)$ next-bit quantum pseudo-min-entropy $k$, then for all QPT $\A$ running in time $t- m\cdot a^{O(1)}$
    $$\abs{\Pr[\A(\Ext(X,\U_a))\to 1] - \Pr[\A(\U_{a + m(k-k')}) \to 1]} \leq m(\epsilon + 2^{-k'/2})$$
\end{lemma}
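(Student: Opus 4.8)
The plan is to make the extractor of~\cite{HRV10} explicit and then run a block-by-block hybrid argument, isolating the one computational step (which is a black-box reduction, hence quantum-friendly) from the purely information-theoretic step (the leftover hash lemma). Concretely, I would fix a pairwise-independent hash family $\{E_s : \{0,1\}^a \to \{0,1\}^{k-k'}\}_{s\in\{0,1\}^a}$ and define
$$\Ext(x^{(1)},\dots,x^{(m)};\,s) = \big(s,\ E_s(x^{(1)}),\ \dots,\ E_s(x^{(m)})\big),$$
which has output length $a + m(k-k')$ and lies in $NC^1$ (each $E_s(x^{(j)})$ is an affine map over $\FF_2$, and the $m$ of them are computed in parallel). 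Recall that by the leftover hash lemma, for any $Z$ over $\{0,1\}^a$ with (conditional) min-entropy at least $k$ and $s$ uniform, $(s,E_s(Z))$ is within statistical distance $2^{-k'/2}$ of $(s,\U_{k-k'})$, since $k-(k-k')=k'$.

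Next I would set up the hybrid, processing blocks from last to first. Let $G_j$ be the distribution $\big(s,\, E_s(X^{(1)}),\dots,E_s(X^{(m-j)}),\, \U^{(m-j+1)},\dots,\U^{(m)}\big)$, where $s$ is uniform and the superscripted $\U$'s are fresh independent uniform strings of length $k-k'$; then $G_0 = \Ext(X,\U_a)$ and $G_m = \U_{a+m(k-k')}$, so it suffices to bound $\abs{\Pr[\A(G_j)\to 1] - \Pr[\A(G_{j+1})\to 1]}$ by $\epsilon + 2^{-k'/2}$ for every $j$. Writing $i = m-j$, the hybrids $G_j$ and $G_{j+1}$ agree on the seed, on coordinates $1,\dots,i-1$ (both equal $E_s(X^{(1)}),\dots,E_s(X^{(i-1)})$), and on coordinates $i+1,\dots,m$ (both uniform and hence simulatable by the distinguisher); they differ only in coordinate $i$. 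So it is enough to show that $\big(s, E_s(X^{(1)}),\dots, E_s(X^{(i-1)}), E_s(X^{(i)})\big)$ is $(\epsilon+2^{-k'/2})$-indistinguishable from $\big(s, E_s(X^{(1)}),\dots, E_s(X^{(i-1)}), \U_{k-k'}\big)$ against distinguishers of size $t - m\cdot a^{O(1)}$.

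Then I would split that claim into one computational and one information-theoretic move. By the hypothesis, block $X^{(i)}$ has $(t,\epsilon)$ quantum pseudo-min-entropy at least $k$ given $X^{(1)},\dots,X^{(i-1)}$, so there is $Y^{(i)}$, jointly distributed with $X^{(1)},\dots,X^{(i-1)}$, with conditional min-entropy at least $k$ and $(X^{(1)},\dots,X^{(i-1)}, X^{(i)}) \approx_{t,\epsilon} (X^{(1)},\dots,X^{(i-1)}, Y^{(i)})$. Since $(x^{(1)},\dots,x^{(i-1)},z) \mapsto \big(s, E_s(x^{(1)}),\dots,E_s(x^{(i-1)}), E_s(z)\big)$ with $s$ drawn fresh is an efficient randomized map, applying it transports this to
$$\big(s, E_s(X^{(1)}),\dots, E_s(X^{(i-1)}), E_s(X^{(i)})\big) \approx_{t',\epsilon} \big(s, E_s(X^{(1)}),\dots, E_s(X^{(i-1)}), E_s(Y^{(i)})\big),$$
with $t' = t - m\cdot a^{O(1)}$. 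This is the only step where quantumness enters, and it enters purely as a black-box reduction: the distinguisher $\A$ is reused verbatim (wrapped by the efficient simulation of the uniform suffix and the efficient computation of the prefix hash values), so quantum pseudo-min-entropy suffices exactly as classical pseudo-min-entropy would. For the information-theoretic move, condition on $X^{(1)},\dots,X^{(i-1)} = x$: each $E_s(x^{(j)})$ becomes a deterministic function of $s$, so including those coordinates on both sides changes no statistical distance, and what remains is $\Delta\big((s, E_s(Y^{(i)}\mid_x)),\,(s,\U_{k-k'})\big) \le 2^{-k'/2}$ by the leftover hash lemma (using $H_\infty(Y^{(i)}\mid_x)\ge k$, or the average-case version using $\tilde H_\infty(Y^{(i)}\mid X^{(1)},\dots,X^{(i-1)})\ge k$). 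Averaging over $x$ preserves the bound, so each hybrid step costs $\epsilon + 2^{-k'/2}$, and summing the $m$ steps gives $m(\epsilon + 2^{-k'/2})$.

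The main obstacle — and the reason I would process blocks right-to-left rather than left-to-right — is the coupling between blocks: replacing $X^{(i)}$ by $Y^{(i)}$ would in general require (inefficiently) resampling $X^{(i+1)},\dots,X^{(m)}$ from their conditional distribution, so the reduction for coordinate $i$ must touch only $X^{(1)},\dots,X^{(i)}$; right-to-left ordering makes the later coordinates uniform, hence trivially simulatable, at the point where coordinate $i$ is handled. The only other subtlety is that the single seed $s$ is reused across all $m$ blocks; this is harmless because, once the prefix blocks are fixed, the prefix hash values are deterministic functions of the (already-revealed) uniform seed, so the leftover hash lemma still applies with an honestly random seed. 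Apart from these two points, the argument is identical to~\cite{HRV10}, and since the sole computational ingredient is a syntactic reduction, it relativizes to quantum distinguishers without change.
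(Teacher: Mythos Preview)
Your proposal is correct and takes essentially the same approach as the paper: the paper's entire proof is the single observation that the reduction of~\cite{HRV10} is fully black-box and therefore holds against quantum distinguishers, and what you have written is a faithful unpacking of that~\cite{HRV10} argument (hybrid over blocks, pseudo-min-entropy step plus leftover hash lemma step) together with the explicit identification of the one computational move as a syntactic black-box reduction. Your care about the right-to-left ordering and seed reuse matches the original argument and is exactly what makes the reduction go through unchanged in the quantum setting.
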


\begin{proof}
    The reductions for all three of these lemmas from~\cite{HRV10} are fully black-box (in a quantum friendly way), and also hold in the quantum setting.
\end{proof}

\begin{lemma}\label{lem:entropytosd}
    Let $X$ be a random variable with $|X| = m$. If $H(X) \leq m - \delta$ for some $\delta > 0$, then
    $$SD(X,\U_{m}) \geq \frac{\delta}{2m-\delta} - 2^{-\delta/2}$$

    In particular, if $m = O(p(\lambda))$ and $\delta = \Omega(p'(\lambda))$ for some polynomials $p$ and $p'$, then there exists a polynomial $q$ such that 
    $$SD(X, \U_m) =\Omega\left(\frac{1}{q(\lambda)}\right)$$
\end{lemma}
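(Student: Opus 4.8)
The plan is to recast the entropy deficit as a lower bound on the relative entropy of $X$ against $\U_m$, and then argue directly about the likelihood ratio. In particular I would \emph{not} route through a Pinsker-type inequality: those bound $SD$ from \emph{above} by roughly $\sqrt{KL}$, which is the wrong direction, and a reverse-Pinsker bound phrased in terms of the minimum uniform mass $2^{-m}$ is far too lossy here. The conceptual point is that with the uniform reference the quantity $m - H(X)$ is already exactly a relative entropy, and one can convert it to statistical distance by a crude thresholding argument that is nonetheless tight enough.

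Concretely, I would first observe that, writing $p_x := \Pr[X = x]$, the hypothesis says
\[
    KL(X \,\|\, \U_m) \;=\; \sum_x p_x \log\!\big(2^m p_x\big) \;=\; m - H(X) \;\ge\; \delta .
\]
Let $L$ denote the random variable $2^m p_X$, i.e.\ the ratio of the mass $X$ assigns to its own outcome to the uniform mass; then the display is $\E_X[\log L] \ge \delta$, and note $\log L = m + \log p_X \le m$ always since $p_x \le 1$. Next I would fix the threshold $\tau := 2^{\delta/2}$ (we may assume $\delta > 0$, else the claim is trivial) and let $B := \{\, x : p_x \ge \tau\, 2^{-m}\,\}$ be the ``heavy'' outcomes, with $w := \Pr[X \in B]$. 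Since heavy outcomes each carry mass at least $\tau 2^{-m}$ and the total mass is at most $1$, we get $\lvert B\rvert \le 2^m/\tau$; and since $\tau > 1$, every heavy outcome strictly exceeds the uniform mass and hence contributes with a $+$ sign to the statistical distance:
\[
    SD(X, \U_m) \;=\; \sum_{x\,:\,p_x > 2^{-m}} \big(p_x - 2^{-m}\big) \;\ge\; \sum_{x \in B}\big(p_x - 2^{-m}\big) \;=\; w - 2^{-m}\lvert B\rvert \;\ge\; w - 2^{-\delta/2}.
\]

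The one delicate step — and the place where a careless bound loses a constant factor — is lower-bounding $w$. I would split the expectation defining $KL$ according to whether $X$ lands in $B$: on the event $X\in B$ use $\log L \le m$, and on the event $X\notin B$ use $\log L = \log(2^m p_X) < \log\tau = \delta/2$ by definition of $B$ (these latter terms may be negative, which only strengthens the inequality $p_x\log L \le p_x \cdot \tfrac{\delta}{2}$). Crucially keeping the factor $\Pr[X\notin B] = 1-w$ intact rather than bounding it by $1$, this yields
\[
    \delta \;\le\; \E_X[\log L] \;\le\; m\,w \;+\; \tfrac{\delta}{2}\,(1 - w),
\]
which rearranges to $\tfrac{\delta}{2}(1+w) \le m\,w$, i.e.\ $\delta \le w\,(2m-\delta)$; since $\delta \le m - H(X) \le m < 2m$ (using $H(X)\ge 0$) the factor $2m-\delta$ is positive, so $w \ge \delta/(2m-\delta)$. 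Combining with the previous display gives $SD(X,\U_m) \ge \frac{\delta}{2m-\delta} - 2^{-\delta/2}$. For the ``in particular'' clause, in the regime where $\delta$ grows at least like a fixed positive power of $\lambda$ while $m \le \poly(\lambda)$, we have $\frac{\delta}{2m-\delta} \ge \frac{\delta}{2m} = \Omega(1/q(\lambda))$ for a suitable polynomial $q$, whereas $2^{-\delta/2} = \negl(\lambda)$ is negligible compared to this, so the difference is still $\Omega(1/q(\lambda))$.
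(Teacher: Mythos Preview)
Your proof is correct and is essentially the same as the paper's: the paper defines the same heavy set $S=\{x:\Pr[X=x]>2^{-m+\delta/2}\}$, bounds $\lvert S\rvert$ (hence $\Pr[\U_m\in S]\le 2^{-\delta/2}$) the same way, and obtains $\Pr[X\in S]\ge \delta/(2m-\delta)$ by applying Markov's inequality to the nonnegative variable $-\log p_X$, which is exactly your splitting of $\E_X[\log L]$ written in the complementary variable. The KL framing is a pleasant repackaging but the argument is identical.
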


\begin{proof}[Proof of \cref{lem:entropytosd}]
    $$S \coloneqq \{ x : \Pr[X \to x] > 2^{-m + \delta/2}\}$$

    Observe that $1 \geq \sum_{x \in S} \Pr[X \to x] \geq |S| 2^{-m + \delta/2}$ and so $|S| \leq 2^{m - \delta/2}$. Thus,
    $$\Pr[\U \in S] \leq 2^{m - \delta/2  -m} = 2^{-\delta/2}$$
    
    We have
    $$H(X) = \E_{X \to x}[-\log \Pr[X\to x]] \leq m - \delta$$
    By Markov bound, we get
    $$\Pr_{X \to x}[-\log \Pr[X \to x] \geq m - \delta/2] \leq \frac{m - \delta}{m - \delta/2}$$
    and so
    $$\Pr[X \in S] = \Pr_{X \to x}[\Pr[X \to x] > 2^{-m + \delta/2}] \geq 1 - \frac{m - \delta}{m - \delta/2} = \frac{\delta}{2m - \delta}$$

    Putting these two statements together, we have
    $$SD(X,\U) \geq \Pr[X \in S] - \Pr[\U \in S] \geq \frac{\delta}{2m-\delta} - 2^{-\delta/2}$$
\end{proof}

\begin{theorem}[Adapted from Theorem 5.5 from~\cite{HRV10}]
    Let $m(\lambda),\Delta(\lambda)$ be two computable functions such that $\Delta =\Delta(\lambda) \in [1/\poly(\lambda), \lambda]$. Let $\{X\}_{\lambda\in\N}$ be a family of efficiently samplable random variables of length $m(\lambda)$ with next-bit pseudoentropy at least $H(X) + \Delta$. Then there exists a function $\nu^*(\lambda) \leq |X|$ such that there exists a QPT algorithm $D_\nu(1^\lambda)$ outputting a classical string such that
    \begin{enumerate}
        \item $D_{\nu^*}(1^\lambda) \approx \U$
        \item $SD(D_{\nu^*}(1^\lambda),\U) \geq \frac{1}{p(\lambda)}$
    \end{enumerate}
    for some efficiently computable polynomial $p$.
\end{theorem}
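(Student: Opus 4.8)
The plan is to follow the proof of Theorem~5.5 of~\cite{HRV10} almost verbatim, plugging in the quantum versions of the component lemmas stated above (\Cref{lem:hrv52,lem:hrv53,lem:hrv54}), but replacing HRV's concluding argument --- which exhibits the extractor output as the image of a strictly shorter uniform seed --- with a \emph{direct} upper bound on the Shannon entropy of the output, and then converting that entropy deficiency into noticeable statistical distance via~\Cref{lem:entropytosd}. The only information about $X$ that the construction consumes, beyond a sampling circuit, is the number $H(X)$, which it needs to calibrate how many bits to extract. Since $H(X)\le |X|=m(\lambda)\le\poly(\lambda)$, the integer $\nu^\ast := \lfloor H(X)\rfloor\le|X|$ is described by $O(\log\lambda)$ bits; so we let the advice $\nu$ range over $\{0,1,\dots,|X|\}$, take $\nu^\ast$ to be the correct value (the construction tolerates the unit rounding, which only throws away at most one bit of pseudoentropy), and note that the computable function $\Delta(\lambda)$ is handed to us and need not be guessed.

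Given $\nu$, the pipeline is: (i) apply the Equalizer of~\Cref{lem:hrv52} to $\ell=\poly(\lambda)$ i.i.d.\ copies of $X$, obtaining a source \emph{all} of whose blocks have next-block pseudoentropy at least $(\nu+\Delta)/m$; (ii) take further i.i.d.\ copies and re-block so that the $i$-th block gathers the $i$-th bit of all the copies --- a Chernoff/asymptotic-equipartition flattening argument applied blockwise (which is precisely why we first needed ``every block'', hence the Equalizer) upgrades blockwise Shannon pseudoentropy to blockwise pseudo-\emph{min}-entropy at the cost of an $O(\sqrt{a\kappa})$ additive loss, after which~\Cref{lem:hrv53} amortizes the remaining slack; (iii) feed the result, together with a fresh uniform seed, to the extractor $\Ext$ of~\Cref{lem:hrv54}, and output the resulting efficiently samplable distribution $D_\nu(1^\lambda)$ of length $d=d(\nu,\lambda)=\poly(\lambda)$. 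The number of copies is chosen large enough --- but still polynomial, since $\Delta\ge 1/\poly(\lambda)$ --- that the total pseudoentropy-minus-real-entropy gap of the source handed to $\Ext$ is at least $\lambda$.

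For $\nu=\nu^\ast$ every error term in~\Cref{lem:hrv52,lem:hrv53,lem:hrv54} is negligible: $\epsilon=\negl(\lambda)$ since the hypothesis gives $X$ full-strength pseudoentropy, while $m,\ell$ and the copy count are polynomial and $\kappa$ and the extractor parameter $k'$ can be taken super-logarithmic. Hence~\Cref{lem:hrv54} gives $D_{\nu^\ast}(1^\lambda)\approx\U_d$ against all QPT distinguishers, establishing the first conclusion. For the second, since $X$ is efficiently samplable the source fed to $\Ext$ is a deterministic function of $\poly(\lambda)$ i.i.d.\ copies of $X$ and $O(\log\lambda)$ bits of index randomness, so its Shannon entropy is at most $\poly(\lambda)\cdot H(X)+O(\log\lambda)$; adding the seed length, $H(D_{\nu^\ast})\le(\text{seed length})+\poly(\lambda)\cdot H(X)+O(\log\lambda)$, whereas $d$ is calibrated to the \emph{pseudo}-entropy of that same source, so by the choice of the copy count $H(D_{\nu^\ast})\le d-\lambda$. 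Applying~\Cref{lem:entropytosd} with $\delta=\lambda$ and length $d=\poly(\lambda)$ yields an efficiently computable polynomial $p$ with $SD(D_{\nu^\ast}(1^\lambda),\U_d)\ge 1/p(\lambda)$, as required.

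The main obstacle is the parameter bookkeeping: we must drive the computational distinguishing error to $\negl(\lambda)$ while simultaneously preserving a polynomially large \emph{additive} entropy deficiency in $D$, which forces the copy count to scale like $\poly(\lambda)/\Delta$ and requires checking that no intermediate quantity --- the min-entropy surviving the flattening, the extractor's $k'$, the extracted length --- ever comes within a negligible margin of its failure threshold even when $\Delta$ is as small as $1/\poly(\lambda)$. The one place the argument genuinely departs from~\cite{HRV10,VZ12} is this final step: there the output is shown to be both pseudorandom and stretching by writing it as the image of a shorter uniform seed, which is impossible once $X$ is sampled with quantum randomness; here we instead bound $H(D)$ directly and invoke~\Cref{lem:entropytosd} to recover the statistical-distance guarantee, and everything else relativizes because the underlying reductions are black-box.
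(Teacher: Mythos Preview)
Your proposal is correct and follows essentially the same approach as the paper: both apply the HRV pipeline (\Cref{lem:hrv52,lem:hrv53,lem:hrv54}) with advice $\nu^\ast$ encoding $H(X)$, and both replace HRV's ``image of a shorter seed'' conclusion with a direct upper bound on $H(D_{\nu^\ast})$ followed by~\Cref{lem:entropytosd}. The paper is more explicit about parameter choices (e.g.\ $\ell=\lceil 2(\nu^\ast+\Delta+\log m)/\Delta\rceil$, $\kappa=\lambda/2$, $a=\Theta(m^2\kappa\log^2\lambda/\Delta^2)$) and computes the entropy gap as $\Omega(a\nu^\ast)$ rather than $\lambda$, but your sketch identifies the same construction and the same proof structure.
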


The proof of this theorem follows essentially the same lines as the proof from~\cite{HRV10}, but with all references to the input length replaced by the entropy of $X$. This gives a distribution indistinguishable from uniform but with less than full entropy.~\Cref{lem:entropytosd} then gives a bound on the statistical distance. However, this bound is not $1-\negl$. Fortunately, taking the product distribution amplifies statistical distance, so we simply take the direct product of the construction from~\cite{HRV10}.

\begin{proof}

    Without loss of generality, we will assume that the block-length of $X$ is a power of $2$ (otherwise, we can just append $0$s). We have that for all $c > 0$ and sufficiently large $\lambda$, $X$ has $(t=\lambda^c,\epsilon = \lambda^{-c})$ next-bit pseudoentropy $H(X)+\Delta$.  We set $\ell \coloneqq \ceil{2(\nu^* + \Delta + \log m)/\Delta} = \Omega(\nu^*/\Delta)$ in~\Cref{lem:hrv52} and get a new random variable $\wt{X} = Equalizer(I,X^{(1)},\dots, X^{(\ell)})$. We will define $$D_\nu(1^\lambda, 1) \coloneqq \Ext((\wt{X})^a, \U_a)$$
    using the $\Ext$ from~\Cref{lem:hrv54} for some value of $a = \poly(\lambda)$ and with output length $d_\nu \coloneqq a + m(\ell - 1)(k'_\nu - \kappa)$ for $k'_\nu \coloneqq ak_\nu - O(\log(a|\mathcal{M}|)\sqrt{a\kappa})$ and $k_\nu \coloneqq (\nu+\Delta)/m$.

    Observe that when $\nu = \nu^* = H(X)$,~\Cref{lem:hrv52} shows that every bit of $\wt{X}$ has $(t - O(\ell m), \ell \epsilon)$ next-bit quantum pseudoentropy at least  $k_{\nu^*} = (\nu^*+\Delta)/m$.

    Next,~\Cref{lem:hrv53} shows that every block of $(\wt{X})^a$ has $(t - O(m\ell a), a^2(\epsilon + 2^{-\kappa} +2^{-\Omega(a)}))$ next-bit quantum pseudo-min-entropy at least $k'_{\nu^*} = ak_{\nu^*} - O(\sqrt{a\kappa} \log a)$.

    Finally,~\Cref{lem:hrv54} shows that for all QPT $\A$ running in time $t-O(m\ell a^{O(1)}) = t - \poly(\lambda)$, 
    $$\abs{\Pr[\A(D_{\nu^*}(1^\lambda)\to 1] - \Pr[\A(\U)\to 1]} \leq m\ell (a^2(\epsilon + 2^{-\kappa} +2^{-\Omega(a)}) + 2^{-\kappa/2}) $$ 
    $$\leq \poly(\lambda)(\epsilon + 2^{-\kappa/2} + 2^{-\Omega(a)})$$

    Setting $\kappa = \lambda/2$ and using the fact that this holds for all $t = n^c,\epsilon = n^{-c}$, we get that $D_{\nu^*}(1^\lambda)$ and $\U_{d_\nu}$ are indistinguishable.

    It just remains to be shown that 
    $$\Delta(D_{\nu^*}(1^\lambda), \U_{d_{\nu^*}}) \geq \frac{1}{\poly(\lambda)}$$
    We will do this by showing that $H(D_{\nu^*}) \leq H(\U_{d_{\nu^*}}) - \Omega(\poly(\lambda)) = d_{\nu^*} - \Omega(\poly(\lambda))$ and then applying~\Cref{lem:entropytosd}. 

    Observe that when $\nu^* = H(X)$, $H(\wt{X}) \leq \ell H(X) + \log m$. And so $H(\wt{X}^{a}) \leq a (\ell H(X) + \log m)$. It is clear to see that $H(D_{\nu^*}) \leq a(\ell H(X) + \log m) + a$. Let us denote this value by $d' \coloneqq H(D_{\nu^*})$.

    \begin{equation*}
        \begin{split}
            d_{\nu^*} = a + m(\ell - 1)(k'_{\nu^*} - \kappa)\\
            = a + m(\ell - 1)(ak_{\nu^*} - O(\sqrt{a\kappa} \log a) - \kappa)\\
            = a + a(\ell H(X) + \log m) + a \ell \Delta - a(\nu^* + \Delta + \log m) - m(\ell - 1)(O(\sqrt{a\kappa} \log a) + \kappa)\\
            \geq a + a(\ell H(X) + \log m) + a \ell \Delta / 2 - m(\ell - 1)(O(\sqrt{a\kappa} \log a) + \kappa)\\
            \geq d' + a \ell \Delta / 2 - m(\ell - 1)(O(\sqrt{a\kappa} \log a) + \kappa)\\
            \geq d' + a \ell \Delta / 4\\
            = d' + \Omega(a \nu^*)\\
            = d' + \Omega(\poly(\lambda))
        \end{split}
    \end{equation*}

    when
    $$a = \Theta\left(\left(\left(\frac{m(\ell - 1)}{\Delta \ell}\right)^2\kappa \log^2\left(\frac{m(\ell - 1)\kappa}{\Delta \ell}\right)\right)\right) = \Theta\left(\frac{m^2 \kappa \log^2 \lambda}{\Delta^2}\right)$$
\end{proof}

\begin{lemma}[Amplification of statistical distance.]\label{lem:sdamp}
    Let $SD(X,Y) \geq \delta$. Then if $q \geq \frac{12t}{\delta^2}$ $SD(X^q,Y^q) \geq 1 - 2e^{-t}$.
\end{lemma}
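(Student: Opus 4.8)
The plan is to exhibit an explicit (not necessarily efficient) distinguisher between $X^q$ and $Y^q$ achieving advantage $1 - 2e^{-t}$, built from the optimal single-copy test together with a Chernoff-type concentration bound. This is the standard way to amplify statistical distance by taking independent copies.

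First I would invoke the characterization of statistical distance as the advantage of the best event: since $SD(X,Y) \geq \delta$, there is a set $T$ in the common sample space with $\Pr[X \in T] - \Pr[Y \in T] \geq \delta$ (possibly after swapping the roles of $X$ and $Y$, which is harmless since both $SD(X,Y)$ and $SD(X^q,Y^q)$ are symmetric in their two arguments). Write $p = \Pr[X \in T]$ and $p' = \Pr[Y \in T]$, so that $p - p' \geq \delta$.

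Next I would define the distinguisher $A$ on input $(z_1,\dots,z_q)$: let $N = \card{\{i \in [q] : z_i \in T\}}$ count the coordinates landing in $T$, and output $1$ iff $N \geq q(p+p')/2$. On inputs drawn from $X^q$, $N$ is distributed as $\mathsf{Bin}(q,p)$ with mean $qp$, and the threshold $q(p+p')/2 = qp - q(p-p')/2$ lies at least $q\delta/2$ below the mean, so Hoeffding's inequality gives $\Pr[A(X^q) \to 0] \leq \exp(-q\delta^2/2)$. Symmetrically, on inputs drawn from $Y^q$, $N$ is $\mathsf{Bin}(q,p')$ with mean $qp'$, the threshold lies at least $q\delta/2$ above the mean, and $\Pr[A(Y^q) \to 1] \leq \exp(-q\delta^2/2)$. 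Combining, $SD(X^q,Y^q) \geq \Pr[A(X^q) \to 1] - \Pr[A(Y^q) \to 1] \geq 1 - 2\exp(-q\delta^2/2)$, and plugging in $q \geq 12t/\delta^2$ yields $q\delta^2/2 \geq 6t \geq t$, hence $SD(X^q,Y^q) \geq 1 - 2e^{-t}$ as claimed. (In fact $q \geq 2t/\delta^2$ already suffices with this argument, so the constant $12$ is comfortably loose.)

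I do not expect a real obstacle here; it is a routine concentration argument. The only points needing a little care are placing the threshold symmetrically between $qp$ and $qp'$ so that the two tail bounds match, checking that the edge cases $p' = 0$ or $p = 1$ cause no issue (Hoeffding's bound holds for every $p,p' \in [0,1]$), and recording that $SD$ is symmetric so that assuming $\Pr[X \in T] \geq \Pr[Y \in T]$ loses nothing. If one wanted to recover the precise constant $12$ one could instead use a two-sided multiplicative Chernoff bound, but for the bound as stated the additive Hoeffding estimate is already more than enough.
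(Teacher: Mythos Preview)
Your proposal is correct and follows essentially the same approach as the paper: both pick the optimal single-copy distinguishing set, threshold the count of coordinates landing in it at the midpoint $q(p+p')/2$, and apply a concentration bound. The only cosmetic difference is that the paper uses the multiplicative Chernoff bound (which is where the constant $12$ comes from), whereas you use the additive Hoeffding bound, yielding the tighter exponent $q\delta^2/2$; as you already observe, this makes the constant $12$ comfortably loose.
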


\begin{proof}[Proof of \cref{lem:sdamp}]
    Since $SD(X,Y) \geq \delta$, we know there exists a set $S$ such that
    $$\Pr[X\in S] - \Pr[Y \in S] \geq \delta$$
    Define $\alpha \coloneqq \Pr[X \in S]$ and $\beta \coloneqq \Pr[Y \in S]$. We will define a new set
    $$S' \coloneqq \{(x_1,\dots,x_q) : \text{ at least an }\frac{\alpha + \beta}{2}\text{ fraction of }x_i \in S\}.$$
    By the Chernoff bound,
    \begin{equation*}
        \begin{split}
            \Pr[X^q \notin S'] \leq e^{-\left(1 - \frac{\alpha+\beta}{2\alpha}\right)^2 \alpha q/2}\\
            \leq e^{-q\frac{(\alpha-\beta)^2}{8\alpha}}\\
            \leq e^{-q\frac{\delta^2}{8}}
        \end{split}
    \end{equation*}
    Similarly,
    \begin{equation*}
    \begin{split}
        \Pr[Y^q \in S] \leq  e^{-\left(\frac{\alpha - \beta}{2\beta}\right)^2 \beta q/3}\\
        = e^{-q\frac{(\alpha-\beta)^2}{12 \beta}}\\
        \leq e^{-q\frac{\delta^2}{12}}
    \end{split}
    \end{equation*}

    So if $q \geq \frac{12t}{\delta^2}$,
    $$SD(X^q,Y^q) \geq 1 - 2e^{-q\frac{\delta^2}{12}} = 1 - 2e^{-t}$$
\end{proof}

\begin{corollary}\label{cor:nbpetoqefid}
    Let $\Delta =\Delta(n) \in [1/\poly(n), n]$ and let $\{X\}_{n\in\N}$ be a family of efficiently samplable random variables of length $m$ with next-bit pseudoentropy at least $H(X) + \Delta$. Then there exists a function $\nu^*(\lambda) \leq |X|$ such that there exists a non-uniform $\nu^*$-EFID.
\end{corollary}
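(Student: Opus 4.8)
The plan is to feed the family $\{X\}$ into the adapted version of Theorem 5.5 from~\cite{HRV10} stated immediately above, and then boost the noticeable statistical distance it produces up to $1-\negl(\lambda)$ using the parallel-repetition bound of~\Cref{lem:sdamp}.

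First I would observe that the hypothesis of the corollary is exactly what the preceding theorem requires: $\Delta = \Delta(n) \in [1/\poly(n), n]$, and $\{X\}$ is efficiently samplable, of length $m$, with next-bit pseudoentropy at least $H(X) + \Delta$. So applying that theorem yields a value $\nu^*(\lambda) \le |X| = m$ (morally $\nu^* = H(X)$) together with a QPT algorithm $D_\nu(1^\lambda)$ such that $D_{\nu^*}(1^\lambda)$ is computationally indistinguishable from the uniform distribution $\U_{d_{\nu^*}}$ of the appropriate length, while $SD(D_{\nu^*}(1^\lambda), \U_{d_{\nu^*}}) \ge 1/p(\lambda)$ for some fixed efficiently computable polynomial $p$. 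Since $\nu^* \le m = \poly(\lambda)$, it is an integer describable with $O(\log\lambda)$ bits, hence a legitimate non-uniform advice string.

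Next I would define the candidate non-uniform EFID pair by
\[
G_{\nu^*}(1^\lambda, 0) := \big(\U_{d_{\nu^*}}\big)^{\otimes q}, \qquad G_{\nu^*}(1^\lambda, 1) := \big(D_{\nu^*}(1^\lambda)\big)^{\otimes q},
\]
where $q = q(\lambda) := \lceil 12\lambda\, p(\lambda)^2 \rceil = \poly(\lambda)$; both branches are sampled by a QPT algorithm (run $D_{\nu^*}$, respectively sample fresh uniform bits, $q$ times in parallel). The statistically-far requirement would then follow from~\Cref{lem:sdamp} with $\delta = 1/p(\lambda)$ and $t = \lambda$: the choice of $q$ ensures $q \ge 12t/\delta^2$, so $SD(G_{\nu^*}(1^\lambda,0), G_{\nu^*}(1^\lambda,1)) \ge 1 - 2e^{-\lambda} = 1-\negl(\lambda)$, i.e.\ the pair is $\epsilon$-statistically-far with $\epsilon = 2e^{-\lambda}$. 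For computational indistinguishability I would run the standard block-wise hybrid argument: letting $H_j$ output $j$ independent copies of $\U_{d_{\nu^*}}$ followed by $q-j$ copies of $D_{\nu^*}(1^\lambda)$, a non-uniform QPT distinguisher with advantage $1/\poly(\lambda)$ between $H_0$ and $H_q$ yields, for some $j$, advantage $1/(q\cdot\poly(\lambda)) = 1/\poly(\lambda)$ between $H_j$ and $H_{j+1}$; an adversary that samples the remaining $q-1$ blocks itself (placing a challenge in block $j+1$) then breaks the indistinguishability of $D_{\nu^*}(1^\lambda)$ from $\U_{d_{\nu^*}}$, a contradiction. This reduction is black-box and so applies verbatim against quantum adversaries.

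I do not expect a real obstacle here: given~\Cref{lem:sdamp} and the preceding theorem this is essentially a two-line argument. The only points that warrant a sentence of care are that $p$ is a genuine fixed polynomial — so that $q$ is polynomial and the hybrid adversary can sample its extra blocks efficiently — and that $\nu^*$ fits in $O(\log\lambda)$ bits of advice, both of which are supplied directly by the theorem being invoked.
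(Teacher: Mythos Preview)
Your proposal is correct and follows essentially the same route as the paper: invoke the adapted HRV theorem to get $D_{\nu^*}$, set $G_{\nu^*}(1^\lambda,0)$ to be $q$ copies of uniform and $G_{\nu^*}(1^\lambda,1)$ to be $q$ copies of $D_{\nu^*}$, then use \Cref{lem:sdamp} for the statistical gap and a hybrid for indistinguishability. If anything, you are slightly more careful than the paper --- your choice $q=\lceil 12\lambda\,p(\lambda)^2\rceil$ actually matches the hypothesis $q\ge 12t/\delta^2$ of \Cref{lem:sdamp}, whereas the paper's $q=12\lambda\,p(\lambda)$ appears to be a typo.
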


\begin{proof}
    Define $G_\nu(1^\lambda,0)$ to be uniform over $q(\lambda) \cdot d_s$ bits for some $q=\poly(\lambda)$ to be set later. Define $G_\nu(1^\lambda,1) = D_{\nu^*} (1^\lambda)^{q(\lambda)}$.

    Since $D_{\nu^*}(1^\lambda) \approx \U_{d_\nu}$, $D_{\nu^*}(1^\lambda)^{q(\lambda)} \approx \U_{q(\lambda)\cdot d_\nu}$.

    Since $SD(D_{\nu^*}(1^\lambda),\U_{d_\nu}) \geq \frac{1}{p(\lambda)}$, if we define $q(\lambda) = 12\lambda p(\lambda)$, then by~\Cref{lem:sdamp}
    $$SD(G_\nu(1^\lambda,0), G_\nu(1^\lambda, 1)) \geq 1 - 2e^{-\lambda} = 1 -\negl(\lambda)$$
\end{proof}

\begin{corollary}\label{cor:owptoqefidfinal}
    If, for some $c>0$, there exists a $(\negl(\lambda), 1-\lambda^{-c})$ one-way puzzle $(\Samp,\Ver)$, then there exists a $\nu^*$-non-uniform EFID pair with $\nu^*\leq m + n$.
\end{corollary}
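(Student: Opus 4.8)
The plan is to obtain \Cref{cor:owptoqefidfinal} as a short composition: feed the weak one-way puzzle into \Cref{thm:owptonbpe} to get a family whose next-bit pseudoentropy exceeds its Shannon entropy by a noticeable amount, and then feed that family into \Cref{cor:nbpetoqefid}. The only genuine work is checking that the numerical hypotheses line up.

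First I would instantiate \Cref{thm:owptonbpe} with the given $(\negl(\lambda),1-\lambda^{-c})$ one-way puzzle, so in that theorem's notation $\omega = \negl(\lambda)$ and $\gamma = 1-\lambda^{-c}$, hence $1-\gamma = \lambda^{-c}$ and
\[
\delta = (1-\omega)\log\frac{1-\omega}{1-\lambda^{-c}} + \omega\log(\omega\lambda^{c}).
\]
The quantitative point to verify is $\delta \geq 1/\poly(\lambda)$ for all large $\lambda$: since $\omega$ is negligible, $1-\omega > 1-\lambda^{-c}$, so $\frac{1-\omega}{1-\lambda^{-c}} = 1 + \frac{\lambda^{-c}-\omega}{1-\lambda^{-c}} \geq 1 + \tfrac12\lambda^{-c}$ eventually, which makes the first term $\Omega(\lambda^{-c})$; and the second term has absolute value at most $\omega\log(1/\omega) + \omega\, c\log\lambda = \negl(\lambda)$. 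Thus $\delta \geq \tfrac{1}{16}\lambda^{-c}$, say, for large $\lambda$. I would then set $\Delta(\lambda) \coloneqq \lambda^{-(c+1)}$, which is computable and lies in $[1/\poly(\lambda),\lambda]$, and note $\delta - \Delta \geq \tfrac{1}{32}\lambda^{-c}$.

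Next I would promote the $(t',\epsilon/m)$-guarantee of \Cref{thm:owptonbpe} to the $(1/\negl,\negl)$ form that \Cref{cor:nbpetoqefid} expects. That theorem says: for every $\epsilon>0$, $X \coloneqq \Samp(1^\lambda) = (k,s)$ — which is efficiently samplable and has length $n+m$ (key length plus puzzle length) — has $(t',\epsilon/m)$ next-bit pseudoentropy at least $H(k,s)+\delta-\epsilon$, with $t' = t^{\Omega(1)}/\poly(\lambda,1/\epsilon)$ for any time bound $t$ against which the puzzle is secure. Given a target $c'$, choose $\epsilon = \min(\delta-\Delta,\, m\lambda^{-c'})$ (an inverse polynomial, since $\delta - \Delta \geq 1/\poly$), so that $\delta - \epsilon \geq \Delta$ and $\epsilon/m \leq \lambda^{-c'}$, and then take $t$ a large enough polynomial so that $t' \geq \lambda^{c'}$; this is legitimate because one-way puzzle security holds against all polynomial-time non-uniform adversaries. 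As $c'$ was arbitrary, $X$ has next-bit pseudoentropy at least $H(X)+\Delta$ in the full $(1/\negl,\negl)$ sense. Applying \Cref{cor:nbpetoqefid} to the family $\{X\}_{\lambda\in\N}$ with this $\Delta$ then yields a function $\nu^*(\lambda)\leq |X| = n+m$ and a non-uniform $\nu^*$-EFID pair, which is exactly the claim. (Internally \Cref{cor:nbpetoqefid} runs the HRV-style construction to get $D_{\nu^*}(1^\lambda)$ computationally indistinguishable from uniform but with inverse-polynomial statistical distance from it, then uses \Cref{lem:sdamp} to boost the statistical distance to $1-\negl(\lambda)$ by parallel repetition, with a hybrid argument preserving indistinguishability.)

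I expect the main obstacle to be bookkeeping rather than any new idea: juggling the three quantifiers ($\epsilon$, the adversary time bound $t$, and $\lambda$) so that the $(t',\epsilon/m)$-guarantee of \Cref{thm:owptonbpe} genuinely delivers the $(1/\negl,\negl)$ next-bit pseudoentropy that \Cref{cor:nbpetoqefid} demands, together with the elementary but slightly delicate estimate that $\delta = KL(\Bernoulli(1-\omega)\,||\,\Bernoulli(\gamma)) \geq 1/\poly(\lambda)$ in the regime $\omega=\negl(\lambda)$, $\gamma = 1-\lambda^{-c}$ — this is precisely where the argument uses that the security error is bounded away from $1$ by an inverse polynomial. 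Neither step is deep, but both are where a careless argument would break.
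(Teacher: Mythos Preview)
Your proposal is correct and follows essentially the same route as the paper: apply \Cref{thm:owptonbpe}, lower-bound the KL gap $\delta$ by an inverse polynomial using $\omega=\negl(\lambda)$ and $\gamma=1-\lambda^{-c}$, upgrade the $(t',\epsilon/m)$ guarantee to full next-bit pseudoentropy by choosing $\epsilon$ and $t$ appropriately for each target polynomial, and then invoke \Cref{cor:nbpetoqefid}. Your estimate $\delta=\Omega(\lambda^{-c})$ is in fact a bit sharper than the paper's $\delta\geq\lambda^{-(c+1)}$, and your quantifier bookkeeping is spelled out more explicitly, but the structure and key ideas are identical.
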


\begin{proof}
    Let $\omega = \negl(\lambda)$ and $\gamma = 1-\lambda^{-c}$ be the correctness and security parameters of $(\Samp,\Ver)$, and let $n(\lambda)$ be the length of the key and let $m(\lambda)$ be the length of the puzzle. For all $t = \poly(\lambda)$, $\epsilon>0$, by~\Cref{thm:owptonbpe}, $\Samp \to (k,s)$ has $(t^{\Omega(1)}/\poly(\lambda,1/\epsilon),\epsilon/m)$ quantum next-bit pseudoentropy at least $H(k,s) + \delta - \epsilon$ for $$\delta = (1-\omega)\log\frac{1-\omega}{\gamma} + \omega\log\frac{\omega}{1-\gamma}.$$ But observe,
    \begin{equation*}
        \begin{split}
            (1-\omega)\log\frac{1-\omega}{\gamma} + \omega\log\frac{\omega}{1-\gamma}\\
            \geq \frac{1}{2}\log \frac{1}{1-\lambda^{-c}} + (1-\omega)\log (1-\omega) + \omega \log \omega + \omega\log \lambda^c\\
            \geq \frac{1}{2}\log \frac{1}{1-\lambda^{-c}} - \negl(\lambda)
            \geq \lambda^{-(c+1)}
        \end{split}
    \end{equation*}
    for all sufficiently large $\lambda$.

    Thus, for all sufficiently large $d$ such that $n^{-d}\cdot m \leq \lambda^{-(c+1)}/2$, for all $t = \poly(\lambda)$, $\Samp \to (k,s)$ has $(t^{\Omega(1)}/\poly(\lambda,\lambda^d), n^{-d})$ quantum next-bit pseudoentropy at least $H(k,s) + \frac{1}{2}\lambda^{-(c+1)}$. Adjusting the value of $t$ shows us that $\Samp\to (k,s)$ has quantum next-bit pseudoentropy at least $H(k,s) + \frac{1}{2}\lambda^{-(c+1)}$.

    By~\Cref{cor:nbpetoqefid} there exists a $\nu^*$-EFID for $\nu^*\leq m + n$.
\end{proof}

\begin{corollary}\label{cor:weakowptoefi}
    If, for some $c>0$, there exists a $(\negl(\lambda), 1-\lambda^{-c})$ one-way puzzle $(\Samp,\Ver)$, then there exists an EFI pair.
\end{corollary}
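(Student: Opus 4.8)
The plan is to combine our construction of non-uniform EFID pairs from weak one-way puzzles (\Cref{cor:owptoqefidfinal}) with a combiner for EFI pairs, following the remark sketched in the technical overview.

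First I would invoke \Cref{cor:owptoqefidfinal}: since we are handed a $(\negl(\lambda),1-\lambda^{-c})$ one-way puzzle, there exists a $\nu^*$-non-uniform EFID pair $G_\nu(1^\lambda,b)$ whose advice parameter satisfies $\nu^*\leq m(\lambda)+n(\lambda)$, where $m$ is the puzzle length and $n$ the key length. The point is that $m+n=\poly(\lambda)$, so $\nu^*$ is an integer drawn from a set of size $\poly(\lambda)$, i.e.\ it is encodable in $O(\log\lambda)$ bits. For each admissible value $\nu\in\{0,1,\dots,m+n\}$, the pair $(G_\nu(1^\lambda,0),G_\nu(1^\lambda,1))$ is an EFI pair \emph{candidate}: a classical distribution is in particular a (diagonal) mixed state, statistical distance specializes to trace distance, and computational indistinguishability is unchanged, so any EFID pair — even with quantum output — is in particular an EFI pair. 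By the guarantee of \Cref{cor:owptoqefidfinal}, the candidate indexed by $\nu=\nu^*$ is a genuine EFI pair (statistically far and computationally indistinguishable), although we do not know which value $\nu^*$ takes.

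Next I would apply the robust EFI combiner of \cite{HKNY23} to these $\poly(\lambda)$ candidates. A robust combiner turns a list of EFI pair candidates, at least one of which is genuine, into a single genuine EFI pair; since the number of candidates is polynomial, it suffices to iterate the two-candidate combiner a polynomial number of times (or to invoke a many-candidate version directly), incurring only polynomial loss at each step so that the statistical distance remains $1-\negl(\lambda)$ and indistinguishability is preserved. The result is an EFI pair with no advice, which would complete the proof.

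The substantive work has already been done in establishing \Cref{cor:owptoqefidfinal}, so the hard part here is essentially behind us; the only points to verify are (i) that the advice length is genuinely $O(\log\lambda)$, which is immediate from $\nu^*\leq m+n=\poly(\lambda)$, and (ii) that EFI combiners compose over polynomially many candidates without degrading the parameters, which follows from the quantitative statements in \cite{HKNY23}. I do not expect a real obstacle: this is precisely the step that \cite{khurana2024commitments} had to route around — by strengthening non-uniform EFID to imbalanced EFID — because EFI combiners were not yet available, and with \cite{HKNY23} in hand it becomes routine.
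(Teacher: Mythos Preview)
Your proposal is correct and matches the paper's own proof: invoke \Cref{cor:owptoqefidfinal} to obtain a non-uniform EFID pair with $\poly(\lambda)$ many possible advice values, view each as an EFI candidate, and apply the robust combiner of \cite{HKNY23} (via the equivalence with quantum bit commitments \cite{BCQ22}) across all candidates to eliminate the advice. The paper additionally remarks that the \cite{HKNY23} combiner operates separately on each security parameter, which is why combining over a $\lambda$-dependent index set is legitimate; you might make this point explicit, but otherwise the argument is the same.
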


\begin{proof}
    Prior work shows that there exists a quantum combiner for quantum bit commitments~\cite{HKNY23}, which are equivalent to EFI pairs~\cite{BCQ22}. We observe that this combiner (when composed with the construction of EFI pairs from commitments) operates separately on each security parameter. Thus, given a non-uniform EFID pair, we can apply the combiner from~\cite{HKNY23} to the non-uniform construction instantiated with each possible value of the advice. This process maintains security, but produces a quantum output. Thus, this process takes a non-uniform EFID pair and produces an EFI pair.

    Since~\Cref{cor:owptoqefidfinal} shows that weak one-way puzzles can be used to build a non-uniform EFID pair, composing that construction with this approach produces an EFI pair from any weak one-way puzzle.
\end{proof}


\subsection{QEFID imply $\OWP$}
\begin{definition}[$\nu^*$-non-uniform $\OWP$]
    Let $\nu^*(\lambda)$ be some function. A $\nu^*$-non-uniform one way puzzle ($\OWP$) is a pair of a sampling algorithm and a verification function $(\Samp_\nu,\Ver_\nu)$ taking in advice $\nu$ with the following syntax:
    \begin{enumerate}
        \item $\Samp_\nu(1^{\lambda}) \to (k,s)$ is a uniform QPT algorithm which outputs a pair of classical strings $(k,s)$. We refer to $s$ as the puzzle and $k$ as the key. Without loss of generality, we can assume $k \in \{0,1\}^\lambda$.
        \item $\Ver_\nu(k,s) \to b$ is a function which takes in a key and puzzle and outputs a bit $b \in \{0,1\}$.
    \end{enumerate}
    satisfying the following properties:\\
    \begin{enumerate}
        \item Correctness: Outputs of the sampler pass verification with overwhelming probability
        $$\Pr_{\Samp_{\nu^*}(1^{\lambda})\to (k,s)}[\Ver_{\nu^*}(k,s)\to 1] \geq 1 - \alpha$$
        \item Security: Given a puzzle $s$, it is computationally infeasible to find a key $s$ which verifies. That is, for all non-uniform QPT algorithms $\A$,
        $$\Pr_{\Samp_{\nu^*}(1^{\lambda}) \to (k,s)}[\Ver_{\nu^*}(\A(s),s) \to 1] \leq \beta$$
    \end{enumerate}

    That is, a non-uniform one-way puzzle is a one-way puzzle for which correctness and security are only guaranteed to hold when given the correct advice.
\end{definition}

\begin{lemma}[From QEFID to $\OWP$]\label{lem:qefid-to-owp}
    If there exists QEFID pair $G$, then there exists a $\OWP$ $(\Samp,\Ver)$.
\end{lemma}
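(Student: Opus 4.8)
The plan is to take the key to be a uniformly random string $k=(b_1,\dots,b_\lambda)\in\{0,1\}^\lambda$ and the puzzle to be $s=(x_1,\dots,x_\lambda)$ where each $x_i\leftarrow G(1^\lambda,b_i)$ is an independent QEFID sample; verification is $\Ver(k,s)=1$ iff $k_i=b^*(x_i)$ for all $i$, where $b^*(x):=\mathbbm{1}[\,\Pr[G(1^\lambda,1)=x]>\Pr[G(1^\lambda,0)=x]\,]$ is the (inefficient, but well-defined) maximum-likelihood decoder with ties broken toward $0$. This $\Ver$ is uncomputable in general, which the $\OWP$ definition permits. I would first note that we may assume without loss of generality, by replacing $G$ with a parallel repetition $G^{\otimes q}$ for a suitable $q=\poly(\lambda)$ and invoking \Cref{lem:sdamp}, that $\Delta(G(1^\lambda,0),G(1^\lambda,1))\ge 1-\negl(\lambda)$ (computational indistinguishability is preserved by a standard hybrid); write $\epsilon:=1-\Delta(G(1^\lambda,0),G(1^\lambda,1))=\negl(\lambda)$ and recall that $\epsilon=\sum_x\min(\Pr[G_0=x],\Pr[G_1=x])$.

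For correctness, $\Ver(k,s)$ rejects only if $b_i\ne b^*(x_i)$ for some $i$. Since $x_i$ is drawn from $G(1^\lambda,b_i)$, a short calculation using the identity for $\epsilon$ above gives $\Pr[b^*(x_i)\ne b_i]\le\epsilon$ in each of the cases $b_i\in\{0,1\}$, so a union bound over $i\in[\lambda]$ bounds the rejection probability by $\lambda\epsilon=\negl(\lambda)$.

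For security, fix a non-uniform QPT adversary $\A$ (we use that QEFID indistinguishability holds against non-uniform distinguishers). By construction $\A$ wins exactly when $\A(s)=(b^*(x_1),\dots,b^*(x_\lambda))$, and since $b^*(x_i)=b_i$ except with probability $\le\epsilon$ for each $i$, we get $\Pr[\Ver(\A(s),s)\to 1]\le\Pr[\A(s)=k]+\lambda\epsilon$. I would bound $\Pr[\A(s)=k]$ by a hybrid argument: let $H_j$ be the experiment that samples $k$ uniformly, sets $x_i\leftarrow G(1^\lambda,0)$ for $i\le j$ and $x_i\leftarrow G(1^\lambda,b_i)$ for $i>j$, and outputs $[\A(x_1,\dots,x_\lambda)=k]$. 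Then $H_0$ is the real game and in $H_\lambda$ the puzzle is independent of $k$, so $\Pr_{H_\lambda}[\A=k]=2^{-\lambda}$. The hybrids $H_j$ and $H_{j+1}$ are identical conditioned on $b_{j+1}=0$, and conditioned on $b_{j+1}=1$ they differ only in whether $x_{j+1}$ is $G_1$ or $G_0$; embedding a fresh QEFID challenge at coordinate $j+1$ with $b_{j+1}$ hard-wired to $1$, simulating everything else, running $\A$, and outputting $1$ iff $\A$'s output equals the simulated key gives a non-uniform QPT distinguisher of advantage $|\Pr_{H_j}[\A=k\mid b_{j+1}=1]-\Pr_{H_{j+1}}[\A=k\mid b_{j+1}=1]|$, hence $|\Pr_{H_j}[\A=k]-\Pr_{H_{j+1}}[\A=k]|\le\delta(\lambda)/2$ for $\delta$ the QEFID distinguishing advantage. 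Telescoping, $\Pr[\A(s)=k]\le 2^{-\lambda}+\lambda\delta(\lambda)/2=\negl(\lambda)$, so $\Pr[\Ver(\A(s),s)\to 1]\le 2^{-\lambda}+\lambda\delta(\lambda)/2+\lambda\epsilon=\negl(\lambda)$; as this holds for every polynomial bound, $(\Samp,\Ver)$ is a strong $\OWP$.

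The main obstacle — and the reason a single QEFID sample is insufficient — is that a one-bit key can always be guessed with probability $\tfrac12$, so we must work with $\lambda$ independent copies and prove direct-product hardness even though $\Ver$ is uncomputable (unlike the one-way-function case, we cannot efficiently recognize a good partial guess). The delicate point in the hybrid argument is that each step may cost only $\delta/2$ rather than getting stuck at the $\tfrac12$ guessing baseline: this works precisely because consecutive hybrids coincide unless the embedded bit is $1$, and because the distinguisher tests $\A$'s output against the \emph{actually sampled} key rather than trying to predict a bit itself. A secondary point to be handled carefully is that $\Ver$ targets the function $b^*$ of the puzzle alone, not the sampled key $k$; the statistical-distance bound above reconciles the two.
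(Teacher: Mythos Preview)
Your construction is the same as the paper's: sample $k\in\{0,1\}^\lambda$ uniformly, set $s_i\leftarrow G(1^\lambda,k_i)$, and let $\Ver$ check each coordinate against the optimal statistical distinguisher (the paper writes this as $T^*=\argmax_T\bigl(\Pr[T(G_1)=1]-\Pr[T(G_0)=1]\bigr)$, which is exactly your maximum-likelihood $b^*$). Correctness is argued identically via the statistical-distance bound, and both proofs finish security by first reducing $\Pr[\Ver(\A(s),s)=1]$ to $\Pr[\A(s)=k]$ plus a negligible term coming from coordinates where $T^*(s_i)\neq k_i$.

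The genuine difference is in bounding $\Pr[\A(s)=k]$. The paper invokes the quantum parallel-repetition theorem of Bostanci et al.\ (their Theorem~4.1) to get $\Pr[\A(s)=k]\le(\tfrac12+\negl)^\lambda$ directly. You instead give an elementary hybrid argument: replace the $G_{b_i}$ samples by $G_0$ samples one coordinate at a time, observe that consecutive hybrids coincide when $b_{j+1}=0$, and on the event $b_{j+1}=1$ embed the QEFID challenge while the reduction knows the full key $k$ (with $b_{j+1}$ hard-wired to $1$) and can therefore efficiently test $[\A(x)=k]$. This yields $|\Pr_{H_j}-\Pr_{H_{j+1}}|\le\delta/2$ and hence $\Pr[\A(s)=k]\le 2^{-\lambda}+\lambda\delta/2$. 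Your route is self-contained and avoids importing the three-message parallel-repetition machinery; the paper's route is shorter once that theorem is available. Your preliminary amplification step via \Cref{lem:sdamp} is an extra robustness hedge (the paper's EFID definition already builds in $1-\negl$ statistical distance, so the paper does not need it), but it does no harm.
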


\begin{proof}[Proof of \cref{lem:qefid-to-owp}]
    We will define the $\OWP$ as follows
    \begin{enumerate}
        \item $\Samp(1^\lambda)$: Sample $k$ uniformly at random from $\{0,1\}^\lambda$. For each $i\in[\lambda]$ sample $s_i$ from $G(1^\lambda, k_i)$. Output $(k,(s_1,\dots,s_\lambda))$.
        \item $\Ver(1^\lambda,k,s)$: Set $$T^* = \argmax_{T:\{0,1\}^m\to \{0,1\}}\left(\Pr[T(G(1^\lambda, 1)) \to 1] - \Pr[T(G(1^\lambda, 0)) \to 1\right)$$
        For each $i \in [\lambda]$, check if $T^*(s_i) = k_i$. If all tests pass, output $1$. Otherwise, output $0$.
    \end{enumerate}

    To prove correctness, let us observe that since $SD(G(1^\lambda,0), G(1^\lambda, 1)) \geq 1-\negl(\lambda)$, there exists a $T^*$ such that $$\left(\Pr[T^*(G(1^\lambda, 1)) \to 1] - \Pr[T^*(G(1^\lambda, 0)) \to 1\right) \geq 1-\negl(\lambda).$$
    In particular, for any such $T^*$, we have $\Pr[T^*(G(1^\lambda,1)\to 1] \geq 1-\negl(\lambda)$.
    Thus, $\Pr[\Ver(1^\lambda,\Samp(1^\lambda))\to 1] \geq (1-\negl(\lambda))^\lambda \geq 1-\negl(\lambda)$.

    To prove security, we observe that the QEFID game is a three-round quantum interactive protocol. Thus, we can apply quantum amplification (Theorem 4.1 from~\cite{bostanci2023efficient}) to see that for all non-uniform QPT $\A$,
    \begin{equation*}
        \begin{split}
            \Pr_{\Samp(1^\lambda)\to(k,(s_1,\dots,s_\lambda))}[\A(s_1,\dots,s_\lambda) = k]\\
            \leq \left(\Pr_{\{0,1\}\to b,G(1^\lambda,b)\to s}[\A(s) \to b]\right)^\lambda\\
            \leq \left(\frac{1}{2} + \negl(\lambda)\right)^\lambda\\
            \leq \negl(\lambda).
        \end{split}
    \end{equation*}
    To conclude, we observe that verification simply checks whether $k$ is equal to the output of $T^*$ on all of $(s_1,\dots,s_\lambda)$. Thus, the only way to invert the one-way puzzle is to output the key used for generation. Formally,
    \begin{equation*}
        \begin{split}
            \Pr_{\Samp(1^\lambda)\to(k,(s_1,\dots,s_\lambda))}[\Ver(\A(s_1,\dots,s_\lambda),(s_1,\dots,s_\lambda))]\\
            \leq \Pr_{\Samp(1^\lambda)\to(k,(s_1,\dots,s_\lambda))}[\A(s_1,\dots,s_\lambda) = k]\\ + \Pr_{\Samp \to (k,(s_1,\dots,s_\lambda))}[\Ver(k',s) \to 1 | \A(s_1,\dots,s_\lambda) \to k' \neq k]\\
            \leq \negl(\lambda) + \Pr_{\Samp \to (k,(s_1,\dots,s_\lambda))}[\text{ there exists an index }i\text{ such that }T^*(s_i) \neq k_i]\\
            \leq \negl(\lambda) + \lambda \Pr_{\{0,1\}\to b,G(b) \to s}[T^*(s) \neq b]\\
            \leq \negl(\lambda) + \lambda \negl(\lambda)
            =\negl(\lambda)
        \end{split}
    \end{equation*}
\end{proof}

We observe that the same argument also works relative to an advice string, and so we have

\begin{lemma}[From non-uniform QEFID to non-uniform $\OWP$]
    Let $\nu^*(\lambda)$ be some function. If there exists a $\nu^*$-non-uniform QEFID pair $G_\nu$, then there exists a $\nu^*$-non-uniform $\OWP$ $(\Samp_\nu,\Ver_\nu)$.
\end{lemma}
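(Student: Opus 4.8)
The plan is to observe that the construction in the proof of \Cref{lem:qefid-to-owp} is entirely generic in the QEFID it uses: it invokes $G$ only as a black box, and the analysis appeals only to the two defining properties of a QEFID pair (statistical farness and computational indistinguishability). So I would simply reproduce that construction with the advice string $\nu$ threaded through every algorithm. Concretely, define $\Samp_\nu(1^\lambda)$ to sample $k \randfrom \{0,1\}^\lambda$, sample $s_i$ from $G_\nu(1^\lambda, k_i)$ for each $i \in [\lambda]$, and output $(k, (s_1, \dots, s_\lambda))$; and define $\Ver_\nu(1^\lambda, k, s)$ to compute the optimal distinguisher $T^*$ between $G_\nu(1^\lambda, 0)$ and $G_\nu(1^\lambda, 1)$ (an inefficient step, which is allowed since $\Ver$ may be an arbitrary function), and accept iff $T^*(s_i) = k_i$ for all $i$. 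Both $\Samp_\nu$ and $\Ver_\nu$ depend on $\nu$ only through their black-box use of $G_\nu$.

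Next I would verify that, when the correct advice $\nu^*$ is supplied, every inequality from the proof of \Cref{lem:qefid-to-owp} goes through unchanged. For correctness: the QEFID guarantee $SD(G_{\nu^*}(1^\lambda,0), G_{\nu^*}(1^\lambda,1)) \geq 1 - \negl(\lambda)$ forces the optimal distinguisher $T^*$ to satisfy $\Pr[T^*(G_{\nu^*}(1^\lambda, b)) \to b] \geq 1 - \negl(\lambda)$ for each $b$, and a union bound over the $\lambda$ coordinates gives $\Pr[\Ver_{\nu^*}(1^\lambda, \Samp_{\nu^*}(1^\lambda)) \to 1] \geq 1 - \negl(\lambda)$. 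For security: the QEFID security game with the advice $\nu^*$ hardwired is still a three-round quantum interactive protocol whose verifier is efficient (the advice is a fixed string, not something the adversary chooses), so Theorem 4.1 of~\cite{bostanci2023efficient} still applies and shows that no non-uniform QPT $\A$ recovers $k$ from $(s_1, \dots, s_\lambda)$ with non-negligible probability; combined with the observation that any accepting key must agree with $k$ on every $T^*$-image, this bounds the inversion probability by $\negl(\lambda)$ exactly as before.

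The one point to be careful about — and it is a minor one — is that the non-uniform advice $\nu^*$ supplied to the \emph{construction} and the non-uniformity allowed to the \emph{adversary} are two distinct things that must not be conflated: the amplification theorem is invoked against adversaries that are non-uniform QPT, while $\nu^*$ is a separate, construction-side string of length $O(\log\lambda)$ baked into $\Samp_{\nu^*}$ and $\Ver_{\nu^*}$. Since hardwiring a fixed string preserves efficient falsifiability of the security game and does not enlarge the adversary class, there is no obstruction. I would therefore present the proof as ``identical to the proof of \Cref{lem:qefid-to-owp}, carrying the advice $\nu$ through each algorithm and noting that every invoked lemma relativizes to a fixed advice string,'' which matches the level of detail the excerpt already uses for analogous claims.
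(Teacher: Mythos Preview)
Your proposal is correct and matches the paper's own treatment: the paper simply remarks that the same argument as in \Cref{lem:qefid-to-owp} works relative to an advice string, which is precisely the ``thread $\nu$ through everything and observe all steps relativize'' approach you spell out.
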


\begin{lemma}[From non-uniform $\OWP$ to $\OWP$]
     Let $p(\lambda)$ be some computable polynomial. Let $\nu^*(\lambda)$ be some function satisfying $\nu^*(\lambda)\leq p(\lambda)$. If there exists a $\nu^*$-non-uniform $\OWP$ $(\Samp_\nu,\Ver_\nu)$, then there exists a $\OWP$ $(\wt{\Samp},\wt{\Ver})$.
\end{lemma}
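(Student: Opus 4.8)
The plan is to use the robust combiner for one-way puzzles (\Cref{cor:owpcomb}, extended to polynomially many candidates as in \Cref{cor:owpcombmany}) to eliminate the advice, exactly following the earlier remark that a robust combiner upgrades any non-uniform construction to a full one.

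First I would observe that since $\nu^*(\lambda) \leq p(\lambda)$ for a fixed computable polynomial $p$, there are at most $p(\lambda)+1$ possible advice values at security parameter $\lambda$, and each is encodable by $O(\log\lambda)$ bits. Hence for each $\nu \in \{0,1,\dots,p(\lambda)\}$ the pair $(\Samp_\nu,\Ver_\nu)$ is a legitimate $\OWP$ candidate: $\Samp_\nu(1^\lambda)$ is a uniform QPT algorithm (it takes the short string $\nu$ as an additional input, which does not affect its efficiency, and since $p$ is computable the list of candidates can be enumerated uniformly), and $\Ver_\nu$ is an arbitrary function, as permitted by our definition. By hypothesis, for every $\lambda$ the particular candidate indexed by $\nu^*(\lambda)$ satisfies both $\OWP$ correctness and $\OWP$ security, i.e.\ it is a $(\negl(\lambda),\negl(\lambda))$ $\OWP$.

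Next I would apply the many-candidate robust combiner of \Cref{cor:owpcombmany} to the $p(\lambda)+1$ candidates $\{(\Samp_\nu,\Ver_\nu)\}_{\nu \leq p(\lambda)}$ to obtain $(\wt{\Samp},\wt{\Ver})$. Recall this combiner first runs the correctness-guaranteeing transform of \Cref{cor:owpcorrect} on each candidate --- which makes every candidate satisfy $\lambda^{-c}$ correctness for all $c$ regardless of whether it was ever a $\OWP$, while preserving $\OWP$-ness of any candidate that was one --- and then applies the AND-verification combiner: $\wt{\Samp}$ runs all the transformed samplers in parallel and $\wt{\Ver}$ accepts iff every transformed verifier accepts. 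Since $p$ is polynomial, $\wt{\Samp}$ remains a uniform QPT algorithm. Correctness of $(\wt{\Samp},\wt{\Ver})$ follows from a union bound over the $p(\lambda)+1$ transformed candidates, each of which has negligible correctness error after the transform. Security follows because the candidate indexed by $\nu^*(\lambda)$ remains a secure $\OWP$ after the transform, and the AND-verification combiner is secure whenever at least one of its inputs is: an inverter for $(\wt{\Samp},\wt{\Ver})$ must in particular produce a key-component accepted by the $\nu^*(\lambda)$-th transformed verifier on the $\nu^*(\lambda)$-th puzzle, contradicting its security. Hence $(\wt{\Samp},\wt{\Ver})$ is a $\OWP$.

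I do not expect a genuine obstacle --- this is the standard ``non-uniform plus combiner implies uniform'' argument. The one point requiring mild care is ensuring the advice really ranges over only polynomially many values (exactly the content of the hypothesis $\nu^*(\lambda)\leq p(\lambda)$ with $p$ a fixed polynomial) and that enumerating the candidates and running all their samplers keeps $\wt{\Samp}$ within QPT; both are immediate once one notes $p$ is computable and each $\Samp_\nu$ is QPT with $\nu$ as an $O(\log\lambda)$-bit auxiliary input. The same argument, using \Cref{cor:evowpcorrect} in place of \Cref{cor:owpcorrect}, would give the $\EVOWP$ analogue if needed.
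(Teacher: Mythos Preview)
Your proposal is correct and follows essentially the same approach as the paper: the paper's proof is a one-liner that applies the many-candidate combiner of \Cref{cor:owpcombmany} to the candidates $(\Samp_1,\Ver_1),\dots,(\Samp_p,\Ver_p)$ and defers the details to the combiner argument, which is exactly what you spell out (correctness-guaranteeing transform on each candidate, then AND-verification, with correctness by union bound and security because the $\nu^*(\lambda)$-th candidate is genuine). Your additional remarks on efficiency and the $\EVOWP$ analogue are fine but go slightly beyond what the paper states.
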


\begin{proof}
    We simply apply the construction from~\Cref{cor:owpcombmany} to $(\Samp_1,\Ver_1),\dots,$ $(\Samp_p,\Ver_p)$. An analogous argument to the proof that this construction is a combiner will give that the resulting $(\wt{\Samp},\wt{\Ver})$ is a $\OWP$.
\end{proof}

We now have all the pieces to show \cref{thm:owpamplification} that security for $\OWP$ can be amplified. First ~\Cref{cor:owptoqefidfinal} gives us that weak $\OWP$ imply non-uniform QEFID, Then the two lemmas above give us that non-uniform QEFID imply non-uniform $\OWP$ which in turn imply strong $\OWP$.


\section{Equivalence of $\OWP$ to variants}

\subsection{Random Input $\OWP \leftrightarrow \OWP$}

In this section we answer an open question left by \cite{khurana2024commitments} of whether one-way puzzles imply random input one-way puzzles.

\begin{definition}
    A random input one-way puzzle is a pair of a sampling algorithm and a verification function $(\PuzzSamp,\Ver)$ with the following syntax:
    \begin{enumerate}
        \item $\PuzzSamp(1^\lambda,k) \to s$ is a uniform QPT algorithm which takes in a key $k$ and outputs a puzzle $s$.
        \item $\Ver(1^\lambda, k,s) \to b$ is a function which takes in a key and a puzzle and outputs a bit $b \in \{0,1\}$
    \end{enumerate}
    satisfying the following properties:
    \begin{enumerate}
        \item Correctness: Outputs of the sampler pass verification with overwhelming probability
        $$\Pr_{\{0,1\}^\lambda \to k,\PuzzSamp(1^\lambda,k)\to s}[\Ver(k,s) \to 1] \geq 1 - \negl(\lambda)$$
        \item Security: Given a puzzle $s$, it is computationally infeasible to find a key $s$ which verifies. That is, for all non-uniform QPT algorithms $\A$,
        $$\Pr_{\{0,1\}^\lambda \to k,\PuzzSamp(1^\lambda,k)\to s}[\Ver(\A(s),s)\to 1] \leq \negl(\lambda)$$
    \end{enumerate}
\end{definition}

\begin{theorem}\label{thm:randominput}
    There exists a one-way puzzle $(\Samp,\Ver)$ if and only if there exists a random input one-way puzzle $(\PuzzSamp',\Ver')$. If $\Ver$ is efficient, then so is $\Ver'$.
\end{theorem}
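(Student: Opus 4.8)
The plan is to prove the two directions separately. The reverse direction (random input $\OWP$ implies $\OWP$) is essentially immediate: given a random input one-way puzzle $(\PuzzSamp',\Ver')$, define $\Samp(1^\lambda)$ to sample $k\from\{0,1\}^\lambda$, run $\PuzzSamp'(1^\lambda,k)\to s$, output $(k,s)$, and set $\Ver:=\Ver'$. The resulting correctness and security experiments for $(\Samp,\Ver)$ are syntactically identical to those defining a random input one-way puzzle, so both properties carry over unchanged, and $\Ver$ is efficient whenever $\Ver'$ is.

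For the forward direction, let $(\Samp,\Ver)$ be a one-way puzzle with $\Samp(1^\lambda)\to(k,s)$ and, without loss of generality, $k\in\{0,1\}^\lambda$. I would use the random input $r\in\{0,1\}^\lambda$ as a one-time pad on the key: let $\PuzzSamp'(1^\lambda,r)$ run $\Samp(1^\lambda)\to(k,s)$ and output the puzzle $s':=(s,\,r\oplus k)$, and let $\Ver'(r,(s,c))$ output $\Ver(r\oplus c,\,s)$. Correctness is immediate since $r\oplus c=r\oplus(r\oplus k)=k$, so $\Ver'(r,\PuzzSamp'(1^\lambda,r))$ just runs $\Ver(k,s)$ on a fresh output of $\Samp$, which accepts with overwhelming probability. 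Efficiency of $\Ver'$ follows from efficiency of $\Ver$ together with the fact that $\Ver'$ only additionally computes an XOR, which also handles the efficiently verifiable case.

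The crux is the security reduction, and the key observation is that for any fixed $(k,s)$ in the support of $\Samp$, the padded string $c=r\oplus k$ with $r$ uniform is itself uniform over $\{0,1\}^\lambda$; hence the marginal distribution of the random input puzzle $s'=(s,c)$ seen by any adversary is exactly ``$s$ together with an independent uniform $\lambda$-bit string.'' So, given a non-uniform QPT adversary $\A$ breaking $(\PuzzSamp',\Ver')$, I would build $\B$ against $(\Samp,\Ver)$ as follows: on input $s$ (from $\Samp(1^\lambda)\to(k,s)$), sample $c\from\{0,1\}^\lambda$, run $\A(s,c)\to r'$, and output $k':=r'\oplus c$. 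By the observation, the pair $(s,c)$ fed to $\A$ has precisely the distribution $\A$ expects in its own game, and whenever $\A$ succeeds — i.e.\ $\Ver'(r',(s,c))=\Ver(r'\oplus c,\,s)\to 1$ — we have $\Ver(k',s)\to 1$. Thus $\B$'s success probability equals $\A$'s, and security of $(\Samp,\Ver)$ yields security of $(\PuzzSamp',\Ver')$.

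The proof is short, so there is no serious obstacle; the one place that requires care is verifying that the simulated puzzle in the reduction has the correct marginal distribution, which is exactly where the one-time pad property is invoked. A minor point worth spelling out explicitly is that the random input adversary's output can be taken, without loss of generality, to be a full $\lambda$-bit candidate key, matching the length of $r$, so that $k'=r'\oplus c$ is well defined.
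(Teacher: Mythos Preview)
Your proposal is correct and follows essentially the same approach as the paper: use the random input as a one-time pad on the original key, include the padded key alongside the original puzzle, and have verification undo the pad before calling the original $\Ver$. The security reduction is identical to the paper's (sample a fresh uniform pad, feed it with $s$ to the random-input adversary, and XOR the adversary's output with the pad), and your observation about the marginal distribution of $(s,c)$ is exactly the point the paper uses.
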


\begin{proof}
    Any random input one-way puzzle gives a one-way puzzle by having the sampler just sample the random key itself. Thus, we focus on building a random input one-way puzzle from any one-way puzzle.

    We will define the random input one-way puzzle as follows
    \begin{enumerate}
        \item $\PuzzSamp'(1^\lambda, k')$: Run $\Samp \to (k,s)$. Output $s' = (k \xor k', s)$.
        \item $\Ver'(k',s')$: Parse $s'$ as $(a,b)$. Output $\Ver(a\xor k', b)$.
    \end{enumerate}

    Observe that $\Ver(k',s') = \Ver(k' \xor (k \xor k'), s) = \Ver(k,s)$ and so correctness follows from correctness of $(\Samp,\Ver)$.

    We will show security using a reduction. Let $\A$ be any adversary such that
    $$\Pr_{\{0,1\}^\lambda \to k',\PuzzSamp(k') \to s'}[\Ver'(\A(s'),s') \to 1] \geq \epsilon$$

    We will define $\A'$ as follows. On input $s$, sample $r$ uniformly at random. Output $\A(r,s) \xor r$.

    Note that $(r,s)$ is identically distributed to the output distribution of $\PuzzSamp$ on random inputs. Thus, 
    \begin{equation*}
        \begin{split}
            \Pr_{\Samp \to (k,s)}[\Ver(\A'(s),s) \to 1]\\
            =\Pr_{\{0,1\}^\lambda \to k',\PuzzSamp(k') \to (r,s)}[\Ver(\A(r,s) \xor r, s)]\\
            =\Pr_{\{0,1\}^\lambda \to k',\PuzzSamp(k') \to s'}[\Ver'(\A(s'),s') \to 1] \geq \epsilon
        \end{split}
    \end{equation*}
    and so $\epsilon \leq \negl(\lambda)$.

    Note that if $\Ver$ was originally efficient, so is $\Ver'$, and so this same construction works for efficiently verifiable one-way puzzles, producing a random input efficiently verifiable one-way puzzle.
\end{proof}

The philosophical message behind this theorem is that it doesn't matter whether or not the key and puzzle are sampled together, the fundamental difference between one-way puzzles and one-way functions is that the puzzle is sampled using a quantum algorithm instead of classical randomness.

\subsection{Distributional $\OWP \leftrightarrow \OWP$}

Here we show how to build $\OWP$ from distributional $\OWP$.

\begin{definition}
    A $\beta$ distributional one-way puzzle is a uniform QPT algorithm $\Samp(1^\lambda) \to (k,s)$ which takes in a security parameter $\lambda$ and produces a key $k$ and a puzzle $s$ such that given a puzzle $s$, it is computationally infeasible to sample from the conditional distribution over keys. More formally, we require that for all non-uniform QPT algorithms $\mathcal{A}$, for all sufficiently large $\lambda$,
    $$\Samp(1^\lambda) \to (k,s)$$
    $$\Delta((k,s),(\mathcal{A},s)) \geq \beta(\lambda)$$
\end{definition}

\begin{theorem}
    If, for some $c>0$, there exists a $\lambda^{-c}$ distributional one-way puzzle $\Samp$, then there exists a strong one-way puzzle.
\end{theorem}

This theorem follows directly from Pinsker's inequality, which states
\begin{theorem}[Pinsker's inequality]
    Let $P$ and $Q$ be any two probability distributions. Then
    $$\Delta(P,Q) \leq \sqrt{\frac{\ln 2}{2}KL(P||Q)}.$$
\end{theorem}

Thus, if a puzzle is $\lambda^{-c}$ distributionally one-way, then we have that for all QPT $S$,
$$KL(s,k || s, S(s)) \geq \frac{12}{\ln 2} \Delta(s,k || s, S(k))^2 \geq \lambda^{-2c}$$
Reading the proof of~\Cref{thm:owpamplification}, it is clear that the only property required of the one-way puzzle is that 
$$KL(s,k || s, S(s)) \geq \frac{1}{\poly(\lambda)},$$
and so applying the construction of~\Cref{thm:owpamplification} to $\Samp$ gives a strong one-way puzzle.
\section{Acknowledgments}
We thank Yanyi Liu for insightful discussion. We thank Bruno Cavalar for discussions concerning distributional one way puzzles. Kai-Min Chung was partially supported by the Air Force Office of Scientific Research under award number FA2386-23-1-4107 and NSTC QC project, under Grant no. NSTC 112-2119-M-001-006. E. Goldin was supported by a National Science Foundation Graduate Research Fellowship.

\bibliographystyle{alpha}	
\bibliography{refs}

\end{document}